\providecommand{\U}[1]{\protect\rule{.1in}{.1in}}
\newtheorem{theorem}{Theorem}
\newtheorem{definition}[theorem]{Definition}
\newtheorem{lemma}[theorem]{Lemma}
\newenvironment{proof}[1][Proof]{\noindent\textbf{#1.} }{\ \rule{0.5em}{0.5em}}
\def\>{\rangle}
\def\<{\langle}
\let\originalleft\left
\let\originalright\right
\def\left#1{\mathopen{}\originalleft#1}
\def\right#1{\originalright#1\mathclose{}}
\begin{document}

\title{\vspace{-2.5cm}\textbf{The information-theoretic costs of simulating quantum measurements}}
\author{Mark M. Wilde and Patrick Hayden\\\textit{School of Computer Science, McGill University,}\\\textit{Montreal, Quebec, Canada H3A 2A7}\\
\\Francesco Buscemi\\\textit{Institute for Advanced Research, Nagoya University,}\\\textit{Chikusa-ku, Nagoya 464-8601, Japan}\\
\\Min-Hsiu Hsieh\\\textit{Centre for Quantum Computation and Intelligent Systems (QCIS),}\\\textit{Faculty of Engineering and Information Technology,}\\\textit{University of Technology, Sydney, NSW 2007, Australia}}
\maketitle

\begin{abstract}
Winter's measurement compression theorem stands as 
one of the most penetrating insights of quantum information theory. In addition
to making an original and profound statement about measurement in quantum
theory, it also underlies several other general protocols used for
entanglement distillation and local purity distillation. The theorem provides
for an asymptotic decomposition of any quantum measurement into {\it noise} and
{\it information}. This decomposition leads to
an optimal protocol for having a sender simulate many independent instances of a quantum
measurement and send the
measurement outcomes to a receiver, using as little communication as
possible. The protocol assumes that
the parties have access to some amount of common randomness, which is a
strictly weaker resource than classical communication.

In this paper, we provide a full review of Winter's measurement compression
theorem, detailing the information processing task, giving examples for
understanding it, reviewing Winter's achievability proof, and detailing a new
approach to its single-letter converse theorem. We prove an extension of the
theorem to the case in which the sender is not required to receive the
outcomes of the simulated measurement. The total cost of common randomness and
classical communication can be lower for such a \textquotedblleft
non-feedback\textquotedblright\ simulation, and we prove a single-letter
converse theorem demonstrating optimality. We then review the Devetak-Winter
theorem on classical data compression with quantum side information, providing
new proofs of its achievability and converse parts. From there, we outline a
new protocol that we call \textquotedblleft measurement compression with
quantum side information,\textquotedblright\ announced previously by two of us
in our work on triple trade-offs in quantum Shannon theory. This protocol has
several applications, including its part in the \textquotedblleft
classically-assisted state redistribution\textquotedblright\ protocol, which
is the most general protocol on the static side of the quantum information
theory tree, and its role in reducing the classical communication cost in a
task known as local purity distillation. We also outline a connection between
measurement compression with quantum side information and recent work on
entropic uncertainty relations in the presence of quantum memory. Finally, we
prove a single-letter theorem characterizing measurement compression with
quantum side information when the sender is not required to obtain the
measurement outcome.

\end{abstract}

\section{Introduction}

Measurement plays an important role in quantum theory. It is the interface between
the macroscopic world of everyday experience and the quantum world, which is characterized by noncommutativity and superposition. The translation is imperfect, however, with superposition and noncommutativity leading necessarily to uncertainty in the outcomes
of measurements. In any given measurement, there will be noise inherent to the measurement procedure, uncertainty due to the state being measured and, most importantly, \emph{information}.   The objective of this article is to explain how to separate out these components, precisely identifying and quantifying them in the data produced by a quantum measurement. 
To do so, it will be crucial to adopt an information-theoretic point of view, not just to
provide the necessary techniques to solve the problem, but even to figure out how to 
properly formulate the question.

If we are only concerned with capturing the statistics of the
outcomes of a quantum measurement, the most general mathematical description
is to use the positive operator-valued measure (POVM) formalism
\cite{D76,H82,K83}. In the POVM\ formalism, a quantum measurement is specified
as a set $\Lambda\equiv\left\{  \Lambda_{x}\right\}  $\ of operators indexed
by some classical label $x$ corresponding to the classical outcomes of the
measurement. These operators should be positive and form a resolution of the
identity on the Hilbert space of the system that is being measured:%
\[
\forall x:\Lambda_{x}\geq0,\ \ \ \ \ \ \ \ \sum_{x}\Lambda_{x}=I.
\]
Given a quantum state described by a density operator $\rho$ (a positive, unit
trace operator) and a POVM $\Lambda$, a measurement of $\rho$ specified by $\Lambda$ induces a random variable $X$,
and the probability $p_{X}\left(  x\right)  $ for the classical
outcome $x$ to occur is given by the Born rule:%
\begin{equation}
p_{X}\left(  x\right)  =\text{Tr}\left\{  \Lambda_{x}\rho\right\}  .
\label{eq:first-POVM-dist}%
\end{equation}
Positivity of the operators $\Lambda_{x}$ and $\rho$ guarantees positivity of
the distribution $p_{X}\left(  x\right)  $, and that the set $\Lambda$ forms a
resolution of the identity and the density operator $\rho$ has unit trace
guarantees normalization of the distribution $p_{X}\left(  x\right)  $.

The above definition of a POVM\ makes it clear that the set of all POVMs is a
convex set, i.e., given a POVM\ $\Lambda\equiv\left\{  \Lambda_{x}\right\}  $
and another $\Gamma\equiv\left\{  \Gamma_{x}\right\}  $, with $0 <
\lambda < 1$, the convex combination $\lambda\Lambda+\left(  1-\lambda
\right)  \Gamma\equiv\left\{  \lambda\Lambda_{x}+\left(  1-\lambda\right)
\Gamma_{x}\right\}  $ is also a POVM. 
The physical interpretation of this
convexity is that it might be possible to decompose any particular measuring
apparatus into noise and information. If an
apparatus does not admit a decomposition of this form, then it is an extremal
POVM, lying on the boundary of the convex set. If it does, however, as in the above example apparatus,
one could first flip a biased coin with distribution $\left(  \lambda,1-\lambda\right)  $
to determine whether to perform $\Lambda$ or $\Gamma$ and then perform the
corresponding measurement. The coin flip is a source of noise
because it is independent of the physical measurement outcome, and the
distribution for the outcome corresponds to the information. Decomposing an apparatus in this
way is a useful idea with many applications. 

To develop a robust quantitative theory, however, it is surprisingly effective to consider the above ideas from an information-theoretic
standpoint, in the sense of Shannon \cite{bell1948shannon}. In this context, that approach
will have three main features: a tolerance for small imperfections, a focus on asymptotics, and an emphasis on communication. To begin with, let us focus on the first two. From an operational point of view, there is little justification for requiring an exact convex decomposition of a given measurement. As long as any imperfections are very small, approximation by a convex decomposition leads to experimentally indistinguishable consequences. Moreover, measurement statistics are most meaningful in a setting in which the measurement is repeated many times on identical state preparations. As such, it is sensible, and remarkably powerful, to ask about approximate convex decomposition of repeated measurements, with the permissible imperfection required to vanish in the limit of infinite repetitions. 

The relevance of communication is less immediate. In the example described above, the measurement $\lambda \Lambda + (1-\lambda)\Gamma$ could be implemented by first flipping a coin and then either measuring $\Lambda$ or $\Gamma$. This opens up the possibility of significantly compressing the measurement outcomes because there will generically be less uncertainty about the outcome of either $\Lambda$ or $\Gamma$ alone than the convex combination $\lambda \Lambda + (1-\lambda)\Gamma$.
To formalize this notion, one could imagine that two parties, traditionally named Alice and Bob, are trying to collectively implement a measurement. They share some common random bits that can be used to perform the $(\lambda,1-\lambda)$ coin flip without communicating, and Alice holds the quantum system on which $\lambda \Lambda + (1-\lambda)\Gamma$ is to be measured. Based on the result of the coin flip, Alice would apply either $\Lambda$ or $\Gamma$ and compress the outcome as much as possible, minimizing the number of bits she needs to send to Bob in order to allow him to reconstruct the outcome of the measurement. Optimizing the number of bits required over all possible measurement simulation strategies, of which we have only described one, then provides a robust operational measure of the amount of information generated by the quantum measurement. 

In a seminal paper, Winter successfully performed this information-theoretic analysis of
measurement, and in so doing, was able to make a profound and original statement 
about the nature of information in quantum measurement~\cite{W01}. The
content of his \textquotedblleft measurement compression
theorem\textquotedblright\ is the specification of an optimal two-dimensional
rate region, characterizing the resources needed for an asymptotically
faithful simulation of a quantum measurement $\Lambda$ on a state $\rho$\ in
terms of common randomness and classical communication. The sender (Alice) and
receiver (Bob) both obtain the outcome of the measurement, and as such, this is
known as a \textquotedblleft feedback simulation\textquotedblright
 (terminology introduced in a different though related context \cite{BDHSW09}). 
His measurement compression protocol achieves one important optimal rate
pair in this region: if, to first order, at least $nH\left(  X|R\right)  $ bits of common
randomness are available, then it is possible to simulate the measurement
$\Lambda^{\otimes n}$ on the state $\rho^{\otimes n}$ with only about $nI\left(  X;R\right)  $ bits of
classical communication. We allow $n$, the number of repetitions of the measurement $\Lambda$, to go to infinity, in which limit the
simulation becomes asymptotically faithful. The entropies $H\left(
X|R\right)  $ and $I\left(  X;R\right)  $ are defined as%
\begin{align*}
H\left(  X|R\right)   &  \equiv H\left(  XR\right)  -H\left(  R\right)  ,\\
I\left(  X;R\right)   &  \equiv H\left(  X\right)  -H\left(  X|R\right)  ,
\end{align*}
with the von Neumann entropy of a state $\sigma$ defined as $H\left(
\sigma\right)  \equiv-$Tr$\left\{  \sigma\log_{2}\sigma\right\}  $. The above
entropies are taken with respect to the state%
\begin{equation}  \label{eqn:post.meas.state}
\sum_{x}\left\vert x\right\rangle \left\langle x\right\vert ^{X}%
\otimes\text{Tr}_{A}\left\{  \left(  I^{R}\otimes\Lambda_{x}^{A}\right)
\phi_{\rho}^{RA}\right\}  ,
\end{equation}
where $\phi_{\rho}^{RA}$ is any purification of the state $\rho$, meaning that
$\phi_{\rho}^{RA}$ is a rank-one density operator satisfying Tr$_{R}\left\{
\phi_{\rho}^{RA}\right\}  =\rho$.\footnote{Here and throughout this paper,
we use superscripts such as $A$, $B$, $R$, and $E$ to denote quantum systems
with corresponding Hilbert spaces $\mathcal{H}_A$,
$\mathcal{H}_B$, $\mathcal{H}_R$, and $\mathcal{H}_E$.
Such a labeling is useful in quantum information theory because we often
deal with states that are defined over many systems. We also use the shorthand
$\phi \equiv \vert \phi \rangle \langle \phi \vert $ to denote
a pure-state density operator. So, for example,
the state $\phi_{\rho}^{RA}$ is shared between systems $A$ and $R$, implying that
$\phi_{\rho}^{RA}$ is an operator acting on the tensor-product Hilbert
space $\mathcal{H}_R \otimes \mathcal{H}_A$. We also freely identify Roman capital letters
$W$, $X$, $Y$, and $Z$ with both random variables and
quantum systems containing only classical data (as in (\ref{eqn:post.meas.state})). There should be no confusion
here because these entities are in direct correspondence.} One can think of (\ref{eqn:post.meas.state}) as the
post-measurement state, including both the classical outcome of the
measurement and the subsequent state of the reference system $R$. The other
important rate pair corresponds to Shannon's protocol. If no common randomness
is available and both the sender and receiver are to obtain the measurement
outcome, then the lowest achievable rate of classical communication is
$H\left(  X\right)  $, the Shannon entropy of the distribution of measurement outcomes in
(\ref{eq:first-POVM-dist}). Time-sharing between these two protocols,
converting classical communication to common randomness, and wasting common
randomness then give all other optimal rate pairs. (See
Figure~\ref{fig:IC-rate-region} for an example plot of the region.)

Winter's measurement compression protocol has an important place in the
constellation of quantum Shannon theoretic
protocols. It evolved from earlier work in Refs.~\cite{MP00,WM01}, and it is
the predecessor to the quantum reverse Shannon theorem, which was conjectured
in Ref.~\cite{BSST01} and proved later in Refs.~\cite{BDHSW09,BCR11}%
.\footnote{We should clarify here that, while
Ref.~\cite{BDHSW09} appeared on the arXiv in 2009, that
article contains ideas developed and publicized by the
authors over a nine year period starting with the publication of Ref.~\cite{BSST01} in 2001. 
Ref.~\cite{BCR11} features a different proof from that in Ref.~\cite{BDHSW09},
but it exploits many of the important ingredients developed in
Ref.~\cite{BDHSW09}.} The quantum reverse Shannon theorem quantifies
the noiseless resources required to simulate a noisy quantum channel.
Since Winter's measurement compression theorem applies to a quantum
measurement and a quantum measurement is a special type of quantum channel
with quantum input and classical output, it is clear that the measurement
compression protocol gives a special type of quantum reverse Shannon theorem.
The quantum reverse Shannon theorem may seem on first encounter to correspond to a pointless
task. After all, in the words of Ref.~\cite{BSST01}, why would we want to dilute fresh water into salt water? First appearances notwithstanding, it has at least two nearly immediate and significant information-theoretic
applications: in proving strong converses
\cite{BSST01,W02,BDHSW09,BCR11,BBCW11}\ and in lossy data compression, otherwise known as rate
distortion theory \cite{W02,LD09,L09,DHW11}. The connection to strong
converses follows from a \textit{reductio ad absurdum} argument:\ if one were
able to simulate a channel at a rate larger than its capacity, then it would
be possible to bootstrap a channel code and a simulation code to achieve more
communication than a noiseless channel would allow for. With the aid of an
appropriate reverse Shannon theorem, one can then argue that coding at a rate
beyond the capacity should make the error probability converge to one
exponentially fast in the number of channel uses. The connection to rate
distortion theory \cite{B71}\ follows from the observation that a reverse
Shannon theorem achieves a task strictly stronger than the usual average
distortion criterion considered in rate distortion theory. There, one requires
that an information source be represented by the receiver up to some average
distortion $D\geq0$. If one were to simulate a channel on the information
source that does not distort it by more than $D$ on average, then clearly such
a protocol would already satisfy the demands of rate distortion.

There are two other useful applications of Winter's measurement compression
theorem. The first is in local purity distillation
\cite{HHHHOSS03,HHHOSSS05,D05,KD07}, where the task is for two spatially
separated parties to distill local pure states from an arbitrary bipartite
mixed state $\rho^{AB}$ by using only local unitary operations and classical
communication. The measurement compression theorem is helpful in determining
the classical communication cost of such protocols, as considered in
Ref.~\cite{KD07}. Another application of measurement compression is in
realizing the first step of the so-called \textquotedblleft grandmother
protocol\textquotedblright\ of quantum information theory \cite{DHW08}, where
the objective is to distill entanglement from a noisy bipartite state
$\rho^{AB}$ with the help of noiseless classical and quantum
communication. It is possible to improve upon both of these protocols by
exploiting one of the new measurement compression theorems that we outline in
this paper.

Once one takes the first step of splitting the implementation of a measurement between Alice and Bob, it becomes natural to consider different notions of simulation. What if only Bob needs to get the outcome of the measurement, not Alice? What if Bob holds a quantum system entangled with the system being measured? These and related variations provide a very precise and diverse set of tools for analyzing the dichotomy between noise and information in quantum measurements. Beyond providing a detailed review of Winter's theorem, the main contribution of this article will be to develop these variations and generalizations of his original theorem. More specifically, our contributions are as follows: 

\begin{itemize}
\item We provide a full review of Winter's measurement compression theorem,
detailing the basic information processing task, the statement of the theorem,
Winter's achievability proof, and a simple converse theorem that demonstrates
an optimal characterization of the rate region. We also review Winter's
extension of the theorem to quantum instruments.

\item We extend Winter's measurement compression theorem to the setting in
which the sender is not required to receive the outcome of the measurement
simulation. Such a task is known as a \textquotedblleft
non-feedback\textquotedblright\ simulation, in analogy with a similar setting
in the quantum reverse Shannon theorem \cite{BDHSW09}. A benefit of a
\textquotedblleft non-feedback\textquotedblright\ simulation is that the total
cost of common randomness and classical communication can be lower than that
of a \textquotedblleft feedback\textquotedblright\ simulation, leading to
interesting, non-trivial trade-off curves for the rates of these resources.
Also, we prove a single-letter converse theorem for this case, demonstrating
that our protocol is optimal.

\item We then review Devetak and Winter's theorem regarding classical data
compression with quantum side information (CDC-QSI) \cite{DW02}. The setting
of the problem is that an information source distributes a random classical
sequence to one party and a quantum state correlated with the sequence to another party. The
objective is for the first party to transmit the classical sequence to the
second party using as few noiseless classical bit channels as possible. As
such, it is one particular quantum generalization of the classic Slepian-Wolf
problem \cite{SW73}. In the Slepian-Wolf protocol, the first party hashes the
sequence received from the source, transmits the hash, and the second party
uses his side information to search among all the sequences for any that are
consistent with the hash and are a \textquotedblleft reasonable
cause\textquotedblright\ for his side information. We provide a novel
achievability proof for CDC-QSI that is a direct quantization of this
strategy, replacing the latter search with binary-outcome quantum
measurements. We also provide a simple converse proof that is along the lines
of the standard converses in Refs.~\cite{CT91,el2010lecture}.

\item The above reviews of measurement compression and CDC-QSI\ then prepare
us for another novel contribution: measurement compression in the presence of
quantum side information (MC-QSI). The setting for this new protocol is that a
sender and receiver share many copies of some bipartite state $\rho^{AB}$, and
the sender would like to simulate the action of many independent and identical
measurements on the $A$ system according to some POVM\ $\Lambda$. The protocol
is a \textquotedblleft feedback simulation,\textquotedblright\ such that the
sender also obtains the outcomes of the measurement (though we still refer to
it as MC-QSI for short). The MC-QSI\ protocol combines ideas from the
measurement compression theorem and CDC-QSI in order to reduce the classical
communication rate and common randomness needed to simulate the measurement.
The idea is that Alice performs the measurement compression protocol as she
would before, but she hashes the output of the simulated measurement and sends
this along to Bob. Bob then searches among all the post-measurement states
that are consistent with the hash and his share of the common randomness,
similar to the way that he would in the CDC-QSI protocol. The result is a
reduction in the classical communication and common randomness rate to
$I\left(  X;R|B\right)  $ and $H\left(  X|RB\right)  $, respectively, where
the entropies are with respect to the following state:%
\[
\sum_{x}\left\vert x\right\rangle \left\langle x\right\vert ^{X}%
\otimes\text{Tr}_{A}\left\{  \left(  I^{RB}\otimes\Lambda_{x}^{A}\right)
\phi_{\rho}^{RBA}\right\}  ,
\]
and $\phi_{\rho}^{RBA}$ is a purification of the state $\rho^{AB}$. These
rates are what we would intuitively expect of such a protocol---they are the
same as in Winter's original theorem, except the entropic quantities are
conditioned on Bob's quantum side information in the system $B$.

\item After developing MC-QSI, we briefly\ discuss three of its applications.
The first is an application that two of us announced in Ref.~\cite{HW10}%
:\ MC-QSI along with state redistribution \cite{DY08,YD09} acts as a
replacement for the \textquotedblleft grandmother\textquotedblright\ protocol
discussed above. The resulting protocol uses less classical and quantum
communication and can in fact generate the grandmother by combining it with
entanglement distribution. As such, MC-QSI\ and state redistribution form the
backbone of the best known \textquotedblleft static\textquotedblright%
\ protocols in quantum Shannon theory (though, one should be aware that these
results are only optimal up to a regularization, so it could very well be that further
improvements are possible). The second application is an observation that the
above protocol leads to a quantum reverse Shannon theorem for a quantum
instrument, that is, a way to simulate the action of a quantum instrument on a
quantum state by employing common randomness, classical communication,
entanglement, and quantum communication. The third application is an
improvement of the local purity distillation protocol from Ref.~\cite{KD07},
so that we can lower the classical communication cost from $I\left(
Y;BE\right)  $ to $I\left(  Y;E|B\right)  $, as one should expect when taking
quantum side information into account.

\item We then discuss a way that we can relate recent work on entropic
uncertainty relations with quantum side information
\cite{RB09,BCCRR10,TR11,CCYZ11,FL12}\ to provide a lower bound on the
classical resources required in two different complementary MC-QSI\ protocols.

\item Finally, we analyze the MC-QSI problem in the case where the
sender is not required to receive the outcomes of the measurement simulation.
For this non-feedback MC-QSI problem, we once again develop optimal protocols and find a single-letter characterization of the achievable rate region. While the necessary protocols are simply the natural combinations of those used in MC-QSI with those used for non-feedback MC, the optimality proof is different and remarkably subtle.
\end{itemize}

All the simulation theorems appearing in this paper are
\textquotedblleft single-letter,\textquotedblright\ meaning that we can
calculate the optimal rate regions as simple entropic functions of one copy of
the state or resource. This type of result occurs more often in quantum
information theory when the resources considered are of a hybrid
classical-quantum nature, as is our case here. The single-letter results here
mean that we can claim to have a complete information-theoretic understanding
of the tasks of MC, non-feedback MC, CDC-QSI, MC-QSI, and non-feedback MC-QSI.

\section{Measurement compression}

\label{sec:meas-comp}This section provides a detailed review of the main
results in Winter's original paper on measurement compression \cite{W01}, and it also
serves to establish notation used in the rest of the paper.
Consider a quantum state $\rho$ and a POVM $\Lambda\equiv\left\{  \Lambda
_{x}\right\}_{x \in \mathcal{X}}  $, such that $\Lambda_{x}\geq0$ and $\sum_{x}\Lambda_{x}=I$.
Measuring the POVM\ $\Lambda$ on the state $\rho$ induces
a random variable $X$ with the following
distribution $p_{X}\left(  x\right)  $:%
\[
p_{X}\left(  x\right)  \equiv\text{Tr}\left\{ \Lambda_{x} \rho\right\}  .
\]
Suppose that a quantum information source outputs many copies of the state
$\rho$\ and the POVM\ is performed many times, producing the IID\ distribution
$p_{X^{n}}\left(  x^{n}\right)  $ (where $x^n=x_1 x_2 \cdots x_n$):%
\begin{align*}
p_{X^{n}}\left(  x^{n}\right)   &  \equiv\text{Tr}\left\{ \Lambda_{x^{n}} \rho^{\otimes
n}\right\} \\
&  =\text{Tr}\left\{ \left(  \Lambda_{x_{1}}\otimes\Lambda_{x_{2}}\otimes\cdots\otimes
\Lambda_{x_{n}}\right) \left(  \rho\otimes\rho\otimes\cdots\otimes\rho\right)
  \right\} \\
&  =\prod\limits_{i=1}^{n}\text{Tr}\left\{  \Lambda_{x_{i}} \rho\right\}  .
\end{align*}

In order to communicate the result of the measurement to a receiver using a
noiseless classical channel, one could compress the data sequence $x^{n}$
using Shannon compression \cite{CT91}\ and communicate the sequence $x^{n}$
faithfully by transmitting only $nH\left(  X\right)  $ bits. Such a strategy
is optimal if no other resource is shared between the sender and receiver. But
supposing that the sender and receiver have access to some shared randomness
(a fairly innocuous resource), would it be possible for the sender to simulate
the outcome of the measurement using some of the shared randomness and then
communicate fewer classical bits to the receiver in order for him to
reconstruct the sequence $x^{n}$?

The goal of Winter's POVM\ compression protocol \cite{W01}\ is to do exactly
that: accurately simulate the distribution produced by the POVM, by
exploiting shared randomness. The starting point for Winter's protocol is the
observation that any POVM\ $\Lambda$ can be decomposed as a convex combination
of some other POVMs $\left\{  \Gamma^{\left(  m\right)  }\right\}
=\{\{\Gamma_{x}^{\left(  m\right)  }\}\}$, such that%
\begin{equation}
\Lambda_{x}=\sum_{m}p_{M}\left(  m\right)  \Gamma_{x}^{\left(  m\right)  }.
\label{eq:POVM-decomposition}%
\end{equation}
This is due to the fact that the set of all POVMs is a convex
set.\footnote{See Ref.~\cite{DPP05}\ for an explicit algorithm that decomposes
any non-extremal POVM\ in this way.} The set of POVMs $\left\{  \Gamma
^{\left(  m\right)  }\right\}  $ then provides a simulation of the original
POVM\ $  \Lambda  $ by the following procedure:

\begin{enumerate}
\item Generate the variable $M$ according to the distribution $p_{M}\left(
m\right)  $.

\item Measure the state $\rho$ with the POVM\ $\Gamma^{\left(  M\right)  }$.
\end{enumerate}

The resulting distribution for the random variable $X$, when marginalizing over the
random variable $M$, is then as follows:%
\[
\sum_{m}p_{M}\left(  m\right)  \text{Tr}\left\{  \Gamma_{x}^{\left(
m\right)  } \rho\right\}  =\text{Tr}\left\{  \sum_{m}p_{M}\left(  m\right)
\Gamma_{x}^{\left(  m\right)  }\rho\right\}  =\text{Tr}\left\{  \Lambda
_{x}\rho\right\}  =p_{X}\left(  x\right)  .
\]
Thus, the random variable $M$ is a source of noise for simulating
the POVM $\Lambda$, and the output $X$ represents information.
Separating these two components is a useful idea, and it is
what allows us to simulate a POVM\ by a protocol similar to the above.

\subsection{Information processing task for measurement compression}

We can now define the information processing task for a measurement
compression protocol. Given the original POVM $\Lambda\equiv\left\{
\Lambda_{x}\right\}  $, suppose that it acts on an $n$-fold tensor product
state $\rho^{\otimes n}$. The POVM\ then has the form%
\[
\Lambda^{\otimes n}\equiv\left\{  \Lambda_{x^{n}}\right\}  _{x^{n}%
\in\mathcal{X}^{n}},
\]
where%
\[
\Lambda_{x^{n}}\equiv\Lambda_{x_{1}}\otimes\Lambda_{x_{2}}\otimes\cdots
\otimes\Lambda_{x_{n}}.
\]
The ideal measurement compression protocol would be for the sender Alice to
simply perform this measurement on each copy of her state and transmit the classical
output to the receiver Bob. Figure~\ref{fig:ideal-IC}\ depicts this ideal protocol.

\begin{figure}[ptb]
\begin{center}
\includegraphics[
width=3.0381in
]{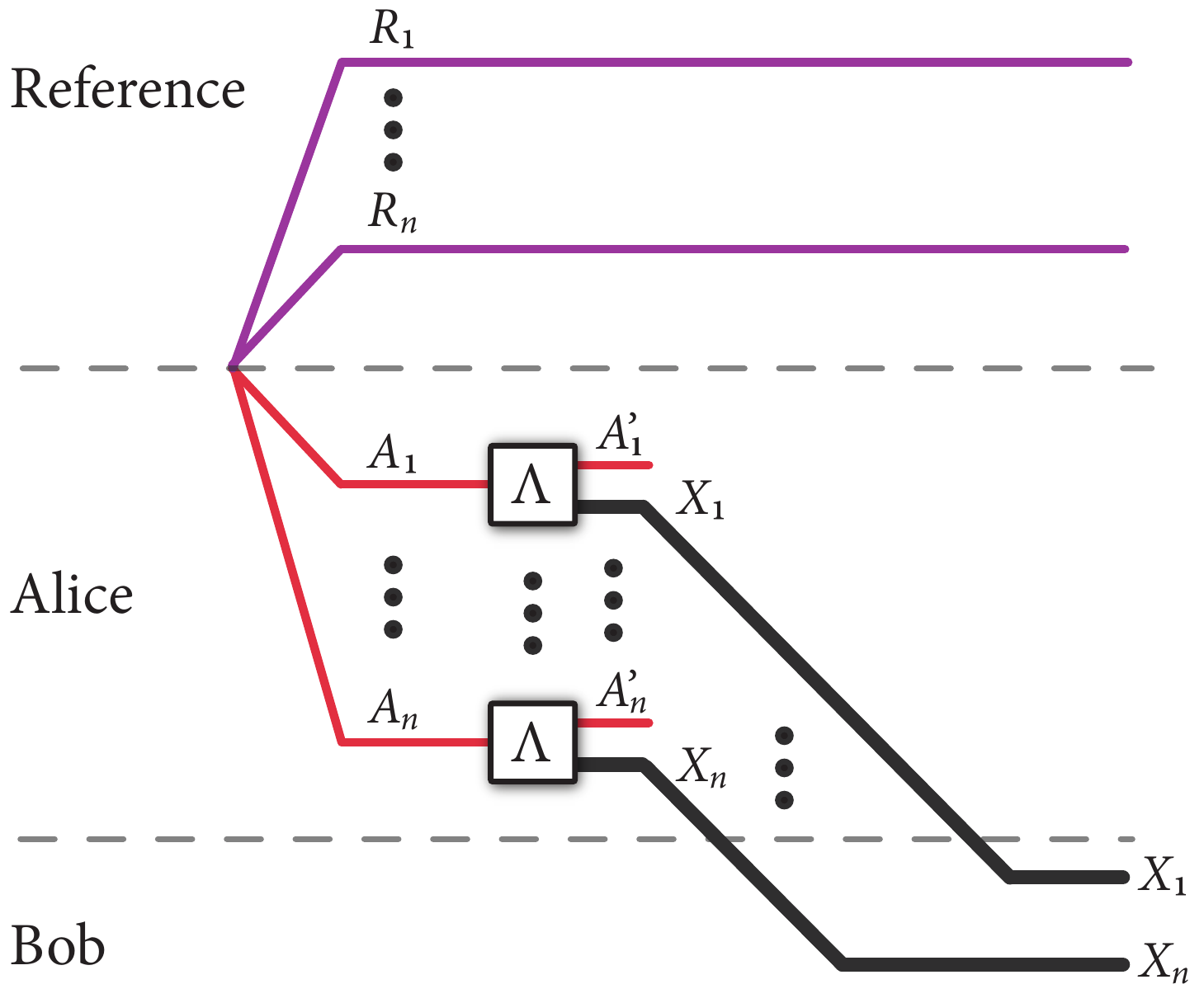}
\end{center}
\caption{\textbf{Ideal measurement compression.} In an ideal protocol for
measurement compression, Alice performs the POVM\ $\Lambda\equiv\left\{
\Lambda_{x}\right\}  $ on $n$ copies of the state $\rho$, which for the
$i^{\text{th}}$ state leads to a quantum system $A_{i}^{\prime}$ and a
classical output $X_{i}$. The goal of the protocol is to transmit the
classical output $X^{n}$ to a receiver. Doing so perfectly would require
$n\log\left\vert \mathcal{X}\right\vert $ bits of communication, where
$\mathcal{X}$ is the alphabet for the random variable $X$.~Winter's
measurement compression protocol gives a way of doing so by allowing for a
small error but demanding that this error vanish in the asymptotic limit of
many copies of the state $\rho$. The idea is to simulate the measurement in
such a way that a third party would not be able to distinguish between the
true measurement and the simulated one. An assumption of this protocol is that
the sender obtains the outcome of the simulated measurement in addition to the
receiver.}%
\label{fig:ideal-IC}%
\end{figure}

Our goal is to find an approximate convex decomposition of the tensor-product
POVM\ of the sort in (\ref{eq:POVM-decomposition}), but in this case it should
have the form:%
\[
\Lambda_{x_{1}}\otimes\Lambda_{x_{2}}\otimes\cdots\otimes\Lambda_{x_{n}%
}\approx\widetilde{\Lambda}_{x^{n}},
\]
where%
\[
\widetilde{\Lambda}_{x^{n}}\equiv\sum_{m}p_{M}\left(  m\right)  \Gamma_{x^{n}%
}^{\left(  m\right)  },
\]
so that each POVM element$\ \Gamma_{x^{n}}^{\left(  m\right)  }$ is a
collective measurement on the $n$-fold tensor product Hilbert space. One might
expect that such a collective measurement would have some compression
capabilities built into it, in the sense that it could reduce the number of bits needed to
represent the sequence $x^{n}$. Figure~\ref{fig:IC}\ depicts the most general
protocol for measurement compression when both sender and receiver are to
obtain the outcome of the simulated measurement (known as a \textquotedblleft
feedback\textquotedblright\ simulation).

\begin{figure}[ptb]
\begin{center}
\includegraphics[
width=3.039in
]{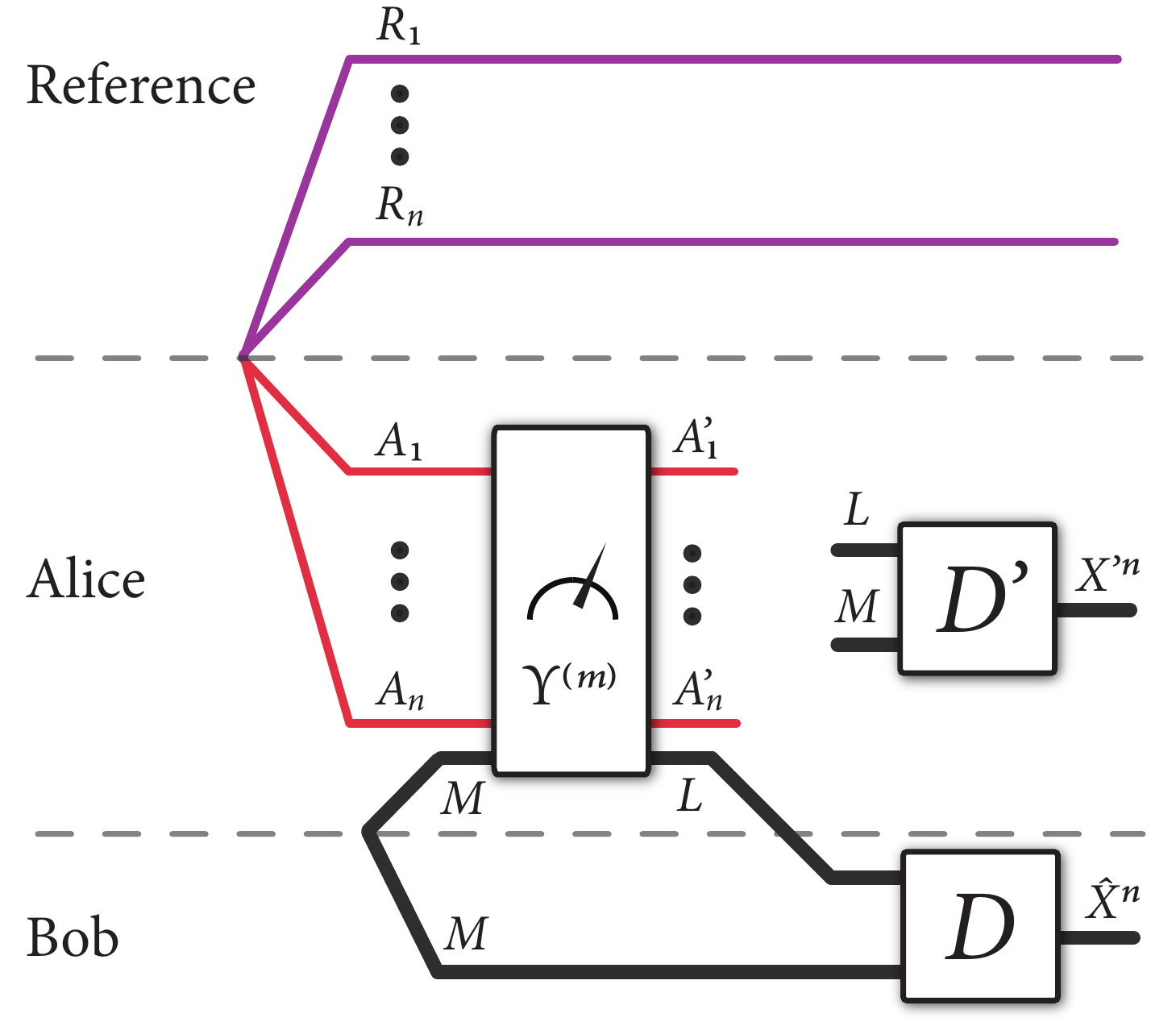}
\end{center}
\caption{\textbf{Measurement compression protocol.} The most general protocol
for \textquotedblleft feedback\textquotedblright\ measurement compression that
exploits common randomness and classical communication. Alice selects a
POVM\ $\Upsilon^{\left(  m\right)  }\equiv\{\Upsilon_{l}^{\left(  m\right)
}\}$ according to the common randomness $M$. She then performs this POVM\ on
many copies of the state $\rho$, and receives an outcome $l$ from it, modeled
by the random variable $L$. She transmits the variable $L$ over $\log
_{2}\left\vert \mathcal{L}\right\vert $ noiseless classical bit channels. Bob
receives this variable, and by combining it with his share of the common
randomness, he can reconstruct the output $\hat{X}^{n}$ of the simulated
measurement. In a feedback simulation, the sender also reconstructs a variable
$X^{\prime n}$, which is the output of the simulated measurement. The goal of
a feedback measurement compression protocol is for the classical outputs of
the simulated measurement to be statistically indistinguishable from the
output of the ideal measurement (this is from the perspective of someone
holding both the reference systems and the classical outputs).}%
\label{fig:IC}%
\end{figure}

We now make precise the above notion of the approximation of a POVM acting on
a source state. Suppose that there is some convex decomposition of the
tensor-product source $\rho^{\otimes n}$ as%
\begin{equation}
\rho^{\otimes n}=\sum_{k}p_{K}\left(  k\right)  \sigma_{k},
\label{eq:source-decomp}%
\end{equation}
where the states $\sigma_{k}$ are generally entangled states living on the
$n$-fold tensor product Hilbert space. Thus, one could view the preparation of
the source as a selection of a random variable $K$ according to $p_{K}\left(
k\right)  $, followed by a preparation of the state $\sigma_{K}$. There is
then a joint distribition $p_{K,X^{n}}\left(  k,x^{n}\right)  $ for the
selection of the source and the true measurement result:%
\begin{equation}
p_{K,X^{n}}\left(  k,x^{n}\right)  \equiv p_{K}\left(  k\right)
\text{Tr}\left\{  \Lambda_{x^{n}}\sigma_{k}\right\}  ,
\label{eq:true-joint-dist}%
\end{equation}
and a joint distribution $p_{K,\widetilde{X^{n}}}\left(  k,x^{n}\right)  $ for
the selection of the source and the approximation measurement's result:%
\begin{align*}
p_{K,\widetilde{X^{n}}}\left(  k,x^{n}\right)   &  \equiv p_{K}\left(
k\right)  \sum_{m}p_{M}\left(  m\right)  \text{Tr}\left\{  \Gamma_{x^{n}%
}^{\left(  m\right)  }\sigma_{k}\right\} \\
&  =p_{K}\left(  k\right)  \text{Tr}\left\{  \widetilde{\Lambda}_{x^{n}}%
\sigma_{k}\right\}  .
\end{align*}

\begin{definition}
[Faithful simulation]\label{def:faith-sim}A sequence of protocols provides a
faithful simulation of the POVM\ $\Lambda$ on the source $\rho$, if for all
decompositions of the source of the form in (\ref{eq:source-decomp}), the
above joint distributions are $\epsilon$-close in variational distance for all
$\epsilon>0$ and sufficiently large $n$:%
\begin{equation}
\sum_{k,x^{n}}\left\vert p_{K,X^{n}}\left(  k,x^{n}\right)  -p_{K,\widetilde
{X^{n}}}\left(  k,x^{n}\right)  \right\vert \leq\epsilon.
\label{eq:sim-approx}%
\end{equation}

\end{definition}

The following lemma states a condition for faithful simulation that implies
the above one, and it is the one that we will strive to meet when constructing
a protocol for measurement compression.

\begin{lemma}
\label{lem:faithful-sim-1}If for all $\epsilon>0$ and sufficiently large $n$,
it holds that%
\begin{equation}
\sum_{x^{n}}\left\Vert \sqrt{\omega}\left(  \Lambda_{x^{n}}-\widetilde
{\Lambda}_{x^{n}}\right)  \sqrt{\omega}\right\Vert _{1}\leq\epsilon,
\label{eq:faithful-sim-cond-1}%
\end{equation}
where $\omega\equiv\rho^{\otimes n}$,\footnote{The trace norm $\Vert A \Vert_1$ of an
operator $A$ is equal to $\Vert A \Vert_1 = \text{Tr} \{ \sqrt{A^\dag A} \}$.
The trace distance $\Vert \rho - \sigma \Vert_1$ is commonly used as a measure of
distinguishability between the states $\rho$ and $\sigma$ because it is equal to
$2(1 - 2p_e)$ where $p_e$ is the probability of error in distinguishing these states
if they are chosen uniformly at random.} then the measurement simulation is
faithful, in the sense that the above inequality implies the following one for
all decompositions of the source of the form in (\ref{eq:source-decomp}):%
\[
\sum_{k,x^{n}}\left\vert p_{K,X^{n}}\left(  k,x^{n}\right)  -p_{K,\widetilde
{X^{n}}}\left(  k,x^{n}\right)  \right\vert \leq\epsilon.
\]

\end{lemma}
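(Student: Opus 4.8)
The plan is to show that the single operator-norm condition (\ref{eq:faithful-sim-cond-1}) controls the variational distance \emph{simultaneously} for every source decomposition in (\ref{eq:source-decomp}), by exploiting the fact that each $\sigma_k$ is dominated by $\omega=\rho^{\otimes n}$. First I would rewrite the quantity to be bounded. Setting $\Delta_{x^n}\equiv\Lambda_{x^n}-\widetilde\Lambda_{x^n}$ (a Hermitian operator) and substituting the explicit forms of $p_{K,X^n}$ and $p_{K,\widetilde{X^n}}$, the left-hand side of the desired inequality becomes $\sum_{k,x^n}p_K(k)\,\left\vert\Tr\{\Delta_{x^n}\sigma_k\}\right\vert$, since $p_K(k)\geq0$. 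The goal is then to bound this by $\sum_{x^n}\left\Vert\sqrt\omega\,\Delta_{x^n}\sqrt\omega\right\Vert_1$, after which the hypothesis finishes the job with the same $\epsilon$.

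Second, I would record the structural fact that makes everything work. Because $\omega=\sum_k p_K(k)\sigma_k$ with all summands positive, we have $p_K(k)\sigma_k\leq\omega$ for each $k$, hence $\mathrm{supp}(\sigma_k)\subseteq\mathrm{supp}(\omega)$ whenever $p_K(k)>0$ (the $k$ with $p_K(k)=0$ contribute nothing). Let $\Pi$ be the projection onto $\mathrm{supp}(\omega)$ and $\omega^{-1/2}$ the generalized inverse of $\sqrt\omega$ (zero off the support), and define $\xi_k\equiv\omega^{-1/2}\sigma_k\,\omega^{-1/2}\geq0$. Then $\sigma_k=\sqrt\omega\,\xi_k\sqrt\omega$ and $\sum_k p_K(k)\,\xi_k=\omega^{-1/2}\,\omega\,\omega^{-1/2}=\Pi\leq I$. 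This lets me absorb both copies of the state's square root into the measurement operator: $\Tr\{\Delta_{x^n}\sigma_k\}=\Tr\{(\sqrt\omega\,\Delta_{x^n}\sqrt\omega)\,\xi_k\}$.

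Third, I would do the sign bookkeeping. For each fixed $x^n$, take the Jordan decomposition $\sqrt\omega\,\Delta_{x^n}\sqrt\omega=P_{x^n}-Q_{x^n}$ into orthogonal positive and negative parts, so that $\left\Vert\sqrt\omega\,\Delta_{x^n}\sqrt\omega\right\Vert_1=\Tr\{P_{x^n}+Q_{x^n}\}$. Since $\xi_k\geq0$, both $\Tr\{P_{x^n}\xi_k\}$ and $\Tr\{Q_{x^n}\xi_k\}$ are nonnegative, whence $\left\vert\Tr\{(\sqrt\omega\,\Delta_{x^n}\sqrt\omega)\,\xi_k\}\right\vert\leq\Tr\{(P_{x^n}+Q_{x^n})\,\xi_k\}$. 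Summing over $k$ against $p_K(k)$ and using $\sum_k p_K(k)\,\xi_k\leq I$ together with $P_{x^n}+Q_{x^n}\geq0$ gives $\sum_k p_K(k)\,\left\vert\Tr\{\Delta_{x^n}\sigma_k\}\right\vert\leq\Tr\{P_{x^n}+Q_{x^n}\}=\left\Vert\sqrt\omega\,\Delta_{x^n}\sqrt\omega\right\Vert_1$. Summing over $x^n$ and applying (\ref{eq:faithful-sim-cond-1}) then yields Definition~\ref{def:faith-sim}'s inequality.

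The only delicate point, and the one I would be careful to state cleanly, is the possible non-invertibility of $\omega=\rho^{\otimes n}$; it is handled entirely by the domination $\mathrm{supp}(\sigma_k)\subseteq\mathrm{supp}(\omega)$, so that $\sqrt\omega\,\xi_k\sqrt\omega=\Pi\sigma_k\Pi=\sigma_k$. Everything else is the standard maneuver of pushing the state's square roots onto the measurement operators, combined with a Jordan-decomposition estimate. It is worth remarking that the argument in fact proves something slightly stronger: the bound holds term-by-term in $x^n$, and it never uses the resolution-of-identity property $\sum_x\Lambda_x=\sum_x\widetilde\Lambda_x=I$, only that the $\sqrt\omega(\cdot)\sqrt\omega$-weighted difference is small.
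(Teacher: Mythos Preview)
Your proof is correct and follows essentially the same approach as the paper: both push the $\sqrt{\omega}$ factors onto the measurement operators to form $\sqrt{\omega}\,\Delta_{x^n}\sqrt{\omega}$, observe that the operators $p_K(k)\,\omega^{-1/2}\sigma_k\,\omega^{-1/2}$ constitute a POVM on the support of $\omega$, and then use the Jordan decomposition of the Hermitian operator together with positivity of the POVM elements to bound the sum over $k$ by the trace norm. Your treatment of the support issue is slightly more explicit than the paper's, but the arguments are otherwise identical.
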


\begin{proof}
We rewrite the joint distribution $p_{K,X^{n}}\left(  k,x^{n}\right)  $ in
(\ref{eq:true-joint-dist}) as follows:%
\begin{align*}
p_{K,X^{n}}\left(  k,x^{n}\right)   &  =p_{K}\left(  k\right)  \text{Tr}%
\left\{  \Lambda_{x^{n}}\sigma_{k}\right\} \\
&  =\text{Tr}\left\{  \left(  \sqrt{\omega}\Lambda_{x^{n}}\sqrt{\omega
}\right)  \left(  \omega^{-1/2}\ p_{K}\left(  k\right)  \sigma_{k}%
\ \omega^{-1/2}\right)  \right\} \\
&  =\text{Tr}\left\{  \sqrt{\omega}\Lambda_{x^{n}}\sqrt{\omega}\ S_{k}%
\right\}  ,
\end{align*}
where we define $S_{k}$ as%
\[
S_{k}\equiv\omega^{-1/2}\ p_{K}\left(  k\right)  \sigma_{k}\ \omega^{-1/2}.
\]
Observe that the operators $S_{k}$ are positive and sum to the identity on the
support of $\omega$. Thus, they form a POVM $\left\{
S_{k}\right\}  $. Similarly, we can rewrite the joint distribution
$p_{K,\widetilde{X^{n}}}\left(  k,x^{n}\right)  $ as
\[
p_{K,\widetilde{X^{n}}}\left(  k,x^{n}\right)  =\text{Tr}\left\{  \sqrt
{\omega}\widetilde{\Lambda}_{x^{n}}\sqrt{\omega}\ S_{k}\right\}  .
\]
So we can rewrite and upper bound the simulation approximation condition in
(\ref{eq:sim-approx}) as%
\begin{align*}
\sum_{k,x^{n}}\left\vert p_{K,X^{n}}\left(  k,x^{n}\right)  -p_{K,\widetilde
{X^{n}}}\left(  k,x^{n}\right)  \right\vert  &  =\sum_{k,x^{n}}\left\vert
\text{Tr}\left\{  \sqrt{\omega}\left(  \Lambda_{x^{n}}-\widetilde{\Lambda
}_{x^{n}}\right)  \sqrt{\omega}S_{k}\right\}  \right\vert \\
&  \leq\sum_{x^{n}}\left\Vert \sqrt{\omega}\left(  \Lambda_{x^{n}}%
-\widetilde{\Lambda}_{x^{n}}\right)  \sqrt{\omega}\right\Vert _{1},
\end{align*}
where the inequality follows from the following chain of inequalities that
hold for all Hermitian operators $\tau$:%
\begin{align*}
\sum_{k}\left\vert \text{Tr}\left\{  \tau S_{k}\right\}  \right\vert  &
=\sum_{k}\left\vert \text{Tr}\left\{  \left(  \tau_{+}-\tau_{-}\right)
S_{k}\right\}  \right\vert \\
&  \leq\sum_{k}\left\vert \text{Tr}\left\{  \tau_{+}S_{k}\right\}  \right\vert
+\left\vert \text{Tr}\left\{  \tau_{-}S_{k}\right\}  \right\vert \\
&  =\sum_{k}\text{Tr}\left\{  \tau_{+}S_{k}\right\}  +\text{Tr}\left\{
\tau_{-}S_{k}\right\} \\
&  =\text{Tr}\left\{  \tau_{+}\right\}  +\text{Tr}\left\{  \tau_{-}\right\} \\
&  =\left\Vert \tau\right\Vert _{1}.
\end{align*}
In the above, we exploit the decomposition $\tau=\tau_{+}-\tau_{-}$, where
$\tau_{+}$ is the positive part of $\tau$ and $\tau_{-}$ is the negative part,
and the fact that the operators $S_{k}$ form a POVM.
\end{proof}

We now introduce the quantum-to-classical measurement maps $\mathcal{M}%
_{\Lambda^{\otimes n}}$ and $\mathcal{M}_{\widetilde{\Lambda}^{n}}$, defined
as%
\begin{align}
\mathcal{M}_{\Lambda^{\otimes n}}\left(  \sigma\right)   &  \equiv\sum_{x^{n}%
}\text{Tr}\left\{  \Lambda_{x^{n}}\sigma\right\}  \left\vert x^{n}%
\right\rangle \left\langle x^{n}\right\vert ,\label{eq:m-map1}\\
\mathcal{M}_{\widetilde{\Lambda}^{n}}\left(  \sigma\right)   &  \equiv
\sum_{x^{n}}\text{Tr}\left\{  \widetilde{\Lambda}_{x^{n}}\sigma\right\}
\left\vert x^{n}\right\rangle \left\langle x^{n}\right\vert,
\label{eq:m-map2}%
\end{align}
where $\left\vert x^{n}\right\rangle \left\langle x^{n}\right\vert
\equiv \left\vert x_1\right\rangle \left\langle x_1\right\vert \otimes
\left\vert x_2\right\rangle \left\langle x_2\right\vert \otimes
\cdots \otimes
\left\vert x_n\right\rangle \left\langle x_n\right\vert$ and $\{ \vert x \rangle \}$
is some orthonormal basis.
By introducing a purification $\left\vert \phi_{\rho}\right\rangle $ of the
source $\rho$, we can then formulate another notion of faithful simulation, as
given in the following definition:

\begin{definition}
[Faithful simulation for purification]\label{def:faith-sim-2}A sequence of
protocols provides a faithful simulation of the POVM\ $\Lambda$ on the source
$\rho$, if for a purification $\left\vert \phi_{\rho}\right\rangle $ of the
source, the states on the reference and source systems after applying the
measurement maps in (\ref{eq:m-map1}-\ref{eq:m-map2}) are $\epsilon$-close in
trace distance for all $\epsilon>0$ and sufficiently large $n$:%
\begin{equation}
\left\Vert \left(  \text{\emph{id}}\otimes\mathcal{M}_{\Lambda^{\otimes n}%
}\right)  \left(  \phi_{\rho}^{\otimes n}\right)  -\left(  \text{\emph{id}%
}\otimes\mathcal{M}_{\widetilde{\Lambda}^{n}}\right)  \left(  \phi_{\rho
}^{\otimes n}\right)  \right\Vert _{1}\leq\epsilon.
\end{equation}
In the above, it is implicit that the measurement maps act on the $n$ source systems
and the identity map acts on the $n$ reference systems.
\end{definition}

One might think that the above definition of faithful simulation is stronger
than the condition in (\ref{eq:faithful-sim-cond-1}), but the following lemma
demonstrates that they are equivalent.

\begin{lemma}
[Faithful simulation equivalence]The notions of faithful simulation from
Lemma~\ref{lem:faithful-sim-1}\ and Definition~\ref{def:faith-sim-2}\ are
equivalent, in the sense that%
\begin{equation}
\sum_{x^{n}}\left\Vert \sqrt{\omega}\left(  \Lambda_{x^{n}}-\widetilde
{\Lambda}_{x^{n}}\right)  \sqrt{\omega}\right\Vert _{1}=\left\Vert \left(
\text{\emph{id}}^{\otimes n}\otimes\mathcal{M}_{\Lambda^{\otimes n}}\right)
\left(  \phi_{\rho}^{\otimes n}\right)  -\left(  \text{\emph{id}}^{\otimes
n}\otimes\mathcal{M}_{\widetilde{\Lambda}^{n}}\right)  \left(  \phi_{\rho
}^{\otimes n}\right)  \right\Vert _{1}, \label{eq:meas-comp-faithful-cond}%
\end{equation}
for all states $\omega=\rho^{\otimes n}$, purifications of $\rho^{\otimes n}$,
POVMs $\Lambda^{\otimes n}$ and $\widetilde{\Lambda}^{n}$, and the resulting
measurement maps $\mathcal{M}_{\Lambda^{\otimes n}}$ and $\mathcal{M}%
_{\widetilde{\Lambda}^{n}}$.
\end{lemma}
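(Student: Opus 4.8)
The plan is to read the right-hand side of (\ref{eq:meas-comp-faithful-cond}) as the trace distance between two classical--quantum states and to exploit the block structure that this induces. Set $\Delta_{x^{n}} \equiv \Lambda_{x^{n}} - \widetilde{\Lambda}_{x^{n}}$. Applying the definitions of the measurement maps in (\ref{eq:m-map1})--(\ref{eq:m-map2}), the difference of the two post-measurement states on the reference and source systems is
\[
\sum_{x^{n}} \left\vert x^{n}\right\rangle \left\langle x^{n}\right\vert \otimes \text{Tr}_{A^{n}}\left\{ \left( I^{R^{n}} \otimes \Delta_{x^{n}} \right) \phi_{\rho}^{\otimes n} \right\},
\]
where the partial trace is over the $n$ source systems and the vectors $\left\vert x^{n}\right\rangle$ are mutually orthonormal. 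A block-diagonal operator of this form has trace norm equal to the sum of the trace norms of its blocks (one checks directly that $\sqrt{T^{\dagger}T}$ is block diagonal when $T$ is), so the right-hand side of (\ref{eq:meas-comp-faithful-cond}) equals $\sum_{x^{n}} \left\Vert \text{Tr}_{A^{n}}\left\{ \left( I^{R^{n}} \otimes \Delta_{x^{n}} \right) \phi_{\rho}^{\otimes n} \right\}\right\Vert_{1}$.

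It then remains to establish the single-term identity that, for any operator $M$ on the source and any purification $\left\vert \phi_{\rho}\right\rangle^{RA}$ of $\rho$,
\[
\left\Vert \text{Tr}_{A}\left\{ \left( I^{R} \otimes M \right) \phi_{\rho}^{RA} \right\}\right\Vert_{1} = \left\Vert \sqrt{\rho}\, M \sqrt{\rho} \right\Vert_{1}.
\]
To prove this I would write $\left\vert \phi_{\rho}\right\rangle = \left( I^{R} \otimes \sqrt{\rho}\right)\left\vert \Gamma \right\rangle$, with $\left\vert \Gamma \right\rangle$ the unnormalized maximally entangled vector in the eigenbasis of $\rho$ (equivalently, expand $\left\vert \phi_{\rho}\right\rangle$ in its Schmidt decomposition), and invoke the transpose trick $\text{Tr}_{A}\left\{ \left( I \otimes N \right)\left\vert \Gamma \right\rangle\left\langle \Gamma \right\vert\right\} = N^{T}$, which yields $\text{Tr}_{A}\left\{ \left( I \otimes M \right)\phi_{\rho}\right\} = \left( \sqrt{\rho}\, M \sqrt{\rho}\right)^{T}$, with the transpose taken in that eigenbasis. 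Since the transpose preserves singular values, its trace norm is $\left\Vert \sqrt{\rho}\, M \sqrt{\rho}\right\Vert_{1}$. I would also record that the statement does not depend on the choice of purification: any two purifications differ by an isometry on $R$, and an isometry leaves the trace norm of the reduced operator unchanged. Applying this identity with $\rho \to \omega = \rho^{\otimes n}$, with $\left\vert \phi_{\rho}\right\rangle \to \left\vert \phi_{\rho}\right\rangle^{\otimes n}$ — which is a purification of $\omega$ — and with $M \to \Delta_{x^{n}}$, and then summing over $x^{n}$, reproduces exactly the left-hand side of (\ref{eq:meas-comp-faithful-cond}).

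I expect the only delicate point to be the bookkeeping around the transpose: one must fix the basis (the eigenbasis of $\rho$, which is also the Schmidt basis of the purification on the $R$ side) in which the transpose is taken, observe that $\sqrt{\rho}$ is symmetric in that basis, and use $\left\Vert X^{T}\right\Vert_{1} = \left\Vert X\right\Vert_{1}$ so that the basis choice ultimately drops out. Everything else — the block-diagonal trace-norm identity and the fact that $\left\vert \phi_{\rho}\right\rangle^{\otimes n}$ purifies $\rho^{\otimes n}$ — is routine.
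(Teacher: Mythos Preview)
Your proposal is correct and follows essentially the same route as the paper: both reduce to the single-copy case, write the purification via the (unnormalized) maximally entangled vector in the eigenbasis of $\rho$, apply the transpose trick to identify the reduced operator on $R$ with $\big(\sqrt{\rho}\,\Delta_{x}\,\sqrt{\rho}\big)^{T}$, use the block-diagonal structure of the classical register to split the trace norm into a sum, and finish with $\lVert X^{T}\rVert_{1}=\lVert X\rVert_{1}$. Your treatment is slightly more complete in that you explicitly argue independence of the choice of purification (via the isometric freedom on $R$), which the paper's proof handles only by fixing one convenient purification.
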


\begin{proof}
We can prove this result by considering the single-copy case. Consider a state
$\rho$, a purification $\phi_{\rho}$, and measurements $\left\{  \Lambda
_{x}\right\}  $ and $\{\widetilde{\Lambda}_{x}\}$. We choose the purification
$\phi_{\rho}$ to be as follows:%
\[
\sqrt{d}\left(  \sqrt{\rho}^{R}\otimes I^{A}\right)  \left\vert \Phi
\right\rangle ^{RA},
\]
where $\left\vert \Phi\right\rangle ^{RA}$ is the maximally entangled state:%
\[
\left\vert \Phi\right\rangle ^{RA}\equiv\frac{1}{\sqrt{d}}\sum_{x}\left\vert
x\right\rangle ^{R}\left\vert x\right\rangle ^{A},
\]
and $\left\{  \left\vert x\right\rangle \right\}  $ is an orthonormal basis
that diagonalizes $\rho$ (this basis is not related to the
one used in (\ref{eq:m-map1}-\ref{eq:m-map2}). Then the unnormalized state after the measurement on
$A$ is equal to%
\begin{equation}
\left(  I^{R}\otimes\sqrt{\Lambda_{x}}^{A}\right)  \left\vert \phi_{\rho
}\right\rangle \left\langle \phi_{\rho}\right\vert ^{RA}\left(  I^{R}%
\otimes\sqrt{\Lambda_{x}}^{A}\right)  =d\left(  \sqrt{\rho}^{R}\otimes
\sqrt{\Lambda_{x}}^{A}\right)  \left\vert \Phi\right\rangle \left\langle
\Phi\right\vert ^{RA}\left(  \sqrt{\rho}^{R}\otimes\sqrt{\Lambda_{x}}%
^{A}\right)  . \label{eq:proof-dev-1}%
\end{equation}
Given the following \textquotedblleft transpose trick\textquotedblright%
\ identity that holds for a maximally entangled state (and where the transpose
is with respect to the basis chosen for $\left\vert \Phi\right\rangle $)%
\begin{align*}
\left(  I\otimes M\right)  \left\vert \Phi\right\rangle  &  =\left(
M^{T}\otimes I\right)  \left\vert \Phi\right\rangle ,\\
\left\langle \Phi\right\vert \left(  I\otimes M\right)   &  =\left\langle
\Phi\right\vert \left(  M^{\ast}\otimes I\right)  ,
\end{align*}
we then have that (\ref{eq:proof-dev-1}) is equal to%
\[
d\left(  \left(  \sqrt{\rho}\sqrt{\Lambda_{x}}^{T}\right)  ^{R}\otimes
I^{A}\right)  \left\vert \Phi\right\rangle \left\langle \Phi\right\vert
^{RA}\left(  \sqrt{\Lambda_{x}}^{T}\sqrt{\rho}^{R}\otimes I^{A}\right)  ,
\]
where the rightmost equivalence $\sqrt{\Lambda_{x}}^{\ast}=\sqrt{\Lambda_{x}%
}^{T}$ follows because $\Lambda_{x}$ is Hermitian. Tracing over the $A$ system
then leaves the following unnormalized state on the reference system%
\begin{equation}
\sqrt{\rho}\Lambda_{x}^{T}\sqrt{\rho}, \label{eq:state-on-ref}%
\end{equation}
an observation first made in Ref.~\cite{HJW93}.

Now consider that a measurement map id$\ \otimes\mathcal{M}_{\Lambda}$ has the
following action on the purification$~\left\vert \phi_{\rho}\right\rangle $:%
\begin{align*}
\left(  \text{id}\otimes\mathcal{M}_{\Lambda}\right)  \left(  \left\vert
\phi_{\rho}\right\rangle \left\langle \phi_{\rho}\right\vert \right)   &
=\sum_{x}\text{Tr}_{A}\left\{  \left(  \text{id}^{R}\otimes\Lambda_{x}%
^{A}\right)  \left(  \left\vert \phi_{\rho}\right\rangle \left\langle
\phi_{\rho}\right\vert ^{RA}\right)  \right\}  \otimes\left\vert
x\right\rangle \left\langle x\right\vert ^{X}\\
&  =\sum_{x}\left(  \sqrt{\rho}\Lambda_{x}^{T}\sqrt{\rho}\right)  ^{R}%
\otimes\left\vert x\right\rangle \left\langle x\right\vert ^{X},
\end{align*}
where the last line follows from the conclusion in (\ref{eq:state-on-ref}).
Thus, we have that%
\begin{align*}
\left\Vert \left(  \text{id}\otimes\mathcal{M}_{\Lambda}\right)  \left(
\phi_{\rho}\right)  -\left(  \text{id}\otimes\mathcal{M}_{\widetilde{\Lambda}%
}\right)  \left(  \phi_{\rho}\right)  \right\Vert _{1}  &  =\left\Vert
\sum_{x}\left(  \sqrt{\rho}\Lambda_{x}^{T}\sqrt{\rho}\right)  ^{R}%
\otimes\left\vert x\right\rangle \left\langle x\right\vert ^{X}-\sum
_{x}\left(  \sqrt{\rho}\widetilde{\Lambda}_{x}^{T}\sqrt{\rho}\right)
^{R}\otimes\left\vert x\right\rangle \left\langle x\right\vert ^{X}\right\Vert
_{1}\\
&  =\left\Vert \sum_{x}\sqrt{\rho}\left(  \Lambda_{x}^{T}-\widetilde{\Lambda
}_{x}^{T}\right)  \sqrt{\rho}\otimes\left\vert x\right\rangle \left\langle
x\right\vert ^{X}\right\Vert _{1}\\
&  =\sum_{x}\left\Vert \sqrt{\rho}\left(  \Lambda_{x}^{T}-\widetilde{\Lambda
}_{x}^{T}\right)  \sqrt{\rho}\right\Vert _{1}\\
&  =\sum_{x}\left\Vert \sqrt{\rho}\left(  \Lambda_{x}-\widetilde{\Lambda}%
_{x}\right)  \sqrt{\rho}\right\Vert _{1},
\end{align*}
where the third equality follows because the trace norm of a block-diagonal
operator is just the sum of the trace norms of the blocks. The fourth equality
follows because%
\[
\sqrt{\rho}\left(  \Lambda_{x}^{T}-\widetilde{\Lambda}_{x}^{T}\right)
\sqrt{\rho}=\left[  \sqrt{\rho}\left(  \Lambda_{x}-\widetilde{\Lambda}%
_{x}\right)  \sqrt{\rho}\right]  ^{T},
\]
the trace norm depends only on the singular values of a matrix, and these are
invariant under transposition.
\end{proof}

\subsection{Measurement compression theorem}

We can now state Winter's main result:

\begin{theorem}
[Measurement compression theorem]\label{thm:meas-comp}Let $\rho$ be a source
state and $\Lambda$ a POVM\ to simulate on this state. A protocol for a
faithful feedback simulation of the POVM\ with classical
communication rate $R$ and common randomness rate $S$ exists if and only if the
following set of inequalities hold%
\begin{align*}
R  &  \geq I\left(  X;R\right)  ,\\
R+S  &  \geq H\left(  X\right)  ,
\end{align*}
where the entropies are with respect to the state%
\begin{equation}
\sum_{x}\left\vert x\right\rangle \left\langle x\right\vert ^{X}%
\otimes\text{\emph{Tr}}_{A}\left\{  \left(  I^{R}\otimes\Lambda_{x}%
^{A}\right)  \phi^{RA}\right\}  , \label{eq:IC-state}%
\end{equation}
and $\phi^{RA}$ is any purification of the state $\rho$.
\end{theorem}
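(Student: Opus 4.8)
The statement splits into achievability (the ``if'' direction) and a converse (``only if''); I expect achievability to be the technical core and would handle it first. By the faithful-simulation-equivalence lemma together with Lemma~\ref{lem:faithful-sim-1}, it suffices to produce a POVM $\widetilde{\Lambda}^{n}=\sum_{m}p_{M}(m)\,\Gamma^{(m)}$ with $\sum_{x^{n}}\Vert\sqrt{\rho^{\otimes n}}(\Lambda_{x^{n}}-\widetilde{\Lambda}_{x^{n}})\sqrt{\rho^{\otimes n}}\Vert_{1}\le\epsilon$, and the plan is to do this by a doubly-indexed random construction. Fix $|\mathcal{M}|=2^{nS}$ and $|\mathcal{L}|=2^{nR}$, and for each common-randomness value $m$ draw codewords $x^{n}(l,m)$, $l\in\{1,\dots,|\mathcal{L}|\}$, independently from $p_{X^{n}}$ conditioned on the $\delta$-typical set. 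For fixed $m$ I would let Alice's POVM element $\Upsilon^{(m)}_{l}$ be proportional to the typical-subspace compression of $\Lambda_{x^{n}(l,m)}/\bigl(|\mathcal{L}|\,p_{X^{n}}(x^{n}(l,m))\bigr)$, adjoin a ``failure'' element to complete $\{\Upsilon^{(m)}_{l}\}_{l}$ to a genuine POVM, and set $\Gamma^{(m)}_{x^{n}}=\sum_{l:\,x^{n}(l,m)=x^{n}}\Upsilon^{(m)}_{l}$. The protocol then uses $m$ as common randomness, has Alice measure $\{\Upsilon^{(m)}_{l}\}$ and transmit the outcome $l$ (costing $R$ bits), and both parties output $x^{n}(l,m)$.

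Two events must be controlled in expectation over the random codebook. First, for each $m$ the sum $\sum_{l}\Upsilon^{(m)}_{l}$ must concentrate around the typical projector so that the failure element carries negligible weight on $\rho^{\otimes n}$; I would obtain this from an operator Chernoff bound, which succeeds precisely when $|\mathcal{L}|$ is large enough that each summand is operator-norm-small, forcing $R\gtrsim I(X;R)$. Second, rewriting everything on the reference via the transpose-trick identity $\sqrt{\rho}\,\Lambda_{x}^{T}\sqrt{\rho}$ from the proof of the equivalence lemma, the simulated classical--quantum state must be close to the ideal state $\sum_{x^{n}}|x^{n}\rangle\langle x^{n}|\otimes\sqrt{\rho^{\otimes n}}\,\Lambda_{x^{n}}^{T}\sqrt{\rho^{\otimes n}}$; this is a covering problem, and a covering-lemma argument gives closeness as soon as $|\mathcal{L}||\mathcal{M}|\gtrsim 2^{nH(X)}$, i.e.\ $R+S\gtrsim H(X)$. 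A gentle-measurement step reconciles the POVM completion with these approximations, and a standard expurgation selects one good codebook. The hardest part is juggling the two concentration estimates with the right typicality cutoffs so that both rate inequalities drop out with vanishing slack; this is exactly Winter's achievability argument, which I would follow.

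For the converse, suppose a faithful feedback simulation with rates $(R,S)$ exists; I want to deduce $R\ge I(X;R)$ and $R+S\ge H(X)$ for the state (\ref{eq:IC-state}). Without loss of generality Alice applies some POVM to $\rho^{\otimes n}$ together with the common randomness $M$, producing the transmitted classical variable $L$ with $\log_{2}|\mathcal{L}|\le nR$, after which Bob forms $\widehat{X}^{n}$ as a function of $(L,M)$; the key structural fact is that $M$ is prepared independently of the source, so $I(M;R^{n})=0$ for the reference $R^{n}$. Then
\[
nR\ \ge\ H(L)\ \ge\ H(L|M)\ \ge\ I(L;R^{n}|M)\ =\ I(LM;R^{n})\ \ge\ I(\widehat{X}^{n};R^{n}),
\]
the last inequality by data processing. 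Since faithfulness (Definition~\ref{def:faith-sim-2}) forces the state on $\widehat{X}^{n}R^{n}$ to be $\epsilon$-close to the $n$-th tensor power of (\ref{eq:IC-state}), and since mutual information is additive on tensor powers and continuous (the Alicki--Fannes--Winter bound, whose dimension factor is only $n\log|\mathcal{X}|$), I get $I(\widehat{X}^{n};R^{n})\ge nI(X;R)-O(\epsilon n)$, hence $R\ge I(X;R)-O(\epsilon)$. Likewise $n(R+S)\ge H(LM)\ge H(\widehat{X}^{n})$ because $\widehat{X}^{n}$ is a function of $(L,M)$, and the $\widehat{X}^{n}$-marginal is $\epsilon$-close to $p_{X^{n}}$, so Fannes' inequality gives $H(\widehat{X}^{n})\ge nH(X)-O(\epsilon n)$ and thus $R+S\ge H(X)-O(\epsilon)$. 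Letting $n\to\infty$ and then $\epsilon\to0$ completes the converse. The only delicate point is that these continuity estimates carry dimension factors exponential in $n$, but they enter multiplied by $\epsilon$ and so contribute only vanishing corrections to the rates; I expect the paper's ``new approach'' to be a clean packaging of exactly this argument.
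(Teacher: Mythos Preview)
Your achievability sketch and your converse for the bound $R\ge I(X;R)$ are essentially the paper's arguments: the random codebook with two operator-Chernoff events (one per $m$ controlling $\frac{1}{|\mathcal{L}|}\sum_l \xi_{x^n(l,m)}$, one global controlling the empirical frequency of codewords) is exactly Winter's construction reviewed in Section~\ref{sec:MC-achieve}, and your chain $nR\ge H(L)\ge I(L;R^n|M)=I(LM;R^n)\ge I(\hat X^n;R^n)$ matches the paper's Section~\ref{sec:IC-converse} line for line.

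There is, however, a genuine gap in your sum-rate converse. You write $n(R+S)\ge H(LM)\ge H(\hat X^n)$ ``because $\hat X^n$ is a function of $(L,M)$'', but nothing in the general protocol forces Bob's decoding to be deterministic; he could use private local randomness, in which case $H(\hat X^n)$ can exceed $H(LM)$. In fact your argument never invokes the \emph{feedback} constraint at all, so if it were valid it would also prove $R+S\ge H(X)$ for \emph{non}-feedback simulations---and that is false (take $\Lambda_x=p(x)I$: a non-feedback simulation needs zero resources, yet $H(X)>0$). The feedback hypothesis is precisely what has to enter here, and you have not used it.

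The paper closes this gap by bringing in Alice's copy $X'^n$ of the outcome, which exists only in the feedback model. One has
\[
n(R+S)\ \ge\ H(LM)\ \ge\ I(X'^n;LM)\ \ge\ I(X'^n;\hat X^n)_\omega\ \ge\ I(\overline{X}^n;X^n)_\sigma-n\epsilon'\ =\ H(X^n)-n\epsilon',
\]
where the second inequality is just $H(LM|X'^n)\ge 0$, the third is data processing ($\hat X^n$ is obtained from $(L,M)$, stochastically or not), the fourth is the faithfulness condition (\ref{eq:IC-converse-assumption}) together with Alicki--Fannes, and the equality uses that $\overline{X}^n$ is an exact copy of $X^n$ in the ideal state. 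This is the step your proposal is missing. (The paper also treats the $S<0$ regime separately, by a small variant of the same chain.)
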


Note that $I(X;R)$ and $H(X)$ are independent of the choice of purification $\phi^{RA}$.
Moreover, the entropies are invariant with respect to transposition in the
basis that diagonalizes $\rho$ so that we could instead evaluate entropies with
respect to the following classical-quantum state:%
\[
\sum_{x}\left\vert x\right\rangle \left\langle x\right\vert ^{X}%
\otimes\text{Tr}_{A}\left\{  \left(  I^{R}\otimes\left(  \Lambda_{x}%
^{T}\right)  ^{A}\right)  \phi^{RA}\right\}  .
\]

Figure~\ref{fig:IC-rate-region}\ provides a plot of the optimal rate region
given in the above theorem for this case of a feedback simulation in which the
sender also obtains the outcome of the measurement simulation.

\begin{figure}[ptb]
\begin{center}
\includegraphics[
natheight=5.199200in,
natwidth=7.260100in,
height=2.4716in,
width=3.4411in
]{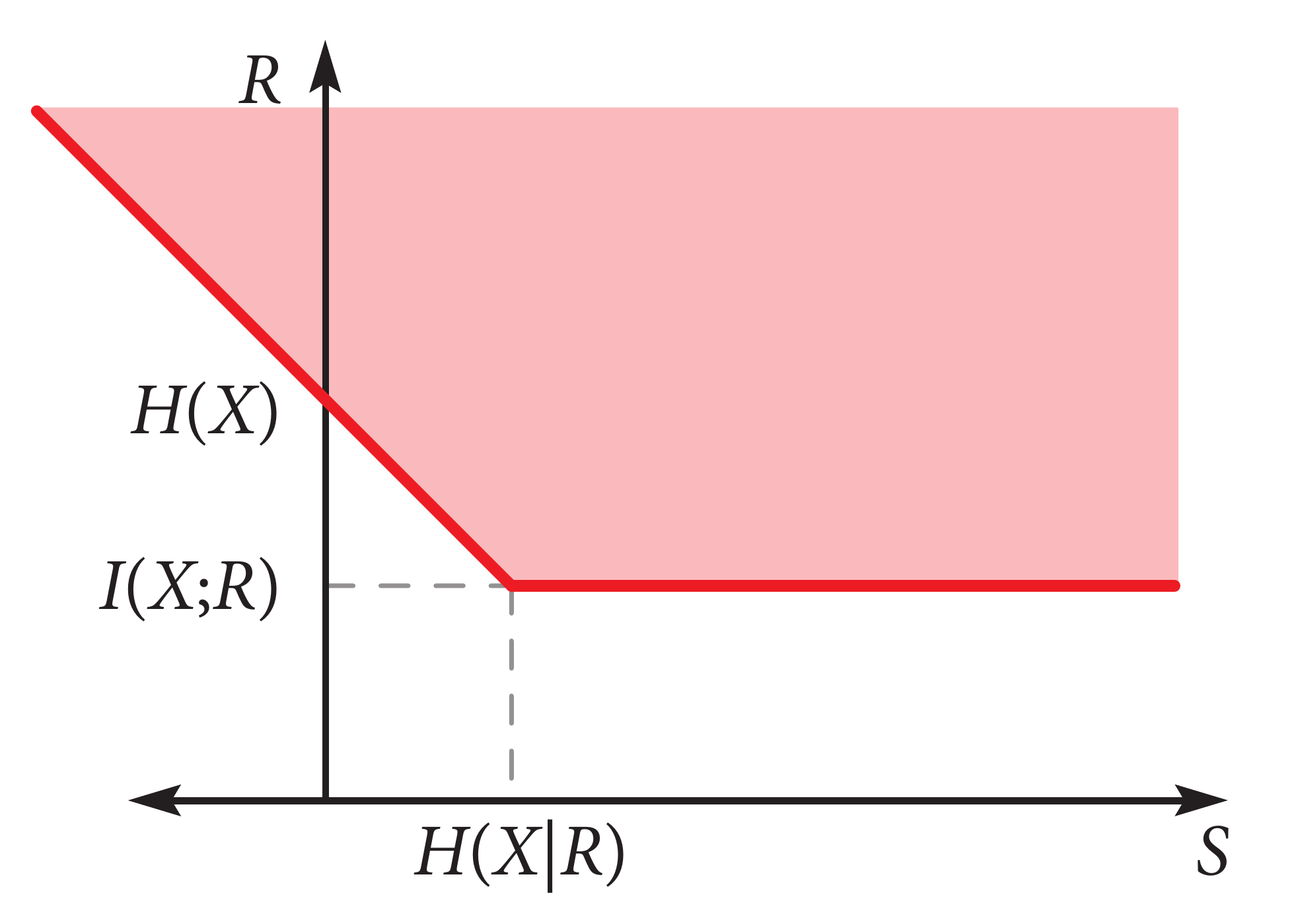}
\end{center}
\caption{\textbf{Optimal rate region for measurement compression with
feedback.} The figure plots the optimal rate region from
Theorem~\ref{thm:meas-comp}. The measurement compression protocol demonstrates
that the rate pair $\left(  S=H\left(  X|R\right)  ,R=I\left(  X;R\right)
\right)  $ is achievable. Wasting common randomness achieves all of the rate
pairs to the right of this corner point. Time-sharing between measurement
compression and Shannon compression $\left(  S=0,R=H\left(  X\right)  \right)
$ achieves all of the rate pairs between them. Finally, employing Shannon
compression and converting the extra classical communication to common
randomness achieves all of the optimal rate pairs along the line extending
northwest from Shannon compression. The converse theorem in
Section~\ref{sec:IC-converse}~proves that this rate region is optimal.}%
\label{fig:IC-rate-region}%
\end{figure}

After giving a simple example of an application of the above theorem, we prove
it in two parts.\ First, we prove that there exists a measurement compression
protocol achieving the rates in the above theorem, specifically the corner
point $\left(  S=H\left(  X|R\right)  ,R=I\left(  X;R\right)  \right)  $.
Next, we prove the converse part of the theorem:\ that one cannot do better
than the rates given in the above theorem.

\subsubsection{Examples}

\label{sec:meas-comp-example}We review two simple examples to illustrate some
applications of Theorem~\ref{thm:meas-comp}. Our first example is for the case
that the initial state on which Alice performs the measurement is some pure
state $\phi^{A}$. In this case, the state in (\ref{eq:IC-state}) becomes%
\[
\sum_{x}\left\vert x\right\rangle \left\langle x\right\vert ^{X}%
\otimes\text{Tr}_{A}\left\{  \left(  I^{R}\otimes\Lambda_{x}^{A}\right)
\left(  \psi^{R}\otimes\phi^{A}\right)  \right\}  =\sum_{x}\text{Tr}\left\{
\Lambda_{x}\phi\right\}  \left\vert x\right\rangle \left\langle x\right\vert
^{X}\otimes\psi^{R}.
\]
Thus, the reference has no correlations with the outcome of the measurement,
so that $I\left(  X;R\right)  =0$ and $H\left(  X|R\right)  =H\left(
X\right)  $, where $H\left(  X\right)  $ is the Shannon entropy of the
distribution $p\left(  x\right)  =\ $Tr$\left\{  \Lambda_{x}\phi\right\}  $.
No classical communication is required in this case---common randomness
suffices for this simulation. Indeed, the protocol just has Alice and Bob
operate as in randomness dilution, whereby they dilute their uniform, shared
randomness to match the distribution $p\left(  x\right)  $. The idea here is
that there are no correlations with some reference system, or similarly, there
is only a trivial decomposition of the form in (\ref{eq:true-joint-dist}), so
that $K$ is a degenerate random variable.

Our next example is a natural one discussed in the conclusion of
Ref.~\cite{DW02}. Consider the POVM%
\begin{equation}
\left\{  \frac{1}{2}\left\vert 0\right\rangle \left\langle 0\right\vert
,\frac{1}{2}\left\vert 1\right\rangle \left\langle 1\right\vert ,\frac{1}%
{2}\left\vert +\right\rangle \left\langle +\right\vert ,\frac{1}{2}\left\vert
-\right\rangle \left\langle -\right\vert \right\}  \label{eq:example-POVM}%
\end{equation}
acting on the maximally mixed state $\pi^{A}=I^{A}/2$. We would like to
determine the resources required to simulate the action of this measurement on
the maximally mixed state. Consider that the Bell state%
\[
\left\vert \Phi\right\rangle ^{RA}\equiv\frac{1}{\sqrt{2}}\left(  \left\vert
00\right\rangle ^{RA}+\left\vert 11\right\rangle ^{RA}\right)  =\frac{1}%
{\sqrt{2}}\left(  \left\vert ++\right\rangle ^{RA}+\left\vert --\right\rangle
^{RA}\right)
\]
is a purification of the maximally mixed state $\pi^{A}$. The post-measurement
classical-quantum state in (\ref{eq:IC-state}) for this case is as follows:%
\[
\frac{1}{4}\left(  \left\vert 0\right\rangle \left\langle 0\right\vert
^{X}\otimes\left\vert 0\right\rangle \left\langle 0\right\vert ^{R}+\left\vert
1\right\rangle \left\langle 1\right\vert ^{X}\otimes\left\vert 1\right\rangle
\left\langle 1\right\vert ^{R}+\left\vert 2\right\rangle \left\langle
2\right\vert ^{X}\otimes\left\vert +\right\rangle \left\langle +\right\vert
^{R}+\left\vert 3\right\rangle \left\langle 3\right\vert ^{X}\otimes\left\vert
-\right\rangle \left\langle -\right\vert ^{R}\right)  .
\]
A simple calculation reveals that the mutual information $I\left(  X;R\right)
$ of the above state is equal to one bit. Also, the conditional entropy
$H\left(  X|R\right)  $ of the above state is equal to one bit. Thus, one bit
of classical communication and one bit of common randomness are required to
simulate this measurement.

For this case, the simulation is straightforward since the
original POVM\ decomposes as a random choice of a $Z$ or $X$ Pauli
measurement:%
\[
\left\{  \frac{1}{2}\left\{  \left\vert 0\right\rangle \left\langle
0\right\vert ,\left\vert 1\right\rangle \left\langle 1\right\vert \right\}
,\frac{1}{2}\left\{  \left\vert +\right\rangle \left\langle +\right\vert
,\left\vert -\right\rangle \left\langle -\right\vert \right\}  \right\}  .
\]
Thus, Alice and Bob can use one bit of common randomness to select which
measurement to perform. Alice then performs the $Z$ or $X$ measurement and
sends the outcome to Bob using one classical bit channel. Bob can then
determine which of the four outcomes in (\ref{eq:example-POVM}) has occurred
by combining the two bits.

\subsection{Achievability proof for measurement compression}

\label{sec:MC-achieve}The resource inequality \cite{DHW04,DHW08}%
\ characterizing measurement compression is as follows:%
\[
I\left(  X;R\right)  \left[  c\rightarrow c\right]  +H\left(  X|R\right)
\left[  cc\right]  \geq\left\langle \Lambda\left(  \rho\right)  \right\rangle
,
\]
where the entropic quantities are with respect to a state of the following
form:%
\begin{align*}
\sum_{x}\left\vert x\right\rangle \left\langle x\right\vert ^{X}%
\otimes\text{Tr}_{A}\left\{  \left(  I^{R}\otimes\Lambda_{x}^{A}\right)
\phi^{RA}\right\}   &  =\sum_{x}p_{X}\left(  x\right)  \left\vert
x\right\rangle \left\langle x\right\vert ^{X}\otimes\theta_{x}^{R},\\
\theta_{x}^{R}  &  \equiv\text{Tr}_{A}\left\{  \left(  I^{R}\otimes\Lambda
_{x}^{A}\right)  \phi^{RA}\right\}  /p_{X}\left(  x\right)  ,\\
p_{X}\left(  x\right)   &  \equiv\text{Tr}\left\{  \left(  I^{R}\otimes
\Lambda_{x}^{A}\right)  \phi^{RA}\right\}  ,
\end{align*}
and $\phi^{RA}$ is some purification of the state $\rho$. The operators
$\theta_{x}^{R}$ take on the following special form%
\[
\theta_{x}^{R}=\sqrt{\rho}\Lambda_{x}^{T}\sqrt{\rho},
\]
if the spectral decomposition of $\rho$ is $\rho=\sum_{x}\lambda_{x}\left\vert
x\right\rangle \left\langle x\right\vert $ and the purification $\phi^{RA}$ is
taken as $\left\vert \phi\right\rangle ^{RA}=\sum_{x}\sqrt{\lambda_{x}%
}\left\vert x\right\rangle ^{R}\left\vert x\right\rangle ^{A}$. The meaning of
the above resource inequality is that $nI\left(  X;R\right)  $ bits of
classical communication $\left[  c\rightarrow c\right]  $\ and $nH\left(
X|R\right)  $ bits of common randomness $\left[  cc\right]  $\ are required in
order to simulate the action of the POVM\ $\Lambda^{\otimes n}$ on the tensor
product state $\rho^{\otimes n}$, and the simulation becomes exact in the
asymptotic limit as $n\rightarrow\infty$.

The main idea of the proof is to \textquotedblleft steer\textquotedblright%
\ the state of the reference to be close to the ensemble produced by the ideal
measurement. In order to do so, we construct a measurement at random, chosen
from an ensemble of operators built from the ideal measurement $\Lambda$ and
the state $\rho$. By employing the Ahlswede-Winter operator Chernoff bound
\cite{AW02}, we can then guarantee that there exists a particular
POVM\ satisfying the faithful simulation condition in (\ref{def:faith-sim-2}),
as long as the amount of classical communication and common randomness is
sufficiently large.

The achievability part of the theorem begins by considering the following
ensemble derived from the state $\rho$ and the POVM\ $\Lambda=\left\{
\Lambda_{x}\right\}  $:%
\begin{align*}
p_{X}\left(  x\right)   &  \equiv\text{Tr}\left\{  \Lambda_{x}\rho\right\}
,\\
\hat{\rho}_{x}  &  \equiv\frac{1}{p_{X}\left(  x\right)  }\sqrt{\rho}%
\Lambda_{x}\sqrt{\rho}.
\end{align*}
(Recall our statement that the entropies are invariant with respect to
transposition in the basis that diagonalizes $\rho$.) Observe that the
expected density operator of this ensemble is just the state$~\rho$:%
\[
\sum_{x}p_{X}\left(  x\right)  \hat{\rho}_{x}=\sum_{x}\sqrt{\rho}\Lambda
_{x}\sqrt{\rho}=\rho.
\]
We will prove that there exist POVMs $\Gamma^{\left(  m\right)  }%
=\{\Gamma_{x^{n}}^{\left(  m\right)  }\}_{x^{n}\in\mathcal{L}}$ with
$m\in\mathcal{M}$,
\begin{align}
\left\vert \mathcal{L}\right\vert  &  =2^{n\left[  I\left(  X;R\right)
+3\delta\right]  },\label{eq:POVM-size}\\
\left\vert \mathcal{M}\right\vert  &  =2^{n\left[  H\left(  X|R\right)
+\delta\right]  }, \label{eq:randomness-size}%
\end{align}
for some $\delta>0$, such that the mixed POVM\ $\widetilde{\Lambda}$ with
elements $\widetilde{\Lambda}_{x^{n}}\equiv\frac{1}{\left\vert \mathcal{M}%
\right\vert }\sum_{m}\Gamma_{x^{n}}^{\left(  m\right)  }$ provides a faithful
simulation of $\Lambda$ on $\rho$ according to the criterion in
(\ref{eq:faithful-sim-cond-1}).

In order to prove achievability, we require the Ahlswede-Winter Operator
Chernoff bound, which we recall below:

\begin{lemma}
[Operator Chernoff Bound]\label{lemma:operator-chernoff}Let $\xi_{1}%
,\ldots,\xi_{M}$ be $M$\ independent and identically distributed random
variables with values in the algebra $\mathcal{B}\left(  \mathcal{H}\right)  $
of linear operators acting on some finite dimensional Hilbert space $\mathcal{H}$. Each $\xi
_{m}$ has all of its eigenvalues between zero and one, so that the following operator inequality holds%
\begin{equation}
\forall m\in\left[  M\right]  :0\leq\xi_{m}\leq I.
\end{equation}
Let $\overline{\xi}$ denote the sample average of the $M$ random variables:%
\begin{equation}
\overline{\xi}=\frac{1}{M}\sum_{m=1}^{M}\xi_{m}.
\end{equation}
Suppose that the expectation $\mathbb{E}_{\xi}\left\{  \xi_{m}\right\}
\equiv\mu$\ of each operator $\xi_{m}$ exceeds the identity operator scaled by
a number $a>0$:%
\begin{equation}
\mu\geq aI.
\end{equation}
Then for every $\eta$ where $0<\eta<1/2$ and $a\left(  1+\eta\right)  \leq1$,
we can bound the probability that the sample average $\overline{\xi}$\ lies
inside the operator interval $\left[  \left(  1\pm\eta\right)  \mu\right]  $:%
\begin{equation}
\Pr_{\xi}\left\{  \left(  1-\eta\right)  \mu\leq\overline{\xi}\leq\left(
1+\eta\right)  \mu\right\}  \geq1-2\dim\mathcal{H}\exp\left(  -\frac{M\eta
^{2}a}{4\ln2}\right)  .
\end{equation}
Thus it is highly likely that the sample average operator $\overline{\xi}$
becomes close to the true expected operator~$\mu$ as $M$ becomes large.
\end{lemma}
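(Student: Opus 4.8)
The plan is to prove this operator Chernoff bound along the classical Ahlswede--Winter lines: normalize so the summands are bounded by $I$, then run an exponential Markov (Bernstein) argument on each tail, using the Golden--Thompson inequality to get around noncommutativity, and close with two elementary scalar logarithmic inequalities. First I would note that $\mu \ge aI$ with $a>0$ makes $\mu$ invertible, so I can set $\zeta_m \equiv a\,\mu^{-1/2}\xi_m\mu^{-1/2}$. Then $0 \le \zeta_m \le a\mu^{-1} \le I$ and $\mathbb{E}\{\zeta_m\} = aI$, and conjugating by $\mu^{1/2}$ shows the event $\{(1-\eta)\mu \le \overline{\xi} \le (1+\eta)\mu\}$ coincides with $\{(1-\eta)aI \le \overline{\zeta} \le (1+\eta)aI\}$, where $\overline\zeta = \tfrac1M\sum_m\zeta_m$. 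So with $S \equiv \sum_{m=1}^M \zeta_m$ it suffices to bound $\Pr\{\lambda_{\max}(S) > Ma(1+\eta)\}$ and $\Pr\{\lambda_{\min}(S) < Ma(1-\eta)\}$ and take a union bound over the two tails. (The hypothesis $a(1+\eta)\le 1$ is precisely what makes the upper event nontrivial, since $\overline\zeta \le I$ automatically.)

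For the upper tail, fix $t>0$ and use the operator exponential Markov inequality: $e^{t\lambda_{\max}(S)} = \lambda_{\max}(e^{tS}) \le \Tr e^{tS}$, hence $\Pr\{\lambda_{\max}(S) \ge Ma(1+\eta)\} \le e^{-tMa(1+\eta)}\,\mathbb{E}\{\Tr e^{tS}\}$. The core step is to bound $\mathbb{E}\{\Tr e^{tS}\}$: I would peel off the summands one at a time, using Golden--Thompson ($\Tr e^{A+B}\le\Tr[e^A e^B]$) to split off $e^{t\zeta_M}$, then linearity of trace and independence to insert $\mathbb{E}\{e^{t\zeta_M}\}$, then the operator inequality $\mathbb{E}\{e^{t\zeta_m}\} \le I + (e^t-1)\mathbb{E}\{\zeta_m\} = (1+a(e^t-1))I \le e^{a(e^t-1)}I$, valid since $0\le\zeta_m\le I$ gives the spectral bound $e^{t\zeta_m}\le I+(e^t-1)\zeta_m$. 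Iterating $M$ times yields $\mathbb{E}\{\Tr e^{tS}\} \le \dim\mathcal{H}\cdot e^{Ma(e^t-1)}$, so $\Pr\{\lambda_{\max}(S)\ge Ma(1+\eta)\}\le \dim\mathcal{H}\exp\!\big(Ma(e^t-1-t(1+\eta))\big)$; choosing $t=\ln(1+\eta)$ gives the exponent $-Ma[(1+\eta)\ln(1+\eta)-\eta]$.

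The lower tail is symmetric: from $e^{-t\lambda_{\min}(S)}=\lambda_{\max}(e^{-tS})\le\Tr e^{-tS}$ one gets $\Pr\{\lambda_{\min}(S)\le Ma(1-\eta)\}\le e^{tMa(1-\eta)}\mathbb{E}\{\Tr e^{-tS}\}$, and the same peeling argument with the spectral bound $e^{-t\zeta_m}\le I+(e^{-t}-1)\zeta_m$ gives $\mathbb{E}\{e^{-t\zeta_m}\}\le e^{a(e^{-t}-1)}I$; choosing $t=-\ln(1-\eta)$ yields the exponent $-Ma[(1-\eta)\ln(1-\eta)+\eta]$. To conclude, I would verify the elementary inequalities $(1+\eta)\ln(1+\eta)-\eta \ge \eta^2/(4\ln 2)$ and $(1-\eta)\ln(1-\eta)+\eta \ge \eta^2/(4\ln 2)$ for $0<\eta<1/2$ — each left side vanishes to second order at $\eta=0$ with leading term $\eta^2/2 > \eta^2/(4\ln 2)$, and the difference is monotone on $(0,1/2)$ (its second derivative is positive, or, for the first, its first derivative stays nonnegative there). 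A union bound then gives $\Pr\{(1-\eta)\mu\le\overline\xi\le(1+\eta)\mu\} \ge 1 - 2\dim\mathcal{H}\exp(-M\eta^2 a/(4\ln 2))$, as claimed.

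I expect the main obstacle to be the noncommutativity of the $\xi_m$: unlike the scalar Chernoff bound, one cannot multiply moment generating functions, and the whole argument rests on Golden--Thompson letting the expectation be absorbed one exponential factor at a time, together with getting the \emph{operator} (not merely scalar) bound $e^{\pm t\zeta_m}\le I+(e^{\pm t}-1)\zeta_m$ and its expectation right. A secondary, purely calculus, point is checking the two logarithmic inequalities on all of $(0,1/2)$ with the stated constant $4\ln 2$; there is ample slack, so a crude bound suffices.
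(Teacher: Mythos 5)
The paper does not supply a proof of this lemma; it is recalled and attributed to Ahlswede and Winter \cite{AW02}, which is exactly where your argument comes from. Your proposal reproduces that standard route correctly: rescale by $\mu^{-1/2}$ so the hypotheses become $0\le\zeta_m\le I$ and $\mathbb{E}\zeta_m=aI$, apply the operator exponential Markov bound $\Pr\{\lambda_{\max}(S)\ge c\}\le e^{-tc}\,\mathbb{E}\Tr e^{tS}$, peel off one factor at a time with Golden--Thompson and independence (using that $\Tr(AB)\le\Tr(AC)$ whenever $A\ge0$ and $B\le C$, with $A=e^{tS_{M-1}}$), bound each single-site moment by the chord inequality $e^{\pm t\zeta}\le I+(e^{\pm t}-1)\zeta$, optimize $t$, and then reduce the exponent to $\eta^2/(4\ln 2)$. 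One small precision you should address when writing it up: the claim that ``the difference is monotone on $(0,1/2)$ (its second derivative is positive\dots)'' is not literally true for the $(1+\eta)\ln(1+\eta)-\eta-\eta^2/(4\ln2)$ case, since its second derivative $\frac{1}{1+\eta}-\frac{1}{2\ln 2}$ changes sign near $\eta\approx 0.386$; as you note parenthetically, the correct way to close that branch is to check the first derivative $\ln(1+\eta)-\eta/(2\ln2)$ remains nonnegative on $(0,1/2]$, which holds since it rises from zero, then falls, and is still positive at the right endpoint where it equals $\ln(3/2)-1/(4\ln 2)\approx 0.045>0$. The $(1-\eta)$ branch is genuinely convex, so the simpler argument applies there.
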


We first define some operators that we will use to generate the POVM elements
$\Gamma_{x^{n}}^{\left(  m\right)  }$. For all $x^{n}\in T_{\delta}^{X^{n}}$
(where $T_{\delta}^{X^{n}}$ is the strongly typical set---see
Appendix~\ref{sec:typ-review}) consider the following positive operators with
trace less than one:%
\begin{equation}
\xi_{x^{n}}^{\prime}\equiv\Pi_{\rho,\delta}^{n}\ \Pi_{\hat{\rho}_{x^{n}%
},\delta}\ \hat{\rho}_{x^{n}}\ \Pi_{\hat{\rho}_{x^{n}},\delta}\ \Pi
_{\rho,\delta}^{n}. \label{eq:POVM-build-up}%
\end{equation}
These operators $\xi_{x^{n}}^{\prime}$ have a trace almost equal to one
because%
\begin{align}
\text{Tr}\left\{  \xi_{x^{n}}^{\prime}\right\}   &  =\text{Tr}\left\{
\Pi_{\rho,\delta}^{n}\ \Pi_{\hat{\rho}_{x^{n}},\delta}\ \hat{\rho}_{x^{n}%
}\ \Pi_{\hat{\rho}_{x^{n}},\delta}\ \Pi_{\rho,\delta}^{n}\right\} \nonumber\\
&  =\text{Tr}\left\{  \Pi_{\rho,\delta}^{n}\ \Pi_{\hat{\rho}_{x^{n}},\delta
}\ \hat{\rho}_{x^{n}}\ \Pi_{\hat{\rho}_{x^{n}},\delta}\right\} \nonumber\\
&  \geq\text{Tr}\left\{  \Pi_{\rho,\delta}^{n}\ \hat{\rho}_{x^{n}}\right\}
-\left\Vert \hat{\rho}_{x^{n}}-\Pi_{\hat{\rho}_{x^{n}},\delta}\ \hat{\rho
}_{x^{n}}\ \Pi_{\hat{\rho}_{x^{n}},\delta}\right\Vert _{1}\nonumber\\
&  \geq1-\epsilon-2\sqrt{\epsilon}. \label{eq:prime-states-large-trace}%
\end{align}
The first inequality follows from the trace inequality in
Lemma~\ref{lem:trace-inequality}, and the second inequality follows by
appealing to the properties of quantum typicality reviewed in
Appendix~\ref{sec:typ-review}. Also, we set $S$ to be the probability of the
typical set $T_{\delta}^{X^{n}}$, and recall that this probability is near to
one:%
\[
S\equiv\Pr\left\{  X^{n}\in T_{\delta}^{X^{n}}\right\}  =\sum_{x^{n}\in
T_{\delta}^{X^{n}}}p_{X^{n}}\left(  x^{n}\right)  \geq1-\epsilon.
\]
We define $\xi^{\prime}$ to be the expectation of the operators $\xi_{x^{n}%
}^{\prime}$, when each one is chosen according to a pruned distribution
$p_{X^{\prime n}}\left(  x^{n}\right)  $:%
\[
\xi^{\prime}\equiv\mathbb{E}_{X^{\prime n}}\left\{  \xi_{X^{\prime n}}%
^{\prime}\right\}  =\sum_{x^{n}}p_{X^{\prime n}}\left(  x^{n}\right)
\xi_{x^{n}}^{\prime},
\]
where we define $p_{X^{\prime n}}\left(  x^{n}\right)  $ as%
\begin{equation}
p_{X^{\prime n}}\left(  x^{n}\right)  \equiv\left\{
\begin{array}
[c]{cc}%
p_{X^{n}}\left(  x^{n}\right)  /S & x^{n}\in T_{\delta}^{X^{n}}\\
0 & \text{else}%
\end{array}
\right.  . \label{eq:pruned-dist}%
\end{equation}
It follows that Tr$\left\{  \xi^{\prime}\right\}  \geq1-\epsilon
-2\sqrt{\epsilon}$ because%
\begin{align}
\text{Tr}\left\{  \xi^{\prime}\right\}   &  =\sum_{x^{n}}p_{X^{\prime n}%
}\left(  x^{n}\right)  \text{Tr}\left\{  \xi_{x^{n}}^{\prime}\right\}
\nonumber\\
&  \geq1-\epsilon-2\sqrt{\epsilon}. \label{eq:avg-prime-large-trace}%
\end{align}
The inequality follows from the one in (\ref{eq:prime-states-large-trace}).
From properties of quantum typicality, we know that%
\begin{equation}
\Pi_{\rho,\delta}^{n}\ \rho^{\otimes n}\ \Pi_{\rho,\delta}^{n}\geq2^{-n\left[
H\left(  \rho\right)  +\delta\right]  }\ \Pi_{\rho,\delta}^{n}.
\label{eq:typ-prop-3}%
\end{equation}
We now define $\Pi$ to be the projector onto the subspace spanned by the
eigenvectors of $\xi^{\prime}$ with eigenvalue larger than $\epsilon\alpha$,
where $\alpha\equiv2^{-n\left[  H\left(  \rho\right)  +\delta\right]
}=2^{-n\left[  H\left(  R\right)  +\delta\right]  }$. Defining the operator
$\Omega$ as%
\begin{equation}
\Omega\equiv\Pi\xi^{\prime}\Pi, \label{eq:def-Omega}%
\end{equation}
it follows that Tr$\left\{  \Omega\right\}  \geq1-2\epsilon-2\sqrt{\epsilon}$
because%
\[
\text{rank}(\Omega)\leq\text{Tr}\left\{  \Pi\right\}  \leq\text{Tr}\left\{
\Pi_{\rho,\delta}^{n}\right\}  \leq2^{n\left[  H\left(  \rho\right)
+\delta\right]  }=\alpha^{-1},
\]
so that eigenvalues smaller than $\epsilon\alpha$ contribute at most
$\epsilon$ to Tr$\left\{  \Omega\right\}  $, giving%
\begin{equation}
\text{Tr}\left\{  \Omega\right\}  \geq\left(  1-\epsilon\right)
\text{Tr}\left\{  \xi^{\prime}\right\}  \geq\left(  1-\epsilon\right)  \left(
1-\epsilon-2\sqrt{\epsilon}\right)  \geq1-2\epsilon-2\sqrt{\epsilon}.
\label{eq:Omega-high-trace}%
\end{equation}

We now exploit a random selection of the operators in (\ref{eq:POVM-build-up})
in order to build up a POVM\ that has desirable properties that we can use to
prove the achievability part of this theorem. Let $\xi_{x^{n}}$ denote the
following operators:%
\[
\xi_{x^{n}}\equiv\Pi\ \xi_{x^{n}}^{\prime}\ \Pi,
\]
so that we confine them to be in the subspace onto which $\Pi$ projects.
Define $\left\vert \mathcal{L}\right\vert \left\vert \mathcal{M}\right\vert
$ random variables $X^{n}\left(  l,m\right)  $ that are chosen independently according to the
pruned distribution $p_{X^{\prime n}}\left(  x^{n}\right)  $. We can group
these variables into $\left\vert \mathcal{M}\right\vert $ sets $\mathcal{C}%
_{m}\equiv\left\{  X^{n}\left(  l,m\right)  \right\}  _{l\in\mathcal{L}}$,
according to the value of the common randomness $m$. Under the pruned
distribution, the expectation of the random operator $\xi_{X^{n}\left(
l,m\right)  }$ is equal to $\Omega$:%
\[
\mathbb{E}_{X^{n}\left(  l,m\right)  }\left\{  \xi_{X^{n}\left(  l,m\right)
}\right\}  =\sum_{x^{n}}p_{X^{\prime n}}\left(  x^{n}\right)  \xi_{x^{n}%
}=\Omega.
\]
Let $E_{m}$ denote the event that the sample average of the operators in the
$m^{\text{th}}$ set $\mathcal{C}_{m}$ falls close to its mean $\Omega$ (in the
operator interval sense):%
\begin{equation}
\Omega\left(  1-\epsilon\right)  \leq\frac{1}{\left\vert \mathcal{L}%
\right\vert }\sum_{l}\xi_{x^{n}\left(  l,m\right)  }\leq\Omega\left(
1+\epsilon\right)  . \label{eq:event-m}%
\end{equation}
The above event $E_{m}$ is equivalent to the following rescaled event:%
\[
\beta\Omega\left(  1-\epsilon\right)  \leq\frac{\beta}{\left\vert
\mathcal{L}\right\vert }\sum_{l}\xi_{x^{n}\left(  l,m\right)  }\leq\beta
\Omega\left(  1+\epsilon\right)  ,
\]
where $\beta\equiv2^{n\left[  H\left(  R|X\right)  -\delta\right]  }$. It is
then clear that the expectation of the operators $\beta\xi_{x^{n}\left(
l,m\right)  }$ satisfies the following operator inequality needed in the
operator Chernoff bound:%
\[
\mathbb{E}_{X^{\prime n}}\left\{  \beta\xi_{X^{n}\left(  l,m\right)
}\right\}  =\beta\Omega\geq\alpha\beta\epsilon\Pi.
\]
Also, each individual rescaled operator $\beta\xi_{x^{n}\left(  l,m\right)  }$
admits a tight operator upper bound with the identity operator because%
\begin{align*}
\beta\xi_{x^{n}\left(  l,m\right)  }  &  =2^{n\left[  H\left(  R|X\right)
-\delta\right]  }\ \Pi\ \Pi_{\rho,\delta}^{n}\ \Pi_{\hat{\rho}_{x^{n}},\delta
}\ \hat{\rho}_{x^{n}}\ \Pi_{\hat{\rho}_{x^{n}},\delta}\ \Pi_{\rho,\delta}%
^{n}\ \Pi\\
&  \leq2^{n\left[  H\left(  R|X\right)  -\delta\right]  }\ 2^{-n\left[
H\left(  R|X\right)  -\delta\right]  }\ \Pi\ \Pi_{\rho,\delta}^{n}\ \Pi
_{\hat{\rho}_{x^{n}},\delta}\ \Pi_{\rho,\delta}^{n}\ \Pi\\
&  =\Pi\ \Pi_{\rho,\delta}^{n}\ \Pi_{\hat{\rho}_{x^{n}},\delta}\ \Pi
_{\rho,\delta}^{n}\ \Pi\\
&  \leq I,
\end{align*}
where we applied the operator inequality $\Pi_{\hat{\rho}_{x^{n}},\delta
}\ \hat{\rho}_{x^{n}}\ \Pi_{\hat{\rho}_{x^{n}},\delta}\leq2^{-n\left[
H\left(  R|X\right)  -\delta\right]  }\ \Pi_{\hat{\rho}_{x^{n}},\delta}$ for
the first inequality. Applying the operator Chernoff bound then gives us an
upper bound on the probability that event $E_{m}$ does not occur%
\begin{align*}
\Pr\left\{  \neg E_{m} \right\}   &  \leq2\,\text{rank}(\Pi)\exp\left(  -\frac{\left\vert
\mathcal{L}\right\vert \epsilon^{2}\left(  \epsilon\alpha\beta\right)  }%
{4\ln2}\right) \\
&  \leq2\cdot2^{n\left[  H\left(  R\right)  +\delta\right]  }\exp\left(
-\frac{2^{n\left[  I\left(  X;R\right)  +3\delta\right]  }\epsilon
^{3}2^{-n\left[  H\left(  R\right)  +\delta\right]  }2^{n\left[  H\left(
R|X\right)  -\delta\right]  }}{4\ln2}\right) \\
&  \leq2\cdot2^{n\left[  H\left(  R\right)  +\delta\right]  }\exp\left(
-\frac{2^{n\delta}\epsilon^{3}}{4\ln2}\right) \\
&  =2\exp\left(  -\frac{2^{n\delta}\epsilon^{3}}{4\ln2}+n\left[  H\left(
R\right)  +\delta\right]  \ln2\right)  .
\end{align*}
Thus, by choosing $\left\vert \mathcal{L}\right\vert $ as we did in
(\ref{eq:POVM-size}), it is possible to make the probability of the complement
of $E_{m}$ doubly-exponentially small in $n$. Also, the above application
of the operator Chernoff bound makes it clear why we rescale according to
$\beta$---doing so allows for the rescaled operators $\beta\xi_{x^{n}\left(
l,m\right)  }$ to admit a tight operator upper bound with the identity (so
that the demands of the operator Chernoff bound are met), while allowing for
$\left\vert \mathcal{L}\right\vert $ to be as small as $2^{n\left[  I\left(
X;R\right)  +3\delta\right]  }$ and $\Pr\left\{  E_{m}^{c}\right\}  $ to be
arbitrarily small.

We now define a counting function $c_{x^{n}}\left(  \mathcal{L},\mathcal{M}%
\right)  $\ on the sets $\mathcal{L}$ and $\mathcal{M}$, which counts the
fraction of occurrences of a sequence $x^{n}\in T_{\delta}^{X^{n}}$ in the set
$\left\{  x^{n}\left(  l,m\right)  \right\}  _{l\in\mathcal{L},m\in
\mathcal{M}}$:%
\[
c_{x^{n}}\left(  \mathcal{L},\mathcal{M}\right)  \equiv\frac{1}{\left\vert
\mathcal{L}\right\vert \left\vert \mathcal{M}\right\vert }\left\vert \left\{
l,m:x^{n}\left(  l,m\right)  =x^{n}\right\}  \right\vert .
\]
(This is effectively a sample average of the counts.)\ When choosing the
random variables $X^{n}\left(  l,m\right)  $ IID according to the pruned
distribution, the expectation of the random counting function $C_{x^{n}%
}\left(  \mathcal{L},\mathcal{M}\right)  $ is equal to the probability of the
sequence $x^{n}$:%
\[
\mathbb{E}\left\{  C_{x^{n}}\left(  \mathcal{L},\mathcal{M}\right)  \right\}
=p_{X^{\prime n}}\left(  x^{n}\right)  \text{.}%
\]
Thus, for any sequence $x^{n}\in T_{\delta}^{X^{n}}$, the expectation of the
counting function has the following lower bound:%
\begin{align*}
\mathbb{E}\left\{  C_{x^{n}}\left(  \mathcal{L},\mathcal{M}\right)  \right\}
&  =p_{X^{\prime n}}\left(  x^{n}\right) \\
&  =\frac{1}{S}p_{X^{n}}\left(  x^{n}\right) \\
&  \geq\min\left\{  p_{X^{n}}\left(  x^{n}\right)  :x^{n}\in T_{\delta}%
^{X^{n}}\right\} \\
&  \geq\gamma\equiv2^{-n\left[  H\left(  X\right)  +\delta\right]  },
\end{align*}
where $S$, recall, is the probability that a random sequence $X^n$ is typical.

In order to appeal to the operator Chernoff bound (we could just use the
classical one, but we instead choose to exploit the operator one), we define
$\hat{P}$ as a diagonal density operator of dimension $\left\vert T_{\delta
}^{X^{n}}\right\vert \times\left\vert T_{\delta}^{X^{n}}\right\vert $, whose
diagonal entries are just the entries of the pruned distribution $p_{X^{\prime
n}}\left(  x^{n}\right)  $. Similarly, for a particular realization of the set
$\left\{  x^{n}\left(  l,m\right)  \right\}  _{l\in\mathcal{L},m\in
\mathcal{M}}$, we can define $\hat{C}$ as a diagonal density operator of the
same dimension, whose diagonal entries are just the entries of the counting
functions $c_{x^{n}}\left(  \mathcal{L},\mathcal{M}\right)  $. From the above
reasoning, it is then clear that the expectation of $\hat{C}$ under a random
choice of the $x^{n}\left(  l,m\right)  $ sequences is just $\hat{P}$:%
\[
\mathbb{E}\left\{  \hat{C}\right\}  =\hat{P}.
\]
Furthermore, (again by the above reasoning), we can establish the following
lower bound on $\hat{P}$:%
\[
\hat{P}\geq\gamma\ \Pi_{p_{X},\delta}^{n},
\]
where $\Pi_{p_{X},\delta}^{n}$ is a typical projector corresponding to the
distribution $p_{X}\left(  x\right)  $.

Let $E_{0}$ be the event that the operator $\hat{C}$ is within $\epsilon$ of
its mean $\hat{P}$:%
\begin{equation}
\left(  1-\epsilon\right)  \hat{P}\leq\hat{C}\leq\left(  1+\epsilon\right)
\hat{P}. \label{eq:condition-2}%
\end{equation}
By appealing to the operator Chernoff bound, we can bound the probability that
the above event does not occur when choosing the sequences $x^{n}\left(
l,m\right)  $ randomly as prescribed above:%
\begin{align*}
\Pr\left\{  \neg E_{0}\right\}   &  \leq2 \, \text{rank}(\Pi_{p_{X},\delta}^{n}) \exp\left(
-\frac{\left\vert \mathcal{L}\right\vert \left\vert \mathcal{M}\right\vert
\epsilon^{2}\gamma}{4\ln2}\right) \\
&  =2\cdot2^{n\left[  H\left(  X\right)  +\delta\right]  }\exp\left(
-\frac{2^{n\left[  I\left(  X;R\right)  +3\delta\right]  }2^{n\left[  H\left(
X|R\right)  +\delta\right]  }\epsilon^{2}2^{-n\left[  H\left(  X\right)
+\delta\right]  }}{4\ln2}\right) \\
&  =2\cdot2^{n\left[  H\left(  X\right)  +\delta\right]  }\exp\left(
-\frac{2^{n3\delta}}{4\ln2}\right) \\
&  =2\exp\left(  -\frac{2^{n3\delta}}{4\ln2}+n\left[  H\left(  X\right)
+\delta\right]  \ln2\right)  .
\end{align*}
Thus, by choosing $\left\vert \mathcal{L}\right\vert \left\vert \mathcal{M}%
\right\vert $ as we did in (\ref{eq:POVM-size}-\ref{eq:randomness-size}), it
is possible to make this probability be doubly-exponentially small in $n$.

We want to ensure that it is possible for all of the events $E_{m}$ and
$E_{0}$ to occur simultaneously. We can guarantee this by applying DeMorgan's
law, the union bound, and the above estimates:%
\begin{align}
&  \Pr\left\{  \neg \left[  E_{0}\cap\left(  \bigcap\limits_{m}E_{m}\right)
\right]  \right\} \nonumber\\
&  =\Pr\left\{  \neg E_{0} \cup\left(  \bigcup\limits_{m}\neg E_{m}\right)
\right\} \nonumber\\
&  \leq\Pr\left\{  \neg E_{0} \right\}  +\sum_{m}\Pr\left\{  \neg E_{m} \right\}
\nonumber\\
&  \leq2\exp\left(  -\frac{2^{n3\delta}}{4\ln2}+n\left[  H\left(  X\right)
+\delta\right]  \ln2\right)  +\left\vert \mathcal{M}\right\vert 2\exp\left(
-\frac{2^{n\delta}\epsilon^{3}}{4\ln2}+n\left[  H\left(  R\right)
+\delta\right]  \ln2\right)  , \label{eq:union-bound-arg}%
\end{align}
which becomes arbitrarily small as $n\rightarrow\infty$. (Thus, it is in fact
overwhelmingly likely for our desired conditions to hold if we choose
$\left\vert \mathcal{L}\right\vert $ and $\left\vert \mathcal{M}\right\vert $
as we did in (\ref{eq:POVM-size}-\ref{eq:randomness-size}).)

So, assume now that we have a set $\left\{  x^{n}\left(  l,m\right)  \right\}
_{l\in\mathcal{L},m\in\mathcal{M}}$ such that the corresponding operators
$\left\{  \xi_{x^{n}\left(  l,m\right)  }\right\}  $\ and $\hat{C}$\ satisfy
the conditions in (\ref{eq:event-m}) and (\ref{eq:condition-2}). We can now
construct from them a set of POVMs\ $\left\{  \Gamma^{\left(  m\right)
}\right\}  $ that will perform a faithful measurement simulation. We define
the POVM elements $\Gamma_{x^{n}}^{\left(  m\right)  }$ of $\Gamma^{\left(
m\right)  }$ as follows:%
\begin{align}
\Gamma_{x^{n}}^{\left(  m\right)  }  &  \equiv\frac{S}{1+\epsilon}%
\omega^{-1/2}\left(  \frac{1}{\left\vert \mathcal{L}\right\vert }%
\sum_{l\ :\ x^{n}\left(  l,m\right)  =x^{n}}\xi_{x^{n}\left(  l,m\right)
}\right)  \omega^{-1/2}\label{eq:POVM-def-meas-comp}\\
&  =\frac{S}{1+\epsilon}\ \frac{\left\vert \left\{  l:x^{n}\left(  l,m\right)
=x^{n}\right\}  \right\vert }{\left\vert \mathcal{L}\right\vert }%
\ \omega^{-1/2}\ \xi_{x^{n}}\ \omega^{-1/2}.\nonumber
\end{align}
We check that for each value $m$ of the common randomness that these operators
form a sub-POVM (a set of positive operators whose sum is upper bounded by the
identity). Indeed, we can appeal to the fact that the operators satisfy the
condition in (\ref{eq:event-m}):%
\begin{align*}
\sqrt{\omega}\ \sum_{x^{n}\in\mathcal{X}^{n}}\Gamma_{x^{n}}^{\left(  m\right)
}\ \sqrt{\omega}  &  =\frac{S}{1+\epsilon}\frac{1}{\left\vert \mathcal{L}%
\right\vert }\sum_{x^{n}\in\mathcal{X}^{n}}\left(  \sum_{l\ :\ x^{n}\left(
l,m\right)  =x^{n}}\xi_{x^{n}\left(  l,m\right)  }\right) \\
&  =\frac{S}{1+\epsilon}\frac{1}{\left\vert \mathcal{L}\right\vert }\sum
_{l}\xi_{x^{n}\left(  l,m\right)  }\\
&  \leq\frac{S}{1+\epsilon}\ \Omega\left(  1+\epsilon\right) \\
&  =S\ \Omega,
\end{align*}
where the inequality appeals to (\ref{eq:event-m}). Continuing with the
definition of $\Omega$ in (\ref{eq:def-Omega}), we have%
\begin{align*}
&  =S\sum_{x^{n}\in\mathcal{X}^{n}}p_{X^{\prime n}}\left(  x^{n}\right)
\Pi\ \Pi_{\rho,\delta}^{n}\ \Pi_{\hat{\rho}_{x^{n}},\delta}\ \hat{\rho}%
_{x^{n}}\ \Pi_{\hat{\rho}_{x^{n}},\delta}\ \Pi_{\rho,\delta}^{n}\ \Pi\\
&  \leq\sum_{x^{n}\in T_{\delta}^{X^{n}}}p_{X^{n}}\left(  x^{n}\right)
\Pi\ \Pi_{\rho,\delta}^{n}\ \hat{\rho}_{x^{n}}\ \Pi_{\rho,\delta}^{n}\ \Pi\\
&  \leq\sum_{x^{n}\in\mathcal{X}^{n}}p_{X^{n}}\left(  x^{n}\right)  \Pi
\ \Pi_{\rho,\delta}^{n}\ \hat{\rho}_{x^{n}}\ \Pi_{\rho,\delta}^{n}\ \Pi\\
&  =\Pi\ \Pi_{\rho,\delta}^{n}\ \rho^{\otimes n}\ \Pi_{\rho,\delta}^{n}\ \Pi\\
&  \leq\rho^{\otimes n}=\omega.
\end{align*}
The first inequality follows from the operator inequality $\Pi_{\hat{\rho
}_{x^{n}},\delta}\ \hat{\rho}_{x^{n}}\ \Pi_{\hat{\rho}_{x^{n}},\delta}\leq
\hat{\rho}_{x^{n}}$ (the projectors $\Pi_{\hat{\rho}_{x^{n}},\delta}$ are
defined with respect to the eigenbasis of $\hat{\rho}_{x^{n}}$). We can then
conclude that these operators form a sub-POVM\ because%
\[
\sqrt{\omega}\ \sum_{x^{n}\in\mathcal{X}^{n}}\Gamma_{x^{n}}^{\left(  m\right)
}\ \sqrt{\omega}\ \leq\ \omega\ \ \ \Longrightarrow\ \ \ \sum_{x^{n}%
\in\mathcal{X}^{n}}\Gamma_{x^{n}}^{\left(  m\right)  }\leq I.
\]
By filling up the rest of the space with some extra operator%
\[
\Gamma_{0}^{\left(  m\right)  }\equiv I-\sum_{x^{n}\in\mathcal{X}^{n}}%
\Gamma_{x^{n}}^{\left(  m\right)  }.
\]
we then have a valid POVM.

Note that we have chosen the measurement operators $\Gamma_{x^{n}}^{\left(
m\right)  }$ so that there is a correspondence between a sequence $x^{n}$ and
a measurement outcome. In the communication paradigm, though, we would like to
have the measurement output some index $l$ that Alice can send over noiseless
classical channels to Bob, so that he can subsequently construct the sequence
$x^{n}\left(  l,m\right)  $ from the value of $l$, the common randomness $m$,
and the codebook $\left\{  x^{n}\left(  l,m\right)  \right\}  $. So, for the
communication paradigm, we can also consider the measurement to be of the form%
\begin{equation}
\Upsilon_{l}^{\left(  m\right)  }\equiv\frac{S}{1+\epsilon}\ \frac
{1}{\left\vert \mathcal{L}\right\vert }\ \omega^{-1/2}\ \xi_{x^{n}\left(
l,m\right)  }\ \omega^{-1/2}. \label{eq:POVM-comp-meas-ops}%
\end{equation}
The POVM\ in (\ref{eq:POVM-def-meas-comp}) then just results by computing
$x^{n}$ from the codeword $x^{n}\left(  l,m\right)  $.

We define the operators $\widetilde{\Lambda}_{x^{n}}$ as%
\[
\widetilde{\Lambda}_{x^{n}}\equiv\frac{1}{\left\vert \mathcal{M}\right\vert
}\sum_{m}\Gamma_{x^{n}}^{\left(  m\right)  },
\]
or equivalently as%
\[
\widetilde{\Lambda}_{x^{n}}=\frac{1}{\left\vert \mathcal{M}\right\vert }%
\sum_{l,m}\mathcal{I}\left(  x^{n}=x^{n}\left(  l,m\right)  \right)
\Upsilon_{l}^{\left(  m\right)  },
\]
where $\mathcal{I}\left(  x^{n}=x^{n}\left(  l,m\right)  \right)  $ is an
indicator function. We now check that the constructed POVM satisfies the
condition in (\ref{eq:faithful-sim-cond-1}) for a faithful simulation:%
\begin{align*}
&  \sum_{x^{n}}\left\Vert \sqrt{\omega}\left(  \Lambda_{x^{n}}-\widetilde
{\Lambda}_{x^{n}}\right)  \sqrt{\omega}\right\Vert _{1}\\
&  =\sum_{x^{n}}\left\Vert p_{X^{n}}\left(  x^{n}\right)  \hat{\rho}_{x^{n}%
}-\frac{S}{1+\epsilon}\ \frac{\left\vert \left\{  l,m:x^{n}\left(  l,m\right)
=x^{n}\right\}  \right\vert }{\left\vert \mathcal{L}\right\vert \left\vert
\mathcal{M}\right\vert }\ \xi_{x^{n}}\right\Vert _{1}\\
&  =\sum_{x^{n}\notin T_{\delta}^{X^{n}}}\left\Vert p_{X^{n}}\left(
x^{n}\right)  \hat{\rho}_{x^{n}}\right\Vert _{1}+\sum_{x^{n}\in T_{\delta
}^{X^{n}}}\left\Vert p_{X^{n}}\left(  x^{n}\right)  \hat{\rho}_{x^{n}}%
-\frac{S}{1+\epsilon}\ \frac{\left\vert \left\{  l,m:x^{n}\left(  l,m\right)
=x^{n}\right\}  \right\vert }{\left\vert \mathcal{L}\right\vert \left\vert
\mathcal{M}\right\vert }\ \xi_{x^{n}}\right\Vert _{1}\\
&  \leq\epsilon+\sum_{x^{n}\in T_{\delta}^{X^{n}}}\left\Vert p_{X^{n}}\left(
x^{n}\right)  \hat{\rho}_{x^{n}}-p_{X^{n}}\left(  x^{n}\right)  \xi_{x^{n}%
}+p_{X^{n}}\left(  x^{n}\right)  \xi_{x^{n}}-\frac{S}{1+\epsilon}%
\ \frac{\left\vert \left\{  l,m:x^{n}\left(  l,m\right)  =x^{n}\right\}
\right\vert }{\left\vert \mathcal{L}\right\vert \left\vert \mathcal{M}%
\right\vert }\ \xi_{x^{n}}\right\Vert _{1}.
\end{align*}
The third equality above follows because the operators $\xi_{x^{n}}$ are
defined to be zero when $x^{n}\notin T_{\delta}^{X^{n}}$. Then,
the
bound in the last line follows from typicality
($\Pr\left\{  X^{n}\notin T_{\delta}^{X^{n}}\right\}  \leq\epsilon$). Continuing, we upper bound as%
\begin{align*}
&  \leq\epsilon+\sum_{x^{n}\in T_{\delta}^{X^{n}}}p_{X^{n}}\left(
x^{n}\right)  \left\Vert \hat{\rho}_{x^{n}}-\xi_{x^{n}}\right\Vert _{1}%
+\sum_{x^{n}\in T_{\delta}^{X^{n}}}\left\Vert p_{X^{n}}\left(  x^{n}\right)
\xi_{x^{n}}-\frac{S}{1+\epsilon}\ \frac{\left\vert \left\{  l,m:x^{n}\left(
l,m\right)  =x^{n}\right\}  \right\vert }{\left\vert \mathcal{L}\right\vert
\left\vert \mathcal{M}\right\vert }\ \xi_{x^{n}}\right\Vert _{1}\\
&  \leq\epsilon+\sum_{x^{n}\in T_{\delta}^{X^{n}}}\frac{p_{X^{n}}\left(
x^{n}\right)  }{S}\left\Vert \hat{\rho}_{x^{n}}-\xi_{x^{n}}\right\Vert
_{1}+\sum_{x^{n}\in T_{\delta}^{X^{n}}}\left\vert \frac{1}{S}p_{X^{n}}\left(
x^{n}\right)  -\frac{1}{1+\epsilon}\ \frac{\left\vert \left\{  l,m:x^{n}%
\left(  l,m\right)  =x^{n}\right\}  \right\vert }{\left\vert \mathcal{L}%
\right\vert \left\vert \mathcal{M}\right\vert }\right\vert \\
&  =\epsilon+\sum_{x^{n}\in T_{\delta}^{X^{n}}}\frac{p_{X^{n}}\left(
x^{n}\right)  }{S}\left\Vert \hat{\rho}_{x^{n}}-\xi_{x^{n}}\right\Vert
_{1}+\left\Vert \hat{P}-\frac{1}{1+\epsilon}\hat{C}\right\Vert _{1}.
\end{align*}
The first inequality is the triangle inequality, and the second inequality
follows by dividing the rightmost two terms by $S$. The equality follows by
invoking the definitions of the operators $\hat{P}$ and $\hat{C}$. We handle
these two remaining terms individually. Consider that%
\begin{align}
\left\Vert \hat{P}-\frac{1}{1+\epsilon}\hat{C}\right\Vert _{1}  &  =\frac
{1}{1+\epsilon}\left\Vert \left(  1+\epsilon\right)  \hat{P}-\hat
{C}\right\Vert _{1}\nonumber\\
&  \leq\frac{1}{1+\epsilon}\left(  \left\Vert \epsilon\hat{P}\right\Vert
_{1}+\left\Vert \hat{P}-\hat{C}\right\Vert _{1}\right) \nonumber\\
&  \leq\frac{2\epsilon}{1+\epsilon}\nonumber\\
&  \leq2\epsilon, \label{eq:bound-var-dist}%
\end{align}
which follows from the triangle inequality, the fact that $\hat{P}$ is a
density operator, and that $\hat{P}$ and $\hat{C}$ satisfy
(\ref{eq:condition-2}). Consider the other term:%
\begin{align}
&  \sum_{x^{n}\in T_{\delta}^{X^{n}}}\frac{p_{X^{n}}\left(  x^{n}\right)  }%
{S}\left\Vert \hat{\rho}_{x^{n}}-\xi_{x^{n}}\right\Vert _{1}\nonumber\\
&  =\sum_{x^{n}\in T_{\delta}^{X^{n}}}p_{X^{\prime n}}\left(  x^{n}\right)
\left\Vert \hat{\rho}_{x^{n}}-\xi_{x^{n}}^{\prime}+\xi_{x^{n}}^{\prime}%
-\xi_{x^{n}}\right\Vert _{1}\nonumber\\
&  \leq\sum_{x^{n}\in T_{\delta}^{X^{n}}}p_{X^{\prime n}}\left(  x^{n}\right)
\left\Vert \hat{\rho}_{x^{n}}-\xi_{x^{n}}^{\prime}\right\Vert _{1}+\sum
_{x^{n}\in T_{\delta}^{X^{n}}}p_{X^{\prime n}}\left(  x^{n}\right)  \left\Vert
\xi_{x^{n}}^{\prime}-\xi_{x^{n}}\right\Vert _{1}\nonumber\\
&  \leq2\sqrt{\epsilon^{\prime}}+2\sqrt{\epsilon^{\prime\prime}},
\label{eq:two-other-final-bounds-meas-comp}%
\end{align}
where we apply the triangle inequality in the third line. For the first bound
with $\epsilon^{\prime}$, we apply the Gentle Operator Lemma
(Lemma~\ref{lem:gentle-operator}) to the condition
in\ (\ref{eq:prime-states-large-trace}), with $\epsilon^{\prime}\equiv
\epsilon+2\sqrt{\epsilon}$. For the second bound with $\epsilon^{\prime\prime
}$, we exploit the equality $\xi_{x^{n}}=\Pi\xi_{x^{n}}^{\prime}\Pi$ and apply
the Gentle Operator Lemma for ensembles (Lemma~\ref{lem:gentle-operator-ens})
to the condition%
\begin{align*}
\sum_{x^{n}\in T_{\delta}^{X^{n}}}p_{X^{\prime n}}\left(  x^{n}\right)
\text{Tr}\left\{  \Pi\xi_{x^{n}}\Pi\right\}   &  =\sum_{x^{n}\in T_{\delta
}^{X^{n}}}p_{X^{\prime n}}\left(  x^{n}\right)  \text{Tr}\left\{  \xi_{x^{n}%
}^{\prime}\right\} \\
&  =\text{Tr}\left\{  \Omega\right\} \\
&  \geq1-\epsilon^{\prime\prime},
\end{align*}
which we proved before in (\ref{eq:Omega-high-trace})\ (with $\epsilon
^{\prime\prime}\equiv2\epsilon+2\sqrt{\epsilon}$). This concludes the proof of
the achievability part of the measurement compression theorem with feedback.

\subsection{Converse theorem for measurement compression}

\label{sec:IC-converse}This section provides a proof of a version of the
converse theorem, which states that the only achievable
rates $R$ and $S$ of classical communication and common randomness
consumption, respectively, are in the rate region given in
Theorem~\ref{thm:meas-comp}. We note that Winter proved a strong version of
the converse theorem \cite{W01}, which states that the error probability
converge exponentially to one as $n$ becomes large. Winter's strong converse
implies that the boundary of the rate region in Theorem~\ref{thm:meas-comp} is
a very sharp dividing line. Here, for the sake of simplicity, we stick to the proof of the ``weak'' converse, which only bounds the error probability away from zero.  The reader can consult  Section~IV\ of
Ref.~\cite{W01}\ for details of Winter's strong converse proof.

The converse theorem states that the
\textquotedblleft single-letter\textquotedblright\ quantities in the rate
region in Theorem~\ref{thm:meas-comp}\ are optimal. A nice consequence is that 
there is no need
to evaluate an intractable regularization of the associated region, as is often the case
for many coding theorems in quantum Shannon theory \cite{W11}. The theorem truly
provides a complete understanding of the measurement compression task from an
information-theoretic perspective.

We now prove the weak converse. Figure~\ref{fig:IC} depicts the most general
protocol for measurement compression with feedback, and it proves to be useful
here to consider a purification of the original input state. The protocol
begins with the reference and Alice possessing the joint system $R^{n}A^{n}$
and Alice sharing the common randomness $M$ with Bob. She then performs a
simulation of the measurement, outputting a random variable $L$ and another
random variable $X^{\prime n}$ that acts as the measurement output on her
side. She sends $L$ to Bob, and Bob produces $\hat{X}^{n}$ from $L$ and the
common randomness $M$. If the protocol is any good for measurement compression
with feedback, then the resulting state $\omega^{R^{n}\hat{X}^{n}X^{\prime n}%
}$ should be $\epsilon$-close in trace distance to the ideal state
$\sigma^{R^{n}X^{n}\overline{X}^{n}}$ (the state resulting from the ideal
protocol in Figure~\ref{fig:ideal-IC}), where $\overline{X}^{n}$ is a copy of
the variable $X^{n}$:%
\begin{equation}
\left\Vert \omega^{R^{n}\hat{X}^{n}X^{\prime n}}-\sigma^{R^{n}X^{n}%
\overline{X}^{n}}\right\Vert _{1}\leq\epsilon.
\label{eq:IC-converse-assumption}%
\end{equation}

We now prove the first lower bound on the classical communication rate $R$:%
\begin{align*}
nR  &  \geq H\left(  L\right) \\
&  \geq I\left(  L;MR^{n}\right) \\
&  =I\left(  LM;R^{n}\right)  +I\left(  L;M\right)  -I\left(  R^{n};M\right)
\\
&  \geq I\left(  LM;R^{n}\right) \\
&  \geq I(\hat{X}^{n};R^{n})_{\omega}\\
&  \geq I(X^{n};R^{n})_{\sigma}-n\epsilon^{\prime}\\
&  =nI\left(  X;R\right)  -n\epsilon^{\prime}.
\end{align*}
The first inequality follows because the entropy of a uniform random variable
is larger than the entropy of any other random variable. The second inequality
follows because $I\left(  L;MR^{n}\right)  =H\left(  L\right)  -H\left(
L|MR^{n}\right)  $ and $H\left(  L|MR^{n}\right)  \geq0$ for a classical
variable $L$. The first equality is an easily verified identity for mutual
information. The third inequality follows because the common randomness $M$ is
not correlated with the reference $R^{n}$ (and hence $I\left(  R^{n};M\right)
=0$) and because $I\left(  L;M\right)  \geq0$. The fourth inequality is from
quantum data processing (Bob processes $L$ and $M$ to get $\hat{X}^{n}$). The
fifth inequality is from (\ref{eq:IC-converse-assumption}) and continuity of
quantum mutual information (the Alicki-Fannes' inequality~\cite{AF04}), where
$\epsilon^{\prime}$ is some function $f(\epsilon)$ such that $\lim
_{\epsilon\rightarrow0}f\left(  \epsilon\right)  =0$. The final equality
follows because the ideal state $\sigma$ is a tensor-power state, and thus the
mutual information $I(X^{n};R^{n})_{\sigma}$ is additive.

A proof for the lower bound on the sum rate $R+S$ goes as follows:%
\begin{align*}
n\left(  R+S\right)   &  \geq H\left(  LM\right) \\
&  \geq I(X^{\prime n};LM)\\
&  \geq I(X^{\prime n};\hat{X}^{n})_{\omega}\\
&  \geq I\left(  \overline{X}^{n}; X^{n}\right)  _{\sigma}-n\epsilon^{\prime
}\\
&  =H\left(  X^{n} \right)  -n\epsilon^{\prime}\\
&  =nH\left(  X\right)  -n\epsilon^{\prime}.
\end{align*}
The first two inequalities follow for the same reasons as the first two above
(we are assuming that copies of $L$ and $M$ are available since they are
classical). The third inequality is quantum data processing. The fourth
inequality follows from (\ref{eq:IC-converse-assumption}) and continuity of
entropy. The first equality follows because the mutual information between a
variable and a copy of it is equal to its entropy. The final equality follows
because the entropy is additive for a tensor power state.

Optimality of the bound $R+S\geq H\left(  X\right)  $ for negative $S$ follows
by considering a protocol whereby Alice uses classical communication alone in
order to simulate the measurement output $X^{n}$ and generate common
randomness $M$ with Bob. The converse in this case proceeds as follows:%
\begin{align*}
nR  &  \geq H\left(  L\right) \\
&  =I\left(  \overline{L};L\right) \\
&  \geq I(X^{\prime n}M^{\prime};\hat{X}^{n}M)\\
&  \geq I(\overline{X}^{n}\overline{M};X^{n}M)-n\epsilon^{\prime}\\
&  =I\left(  \overline{X}^{n};X^{n}\right)  +I\left(  \overline{M};M\right)
-n\epsilon^{\prime}\\
&  =nH\left(  X\right)  +n|S|-n\epsilon^{\prime}.
\end{align*}
The second inequality follows because Bob and Alice have to process $L$ and
its copy $\overline{L}$ in order to recover the approximate $\hat{X}^{n} M$
and $X^{\prime n}M^{\prime}$, respectively. The third inequality follows
because these systems should be close to the ideal ones for a good protocol
(and applying continuity of entropy). The next equalities follow because the
information quantities factor as above for the ideal state.

\subsection{Extension to quantum instruments}

We now briefly review Winter's argument for extending the above protocol from
POVMs to quantum instruments. A quantum instrument is the most general model
for quantum measurement that includes both a classical output and a
post-measurement quantum state \cite{DL70,D76,Ozawa1984}. Our goal is now to
simulate the action of a given quantum instrument on many copies of an input
state $\rho$ using as few resources as possible. The simulation should be such
that Bob possesses the classical output at the end of the protocol (as in the
case of POVM compression), and, as an additional requirement, Alice possesses
the quantum output.

In the present setting, we can conveniently treat a quantum instrument as a
completely positive, trace-preserving (CPTP) map $\mathcal{N}_{\text{instr}}$
of the form
\begin{equation}
\mathcal{N}_{\text{instr}}\left(  \rho\right)  \equiv\sum_{x}\mathcal{N}%
_{x}\left(  \rho\right)  \otimes\left\vert x\right\rangle \left\langle
x\right\vert , \label{eq:q-instr}%
\end{equation}
where each $\mathcal{N}_{x}$ is a completely positive, trace-non-increasing
map of the form%
\[
\mathcal{N}_{x}\left(  \rho\right)  \equiv\sum_{y}N_{x,y}\rho N_{x,y}^{\dag},
\]
such that%
\[
\sum_{y}N_{x,y}^{\dag}N_{x,y}\leq I.
\]
The simulation, implemented by a sequence of maps $\widetilde{\mathcal{N}%
_{\text{instr}}^{n}}$, is defined to be faithful if the following condition holds:

\begin{definition}
[Faithful instrument simulation]A sequence of maps $\widetilde{\mathcal{N}%
_{\emph{instr}}^{n}}$ provides a faithful simulation of the quantum instrument
$\mathcal{N}_{\emph{instr}}$ on the state $\rho$ if for all $\epsilon>0$ and
sufficiently large $n$, the action of the approximation channel on many copies
of a purification $\phi_{\rho}$ of $\rho$ is indistinguishable from the true
quantum instrument, up to a factor of $\epsilon$:%
\begin{equation}
\left\Vert \left(  \text{\emph{id}}\otimes\mathcal{N}_{\operatorname{instr}%
}^{\otimes n}\right)  \left(  \phi_{\rho}^{\otimes n}\right)
-(\text{\emph{id}}\otimes\widetilde{\mathcal{N}_{\operatorname{instr}}^{n}%
})\left(  \phi_{\rho}^{\otimes n}\right)  \right\Vert _{1}\leq\epsilon.
\label{eq:CP-map-simul-condition}%
\end{equation}

\end{definition}

In this case, we have the following theorem:

\begin{theorem}
[Instrument simulation]\label{thm:instrument-simulation}Let $\rho$ be a source
state and $\mathcal{N}_{\operatorname{instr}}$ an instrument to simulate on
this state. A protocol for a faithful feedback simulation of $\mathcal{N}%
_{\operatorname{instr}}$ with classical communication rate $R$
and common randomness rate $S$ exists if and only if%
\begin{align*}
R  &  \geq I\left(  X;R\right)  ,\\
R + S  &  \geq H\left(  X\right)  ,
\end{align*}
where the entropies are with respect to a state of the following form:%
\begin{equation}
\sum_{x}\left\vert x\right\rangle \left\langle x\right\vert ^{X}%
\otimes\text{\emph{Tr}}_{A}\left\{  \left(  I^{R}\otimes\mathcal{N}_{x}%
^{A}\right)  (\phi^{RA})\right\}  ,
\end{equation}
and $\phi^{RA}$ is some purification of the state $\rho$. The simulation is
such that Alice possesses the quantum output of the channel and Bob possesses
the classical output. \label{thm:instr-comp}
\end{theorem}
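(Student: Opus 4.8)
The plan is to derive this theorem from the POVM result (Theorem~\ref{thm:meas-comp}) that we have just proven, by observing that, resource-wise, an instrument is no harder to simulate than its ``underlying POVM'' $\Lambda\equiv\{\Lambda_x\}$ with $\Lambda_x\equiv\sum_y N_{x,y}^{\dag}N_{x,y}$. Discarding Alice's quantum output turns the instrument into $\Lambda$, and, crucially, the classical--quantum state on the reference and classical output is unchanged by this, so the rate region and the entropic quantities $I(X;R)$ and $H(X)$ are exactly those of Theorem~\ref{thm:meas-comp} applied to $\Lambda$. Concretely, the $XR$-marginal of $\sum_x\lvert x\rangle\langle x\rvert^X\otimes\operatorname{Tr}_A\{(I^R\otimes\mathcal{N}_x^A)(\phi^{RA})\}$ equals $\sum_x\lvert x\rangle\langle x\rvert^X\otimes\operatorname{Tr}_A\{(I^R\otimes\Lambda_x^A)\phi^{RA}\}$, since $\operatorname{Tr}_{A'}\circ\mathcal{N}_x$ acts on the purification exactly as $\Lambda_x$ on the reference (the transpose-trick computation already carried out in Section~\ref{sec:meas-comp}).

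\emph{Converse.} Given any protocol achieving (\ref{eq:CP-map-simul-condition}), apply the partial trace over Alice's quantum output systems $A'^n$ to the two states appearing there. By monotonicity of the trace distance they remain $\epsilon$-close, and by the computation just mentioned this reduces (\ref{eq:CP-map-simul-condition}) to the POVM faithfulness condition of Definition~\ref{def:faith-sim-2} for the effective simulated POVM $\widetilde{\Lambda}^n$, now using the very same rates $R$ and $S$. The converse of Theorem~\ref{thm:meas-comp} from Section~\ref{sec:IC-converse}---whose two chains of inequalities refer only to the reference and Bob's reconstructed classical register---then yields $R\geq I(X;R)$ and $R+S\geq H(X)$ with respect to the POVM state, hence with respect to the state in the theorem. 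This completes the converse.

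\emph{Achievability.} We run Winter's protocol for $\Lambda$ from Section~\ref{sec:MC-achieve}, but have Alice implement her simulating measurement coherently and tailored to the instrument. Fix an isometric Stinespring dilation $V=\sum_x V_x\otimes\lvert x\rangle^X$ of the CPTP instrument map, with each $V_x:\mathcal{H}_A\to\mathcal{H}_{A'}\otimes\mathcal{H}_E$ satisfying $V_x^{\dag}V_x=\Lambda_x$, $\sum_x V_x^{\dag}V_x=I$, and $\operatorname{Tr}_E\{V_x\sigma V_x^{\dag}\}=\mathcal{N}_x(\sigma)$; put $V_{x^n}\equiv\bigotimes_i V_{x_i}$, so that $V_{x^n}^{\dag}V_{x^n}=\Lambda_{x^n}$. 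Recall from Section~\ref{sec:MC-achieve} that the simulating POVM elements are $\Upsilon_l^{(m)}=\tfrac{S}{1+\epsilon}\tfrac{1}{\lvert\mathcal{L}\rvert}\,\omega^{-1/2}\xi_{x^n(l,m)}\omega^{-1/2}$, with $\xi_{x^n}=\Pi\,\Pi_{\rho,\delta}^{n}\,\Pi_{\hat{\rho}_{x^n},\delta}\,\hat{\rho}_{x^n}\,\Pi_{\hat{\rho}_{x^n},\delta}\,\Pi_{\rho,\delta}^{n}\,\Pi$ and $\hat{\rho}_{x^n}=\tfrac{1}{p_{X^n}(x^n)}\sqrt{\omega}\,\Lambda_{x^n}\,\sqrt{\omega}=\tfrac{1}{p_{X^n}(x^n)}\sqrt{\omega}\,V_{x^n}^{\dag}V_{x^n}\,\sqrt{\omega}$. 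The factorization of $\hat{\rho}_{x^n}$ through $V_{x^n}$ lets Alice take measurement operators $M_l^{(m)}\propto V_{x^n(l,m)}\,\sqrt{\omega}\,\Pi_{\hat{\rho}_{x^n},\delta}\,\Pi_{\rho,\delta}^{n}\,\Pi\,\omega^{-1/2}$, with the proportionality constant fixed so that $M_l^{(m)\dag}M_l^{(m)}=\Upsilon_l^{(m)}$ exactly. She performs this measurement (completed to a channel exactly as in Section~\ref{sec:MC-achieve}), transmits the index $l$, keeps the resulting subnormalized output $M_l^{(m)}\omega\, M_l^{(m)\dag}$ on $A'^nE^n$, discards $E^n$, and Bob reconstructs $\hat{X}^n=x^n(L,M)$ as before.

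\emph{Faithfulness and main obstacle.} Because each of $\Pi_{\rho,\delta}^{n}$, $\Pi_{\hat{\rho}_{x^n},\delta}$, $\Pi$ is a gentle perturbation on the relevant states---which is exactly the content of the estimates used in Section~\ref{sec:MC-achieve} via Lemma~\ref{lem:gentle-operator} and Lemma~\ref{lem:gentle-operator-ens}---one finds $M_l^{(m)}\omega\,M_l^{(m)\dag}$ close, in the appropriately averaged trace norm, to a fixed multiple of $V_{x^n(l,m)}\,\rho^{\otimes n}\,V_{x^n(l,m)}^{\dag}$, whose partial trace over $E^n$ is proportional to $\mathcal{N}_{x^n(l,m)}(\rho^{\otimes n})/p_{X^n}(x^n)$. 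Since Stinespring isometries and partial traces do not increase the trace distance, carrying the extra register $A'^n$ through the error analysis of Section~\ref{sec:MC-achieve} costs nothing, the code sizes $\lvert\mathcal{L}\rvert=2^{n[I(X;R)+3\delta]}$ and $\lvert\mathcal{M}\rvert=2^{n[H(X|R)+\delta]}$ from (\ref{eq:POVM-size})--(\ref{eq:randomness-size}) still work, and combining with Lemma~\ref{lem:faithful-sim-1} gives (\ref{eq:CP-map-simul-condition}). The sub-POVM property is inherited for free, since $V_{x^n}^{\dag}V_{x^n}=\Lambda_{x^n}$ makes $M_l^{(m)\dag}M_l^{(m)}=\Upsilon_l^{(m)}$ unchanged; I expect the genuinely delicate point to be verifying that the non-commuting typical projectors buried inside $\xi_{x^n}$, which do not commute with the Kraus operators $V_x$, still allow the post-measurement state on $A'^nE^n$ to be controlled---this works only because each projector sandwiching is itself gentle, so the gap between ``dilate, then apply'' and ``apply the ideal dilation $V_{x^n}$'' is absorbed into the same $O(\sqrt{\epsilon})$ terms already present in the POVM proof.
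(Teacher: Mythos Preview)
Your converse is fine and matches the paper's one-line observation that tracing out Alice's quantum output reduces the task to POVM compression, so Section~\ref{sec:IC-converse} applies verbatim.

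The achievability, however, takes a genuinely different route from the paper. The paper first treats the single-Kraus case $\mathcal{N}_{x}(\rho)=N_{x}\rho N_{x}^{\dag}$ by \emph{polar decomposing} $N_{x}\sqrt{\rho}=U_{x}\sqrt{p_{X}(x)\hat\rho_{x}}$, then defining $F^{(m)}_{x^{n}}=U_{x^{n}}\sqrt{c\,\xi_{x^{n}}}\,\omega^{-1/2}$, and bounding the error via Winter's Lemma~14, which gives
\[
\bigl\|(I\otimes\sqrt{\hat\rho_{x^{n}}})|I\rangle\langle I|(I\otimes\sqrt{\hat\rho_{x^{n}}})-(I\otimes\sqrt{\xi_{x^{n}}})|I\rangle\langle I|(I\otimes\sqrt{\xi_{x^{n}}})\bigr\|_{1}\le\sqrt[4]{\|\hat\rho_{x^{n}}-\xi_{x^{n}}\|_{1}}.
\]
For the multi-Kraus case the paper uses a separate, doubly-indexed codebook $\{x^{n}(l_{1},m_{1}),\,y^{n}(l_{2},m_{2}\,|\,l_{1},m_{1})\}$ in which the Kraus label $y$ is encoded only in Alice's \emph{local} randomness $m_{2}$ (rate $H(Y|XR)$) and private output $l_{2}$ (rate $I(Y;R|X)$), with Bob never learning $y$. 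Your Stinespring route $M^{(m)}_{l}\propto V_{x^{n}}\sqrt{\omega}\,\Pi_{\hat\rho_{x^{n}},\delta}\Pi^{n}_{\rho,\delta}\Pi\,\omega^{-1/2}$ is more elegant in that it handles both cases at once (the environment $E$ absorbs the $y$-degrees of freedom that the paper codes explicitly), and it is correct that $M^{(m)\dag}_{l}M^{(m)}_{l}=\Upsilon^{(m)}_{l}$ exactly.

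Where your write-up is too thin is the error analysis. The sentence ``carrying the extra register $A'^{n}$ through the error analysis of Section~\ref{sec:MC-achieve} costs nothing'' is not enough: Section~\ref{sec:MC-achieve} only establishes closeness on $R^{n}X^{n}$ \emph{after} tracing out the post-measurement system, which is precisely what you must not do here. The contractivity of the trace-non-increasing CP map $\tau\mapsto V_{x^{n}}\tau V_{x^{n}}^{\dag}$ is also too lossy if applied naively: it bounds a term of trace $p_{X^{n}}(x^{n})$ by one of trace $1$, and the sum over $x^{n}$ then diverges. What actually makes your construction work is that the polar decomposition $V_{x^{n}}\sqrt{\omega}=W_{x^{n}}\sqrt{p_{X^{n}}(x^{n})\hat\rho_{x^{n}}}$ extracts the probability factor \emph{before} the partial isometry $W_{x^{n}}$, after which the comparison reduces to two (sub)normalized rank-one vectors $(I\otimes\sqrt{\hat\rho_{x^{n}}})|I\rangle$ and $(I\otimes\sqrt{\hat\rho_{x^{n}}}\Pi_{\hat\rho_{x^{n}},\delta}\Pi^{n}_{\rho,\delta}\Pi)|I\rangle$ whose overlap is $1-O(\epsilon)$ by the same typicality estimates you cite. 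You should either carry out this overlap calculation explicitly or, equivalently, note that your $M^{(m)}_{l}$ and the paper's $F^{(m)}_{x^{n}}$ differ only by a left unitary (both square to $\Upsilon^{(m)}_{l}$), after which Winter's Lemma~14 does the work. Either way, the ``genuinely delicate point'' you flag does need an explicit argument, not just an appeal to gentleness.
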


\begin{proof}
We just prove achievability because the converse theorem from the previous
section applies to this case as well. We start by considering the case in
which every map $\mathcal{N}_{x}$ can be written as
\begin{equation}
\mathcal{N}_{x}\left(  \rho\right)  =N_{x}\rho N_{x}^{\dag}.
\label{eq:single-kraus-instr}%
\end{equation}
(The general case, stated in Section~V-G of Ref.~\cite{DHW08}\ though lacking
a formal proof, will also be addressed.)

We construct an approximation instrument using the protocol in the
achievability proof from Section~\ref{sec:MC-achieve}. Let us set%
\[
\Lambda_{x}=N_{x}^{\dag}N_{x},
\]
and construct the operators $\Gamma_{x^{n}}^{\left(  m\right)  }$ from
$\Lambda_{x}$ and the state $\rho$ as in (\ref{eq:POVM-def-meas-comp}), such
that they satisfy all of the properties that we had before. Define the
distribution $p_{X}\left(  x\right)  $ and the states $\hat{\rho}_{x}$ as we
did before:%
\begin{align*}
p_{X}\left(  x\right)   &  =\text{Tr}\left\{  \Lambda_{x}\rho\right\}  ,\\
\hat{\rho}_{x}  &  =\frac{1}{p_{X}\left(  x\right)  }\sqrt{\rho}\Lambda
_{x}\sqrt{\rho}.
\end{align*}
We construct the approximation instrument $\widetilde{\mathcal{N}%
_{\operatorname{instr}}^{n}}$ from $\Gamma_{x^{n}}^{\left(  m\right)  }$,
$p_{X}\left(  x\right)  $, $\hat{\rho}_{x}$, and the Kraus operators~$N_{x}$.
First, consider that the approximation instrument $\widetilde{\mathcal{N}%
_{\operatorname{instr}}^{n}}$ will be a convex combination of some other
instruments:%
\begin{equation}
\widetilde{\mathcal{N}_{\operatorname{instr}}^{n}}\left(  \sigma\right)
\equiv\frac{1}{\left\vert \mathcal{M}\right\vert }\sum_{m}\mathcal{E}%
_{\operatorname{instr}}^{\left(  m\right)  }\left(  \sigma\right)  ,
\label{eq:approx-channel}%
\end{equation}
where%
\[
\mathcal{E}_{\operatorname{instr}}^{\left(  m\right)  }\left(  \sigma\right)
\equiv\sum_{x^{n}}F_{x^{n}}^{\left(  m\right)  }\sigma F_{x^{n}}^{\left(
m\right)  \dag}\otimes\left\vert x^{n}\right\rangle \left\langle
x^{n}\right\vert ,\ \ \ \ \ \ \ \ \ \ \sum_{x^{n}}F_{x^{n}}^{\left(  m\right)
\dag}F_{x^{n}}^{\left(  m\right)  }\leq I.
\]
We now construct the operators $F_{x^{n}}^{\left(  m\right)  }$. Define the
conditional distribution $p_{\widetilde{X^{n}}|M}\left(  x^{n}|m\right)  $ as
follows:%
\[
p_{\widetilde{X^{n}}|M}\left(  x^{n}|m\right)  \equiv\frac{1}{\left\vert
\mathcal{L}\right\vert }\left\vert \left\{  l:x^{n}\left(  l,m\right)
=x^{n}\right\}  \right\vert ,
\]
as in (\ref{eq:POVM-def-meas-comp}), and let $p_{M}\left(  m\right)
=1/\left\vert \mathcal{M}\right\vert $, so that the marginal distribution
$p_{\widetilde{X^{n}}}\left(  x^{n}\right)  $ is as follows:%
\begin{align*}
p_{\widetilde{X^{n}}}\left(  x^{n}\right)   &  =\frac{1}{\left\vert
\mathcal{M}\right\vert }\sum_{m}p_{\widetilde{X^{n}}|M}\left(  x^{n}|m\right)
\\
&  =\frac{1}{\left\vert \mathcal{L}\right\vert \left\vert \mathcal{M}%
\right\vert }\left\vert \left\{  l,m:x^{n}\left(  l,m\right)  =x^{n}\right\}
\right\vert .
\end{align*}
Take a left polar decomposition of the operator $N_{x}\sqrt{\rho}$
and use it to define the unitary operator $U_{x}$:%
\begin{equation}
N_{x}\sqrt{\rho}=U_{x}\sqrt{\sqrt{\rho}N_{x}^{\dag}N_{x}\sqrt{\rho}}%
=U_{x}\sqrt{p_{X}\left(  x\right)  \hat{\rho}_{x}}.
\label{eq:polar-decomp-kraus}%
\end{equation}
Let $U_{x^{n}}$ be as follows:%
\[
U_{x^{n}}\equiv U_{x_{1}}\otimes\cdots\otimes U_{x_{n}}.
\]
We define the Kraus operators $F_{x^{n}}^{\left(  m\right)  }$ for the
instruments $\mathcal{E}_{\operatorname{instr}}^{\left(  m\right)  }$ as
follows:%
\begin{equation}
F_{x^{n}}^{\left(  m\right)  }\equiv U_{x^{n}}\sqrt{p_{\widetilde{X^{n}}%
|M}\left(  x^{n}|m\right)  \ \frac{S}{1+\epsilon}\ \xi_{x^{n}}}\left(
\omega^{-1/2}\right)  . \label{eq:kraus-approx}%
\end{equation}
One can check that these define completely positive trace-non-increasing
instruments $\mathcal{E}_{\operatorname{instr}}^{\left(  m\right)  }$---this
follows from the fact that the operators $\Gamma_{x^{n}}^{\left(  m\right)  }$
in (\ref{eq:POVM-def-meas-comp}) form a sub-POVM. The instruments
$\mathcal{E}_{\operatorname{instr}}^{\left(  m\right)  }$ in turn form the
instrument $\widetilde{\mathcal{N}^{n}}_{\operatorname{instr}}$ via the
relation in (\ref{eq:approx-channel}).

We can now check that this construction satisfies the condition in
(\ref{eq:CP-map-simul-condition}) for a faithful simulation. Consider a
purification%
\[
\phi_{\rho}^{\otimes n}=\left(  I\otimes\sqrt{\omega}\right)  \left\vert
I\right\rangle \left\langle I\right\vert \left(  I\otimes\sqrt{\omega}\right)
,
\]
where $\left\vert I\right\rangle $ is the vector obtained by \textquotedblleft
flipping the bra\textquotedblright\ of the identity channel $\sum_{x^{n}%
}\left\vert x^{n}\right\rangle \left\langle x^{n}\right\vert $:%
\[
\left\vert I\right\rangle \equiv\sum_{x^{n}}\left\vert x^{n}\right\rangle
\left\vert x^{n}\right\rangle .
\]
We have the following bound on the instrument simulation performance:%
\begin{align*}
&  \left\Vert \left(  \text{id}\otimes\mathcal{N}^{\otimes n}%
_{\operatorname{instr}}\right)  \left(  \phi_{\rho}^{\otimes n}\right)
-(\text{id}\otimes\widetilde{\mathcal{N}^{n}_{\operatorname{instr}} })\left(
\phi_{\rho}^{\otimes n}\right)  \right\Vert _{1}\\
&  =\sum_{x^{n}}\left\Vert \left(  I\otimes N_{x^{n}}\sqrt{\omega}\right)
\left\vert I\right\rangle \left\langle I\right\vert \left(  I\otimes
\sqrt{\omega}N_{x^{n}}^{\dag}\right)  -\frac{1}{\left\vert \mathcal{M}
\right\vert }\sum_{m}\left(  I\otimes F_{x^{n}}^{\left(  m\right)  }
\sqrt{\omega}\right)  \left\vert I\right\rangle \left\langle I\right\vert
\left(  I\otimes\sqrt{\omega}F_{x^{n}}^{\left(  m\right)  \dag}\right)
\right\Vert _{1}\\
&  =\sum_{x^{n}}\left\Vert
\begin{array}
[c]{l}%
p_{X^{n}}\left(  x^{n}\right)  \left(  I\otimes U_{x^{n}}\sqrt{\hat{\rho
}_{x^{n}}}\right)  \left\vert I\right\rangle \left\langle I\right\vert \left(
I\otimes\sqrt{\hat{\rho}_{x^{n}}}U_{x^{n}}^{\dag}\right)  -\\
\ \ \ \ \ \ \ \ \ \ \ \ \ \ \ \ \ \ \ \ p_{\widetilde{X^{n}}}\left(
x^{n}\right)  \ \frac{S}{1+\epsilon}\ \left(  I\otimes U_{x^{n}}\sqrt
{\xi_{x^{n}}}\right)  \left\vert I\right\rangle \left\langle I\right\vert
\left(  I\otimes\sqrt{\xi_{x^{n}}}U_{x^{n}}^{\dag}\right)
\end{array}
\right\Vert _{1}\\
&  =\sum_{x^{n}}\left\Vert p_{X^{n}}\left(  x^{n}\right)  \left(
I\otimes\sqrt{\hat{\rho}_{x^{n}}}\right)  \left\vert I\right\rangle
\left\langle I\right\vert \left(  I\otimes\sqrt{\hat{\rho}_{x^{n}}}\right)
-p_{\widetilde{X^{n}}}\left(  x^{n}\right)  \ \frac{S}{1+\epsilon}\ \left(
I\otimes\sqrt{\xi_{x^{n}}}\right)  \left\vert I\right\rangle \left\langle
I\right\vert \left(  I\otimes\sqrt{\xi_{x^{n}}}\right)  \right\Vert _{1}\\
&  \leq\sum_{x^{n}}\left\Vert p_{X^{n}}\left(  x^{n}\right)  \left(
I\otimes\sqrt{\hat{\rho}_{x^{n}}}\right)  \left\vert I\right\rangle
\left\langle I\right\vert \left(  I\otimes\sqrt{\hat{\rho}_{x^{n}}}\right)
-p_{X^{n}}\left(  x^{n}\right)  \left(  I\otimes\sqrt{\xi_{x^{n}}}\right)
\left\vert I\right\rangle \left\langle I\right\vert \left(  I\otimes\sqrt
{\xi_{x^{n}}}\right)  \right\Vert _{1}\\
&  \ \ \ \ +\sum_{x^{n}}\left\Vert p_{X^{n}}\left(  x^{n}\right)  \left(
I\otimes\sqrt{\xi_{x^{n}}}\right)  \left\vert I\right\rangle \left\langle
I\right\vert \left(  I\otimes\sqrt{\xi_{x^{n}}}\right)  -p_{\widetilde{X^{n}}%
}\left(  x^{n}\right)  \frac{S}{1+\epsilon}\left(  I\otimes\sqrt{\xi_{x^{n}}%
}\right)  \left\vert I\right\rangle \left\langle I\right\vert \left(
I\otimes\sqrt{\xi_{x^{n}}}\right)  \right\Vert _{1}.
\end{align*}
The first equality follows from the block structure of the instruments with
respect to the classical flags $\left\vert x^{n}\right\rangle \left\langle
x^{n} \right\vert $. The second equality follows by substituting the polar
decomposition in (\ref{eq:polar-decomp-kraus}) and the definition in
(\ref{eq:kraus-approx}). The third equality follows from the unitary
invariance of the trace norm. The last inequality is the triangle inequality.
Continuing, we have%
\begin{align*}
&  \leq\sum_{x^{n}}p_{X^{n}}\left(  x^{n}\right)  \left\Vert \left(
I\otimes\sqrt{\hat{\rho}_{x^{n}}}\right)  \left\vert I\right\rangle
\left\langle I\right\vert \left(  I\otimes\sqrt{\hat{\rho}_{x^{n}}}\right)
-\left(  I\otimes\sqrt{\xi_{x^{n}}}\right)  \left\vert I\right\rangle
\left\langle I\right\vert \left(  I\otimes\sqrt{\xi_{x^{n}}}\right)
\right\Vert _{1}\\
&  \ \ \ \ +\sum_{x^{n}}\left\vert p_{X^{n}}\left(  x^{n}\right)
-p_{\widetilde{X^{n}}}\left(  x^{n}\right)  \ \frac{S}{1+\epsilon}\right\vert
\\
&  \leq2\sqrt{2}\sum_{x^{n}}p_{X^{n}}\left(  x^{n}\right)  \sqrt[4]{\left\Vert
\hat{\rho}_{x^{n}}-\xi_{x^{n}}\right\Vert _{1}}+2\epsilon\\
&  \leq2\sqrt{2}\sqrt[4]{\sum_{x^{n}}p_{X^{n}}\left(  x^{n}\right)  \left\Vert
\hat{\rho}_{x^{n}}-\xi_{x^{n}}\right\Vert _{1}}+2\epsilon\\
&  \leq2\sqrt{2}\sqrt[4]{\epsilon+2\sqrt{\epsilon^{\prime}}+2\sqrt
{\epsilon^{\prime\prime}}}+2\epsilon.
\end{align*}
The first inequality follows by factoring out the distribution $p_{X^{n}%
}\left(  x^{n}\right)  $ and because the positive operator $(I\otimes\sqrt
{\xi_{x^{n}}})\left\vert I\right\rangle \left\langle I\right\vert
(I\otimes\sqrt{\xi_{x^{n}}})$ has trace less than one. The second inequality
follows from Winter's Lemma 14:%
\[
\left\Vert \left(  I\otimes\sqrt{\hat{\rho}_{x^{n}}}\right)  \left\vert
I\right\rangle \left\langle I\right\vert \left(  I\otimes\sqrt{\hat{\rho
}_{x^{n}}}\right)  -\left(  I\otimes\sqrt{\xi_{x^{n}}}\right)  \left\vert
I\right\rangle \left\langle I\right\vert \left(  I\otimes\sqrt{\xi_{x^{n}}%
}\right)  \right\Vert _{1}\leq\sqrt[4]{\left\Vert \hat{\rho}_{x^{n}}%
-\xi_{x^{n}}\right\Vert _{1}},
\]
and our previous bound in (\ref{eq:bound-var-dist}). The third inequality is
from concavity of the quartic-root function, and the final one follows from
our previous bounds in (\ref{eq:two-other-final-bounds-meas-comp}) and the
fact that the probability mass of the atypical set is upper bounded by
$\epsilon$.

We are now ready to consider the general case, in which one wants to simulate
an instrument of the form~(\ref{eq:q-instr}). For this purpose, we require a
slightly different coding strategy that combines ideas from
Section~\ref{sec:MC-achieve}\ and the above development.

First, consider that it is possible to implement a quantum instrument of the
form in (\ref{eq:q-instr}) by tracing over an auxiliary register $Y$:%
\[
\mathcal{N}_{\text{instr}}\left(  \rho\right)  =\text{Tr}_{Y}\left\{
\sum_{x,y}N_{x,y}\rho N_{x,y}^{\dag}\otimes\left\vert x\right\rangle
\left\langle x\right\vert ^{X}\otimes\left\vert y\right\rangle \left\langle
y\right\vert ^{Y}\right\}  .
\]
So, Alice and Bob will simulate the following instrument%
\begin{equation}
\sum_{x,y}N_{x,y}\rho N_{x,y}^{\dag}\otimes\left\vert x\right\rangle
\left\langle x\right\vert ^{X}\otimes\left\vert y\right\rangle \left\langle
y\right\vert ^{Y}, \label{eq:instr-sim}%
\end{equation}
in such a way that Bob does not receive the outcome $y$, and thus they
effectively implement the instrument $\mathcal{N}_{\text{instr}}$. The idea is
that they will exploit a code with the following structure:

\begin{enumerate}
\item Alice communicates $nI\left(  X;R\right)  $ bits of classical
communication to Bob (enough for him to reconstruct the $x$ output).

\item Alice keeps $nI\left(  Y;R|X\right)  $ bits of the output to herself.

\item Alice exploits $nH\left(  X|R\right)  $ bits of common randomness shared
with Bob in the simulation.

\item Alice uses $nH\left(  Y|XR\right)  $ bits of local, uniform randomness
(not shared with Bob).
\end{enumerate}

The entropies are with respect to the following classical-quantum state:%
\[
\sum_{x,y}\text{Tr}_{A}\left\{  \left(  I^{R}\otimes\left(  N_{x,y}^{\dag
}N_{x,y}\right)  ^{A}\right)  \left(  \phi_{\rho}^{RA}\right)  \right\}
\otimes\left\vert x\right\rangle \left\langle x\right\vert ^{X}\otimes
\left\vert y\right\rangle \left\langle y\right\vert ^{Y},
\]
and observe that $I\left(  X;R\right)  $ and $H\left(  X|R\right)  $ are
invariant with respect to the choice of Kraus operators $\left\{
N_{x,y}\right\}  $ for each map $\mathcal{N}_{x}$.

More precisely, the measurements used in the simulation are chosen randomly as
in the proof in Section~\ref{sec:MC-achieve}, with the following
modifications. Choose $\left\vert \mathcal{L}_{1}\right\vert \left\vert
\mathcal{M}_{1}\right\vert $ codewords $x^{n}\left(  l_{1},m_{1}\right)  $
independently and randomly according to a pruned version of the distribution%
\[
p_{X}\left(  x\right)  \equiv\text{Tr}\left\{  \mathcal{N}_{x}\left(
\rho\right)  \right\}  =\sum_{y}\text{Tr}\left\{  N_{x,y}^{\dag}N_{x,y}%
\rho\right\}  ,
\]
with%
\begin{align*}
\left\vert \mathcal{L}_{1}\right\vert  &  \approx2^{nI\left(  X;R\right)  },\\
\left\vert \mathcal{M}_{1}\right\vert  &  \approx2^{nH\left(  X|R\right)  }.
\end{align*}
For each pair $\left(  l_{1},m_{1}\right)  $, choose $\left\vert
\mathcal{L}_{2}\right\vert \left\vert \mathcal{M}_{2}\right\vert $ codewords
$y^{n}\left(  l_{2},m_{2}|l_{1},m_{1}\right)  $ independently and randomly
according to a distribution:%
\[
p_{Y^{\prime n}|X^{n}}\left(  y^{n}|x^{n}\left(  l_{1},m_{1}\right)  \right)
,
\]
which is a pruned version of the conditional distribution%
\[
p_{Y|X}\left(  y|x\right)  =\frac{1}{p_{X}\left(  x\right)  }\text{Tr}\left\{
N_{x,y}^{\dag}N_{x,y}\rho\right\}  ,
\]
where%
\begin{align*}
\left\vert \mathcal{L}_{2}\right\vert  &  \approx2^{nI\left(  Y;R|X\right)
},\\
\left\vert \mathcal{M}_{2}\right\vert  &  \approx2^{nH\left(  Y|XR\right)  }.
\end{align*}

After choosing these codewords, we have a codebook $\left\{  x^{n}\left(
l_{1},m_{1}\right)  ,y^{n}\left(  l_{2},m_{2}|l_{1},m_{1}\right)  \right\}  $.
Divide all of these codewords into $\left\vert \mathcal{M}_{1}\right\vert
\left\vert \mathcal{M}_{2}\right\vert $ sets of the form $\left\{
x^{n}\left(  l_{1},m_{1}\right)  ,y^{n}\left(  l_{2},m_{2}|l_{1},m_{1}\right)
\right\}  _{l_{1},l_{2}}$. In order to have a faithful simulation, we require
several conditions analogous to (\ref{eq:event-m}) and (\ref{eq:condition-2})
to hold (except that the first average similar to that in (\ref{eq:event-m})
is over just $l_{1}$ and there is another over both $l_{1}$ and $l_{2}$, and
the other operators like $\hat{C}$\ in (\ref{eq:condition-2}) are with respect
to both $l_{1}$ and $m_{1}$ and all of $l_{1}$, $l_{2}$, $m_{1}$, and $m_{2}%
$). Choosing the sizes of the sets as we do above and applying the Operator
Chernoff Bound several times guarantees that there exists a choice of the
codebook $\left\{  x^{n}\left(  l_{1},m_{1}\right)  ,y^{n}\left(  l_{2}%
,m_{2}|l_{1},m_{1}\right)  \right\}  $ such that these conditions hold. By the
development at the end of Section~\ref{sec:MC-achieve} and the result for
instruments of the special form~(\ref{eq:single-kraus-instr}), it follows that
these conditions lead to a faithful simulation.

The simulation then operates by having the variable $m_{1}$ be common
randomness shared with Bob, $m_{2}$ as additional local, uniform randomness
that Alice uses for picking the measurement, and all of the measurements have
outcomes $l_{1}$ and $l_{2}$. After performing the measurement simulation,
Alice sends the outcome $l_{1}$ to Bob, which he can subsequently use to
reconstruct the codeword $x^{n}\left(  l_{1},m_{1}\right)  $ by combining with
his share $m_{1}$ of the common randomness. The proof as we had it before goes
through---the only difference is in constructing the codebook in such a way
that the sequences $x^{n}$ and $y^{n}$ are separated out. The simulated
instrument has a form like that in (\ref{eq:instr-sim}), and if Alice discards
$y^{n}$, it follows, by
applying the monotonicity of trace distance to the condition in
(\ref{eq:CP-map-simul-condition}),
that Alice and Bob simulate the original instrument.
\end{proof}

As a closing note for this section, we would like to mention that the quantity
$I(X;R)$, appearing in Theorem~\ref{thm:instr-comp}, has a long
history.
Since $I(X;R)$ measures the amount of data \emph{created} by the quantum
measurement, contrarily to the shared randomness that exists before the
measurement itself, it seems natural to consider it as a measure of the
\emph{information gain} produced by the quantum measurement. In this
connection, the 1971 paper of Groenewold was the first to put forward the
problem of measuring the information gain in \textquotedblleft
quantal\textquotedblright\ measurements by means of information-theoretic
quantities~\cite{G71}. Groenewold considered the following quantity
(reformulated according to our notation):
\[
G(\rho,\mathcal{N}_{\operatorname{instr}}):=H(\rho)-\sum_{x}p_{X}(x)H\left\{
\mathcal{N}_{x}(\rho)\,/\,p_{X}(x)\right\}  ,
\]
and he conjectured its positivity for von Neumann-L\"{u}ders measurements. Keep in mind that, at that time, the theory of quantum instruments was
in its infancy, and the von Neumann-L\"{u}ders state reduction postulate,
according to which the initial state is projected onto the eigenspace
corresponding to the observed outcome, was the only model of state reduction
usually considered. Subsequently, Groenewold's conjecture was proved by
Lindblad~\cite{L72}. As the theory of quantum instruments
advanced~\cite{Ozawa1984}, quantum instruments with negative Groenewold's
information gain appeared to be the rule, rather than the exception, until
Ozawa finally settled the problem by proving that $G(\rho,\mathcal{N}%
_{\operatorname{instr}})$ is nonnegative for all states $\rho$ if and only if
the quantum instrument has the special form in~(\ref{eq:single-kraus-instr}%
)~\cite{O86}.

The point is that, for quantum instruments of the form
in~(\ref{eq:single-kraus-instr}), Groenewold's information gain $G(\rho
,\mathcal{N}_{\operatorname{instr}})$ is equal to $I(X;R)$~\cite{BHH08}. This
is a consequence of the fact that, for any matrix $K$, $K^{\dag}K$ and
$KK^{\dag}$ have the same eigenvalues (i.e., the squares of the singular
values of $K$), so that
\[
H\left\{  \mathcal{N}_{x}(\rho) \, / \, p_{X}(x)\right\}  \equiv H(N_{x}\rho
N_{x}^{\dag}\, / \, p_{X}(x) )=H(\sqrt{\rho}N_{x}^{\dag}N_{x}\sqrt{\rho} \,
/\, p_{X}(x)).
\]
This coincidence retroactively strengthens the interpretation of $G(\rho,\mathcal{N}_{\operatorname{instr}})$ as the information
gain due to a quantum measurement, at least in the special case of instruments satisfying (\ref{eq:single-kraus-instr}). In those cases, due to Winter's measurement compression theorem,  $G(\rho,\mathcal{N}_{\operatorname{instr}})$ truly is the rate at which the instrument generates information. More generally, however, $I(X;R)$ is the better measure of information gain both because it is nonnegative and because it \emph{always} has the full strength of Winter's theorem behind it.

\subsubsection{Application to channels}

As already noticed in~\cite{W01}, with Theorem~\ref{thm:instr-comp} at hand,
it is easy to consider the case in which one wants to simulate the action of
some CPTP\ map $\mathcal{N}$\ on many copies of the state $\rho$. The idea is
that, for every Kraus representation \cite{K83} of the map $\mathcal{N}$ as
\begin{equation}
\mathcal{N}\left(  \sigma\right)  \equiv\sum_{x}N_{x}\sigma N_{x}^{\dag},
\label{eq:q-channel-for-sim}%
\end{equation}
where $N_{x}$ are a set of Kraus operators satisfying%
\[
\sum_{x}N_{x}^{\dag}N_{x}=I,
\]
one can apply Theorem~\ref{thm:instr-comp} and simulate the corresponding
quantum instrument
\[
\mathcal{N}_{\operatorname{instr}}\left(  \rho\right)  =\sum_{x} N_{x}\rho
N_{x}^{\dag}\otimes\left\vert x\right\rangle \left\langle x\right\vert .
\]
Then, any protocol faithfully simulating the above instrument automatically
leads, by monotonicity of trace distance, to a faithful simulation of the
channel $\mathcal{N}$, in the sense that it provides a sequence of maps
$\widetilde{\mathcal{N}^{n}}$ such that:
\[
\left\Vert \left(  \text{id}\otimes\mathcal{N}^{\otimes n}\right)  \left(
\phi_{\rho}^{\otimes n}\right)  -(\text{id}\otimes\widetilde{\mathcal{N}^{n}%
})\left(  \phi_{\rho}^{\otimes n}\right)  \right\Vert _{1}\leq\epsilon,
\]
for any $\epsilon> 0$ and sufficiently large $n$.

An important thing to stress is that the rates obtained in this way
\emph{depend} on the particular Kraus representation used to construct the
instrument $\mathcal{N}_{\operatorname{instr}}$. The rates of consumption of
classical resources can hence be minimized over all possible Kraus
representations of a given channel. However, such an optimization turns out to
be difficult in general, as the following example shows.

Let us consider the case of a channel, which can be written as a mixture of
unitaries, i.e.,
\begin{equation}
\mathcal{N}(\rho)=\sum_{x}p(x)U_{x}\rho U_{x}^{\dag}, \label{eq:rand-uni}%
\end{equation}
where $U_{x}^{\dag}U_{x}=I$. Such a channel can be simulated without the need
for classical communication. This follows simply from the fact that the quantum
instrument constructed from~(\ref{eq:rand-uni}) corresponds to measuring the
POVM $\Lambda_{x}=p(x)I$, whose outcomes are completely random and
uncorrelated with the reference, so that $I(X;R)=0$. In fact, the converse is
also true: if a given channel admits a Kraus decomposition for which
$I(X;R)=0$, then its action on the state $\rho$ can be written as a mixture of
unitaries as in~(\ref{eq:rand-uni}) \cite{B06}. In order to show this, suppose
that we find a Kraus decomposition $\mathcal{N}(\rho)=\sum_{x}N_{x}\rho
N_{x}^{\dag}$ such that the quantum mutual information $I(X;R)=0$, where it is
calculated with respect to the following classical-quantum state
\[
\sum_{x}\left\vert x\right\rangle \left\langle x\right\vert ^{X}%
\otimes\operatorname{Tr}_{A}\left\{  (I^{R}\otimes N_{x}^{\dag}N_{x}^{A}%
)\phi^{RA}\right\}  ,
\]
and $\phi^{RA}$ is a purification of $\rho$. Adopting the same notation used
at the beginning of Section~\ref{sec:MC-achieve}, we know that $I(X;R)=0$ if
and only if the sub-normalized states $\theta_{x}^{R}=\sqrt{\rho}(N_{x}^{\dag
}N_{x})^{T}\sqrt{\rho}$ are all proportional to $\rho$. This is possible if
and only if the operators $(N_{x}^{\dag}N_{x})^{T}$ are all proportional to
the identity (on the support of $\rho$), thus proving the claim.

Hence, as the above example shows, to minimize the rate of classical
communication needed to simulate a quantum channel constitutes a task of
complexity comparable to that of deciding whether a given channel possesses a
random-unitary Kraus decomposition or not, for which numerical methods are
known~\cite{AS08} but a general analytical solution has yet to be found.

\section{Non-feedback measurement compression}

\label{sec:MC-non-feedback}We now discuss an extension of Winter's measurement
compression theorem in which the sender is not required to obtain the outcome
of the measurement simulation (known as a \textquotedblleft
non-feedback\textquotedblright\ simulation). Achieving a feedback simulation is more demanding than one without feedback, so we should expect the non-feedback problem to show some reduction in the resources required.  To get a sense of where the improvement comes from will require considering a more general type of POVM decomposition than that in (\ref{eq:POVM-decomposition}). Suppose that it is possible to decompose a
POVM\ $\left\{  \Lambda_{x}\right\}  $ in
terms of a random selection according to a random variable $M$, an
\textquotedblleft internal\textquotedblright\ POVM $\{\Gamma_{w}^{\left(
m\right)  }\}$ with outcomes $w$, and a classical post-processing map
$p_{X|W}\left(  x|w\right)  $ \cite{MDM90,buscemi:082109}:%
\begin{equation}
\Lambda_{x}=\sum_{m,w}p_{M}\left(  m\right)  \Gamma_{w}^{\left(  m\right)
}p_{X|W}\left(  x|w\right)  . \label{eq:non-feedback-decomposition}%
\end{equation}
In that case, Alice and Bob could proceed with a protocol along the
following lines:\ they use Winter's measurement compression protocol to
simulate the POVM\ $\{\sum_{m}p_{M}\left(  m\right)  \Gamma_{w}^{\left(
m\right)  }\}_{w}$ and Bob locally simulates the classical postprocessing map
$p_{X|W}\left(  x|w\right)  $. (This is essentially how a \textquotedblleft
non-feedback\textquotedblright\ simulation will proceed, but there are some
details to be worked out.)

We should compare the performance of the above protocol against one that
exploits a feedback simulation for $\left\{  \Lambda_{x}\right\}  $. The
classical communication cost will increase from $I\left(  X;R\right)  $ to
$I\left(  W;R\right)  $ (the data-processing inequality $I\left(
W;R\right)  \geq I\left(  X;R\right)  $ holds because $W$ is
\textquotedblleft closer\textquotedblright\ to $R$ than is $X$), but the
common randomness cost will be cheaper because the non-feedback protocol
requires only $nI\left(  W;X|R\right)  $ bits of common randomness rather than
$nH\left(  X|R\right)  $ bits (essentially because Bob can find a clever way
to simulate the local map $p_{X|W}\left(  x|w\right)  $). Thus, if the savings
in common randomness consumption are larger than the increase in classical
communication cost, then there is an advantage to performing a non-feedback
simulation. In general, decomposing a POVM\ in many different ways according
to (\ref{eq:non-feedback-decomposition}) leads to a non-trivial curve
characterizing the trade-off between classical communication and common randomness.

In this connection, it is important to remark that the decomposition
(sometimes referred to as a {\it refinement}) of a POVM
according to the post-processing relation:
\begin{equation}
\Lambda_{x}=\sum_{w} \Xi_{w}
p_{X|W}\left(  x|w\right)  \label{eq:non-feedback-decomposition-simpler},
\end{equation}
of which (\ref{eq:non-feedback-decomposition}) is a special case,
is different from the convex decomposition described
in~(\ref{eq:POVM-decomposition}). In particular, while the conditions
for a POVM to be extremal (i.e., not non-trivially decomposable) with
respect to~(\ref{eq:POVM-decomposition}) are known to be rather
involved~\cite{DPP05}, it turns out that POVMs which are extremal with
respect to~(\ref{eq:non-feedback-decomposition-simpler}) can be neatly
characterized as those (and only those) whose elements are all
rank-one operators~\cite{MDM90}. Hence, if the POVM that Alice and Bob
want to simulate is rank-one (i.e., all its elements are rank-one
operators), then there is nothing to gain from implementing a
non-feedback simulation instead of a feedback simulation.
Notice, however, that the two
decompositions~(\ref{eq:POVM-decomposition})
and~(\ref{eq:non-feedback-decomposition-simpler}) are completely independent:
POVMs which are extremal with respect to~(\ref{eq:POVM-decomposition})
need not also be extremal with respect
to~(\ref{eq:non-feedback-decomposition-simpler}), and
vice versa~\cite{buscemi:082109}. This is the reason why there is plenty of room for
non-trivial trade-off relations between classical communication and
common randomness if the POVM to be simulated is not rank-one.

Theorem~\ref{thm:non-feedback-meas-comp}\ below gives a full characterization
of the trade-off for a nonfeedback measurement compression protocol, in the
sense that the protocol summarized above has a matching single-letter converse
proof for its optimality. Thus, we can claim to have a complete understanding
of this task from an information-theoretic perspective.

We should mention that some of the above ideas regarding non-feedback
simulation are already present in prior works \cite{DHW08,C08,BDHSW09}, and
indeed, these works are what led us to pursue a non-feedback measurement
compression protocol. In Ref.~\cite{DHW08}, Devetak \textit{et al}.~observed
in their remarks around Eqs.~(43-45)\ of their paper that a protocol in which
the sender also receives the outcomes of the simulation is optimal, but
\textquotedblleft examples are known in which less randomness is
necessary\textquotedblright\ for protocols that do not have this
restriction. They did not state any explicit examples, however, nor did they
state that there would be a general theorem characterizing the trade-off in
the non-feedback case. Cuff's theorem \cite{C08}\ regarding the trade-off
between classical communication and common randomness for a non-feedback
reverse Shannon theorem is a special case of
Theorem~\ref{thm:non-feedback-meas-comp} below, essentially because a noisy
classical channel is a special case of a quantum measurement and thus the
simulation task is a special case. Ref.~\cite{BDHSW09}\ characterized the
trade-off between quantum communication and entanglement for a non-feedback
simulation of a quantum channel. Thus,
Theorem~\ref{thm:non-feedback-meas-comp}\ below \textquotedblleft sits in
between\textquotedblright\ the communication tasks considered in
Ref.~\cite{C08}\ and Ref.~\cite{BDHSW09}. We should also remark that
Ref.~\cite{BDHSW09}\ stated that it is possible to reduce the common
randomness cost in the non-feedback reverse Shannon theorem either with
randomness recycling or by derandomizing some of it, and we should be able to
employ these approaches in a non-feedback measurement compression protocol.
Though, our approach below is to modify Winter's original protocol directly by
changing the structure of the code.

\subsection{Non-feedback measurement compression theorem}

\begin{theorem}
[Non-feedback measurement compression]\label{thm:non-feedback-meas-comp}Let
$\rho$ be a source state and $\mathcal{N}$ a quantum instrument to simulate on
this state:%
\[
\mathcal{N}\left(  \rho\right)  =\sum_{x}\mathcal{N}_{x}\left(  \rho\right)
\otimes\left\vert x\right\rangle \left\langle x\right\vert ^{X}.
\]
A protocol for faithful non-feedback simulation of the quantum instrument 
with classical communication rate $R$ and common randomness rate
$S$ exists if and only if $R$ and $S$ are in the rate region given by the union of
the following regions:%
\begin{align*}
R  &  \geq I\left(  W;R\right)  ,\\
R+S  &  \geq I\left(  W;XR\right)  ,
\end{align*}
where the entropies are with respect to a state of the following form:%
\begin{equation}
\sum_{x,w}p_{X|W}\left(  x|w\right)  \left\vert w\right\rangle \left\langle
w\right\vert ^{W}\otimes\left\vert x\right\rangle \left\langle x\right\vert
^{X}\otimes\text{\emph{Tr}}_{A}\left\{  \left(  I^{R}\otimes\mathcal{M}%
_{w}^{A}\right)  \left(  \phi_{\rho}^{RA}\right)  \right\}  ,
\label{eq:IC-state-feedback}%
\end{equation}
$\phi_{\rho}^{RA}$ is some purification of the state $\rho$, and the union is
with respect to all decompositions of the original instrument $\mathcal{N}$ of
the form:%
\begin{equation}
\mathcal{N}\left(  \rho\right)  =\sum_{x,w}p_{X|W}\left(  x|w\right)
\mathcal{M}_{w}\left(  \rho\right)  \otimes\left\vert x\right\rangle
\left\langle x\right\vert ^{X}. \label{eq:Markov-decomp-instr}%
\end{equation}
Observe that the systems $R$, $W$, and $X$ in (\ref{eq:IC-state-feedback})
form a quantum Markov chain:$\ R-W-X$.
\end{theorem}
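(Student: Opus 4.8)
The plan is to establish achievability and the converse separately, following the structure of Section~\ref{sec:MC-achieve} and Section~\ref{sec:IC-converse}, with one genuinely new ingredient on each side.

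\emph{Achievability.} Fix any refinement of $\mathcal N$ of the form (\ref{eq:Markov-decomp-instr}). The protocol has Alice simulate the \emph{internal} instrument $\mathcal M(\sigma)=\sum_w\mathcal M_w(\sigma)\otimes|w\rangle\langle w|^W$ using a variant of the code from Section~\ref{sec:MC-achieve}, while Bob completes the simulation by applying the classical channel $p_{X^n|W^n}$ after reconstructing a $w^n$, and Alice retains the quantum output produced by her internal measurement. Running the feedback instrument protocol of Theorem~\ref{thm:instr-comp} verbatim on $\mathcal M$ already achieves $(R,S)=(I(W;R),H(W|R))$, but since $H(W|R)=I(W;X|R)+H(W|XR)\geq I(W;X|R)$ this wastes common randomness; the work is to recover the corner $(I(W;R),I(W;X|R))$, after which time-sharing with the "send everything" protocol and wasting common randomness sweep out the whole region (cf.\ Figure~\ref{fig:IC-rate-region}). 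To this end I would keep Winter's doubly-indexed codebook $\{w^n(l,m)\}$ with $l\in\mathcal L$, $|\mathcal L|\approx 2^{nI(W;R)}$, but split the common-randomness index as $m=(m_1,m_2)$ with $|\mathcal M_1|\approx 2^{nI(W;X|R)}$ and $|\mathcal M_2|\approx 2^{nH(W|XR)}$, share only $m_1$ with Bob (and transmit only $l$), and let Alice draw $m_2$ from \emph{local} randomness. Bob, holding $(l,m_1)$, draws an independent copy $m_2'$ from his own local randomness, reconstructs $w^n(l,m_1,m_2')$, and outputs $\hat X^n\sim p_{X^n|W^n}(\cdot\,|\,w^n(l,m_1,m_2'))$.

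The crux is to show that this yields a faithful simulation, i.e.\ that the joint state on the reference $R^n$ and $\hat X^n$ is $\epsilon$-close to the ideal. This is a soft-covering/decoupling argument leaning on the Markov chain $R-W-X$, and two facts must be extracted from a suitable codebook, both via the Ahlswede-Winter operator Chernoff bound of Lemma~\ref{lemma:operator-chernoff} applied at two nested levels: (i) conditioned on $(l,m_1)$, the $|\mathcal M_2|\approx 2^{nH(W|XR)}$ residual codewords $\{w^n(l,m_1,m_2)\}_{m_2}$ soft-cover the appropriate conditional distribution, so Bob's output marginal (using $m_2'$) matches the ideal $\hat X^n$-marginal given $(l,m_1)$; and (ii) the reference state steered by Alice's measurement depends on $(l,m_1)$ but is essentially independent of the residual index $m_2$ — because the transmitted index $l$ already carries all of the $W$--$R$ correlation — so that substituting Bob's independent $m_2'$ for Alice's true $m_2$ does not disturb the joint $(R^n,\hat X^n)$. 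Making (i) and (ii) simultaneously precise with the stated codeword counts is where I expect the main difficulty of the achievability proof to sit; the special-Kraus-operator reduction and the polar-decomposition bookkeeping in the proof of Theorem~\ref{thm:instr-comp} then carry the quantum output along unchanged.

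\emph{Converse.} I would run the weak-converse argument of Section~\ref{sec:IC-converse} and splice in the identification of a single-letter classical auxiliary $W$, exactly as for Cuff's distributed channel synthesis \cite{C08} and the non-feedback reverse Shannon theorem \cite{BDHSW09}. Let an $\epsilon$-faithful non-feedback protocol have transmitted index $L$, common randomness $M$ (independent of $R^nA^n$), and Bob output $\hat X^n$. The structural fact powering the converse is that $\hat X^n$ is a stochastic function of $(L,M)$ alone, so $R^n-(L,M)-\hat X^n$ is a Markov chain. Set $W_i\equiv(L,M,\hat X^{i-1},\hat X_{i+1}^n)$; this inherits the per-coordinate Markov chain $R_i-W_i-X_i$, and one checks that the induced single-letter distribution on $(W_i,X_i,R_i)$ is realized by a bona fide refinement of $\mathcal N$ of the form (\ref{eq:Markov-decomp-instr}), read off from Alice's internal measurement and Bob's decoding map restricted to coordinate $i$. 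Starting from $nR\geq H(L)$ and $n(R+S)\geq H(LM)$, one expands using $I(M;R^n)=0$, the chain rule (cross terms vanishing because $R^n$ is a tensor power), and a Csisz\'ar sum identity-style rearrangement to obtain $\sum_i I(W_i;R_i)$ and $\sum_i I(W_i;X_iR_i)$ lower bounds; replacing the protocol state $\omega$ by the ideal tensor-power state $\sigma$ costs only $n\epsilon'$ with $\epsilon'\to 0$ by the Alicki-Fannes inequality \cite{AF04}, and introducing a uniform time-sharing variable converts the averaged single-letter bounds into membership of $(R,S)$ in the claimed union. The delicate point here is verifying that the chosen $W_i$ corresponds to a genuine instrument refinement, so that the rate pair it certifies really does lie in the union over decompositions appearing in the theorem rather than in some larger region defined merely by the Markov constraint.
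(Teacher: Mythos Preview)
Both halves of your plan differ from the paper's route, and the paper's choices are simpler on each side.

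\textbf{Achievability.} You keep the full feedback codebook of size $\approx 2^{nH(W)}$ and have Bob substitute an independent local $m_2'$ for Alice's private $m_2$. Your heuristic (ii)---that for fixed $(l,m_1)$ the steered reference state is ``essentially independent of $m_2$''---is not right as written: the post-measurement reference state is $\hat\rho'_{w^n(l,m_1,m_2)}$, which depends on the specific codeword and hence on $m_2$; the covering guarantee from Section~\ref{sec:MC-achieve} says only that the \emph{average over $l$} is close to $\rho'^n$ for each fixed $m$, not that the state is constant in $m_2$ at fixed $(l,m_1)$. A correct randomness-recycling argument would need a different decomposition of the error than the one you sketch. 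The paper sidesteps the issue entirely by building a \emph{smaller} codebook from the start: take $|\mathcal L|\approx 2^{nI(W;R)}$ and $|\mathcal M|\approx 2^{nI(W;X|R)}$, share all of $m$ with Bob, and have Bob reconstruct the \emph{same} $w^n(l,m)$ Alice used before locally applying $p_{X^n|W^n}$. Faithfulness then follows from one additional operator Chernoff bound, now applied to the joint typical-projected states $\kappa_{w^n}\approx\hat\rho_{w^n}\otimes\sigma_{w^n}^X$ on the $R^nX^n$ systems: since $|\mathcal L||\mathcal M|\approx 2^{nI(W;XR)}$, the sample average $\tfrac{1}{|\mathcal L||\mathcal M|}\sum_{l,m}\kappa_{w^n(l,m)}$ concentrates near $\tau'^n$ where $\tau=\sum_w p_W(w)\hat\rho_w\otimes\sigma_w$. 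Because Bob holds the exact $w^n$ that steered the reference, there is no decorrelation step to control, and your (i) and (ii) collapse into this single covering condition. (The paper explicitly mentions your style of approach---randomness recycling/derandomization---as an alternative it chose not to pursue.)

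\textbf{Converse.} Your auxiliary $W_i=(L,M,\hat X^{i-1},\hat X_{i+1}^n)$ with a Csisz\'ar-sum rearrangement is the Cuff route and is valid. The paper instead takes the bare $W=(L,M,K)$ with $K$ a uniform time-sharing index and no past/future outputs. The device that makes this lighter auxiliary work is a continuity lemma (Lemma~\ref{lem:entropy-for-close-IID}): since the simulation is $\epsilon$-faithful, the state on $\hat X^nR^n$ is close to the i.i.d.\ ideal, so $H(\hat X^nR^n)_\omega\approx\sum_kH(\hat X_kR_k)_\omega$; combined with subadditivity of $H(\hat X^nR^n|LM)$ this yields $n(R+S)\geq\sum_kI(LM;\hat X_kR_k)_\omega$ directly, and a second continuity bound $I(K;\hat XR)_\sigma\leq\epsilon'$ lets $K$ be folded into the auxiliary via the chain rule. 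The instrument refinement $\{\mathcal M_w\}$ is then read off by fixing $(l,m,k)$ and freezing the other $n-1$ inputs to $\rho^A$, which makes the ``bona fide refinement'' check you flag essentially immediate. Both converses work; the paper trades your combinatorial identity for two applications of Alicki--Fannes and arrives at a leaner single-letter auxiliary.
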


The information quantity $I(W;XR)$ appearing in the above theorem generalizes
Wyner's well-known ``common information'' between dependent random variables \cite{W75}.

\subsection{Achievability for non-feedback measurement compression}

We now prove the achievability part of the above theorem. Suppose for
simplicity that we are just trying to simulate the POVM\ $\Lambda=\left\{
\Lambda_{x}\right\}  $ where each $\Lambda_{x}$ is a positive operator such
that $\Lambda_{x}=\sum_{w}p_{X|W}\left(  x|w\right)  M_{w}$ and each $M_{w}$
is a positive operator. The case for a general quantum instrument follows by
considering this case and by extending it similarly to how we extended
POVM\ compression to instrument compression in
Theorem~\ref{thm:instrument-simulation}. So, the relevant overall
classical-quantum state to consider when building codes for a non-feedback
simulation is as follows:%
\[
\sum_{w,x}p_{X|W}\left(  x|w\right)  \text{Tr}_{A}\left\{  M_{w}^{A}\phi
_{\rho}^{RA}\right\}  \otimes\left\vert w\right\rangle \left\langle
w\right\vert ^{W}\otimes\left\vert x\right\rangle \left\langle x\right\vert
^{X},
\]
which simplifies to%
\[
\sum_{w,x}p_{X|W}\left(  x|w\right)  \ \sqrt{\rho}M_{w}\sqrt{\rho}%
\otimes\left\vert w\right\rangle \left\langle w\right\vert ^{W}\otimes
\left\vert x\right\rangle \left\langle x\right\vert ^{X},
\]
after realizing that Tr$_{A}\left\{  M_{w}^{A}\phi_{\rho}^{RA}\right\}
=\sqrt{\rho}M_{w}\sqrt{\rho}$ (in the above and what follows, we ignore the
transpose in the eigenbasis of $\rho$ on $M_{w}$ because it is irrelevant for
the result). Let $\tau$ denote the state obtained by tracing over the $W$
register of the above state:%
\[
\tau\equiv\sum_{w}\sqrt{\rho}M_{w}\sqrt{\rho}\otimes\sigma_{w}^{X},
\]
where the classical state $\sigma_{w}^{X}$ is as follows:%
\[
\sigma_{w}^{X}\equiv\sum_{x}p_{X|W}\left(  x|w\right)  \ \left\vert
x\right\rangle \left\langle x\right\vert ^{X}.
\]
Consider the following ensemble:%
\begin{align*}
p_{W}\left(  w\right)   &  \equiv\text{Tr}\left\{  M_{w}\rho\right\}  ,\\
\hat{\rho}_{w}  &  \equiv\frac{1}{p_{W}\left(  w\right)  }\sqrt{\rho}%
M_{w}\sqrt{\rho}.
\end{align*}
Observe that $\rho$ is the expected state of this ensemble:%
\[
\sum_{w}p_{W}\left(  w\right)  \hat{\rho}_{w}=\sum_{w}\sqrt{\rho}M_{w}%
\sqrt{\rho}=\rho.
\]
Also, the state $\tau$ is as follows:%
\[
\tau=\sum_{w}p_{W}\left(  w\right)  \hat{\rho}_{w}\otimes\sigma_{w}^{X}.
\]

Our approach is similar to Winter's approach detailed in
Section~\ref{sec:MC-achieve}: choose $\left\vert \mathcal{L}\right\vert
\left\vert \mathcal{M}\right\vert $ codewords $w^{n}\left(  l,m\right)
$\ according to the pruned version of the distribution $p_{W^{n}}\left(
w^{n}\right)  $. As long as%
\begin{align*}
\left\vert \mathcal{L}\right\vert  &  \approx2^{nI\left(  W;R\right)  },\\
\left\vert \mathcal{M}\right\vert  &  \approx2^{nI\left(  W;X|R\right)  },\\
\left\vert \mathcal{L}\right\vert \left\vert \mathcal{M}\right\vert  &
\approx2^{nI\left(  W;XR\right)  },
\end{align*}
the operator Chernoff bound (Lemma~\ref{lemma:operator-chernoff}) guarantees
that there exists a choice of the codewords $w^{n}\left(  l,m\right)  $\ such
that the following conditions are true:%
\begin{align}
\frac{1}{\left\vert \mathcal{L}\right\vert }\sum_{l}\hat{\rho}_{w^{n}\left(
l,m\right)  }^{\prime}  &  \in\left[  \left(  1\pm\epsilon\right)
\rho^{\prime n}\right]  ,\label{eq:l-avg-Chernoff}\\
\frac{1}{\left\vert \mathcal{L}\right\vert \left\vert \mathcal{M}\right\vert
}\sum_{l,m}\kappa_{w^{n}\left(  l,m\right)  }  &  \in\left[  \left(
1\pm\epsilon\right)  \tau^{\prime n}\right]  , \label{eq:lm-avg-Chernoff}%
\end{align}
where each $\hat{\rho}_{w^{n}}^{\prime}$ is a typical projected version of
$\hat{\rho}_{w^{n}}\equiv\hat{\rho}_{w_{1}}\otimes\cdots\otimes\hat{\rho
}_{w_{n}}$:%
\[
\hat{\rho}_{w^{n}}^{\prime}\equiv\Pi\ \Pi_{\rho,\delta}^{n}\ \Pi_{\hat{\rho
}_{w^{n}},\delta}\ \hat{\rho}_{w^{n}}\ \Pi_{\hat{\rho}_{w^{n}},\delta}%
\ \Pi_{\rho,\delta}^{n}\ \Pi.
\]
In the above, $\Pi_{\hat{\rho}_{w^{n}},\delta}$ is the conditionally typical
projector for $\hat{\rho}_{w^{n}}$, $\Pi_{\rho,\delta}^{n}$ is the average
typical projector for $\rho^{\otimes n}$, and $\Pi$ is the eigenvalue cutoff
projector as before. We define each $\kappa_{w^{n}\left(  l,m\right)  }$ as a
typical projected version of the state $\hat{\rho}_{w^{n}\left(  l,m\right)
}\otimes\sigma_{w^{n}\left(  l,m\right)  }$:%
\[
\kappa_{w^{n}}\equiv\Pi^{\prime}\ \Pi_{\tau,\delta}^{n}\ \left(  \Pi
_{\hat{\rho}_{w^{n}},\delta}\otimes\Pi_{\sigma_{w^{n}},\delta}\right)
\ \hat{\rho}_{w^{n}}\otimes\sigma_{w^{n}}\ \left(  \Pi_{\hat{\rho}_{w^{n}%
},\delta}\otimes\Pi_{\sigma_{w^{n}},\delta}\right)  \ \Pi_{\tau,\delta}%
^{n}\ \Pi^{\prime}.
\]
In the above, $\Pi_{\hat{\rho}_{w^{n}},\delta}\otimes\Pi_{\sigma_{w^{n}%
},\delta}$ is the conditionally typical projector for $\hat{\rho}_{w^{n}%
}\otimes\sigma_{w^{n}}$, $\Pi_{\tau,\delta}^{n}$ is an average typical
projector for $\tau$, and $\Pi^{\prime}$ is another eigenvalue cutoff
projector. The states $\rho^{\prime n}$ and $\tau^{\prime n}$ are the
expectations of the states $\hat{\rho}_{w^{n}}^{\prime}$ and $\kappa_{w^{n}}$,
respectively, with respect to the pruned version of the distribution
$p_{W^{n}}\left(  w^{n}\right)  $. Recall that the operator Chernoff bound
guarantees with high probability that the sample averages $\frac{1}{\left\vert
\mathcal{L}\right\vert }\sum_{l}\hat{\rho}_{w^{n}\left(  l,m\right)  }%
^{\prime}$ and $\frac{1}{\left\vert \mathcal{L}\right\vert \left\vert
\mathcal{M}\right\vert }\sum_{l,m}\kappa_{w^{n}\left(  l,m\right)  }$ are
within $\epsilon$ (in the operator interval sense) of their true expectations
$\rho^{\prime n}$ and $\tau^{\prime n}$, respectively.\ Thus there exist
particular values of the $w^{n}\left(  l,m\right)  $ such that the above
conditions are all true. We can use the condition in (\ref{eq:l-avg-Chernoff})
to guarantee that the following defines a legitimate POVM\ (just as in
Winter's approach in Section~\ref{sec:MC-achieve}):%
\[
\Upsilon_{l}^{\left(  m\right)  }\equiv\frac{S}{1+\epsilon}\frac{1}{\left\vert
\mathcal{L}\right\vert }\omega^{-1/2}\,\hat{\rho}_{w^{n}\left(  l,m\right)
}^{\prime}\,\omega^{-1/2},
\]
where $S$ is the mass of the typical set corresponding to the distribution
$p_{W^{n}}(w^{n})$ and $\omega=\rho^{\otimes n}$. Also, observe that the
following states are close in trace distance for sufficiently large $n$, due
to quantum typicality and the Gentle Operator Lemma:%
\begin{align}
\left\Vert \hat{\rho}_{w^{n}}^{\prime}-\hat{\rho}_{w^{n}}\right\Vert _{1}  &
\leq f_{1}\left(  \epsilon\right)  ,\label{eq:nonfeedback-triangle-1}\\
\left\Vert \kappa_{w^{n}}-\hat{\rho}_{w^{n}}\otimes\sigma_{w^{n}}\right\Vert
_{1}  &  \leq f_{2}\left(  \epsilon\right)  .
\label{eq:nonfeedback-triangle-2}%
\end{align}
Here and in what follows, $f_{i}\left(  \epsilon\right)  $ is some polynomial
in $\epsilon$ so that $\lim_{\epsilon\rightarrow0}f_{i}\left(  \epsilon
\right)  =0$.

We use the conditions in (\ref{eq:l-avg-Chernoff}) and
(\ref{eq:lm-avg-Chernoff}) to guarantee that the simulation is faithful. The
protocol proceeds as follows: Alice and Bob use the common randomness $M$ to
select a POVM\ $\Upsilon^{\left(  m\right)  }$. Alice performs a measurement
and gets the outcome $l$ (corresponding to the operator $\Upsilon_{l}^{\left(
m\right)  }$). She sends the index $l$ to Bob, who then prepares the classical
state $\sigma_{w^{n}\left(  l,m\right)  }$ based on the common randomness~$m$
and the measurement outcome $l$. Consider the following chain of equalities:%
\begin{align*}
&  \sum_{x^{n}}\text{Tr}_{A^{n}}\left\{  \left(  \text{id}\otimes
\Lambda_{x^{n}}\right)  \left(  \phi_{\rho}^{\otimes n}\right)  \right\}
\otimes\left\vert x^{n}\right\rangle \left\langle x^{n}\right\vert \\
&  =\sum_{w^{n},x^{n}}\text{Tr}_{A^{n}}\left\{  \left(  \text{id}\otimes
M_{w^{n}}\right)  \left(  \phi_{\rho}^{\otimes n}\right)  \right\}  \otimes
p_{X^{n}|W^{n}}\left(  x^{n}|w^{n}\right)  \left\vert x^{n}\right\rangle
\left\langle x^{n}\right\vert \\
&  =\sum_{w^{n},x^{n}}\sqrt{\omega}M_{w^{n}}\sqrt{\omega}\otimes
p_{X^{n}|W^{n}}\left(  x^{n}|w^{n}\right)  \left\vert x^{n}\right\rangle
\left\langle x^{n}\right\vert \\
&  =\tau^{\otimes n},
\end{align*}%
\[
\sum_{m,l}\frac{1}{\left\vert \mathcal{M}\right\vert }\text{Tr}_{A^{n}%
}\left\{  \left(  \text{id}\otimes\Upsilon_{l}^{\left(  m\right)  }\right)
\left(  \phi_{\rho}^{\otimes n}\right)  \right\}  \otimes\sigma_{w^{n}\left(
l,m\right)  }=\frac{1}{\left\vert \mathcal{M}\right\vert \left\vert
\mathcal{L}\right\vert }\sum_{m,l}\frac{S}{1+\epsilon}\hat{\rho}_{w^{n}\left(
l,m\right)  }^{\prime}\otimes\sigma_{w^{n}\left(  l,m\right)  }.
\]
By exploiting (\ref{eq:lm-avg-Chernoff}), that $\left\Vert \tau^{\otimes
n}-\tau^{\prime n}\right\Vert _{1}\leq f_{3}\left(  \epsilon\right)  $, and
that $\left\Vert \rho^{\otimes n}-\rho^{\prime n}\right\Vert _{1}\leq
f_{4}\left(  \epsilon\right)  $, we have that%
\[
\left\Vert \tau^{\otimes n}-\frac{1}{\left\vert \mathcal{M}\right\vert
\left\vert \mathcal{L}\right\vert }\sum_{l,m}\kappa_{w^{n}\left(  l,m\right)
}\right\Vert _{1}\leq f_{5}\left(  \epsilon\right)  .
\]
Also, we have that%
\[
\left\Vert \frac{1}{\left\vert \mathcal{M}\right\vert \left\vert
\mathcal{L}\right\vert }\sum_{m,l}\frac{S}{1+\epsilon}\hat{\rho}_{w^{n}\left(
l,m\right)  }^{\prime}\otimes\sigma_{w^{n}\left(  l,m\right)  }-\frac
{1}{\left\vert \mathcal{M}\right\vert \left\vert \mathcal{L}\right\vert }%
\sum_{m,l}\hat{\rho}_{w^{n}\left(  l,m\right)  }^{\prime}\otimes\sigma
_{w^{n}\left(  l,m\right)  }\right\Vert _{1}\leq f_{6}\left(  \epsilon\right)
.
\]
From (\ref{eq:nonfeedback-triangle-1}) and (\ref{eq:nonfeedback-triangle-2})
and convexity of trace distance, we have that%
\[
\left\Vert \frac{1}{\left\vert \mathcal{M}\right\vert \left\vert
\mathcal{L}\right\vert }\sum_{m,l}\kappa_{w^{n}\left(  l,m\right)  }-\frac
{1}{\left\vert \mathcal{M}\right\vert \left\vert \mathcal{L}\right\vert }%
\sum_{m,l}\hat{\rho}_{w^{n}\left(  l,m\right)  }^{\prime}\otimes\sigma
_{w^{n}\left(  l,m\right)  }\right\Vert _{1}\leq f_{7}\left(  \epsilon\right)
.
\]
Putting all of these together with the triangle inequality gives an upper
bound on the trace distance between the ideal output of the measurement and
the state resulting from the simulation:%
\[
\left\Vert \sum_{x^{n}}\text{Tr}_{A^{n}}\left\{  \left(  \text{id}%
\otimes\Lambda_{x^{n}}\right)  \left(  \phi_{\rho}^{\otimes n}\right)
\right\}  \otimes\left\vert x^{n}\right\rangle \left\langle x^{n}\right\vert
-\sum_{m,l}\frac{1}{\left\vert \mathcal{M}\right\vert }\text{Tr}_{A^{n}%
}\left\{  \left(  \text{id}\otimes\Upsilon_{l}^{\left(  m\right)  }\right)
\left(  \phi_{\rho}^{\otimes n}\right)  \right\}  \otimes\sigma_{w^{n}\left(
l,m\right)  }\right\Vert _{1}\leq f_{8}\left(  \epsilon\right)  .
\]
The case for a general quantum instrument follows by similar reasoning as that
in the proof of Theorem~\ref{thm:instrument-simulation}.

\subsection{Converse for non-feedback measurement compression}

We now prove the converse part of Theorem~\ref{thm:non-feedback-meas-comp}.
Our proof is similar to Cuff's converse proof for the non-feedback version of
the classical reverse Shannon theorem~\cite{C08}. Figure~\ref{fig:IC} can
serve as a depiction of the most general protocol for a non-feedback
simulation, if we ignore the decoding on Alice's side to produce $X^{\prime
n}$. The non-feedback protocol begins with Alice and the reference sharing
many copies of the state $\phi_{\rho}^{RA}$ and Alice sharing common
randomness $M$ with Bob. She then chooses a quantum instrument $\Upsilon
^{\left(  m\right)  }$ based on the common randomness $M$ and performs it on
her systems $A^{n}$. The measurement returns outcome $L$, and the overall
state is as follows:%
\[
\sum_{l,m}\frac{1}{\left\vert \mathcal{M}\right\vert }(\Upsilon_{l}^{\left(
m\right)  })^{A^{n}}(\left(  \phi_{\rho}^{RA}\right)  ^{\otimes n}%
)\otimes\left\vert l\right\rangle \left\langle l\right\vert ^{L}%
\otimes\left\vert m\right\rangle \left\langle m\right\vert ^{M},
\]
where $\Upsilon_{l}^{\left(  m\right)  }$ is a completely positive, trace
non-increasing map. Alice sends the register $L$ to Bob. Based on $L$ and $M$,
he performs some stochastic map $p_{\hat{X}^{n}|L,M}\left(  \hat{x}%
^{n}|l,m\right)  $ to give his estimate $\hat{x}^{n}$ of the measurement
outcome. The resulting state is as follows:%
\[
\omega^{R^{n}A^{n}LM\hat{X}^{n}}\equiv\sum_{l,m,\hat{x}^{n}}\frac
{1}{\left\vert \mathcal{M}\right\vert }p_{\hat{X}^{n}|L,M}\left(  \hat{x}%
^{n}|l,m\right)  (\Upsilon_{l}^{\left(  m\right)  })^{A^{n}}(\left(
\phi_{\rho}^{RA}\right)  ^{\otimes n})\otimes\left\vert l\right\rangle
\left\langle l\right\vert ^{L}\otimes\left\vert m\right\rangle \left\langle
m\right\vert ^{M}\otimes\left\vert \hat{x}^{n}\right\rangle \left\langle
\hat{x}^{n}\right\vert ^{\hat{X}^{n}}.
\]
The following condition should hold for all $\epsilon>0$ and sufficiently
large $n$ for a faithful non-feedback simulation:%
\[
\left\Vert \omega^{R^{n}\hat{X}^{n}}-\sum_{x^{n}}\text{Tr}_{A^{n}}\left\{
\left(  I\otimes\mathcal{N}_{x^{n}}\right)  \left(  \phi_{\rho}^{RA}\right)
^{\otimes n}\right\}  \otimes\left\vert x^{n}\right\rangle \left\langle
x^{n}\right\vert ^{X^{n}}\right\Vert _{1}\leq\epsilon
\]

We prove the first bound as follows:%
\begin{align*}
nR  &  \geq H\left(  L\right)  _{\omega}\\
&  \geq I\left(  L;MR^{n}\right)  _{\omega}\\
&  =I\left(  LM;R^{n}\right)  _{\omega}+I\left(  L;M\right)  _{\omega
}-I\left(  M;R^{n}\right)  _{\omega}\\
&  \geq I\left(  LM;R^{n}\right)  _{\omega}\\
&  =H\left(  R^{n}\right)  _{\omega}-H\left(  R^{n}|LM\right)  _{\omega}\\
&  \geq\sum_{k}\left[  H\left(  R_{k}\right)  _{\omega}-H\left(
R_{k}|LM\right)  _{\omega}\right] \\
&  =\sum_{k}I\left(  LM;R_{k}\right)  _{\omega}\\
&  =nI\left(  LM;R|K\right)  _{\sigma}\\
&  \geq nI\left(  LM;R|K\right)  _{\sigma}+nI\left(  R;K\right)  _{\sigma
}-n\epsilon^{\prime}\\
&  =nI\left(  LMK;R\right)  _{\sigma}-n\epsilon^{\prime}.
\end{align*}
The first two inequalities follow for reasons similar to the first few steps
of our previous converse. The first equality is an identity for quantum mutual
information. The third inequality follows because there are no correlations
between $R^{n}$ and $M$ so that $I\left(  M;R^{n}\right)  _{\omega}=0$. The
second equality is an identity for quantum mutual information. The fourth
inequality follows from subadditivity of quantum entropy:%
\[
H\left(  R^{n}|LM\right)  _{\omega}\leq\sum_{k}H\left(  R_{k}|LM\right)
_{\omega},
\]
and because the state on $R^{n}$ is a tensor-power state so that%
\[
H\left(  R^{n}\right)  _{\omega}=\sum_{k}H\left(  R_{k}\right)  _{\omega}.
\]
The third equality is another identity. The fourth equality comes about by
defining the state $\sigma$ as follows:%
\begin{multline}
\sigma^{RALM\hat{X}K}\equiv\sum_{l,m,k,\hat{x}}\frac{1}{n\left\vert
\mathcal{M}\right\vert }p_{\hat{X}_{k}|L,M}\left(  \hat{x}|l,m\right)
\text{Tr}_{R_{1}^{k-1}R_{k+1}^{n}A_{1}^{k-1}A_{k+1}^{n}}\left\{  (\Upsilon
_{l}^{\left(  m\right)  })^{A^{n}}(\left(  \phi_{\rho}^{RA}\right)  ^{\otimes
n})\right\}  \otimes\\
\left\vert l\right\rangle \left\langle l\right\vert ^{L}\otimes\left\vert
m\right\rangle \left\langle m\right\vert ^{M}\otimes\left\vert \hat
{x}\right\rangle \left\langle \hat{x}\right\vert ^{\hat{X}}\otimes\left\vert
k\right\rangle \left\langle k\right\vert ^{K}, \label{eq:helper-state-no-fdbk}%
\end{multline}
and exploiting the fact that $K$ is a uniform classical random variable, with
distribution $1/n$, determining which systems $R_{k}A_{k}\hat{X}_{k}$ to
select. (The notation $\text{Tr}_{R_{i}^{j}}$ with $i\leq j$ indicates to
trace over systems $R_{i}\cdots R_{j}$.) From the fact that the measurement
simulation is faithful, we can apply the Alicki-Fannes' inequality to conclude
that%
\begin{equation}
I\left(  R\hat{X};K\right)  _{\sigma}=\left\vert I\left(  R\hat{X};K\right)
_{\sigma}-I\left(  RX;K\right)  _{\tau}\right\vert \leq\epsilon^{\prime},
\label{eq:no-fdbk-K-info-bnd}%
\end{equation}
where $\tau$ is a state like $\sigma$ but resulting from the tensor-power
state for ideal measurement compression (and due to its IID\ structure, it has
no correlations with any particular system $k$ so that $I\left(  RX;K\right)
_{\tau}=0$). The above also implies that%
\[
I\left(  R;K\right)  _{\sigma}\leq\epsilon^{\prime},
\]
by quantum data processing. The final equality is an application of the chain
rule for quantum mutual information. The state $\sigma$ for the final
information term has the form in (\ref{eq:Markov-decomp-instr}) with $LMK=W$,
the distribution $p_{X|W}\left(  x|w\right)  $ as%
\[
p_{\hat{X}_{k}|L,M}\left(  \hat{x}|l,m\right)  ,
\]
and the completely positive, trace non-increasing maps $\mathcal{M}_{w}$
defined by%
\[
\varrho^{A}\rightarrow\frac{1}{n\left\vert \mathcal{M}\right\vert }%
\text{Tr}_{A_{1}^{k-1}A_{k+1}^{n}}\left\{  (\Upsilon_{l}^{\left(  m\right)
})^{A^{n}}(\left(  \phi_{\rho}^{A}\right)  ^{\otimes k-1}\otimes\varrho
^{A}\otimes\left(  \phi_{\rho}^{A}\right)  ^{\otimes n-k})\right\}  .
\]
Also, observe that $R-\left(  LMK\right)  -\hat{X}$ forms a quantum Markov chain.

We now prove the second bound:%
\begin{align*}
n\left(  R+S\right)   &  \geq H\left(  LM\right)  _{\omega}\\
&  \geq I(LM;\hat{X}^{n}R^{n})_{\omega}\\
&  =H(\hat{X}^{n}R^{n})_{\omega}-H(\hat{X}^{n}R^{n}|LM)_{\omega}\\
&  \geq\sum_{k}\left[  H(\hat{X}_{k}R_{k})_{\omega}-H(\hat{X}_{k}%
R_{k}|LM)_{\omega}\right]  -n\epsilon^{\prime}\\
&  =\sum_{k}I(LM;\hat{X}_{k}R_{k})_{\omega}-n\epsilon^{\prime}\\
&  =nI(LM;\hat{X}R|K)_{\sigma}-n\epsilon^{\prime}\\
&  \geq nI(LM;\hat{X}R|K)_{\sigma}+nI(K;\hat{X}R)_{\sigma}-n2\epsilon^{\prime
}\\
&  =nI(LMK;\hat{X}R)_{\sigma}-n2\epsilon^{\prime}.
\end{align*}
The first two inequalities follow from similar reasons as our previous
inequalities. The first equality is an identity. The third inequality follows
from subadditivity of entropy:%
\[
H(\hat{X}^{n}R^{n}|LM)_{\omega}\leq\sum_{k}H(\hat{X}_{k}R_{k}|LM)_{\omega},
\]
and from the fact that the measurement simulation is faithful so that%
\[
\left\vert H(\hat{X}^{n}R^{n})_{\omega}-\sum_{k}H(\hat{X}_{k}R_{k})_{\omega
}\right\vert \leq n\epsilon^{\prime},
\]
where we have applied Lemma~\ref{lem:entropy-for-close-IID} below. The second
equality is an identity. The third equality follows by considering the state
$\sigma$ as defined in (\ref{eq:helper-state-no-fdbk}). The fourth inequality
follows from (\ref{eq:no-fdbk-K-info-bnd}). The final equality is the chain
rule for quantum mutual information. Similarly as stated above, the state
$\sigma$ has the form in (\ref{eq:Markov-decomp-instr}).

\begin{lemma}
\label{lem:entropy-for-close-IID}Suppose that a state $\rho^{A^{n}}$ is
$\epsilon$-close in trace distance to an IID\ state $\left(  \sigma
^{A}\right)  ^{\otimes n}$:%
\begin{equation}
\left\Vert \rho^{A^{n}}-\left(  \sigma^{A}\right)  ^{\otimes n}\right\Vert
_{1}\leq\epsilon. \label{eq:close-to-IID}%
\end{equation}
Then the entropy $H\left(  A^{n}\right)  _{\rho}$ of $\rho^{A^{n}}$ and the
entropy $\sum_{k}H\left(  A_{k}\right)  _{\rho}$ are close in the following
sense:%
\[
\left\vert H\left(  A^{n}\right)  _{\rho}-\sum_{k}H\left(  A_{k}\right)
_{\rho}\right\vert \leq2n\epsilon\log\left\vert A\right\vert +\left(
n+1\right)  H_{2}\left(  \epsilon\right)  .
\]

\end{lemma}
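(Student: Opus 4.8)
The plan is to reduce the statement to the Fannes inequality for the von Neumann entropy, applied once to the global state and once to each single‑system marginal, and then to combine the pieces with the triangle inequality. First I would record the two elementary facts about the reference state: since $(\sigma^{A})^{\otimes n}$ is a product state, its entropy is $H\big((\sigma^{A})^{\otimes n}\big)=n\,H(A)_{\sigma}$ by additivity, and its reduced state on any one system $A_{k}$ is exactly $\sigma^{A}$.

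\emph{Global step.} I would apply a Fannes‑type continuity bound of the form $|H(\tau_{1})-H(\tau_{2})|\le \|\tau_{1}-\tau_{2}\|_{1}\log d + H_{2}(\|\tau_{1}-\tau_{2}\|_{1})$ on a $d$‑dimensional space (this follows from Audenaert's sharp inequality, using that $H_{2}$ is non‑decreasing on $[0,\tfrac12]$; the regime $\epsilon>\tfrac12$ is anyway trivial since the quantity to be bounded is at most $n\log|A|$ by subadditivity) to the pair $\rho^{A^{n}}$ and $(\sigma^{A})^{\otimes n}$, which live on a space of dimension $|A|^{n}$. Using hypothesis~(\ref{eq:close-to-IID}), this gives
\[
\big|\,H(A^{n})_{\rho} - n\,H(A)_{\sigma}\,\big|\ \le\ n\epsilon\log|A| + H_{2}(\epsilon),
\]
since $\log\big(|A|^{n}\big)=n\log|A|$.

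\emph{Marginal step.} By monotonicity of the trace distance under the partial trace, $\|\rho^{A_{k}}-\sigma^{A}\|_{1}\le\|\rho^{A^{n}}-(\sigma^{A})^{\otimes n}\|_{1}\le\epsilon$ for every $k$, so applying the same continuity bound in dimension $|A|$ yields $|H(A_{k})_{\rho}-H(A)_{\sigma}|\le\epsilon\log|A|+H_{2}(\epsilon)$; summing over $k\in\{1,\dots,n\}$ gives
\[
\Big|\,{\textstyle\sum_{k}} H(A_{k})_{\rho} - n\,H(A)_{\sigma}\,\Big|\ \le\ n\epsilon\log|A| + nH_{2}(\epsilon).
\]
Combining the two displays by the triangle inequality yields
\[
\Big|\,H(A^{n})_{\rho} - {\textstyle\sum_{k}} H(A_{k})_{\rho}\,\Big|\ \le\ 2n\epsilon\log|A| + (n+1)H_{2}(\epsilon),
\]
which is exactly the claimed bound.

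There is no substantive obstacle here; the only point requiring a little care is to invoke a version of Fannes whose constants are dominated by the deliberately loose factors $2n\log|A|$ and $n+1$ in the statement (the sharp Audenaert bound comfortably suffices, regardless of whether one adopts the factor‑of‑$\tfrac12$ normalization of the trace norm), and to note that the single quantity $\epsilon$ simultaneously controls all $n+1$ applications precisely because the partial trace never increases trace distance.
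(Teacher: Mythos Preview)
Your proposal is correct and follows essentially the same approach as the paper's own proof: apply the Fannes--Audenaert inequality once to the global state (on the $|A|^n$-dimensional space) and once to each of the $n$ marginals (using monotonicity of trace distance under partial trace), then combine via the triangle inequality using additivity of entropy on the product state. The paper's proof is line-for-line the same argument, with the same intermediate bounds $n\epsilon\log|A|+H_2(\epsilon)$ and $n\epsilon\log|A|+nH_2(\epsilon)$.
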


\begin{proof}
Apply the Fannes-Audenart inequality to\ (\ref{eq:close-to-IID})
to obtain%
\begin{equation}
\left\vert H\left(  A^{n}\right)  _{\rho}-H\left(  A^{n}\right)
_{\sigma^{\otimes n}}\right\vert \leq\epsilon n\log\left\vert A\right\vert
+H_{2}\left(  \epsilon\right)  . \label{eq:entropy-close-IID-1}%
\end{equation}
The following inequality also follows by applying monotonicity of trace
distance to (\ref{eq:close-to-IID}):%
\[
\left\Vert \rho^{A_{k}}-\sigma^{A}\right\Vert _{1}\leq\epsilon,
\]
which then gives that%
\[
\left\vert H\left(  A_{k}\right)  _{\rho}-H\left(  A\right)  _{\sigma
}\right\vert \leq\epsilon\log\left\vert A\right\vert +H_{2}\left(
\epsilon\right)  ,
\]
by again applying the Fannes-Audenart inequality. Summing these over all $k$
then gives that%
\begin{equation}
\sum_{k=1}^{n}\left\vert H\left(  A\right)  _{\sigma}-H\left(  A_{k}\right)
_{\rho}\right\vert \leq n\epsilon\log\left\vert A\right\vert +nH_{2}\left(
\epsilon\right)  . \label{eq:entropy-close-IID-2}%
\end{equation}
Applying the triangle inequality to (\ref{eq:entropy-close-IID-1}) and
(\ref{eq:entropy-close-IID-2}) gives the desired result:%
\begin{align*}
2n\epsilon\log\left\vert A\right\vert +\left(  n+1\right)  H_{2}\left(
\epsilon\right)   &  \geq\left\vert H\left(  A^{n}\right)  _{\rho}-H\left(
A^{n}\right)  _{\sigma^{\otimes n}}\right\vert +\sum_{k=1}^{n}\left\vert
H\left(  A\right)  _{\sigma}-H\left(  A_{k}\right)  _{\rho}\right\vert \\
&  \geq\left\vert H\left(  A^{n}\right)  _{\rho}-H\left(  A^{n}\right)
_{\sigma^{\otimes n}}+\sum_{k=1}^{n}\left[  H\left(  A\right)  _{\sigma
}-H\left(  A_{k}\right)  _{\rho}\right]  \right\vert \\
&  =\left\vert H\left(  A^{n}\right)  _{\rho}-H\left(  A^{n}\right)
_{\sigma^{\otimes n}}+H\left(  A^{n}\right)  _{\sigma^{\otimes n}}-\sum
_{k=1}^{n}H\left(  A_{k}\right)  _{\rho}\right\vert \\
&  =\left\vert H\left(  A^{n}\right)  _{\rho}-\sum_{k=1}^{n}H\left(
A_{k}\right)  _{\rho}\right\vert .
\end{align*}

\end{proof}

\section{Classical data compression with quantum side information}

We now turn to the third protocol of this paper:\ classical data compression
with quantum side information. We discuss this protocol in detail because it
is a step along the way to constructing our protocol for measurement
compression with quantum side information (and we have a particular way that
we construct this latter protocol). Devetak and Winter first proved
achievability and optimality of a protocol for this task \cite{DW02}. They
proved this result by appealing to\ Winter's proof of the classical capacity
theorem \cite{itit1999winter}\ and a standard recursive code construction
argument of Csisz\`{a}r and K\"{o}rner \cite{CK81}. Renes \textit{et
al}.~later gave a proof of this protocol by exploiting two-universal hash
functions and a square-root measurement \cite{RB08,RR12} (the first paper
proved the IID\ version and the latter the \textquotedblleft
one-shot\textquotedblright\ case). Renes \textit{et al}.~further explored a
connection between this protocol and privacy amplification\ by considering
entropic uncertainty relations \cite{Renes08062011,RR11}.

Our development here contains a review of this information processing task and
the statement of the theorem, in addition to providing novel proofs of both the achievability
part and the converse that are direct quantum generalizations of the
well-known approaches in Refs.~\cite{CT91,el2010lecture} for the Slepian-Wolf
problem \cite{SW73}. The encoder in our achievability proof bears some
similarities with those in Refs.~\cite{CT91,el2010lecture,RR12}---the protocol
has the sender first hash the received sequence and send the hash along to the
receiver. The receiver then employs a sequential quantum decoder---he searches
sequentially among all the possible quantum states that are consistent with
the hash in order to determine the sequence emitted by the source. The main
tool employed in the error analysis is Sen's non-commutative union bound
\cite{S11}. A potential advantage of a sequential decoding approach is that it
might lead to physical implementations of these protocols for small block sizes, 
along the lines discussed in Ref.~\cite{GTW12}.

\subsection{Information processing task for CDC with QSI}

We now discuss the general information processing task. Consider an ensemble
$\left\{  p_{X}\left(  x\right)  ,\rho_{x}\right\}  $. Suppose that a source issues a 
random sequence $X^{n}$ to Alice, distributed
according to the IID$\ $distribution $p_{X^{n}}\left(  x^{n}\right)  $, while also issuing
the correlated quantum state $\rho_{X^{n}}$ to Bob. Their joint
state is described by the ensemble $\left\{  p_{X^{n}}\left(  x^{n}\right)  ,\rho_{x^{n}%
}\right\}  $, or equivalently, by a classical-quantum state of the following
form:%
\[
\sum_{x^{n}}p_{X^{n}}\left(  x^{n}\right)  \left\vert x^{n}\right\rangle
\left\langle x^{n}\right\vert ^{X^{n}}\otimes\rho_{x^{n}}^{B^{n}}.
\]
The goal is for Alice to communicate the particular sequence $x^{n}$ that the
source issues, by using as few bits as possible.

\begin{figure}[ptb]
\begin{center}
\includegraphics[
natheight=3.933200in,
natwidth=4.253100in,
height=2.3644in,
width=2.5374in
]{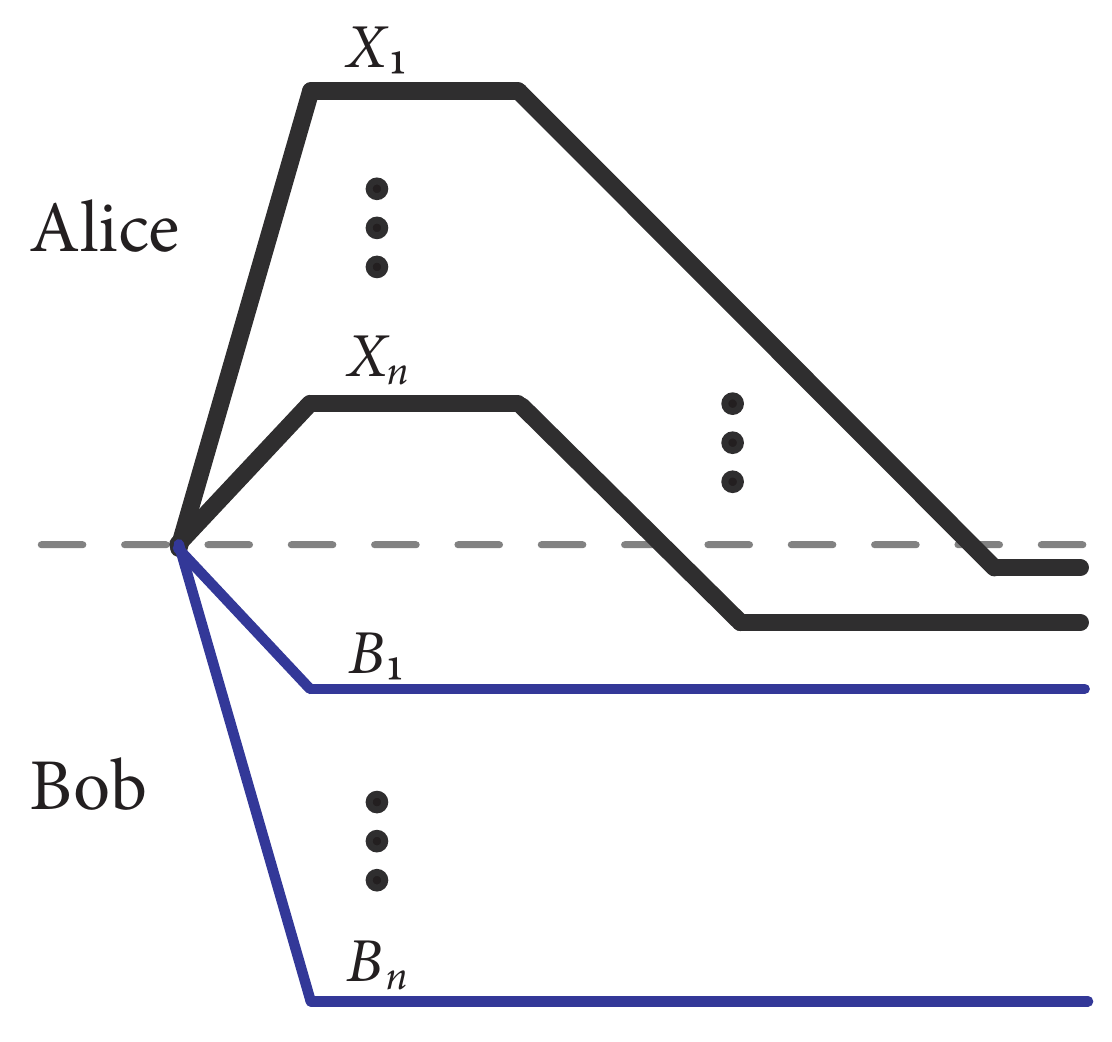}
\end{center}
\caption{\textbf{Ideal protocol for classical data compression with quantum
side information.} In this protocol, we assume that a quantum information
source distributes many copies of a classical-quantum state to Alice and Bob,
such that Alice receives the classical part and Bob receives the quantum part.
The goal is for Alice to communicate the classical sequence received from the
source to Bob. In an ideal case, she would simply transmit this sequence to
Bob. Though, it is possible to obtain a significant savings in communication
by allowing for an asymptotically vanishing error and for Bob to infer
something about the classical sequence from his correlated quantum states.}%
\label{fig:ideal-CDC-QSI}%
\end{figure}

One potential strategy is to exploit Shannon compression---just compress the
sequence to $nH\left(  X\right)  $ bits, keeping only the typical set
according to the distribution $p_{X^{n}}\left(  x^{n}\right)  $. But they can
actually do much better in general if Bob exploits his quantum side
information in the form of the correlated state $\rho_{x^{n}}$.

The most general protocol has Alice hash her sequence $x^{n}$ to some variable
$L\in\mathcal{L}$ (this is just some many-to-one mapping $f:\mathcal{X}%
^{n}\rightarrow\mathcal{L}$). She transmits the variable $L$ to Bob over a
noiseless classical channel using $\log_{2}\left\vert \mathcal{L}\right\vert $
bits. Bob then exploits the hashed variable $L$ and his quantum side
information $\rho_{x^{n}}$ to distinguish between all of the possible states
that are consistent with the hash~$L$ (i.e., his action will be some quantum
measurement depending on the hash $L$). The output of his measurement is some
approximation sequence $\hat{X}^{n}$. The protocol has one parameter that
characterizes its quality. We demand that the state $\sigma^{X^{n}\hat{X}%
^{n}B^{n}}$ after Alice and Bob's actions should be close in trace distance to
an ideal state $\rho^{X^{n}\overline{X}^{n}B^{n}}$, where $\overline{X}^{n}$
is a copy of $X^{n}$ (this would be the ideal output if Alice were to just
send a copy of the variable $X^{n}$ to Bob):%
\begin{equation}
\left\Vert \rho^{X^{n}\overline{X}^{n}B^{n}}-\sigma^{X^{n}\hat{X}^{n}B^{n}%
}\right\Vert _{1}\leq\epsilon. \label{eq:cond-CDC-QSI}%
\end{equation}
The above specifies an $\left(  n,R,\epsilon\right)  $ code for this task,
where $R\equiv\log_{2}\left\vert \mathcal{L}\right\vert /n$.

\begin{figure}[ptb]
\begin{center}
\includegraphics[
natheight=4.139900in,
natwidth=7.166700in,
height=2.5114in,
width=3.5405in
]{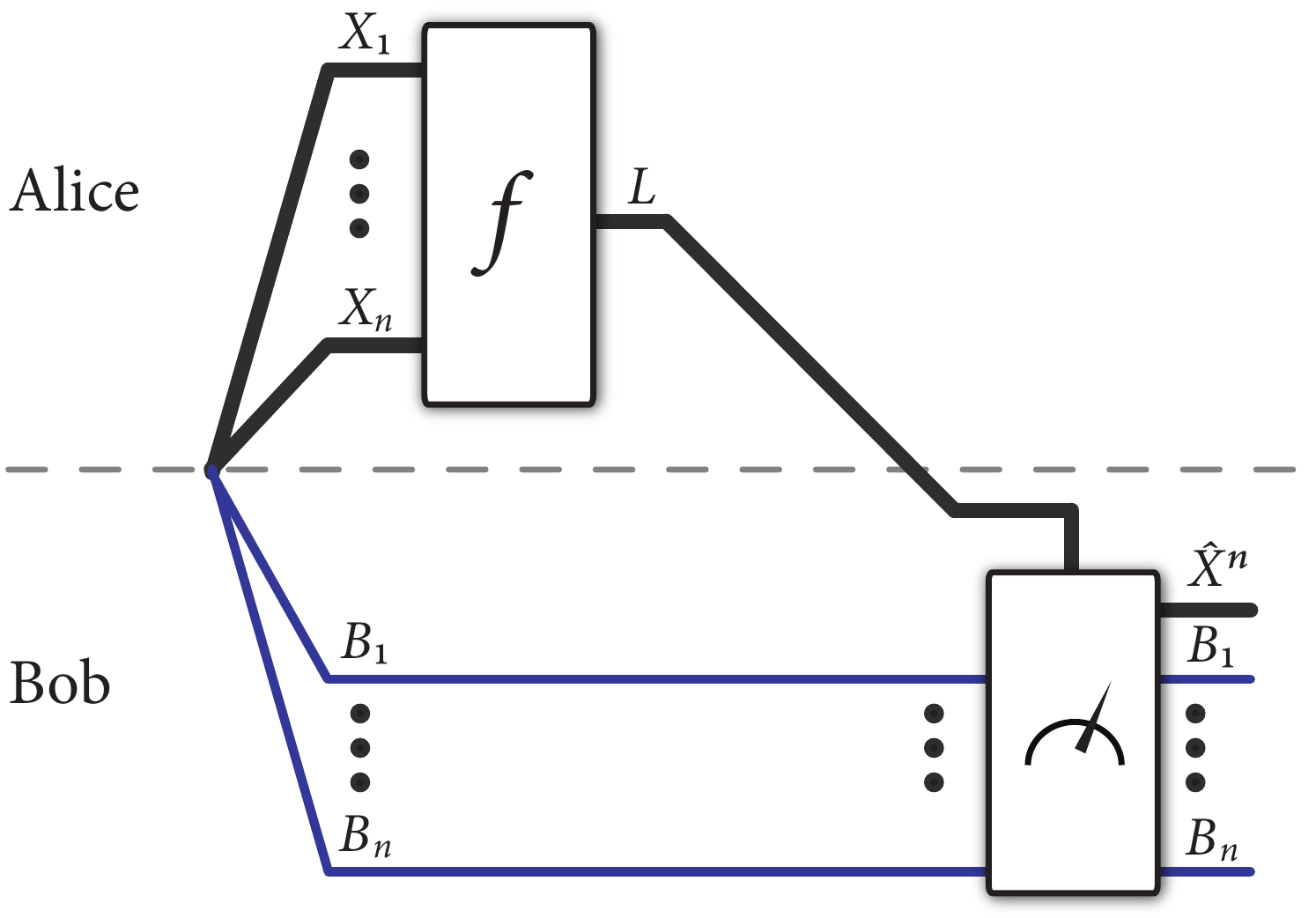}
\end{center}
\caption{\textbf{Classical data compression with QSI\ protocol.} The protocol
begins with the source distributing a random classical sequence to Alice and a
correlated quantum state to Bob. Alice begins by hashing the sequence to a
variable $L$ with some hash function $f$. She then transmits the variable $L$
to Bob, using $\log_{2}\left\vert \mathcal{L}\right\vert $ noiseless classical
bit channels. Bob receives the hash, and he then enumerates all of the
sequences $x^{n}$ that are consistent with the hash (so that $f\left(
x^{n}\right)  =l$). He performs a \textquotedblleft quantum
scan\textquotedblright\ over all of the quantum states $\rho_{x^{\prime n}}$
that are consistent with the received hash. This quantum scan amounts to a
sequence of binary quantum measurements, effectively asking, \textquotedblleft
Does my quantum state correspond to the first sequence consistent with the
hash? To the second? etc.\textquotedblright\ After receiving the answer
\textquotedblleft yes\textquotedblright\ to one of these questions, he
declares the \textquotedblleft yes\textquotedblright\ sequence to be the one
sent from Alice. This strategy has asymptotically vanishing error as long as
the size of the hash is at least $nH\left(  X|B\right)  $ bits.}%
\label{fig:actual-CDC-QSI}%
\end{figure}

A rate $R$ is achievable if there exists an $\left(  n,R,\epsilon\right)  $
code for all $\epsilon>0$ and sufficiently large $n$.

\subsection{Classical data compression with QSI theorem}

\begin{theorem}
[Classical data compression with quantum side information]\label{thm:cdc-qsi}%
Suppose that%
\[
\sum_{x}p_{X}\left(  x\right)  \left\vert x\right\rangle \left\langle
x\right\vert ^{X}\otimes\rho_{x}^{B}%
\]
is a classical-quantum state that characterizes a classical-quantum source.
Then the conditional von Neumann entropy $H\left(  X|B\right)  $ is the
smallest possible achievable rate for classical data compression with quantum
side information for this source:%
\[
\inf\left\{  R\ |\ R\text{ is achievable}\right\}  =H\left(  X|B\right)  .
\]

\end{theorem}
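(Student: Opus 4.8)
The plan is to prove the two matching bounds separately: achievability ($H(X|B)+\delta$ is achievable for every $\delta>0$, whence $\inf\le H(X|B)$) and a weak converse ($R$ achievable $\Rightarrow R\ge H(X|B)$). For achievability I would use a random hash together with a sequential ``quantum scan'' decoder. Fix $\delta>0$ and take $|\mathcal{L}|=2^{n[H(X|B)+4\delta]}$; Alice picks a hash $f:\mathcal{X}^{n}\rightarrow\mathcal{L}$ uniformly at random and sends $L=f(X^{n})$. On receiving $l$, Bob lists the sequences $x'^{n}$ that are strongly typical for $p_{X}$ with $f(x'^{n})=l$, in a fixed order, and performs the binary tests $\{\Pi_{x'^{n}},I-\Pi_{x'^{n}}\}$ in turn, where $\Pi_{x'^{n}}\equiv\Pi^{n}_{\rho^{B},\delta}\,\Pi_{\rho_{x'^{n}},\delta}\,\Pi^{n}_{\rho^{B},\delta}$ is built from the average typical projector of $\rho^{B}\equiv\sum_{x}p_{X}(x)\rho_{x}$ and the conditionally typical projector of $\rho_{x'^{n}}$; he outputs the first $x'^{n}$ that accepts.

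I would split the error into three events: $X^{n}$ is atypical (probability $\le\epsilon$, and these sequences are simply not scanned); the true sequence fails its own test; and some colliding typical $x'^{n}\ne X^{n}$ preceding $X^{n}$ in the ordering accepts on $\rho_{X^{n}}$. The last two I would control together with Sen's non-commutative union bound~\cite{S11} applied to the chain of measurements up to and including the true sequence's test: given that $X^{n}$ is typical and sits at position $j$, the probability the decoder does not output it is at most $2\sqrt{\,\text{Tr}\{(I-\Pi_{X^{n}})\rho_{X^{n}}\}+\sum_{i<j}\text{Tr}\{\Pi_{x'^{n}_{i}}\rho_{X^{n}}\}\,}$. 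The first term is $o(1)$ by typicality and the Gentle Operator Lemma. For the sum I would average over the random hash and over $X^{n}$: a uniform $f$ makes collision events independent of the value of $X^{n}$, so the expected number of colliding typical sequences is at most $2^{n[H(X)+\delta]}/|\mathcal{L}|$, while for each typical $x'^{n}$ the standard typical-projector estimates give $\mathbb{E}_{X^{n}}\text{Tr}\{\Pi_{x'^{n}}\rho_{X^{n}}\}\le 2^{-n[H(B)-\delta]}\,\text{Tr}\{\Pi_{\rho_{x'^{n}},\delta}\}\le 2^{-n[H(B)-\delta]}2^{n[H(B|X)+\delta]}=2^{-n[I(X;B)-2\delta]}$, using $I(X;B)=H(B)-H(B|X)$ for a classical-quantum state. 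Multiplying, the expected sum is at most $2^{n[H(X)+\delta]}\,2^{-n[H(X|B)+4\delta]}\,2^{-n[I(X;B)-2\delta]}=2^{-n\delta}$, using $H(X|B)=H(X)-I(X;B)$. By concavity of the square root the expected value of Sen's bound then vanishes, so some fixed choice of $f$ achieves error $\rightarrow 0$ at rate $H(X|B)+4\delta$; letting $\delta\rightarrow 0$ completes achievability.

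For the converse, suppose $R$ is achievable, so that for each $\epsilon>0$ and all large $n$ there is a code whose output $\sigma^{X^{n}\hat{X}^{n}B^{n}}$ is $\epsilon$-close in trace distance to the ideal $\rho^{X^{n}\overline{X}^{n}B^{n}}$ in which $\overline{X}^{n}$ copies $X^{n}$. Closeness of the classical registers gives $\Pr\{\hat{X}^{n}\ne X^{n}\}_{\sigma}\le\epsilon$, so Fano's inequality yields $H(X^{n}|\hat{X}^{n})_{\sigma}\le n\epsilon\log|\mathcal{X}|+H_{2}(\epsilon)\equiv n\epsilon'$. Writing $L$ for the transmitted hash,
\begin{align*}
nR &\ge H(L)\ge H(L|B^{n})\ge I(X^{n};L|B^{n}) \\
&= H(X^{n}|B^{n})-H(X^{n}|LB^{n}) \\
&\ge H(X^{n}|B^{n})-H(X^{n}|\hat{X}^{n})\ge nH(X|B)-n\epsilon',
\end{align*}
where $I(X^{n};L|B^{n})=H(L|B^{n})-H(L|X^{n}B^{n})\le H(L|B^{n})$, the second-to-last step is quantum data processing (Bob forms $\hat{X}^{n}$ from $L$ and $B^{n}$), and the last uses additivity of the conditional entropy on the IID source together with Fano. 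Dividing by $n$ and sending $\epsilon\rightarrow 0$ gives $R\ge H(X|B)$, so $\inf\{R:R\text{ achievable}\}\ge H(X|B)$, which together with achievability proves the theorem.

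I expect the achievability error analysis to be the main obstacle. Because Bob's scan is a chain of \emph{non-commuting} binary measurements, each test disturbs the state, so an ordinary union bound over ``some wrong sequence accepts'' is unavailable; Sen's non-commutative union bound is exactly the tool that absorbs the cumulative disturbance, at the price of a square root. One must also be careful that the averaging controlling the collision sum is performed jointly over the random hash and the source sequence, so that the colliding candidates are effectively independent of the true sequence and their side-information states average to $(\rho^{B})^{\otimes n}$; handling the true sequence's own test after this disturbance (via the Gentle Operator Lemma inside Sen's bound) is the other delicate point.
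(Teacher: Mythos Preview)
Your proposal follows essentially the same route as the paper: random two-universal hashing, a sequential ``quantum scan'' decoder analyzed via Sen's non-commutative union bound and typicality estimates for achievability, and the same entropy chain $nR\ge H(L)\ge H(L|B^n)\ge I(X^n;L|B^n)\ge H(X^n|B^n)-H(X^n|\hat X^n)$ for the converse (you use Fano where the paper uses Alicki--Fannes continuity; both work).

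Two small technical points deserve attention. First, your test operator $\Pi_{x'^n}\equiv\Pi^n_{\rho^B,\delta}\,\Pi_{\rho_{x'^n},\delta}\,\Pi^n_{\rho^B,\delta}$ is not a projector, so Sen's bound as stated (Lemma~\ref{lem-non-com-union-bound}, for projectors) does not directly apply. The paper avoids this by taking the binary test to be the conditionally typical \emph{projector} $\Pi_{\rho_{x'^n},\delta}$ itself and inserting the average typical projector $\Pi$ only afterwards in the error analysis, at the cost of an extra $\epsilon+2\sqrt{\epsilon}$ via the Gentle Operator Lemma; your sandwiching can be replaced by that maneuver with no change to the arithmetic. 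Second, the success criterion~(\ref{eq:cond-CDC-QSI}) requires that Bob's quantum system $B^n$ be (approximately) preserved, not merely that $\hat X^n=X^n$ with high probability; the paper handles this explicitly by applying a polar-decomposition unitary after the scan and invoking the Gentle Measurement Lemma on the POVM element $\Theta_{x^n}$. You should add that step.
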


\subsection{Achievability proof for CDC with QSI}

\label{sec:CDC-QSI-achieve}The resource inequality for this communication task
is as follows:%
\[
\left\langle \rho^{XB}\right\rangle +H\left(  X|B\right)  \left[  c\rightarrow
c\right]  \geq\left\langle \rho^{XX_{B}B}\right\rangle ,
\]
the meaning being that if Alice and Bob share many copies of the state
$\rho^{XB}$ and she communicates at a rate $H\left(  X|B\right)  $ to Bob,
then they can construct the state $\rho^{XX_{B}B}$, so that Bob has a copy of
the variable $X$.

The strategy for achievability is for Alice to hash her sequence $X^{n}$ to
some variable $L$. Bob then receives the variable $L$ after Alice communicates
it to him. He then \textquotedblleft scans\textquotedblright\ over all of the
quantum states $\rho_{x^{n}}$ that are consistent with the hash and such that
the sequence $x^{n}$ is typical (the strategy essentially disregards the
atypical sequences $x^{n}$ since their total probability mass is
asymptotically negligible). He can accomplish this \textquotedblleft
scan\textquotedblright\ by performing a sequential decoding strategy
\cite{GLM10,S11}, which consists of binary tests of the form,
\textquotedblleft Is this state consistent with the hash? Or
this one? etc.\textquotedblright\ He performs these tests until he receives a
\textquotedblleft yes\textquotedblright\ answer in one of his measurements.

The intuition for why $H\left(  X|B\right)  $ should be the ultimate rate of
communication is that there are $\approx2^{nH\left(  X\right)  }$ sequences of
the source to account for (the typical ones). From the HSW\ theorem
\cite{Hol98,PhysRevA.56.131}, we know that the maximal number of sequences
that Bob can distinguish is $\approx2^{nI\left(  X;B\right)  }$. Thus, if
Alice divides the source sequences into $\approx2^{nH\left(  X\right)
}/2^{nI\left(  X;B\right)  }=2^{nH\left(  X|B\right)  }$ groups and sends the
label of the group, then Bob should be able to determine which sequence
$x^{n}$ is the one that the source issued.

\textbf{Detailed Strategy.} More formally, the encoding strategy is as
follows. Alice and Bob are allowed to have an agreed-upon hash function
$f:\mathcal{X}^{n}\rightarrow\mathcal{L}$, selected at random from a
two-universal family. A hash function $f$ has a collision if two differing
sequences $x_{1}^{n},x_{2}^{n}\in\mathcal{X}^{n}$ hash to the same value:%
\[
x_{1}^{n}\neq x_{2}^{n}\ \ \ \Longrightarrow\ \ \ f\left(  x_{1}^{n}\right)
=f\left(  x_{2}^{n}\right)  .
\]
A two-universal family has the property that the probability of a collision is
the same as that for a uniformly random function (where the probability is
with respect to the random choice of the hash function):%
\begin{equation}
x_{1}^{n}\neq x_{2}^{n}\ \ \ \Longrightarrow\ \ \ \Pr_{f}\left\{  f\left(
x_{1}^{n}\right)  =f\left(  x_{2}^{n}\right)  \right\}  \leq\frac
{1}{\left\vert \mathcal{L}\right\vert }=2^{-nR}. \label{eq:two-univ-prop}%
\end{equation}
Such a strategy is equivalent to the \textquotedblleft random
binning\textquotedblright\ strategy often discussed in information theory
texts \cite{CT91,el2010lecture}.

Upon receiving the hash value $l$, Bob performs a sequence of binary
measurements $\left\{  \Pi_{x^{n}},I-\Pi_{x^{n}}\right\}  $ for all the
sequences $x^{n}$ that are consistent with the hash value (so that $f\left(
x^{n}\right)  =l$) and such that they are strongly typical ($x^{n}\in
T_{\delta}^{X^{n}}$---see Appendix~\ref{sec:typ-review} for details). We
define the set $\mathcal{A}\left(  f,l\right)  $ to capture these sequences:%
\begin{equation}
\mathcal{A}\left(  f,l\right)  \equiv\left\{  x^{n}:f\left(  x^{n}\right)
=l,\ \ x^{n}\in T_{\delta}^{X^{n}}\right\}  . \label{eq:seq-cons-w-hash}%
\end{equation}
The projector $\Pi_{x^{n}}$ is a strong conditionally typical projector (see
Appendix~\ref{sec:typ-review}), with the property that%
\[
\text{Tr}\left\{  \Pi_{x^{n}}\ \rho_{x^{n}}\right\}  \geq1-\epsilon,
\]
for all $\epsilon>0$ and sufficiently large $n$. From the above property, we
would expect these measurements to perform well in identifying the actual
state transmitted.

\textbf{Error Analysis.} We define the error probability as follows:%
\[
\Pr\left\{  \text{\textquotedblleft error @ decoder\textquotedblright%
}\right\}  =\sum_{x^{n}}p_{X^{n}}\left(  x^{n}\right)  \ \Pr\left\{
\text{\textquotedblleft error @ decoder\textquotedblright\ }|\ x^{n}\right\}
.
\]
It is then clear that we can focus on the typical sequences $x^{n}$ because
the above error probability is equal to%
\begin{align}
&  \sum_{x^{n}\in T_{\delta}^{X^{n}}}p_{X^{n}}\left(  x^{n}\right)
\ \Pr\left\{  \text{\textquotedblleft error @ decoder\textquotedblright%
\ }|\ x^{n}\right\}  +\sum_{x^{n}\notin T_{\delta}^{X^{n}}}p_{X^{n}}\left(
x^{n}\right)  \ \Pr\left\{  \text{\textquotedblleft error @
decoder\textquotedblright\ }|\ x^{n}\right\} \nonumber\\
&  \leq\sum_{x^{n}\in T_{\delta}^{X^{n}}}p_{X^{n}}\left(  x^{n}\right)
\ \Pr\left\{  \text{\textquotedblleft error @ decoder\textquotedblright%
\ }|\ x^{n}\right\}  +\epsilon. \label{eq:CDCQSI-err-prob-1}%
\end{align}

Now we consider the error term $\Pr\left\{  \text{\textquotedblleft error @
decoder\textquotedblright\ }|\ x^{n}\right\}  $. Let $a_{1}^{\left(  l\right)
}$, \ldots, $a_{\left\vert \mathcal{A}\right\vert }^{\left(  l\right)  }$
enumerate all of the sequences in the set $\mathcal{A}\left(  f,l\right)  $
defined in (\ref{eq:seq-cons-w-hash}) (those sequences consistent with the
hash $l$). Let $a_{m}^{\left(  l\right)  }$ be the actual sequence $x^{n}$
that the source issues. Then the probability for a correct decoding for Bob is
as follows:%
\[
\text{Tr}\left\{  \Pi_{a_{m}^{\left(  l\right)  }}\ \hat{\Pi}_{a_{m-1}%
^{\left(  l\right)  }}\cdots\hat{\Pi}_{a_{1}^{\left(  l\right)  }}%
\ \rho_{a_{m}^{\left(  l\right)  }}\ \hat{\Pi}_{a_{1}^{\left(  l\right)  }%
}\cdots\hat{\Pi}_{a_{m-1}^{\left(  l\right)  }}\ \Pi_{a_{m}^{\left(  l\right)
}}\right\}  ,
\]
where $\hat{\Pi}_{x^{n}}\equiv I-\Pi_{x^{n}}$, so that the binary tests give a
response of \textquotedblleft no\textquotedblright\ until the test for
$a_{m}^{\left(  l\right)  }$ gives a response of \textquotedblleft
yes.\textquotedblright\ The probability for incorrectly decoding with this
strategy is%
\[
1-\text{Tr}\left\{  \Pi_{a_{m}^{\left(  l\right)  }}\ \hat{\Pi}_{a_{m-1}%
^{\left(  l\right)  }}\cdots\hat{\Pi}_{a_{1}^{\left(  l\right)  }}%
\ \rho_{a_{m}^{\left(  l\right)  }}\ \hat{\Pi}_{a_{1}^{\left(  l\right)  }%
}\cdots\hat{\Pi}_{a_{m-1}^{\left(  l\right)  }}\ \Pi_{a_{m}^{\left(  l\right)
}}\right\}  ,
\]
so that we can write the error probability in (\ref{eq:CDCQSI-err-prob-1}) as%
\begin{equation}
\sum_{l\in\mathcal{L}}\sum_{a_{m}^{\left(  l\right)  }\in\mathcal{A}\left(
f,l\right)  }p_{X^{n}}\left(  a_{m}^{\left(  l\right)  }\right)  \ \left[
1-\text{Tr}\left\{  \Pi_{a_{m}^{\left(  l\right)  }}\ \hat{\Pi}_{a_{m-1}%
^{\left(  l\right)  }}\cdots\hat{\Pi}_{a_{1}^{\left(  l\right)  }}%
\ \rho_{a_{m}^{\left(  l\right)  }}\ \hat{\Pi}_{a_{1}^{\left(  l\right)  }%
}\cdots\hat{\Pi}_{a_{m-1}^{\left(  l\right)  }}\ \Pi_{a_{m}^{\left(  l\right)
}}\right\}  \right]  . \label{eq:CDCQSI-err-prob-2}%
\end{equation}
We can rewrite this error probability as%
\[
\sum_{l\in\mathcal{L}}\sum_{a_{m}^{\left(  l\right)  }\in\mathcal{A}\left(
f,l\right)  }p_{X^{n}}\left(  a_{m}^{\left(  l\right)  }\right)
\ \text{Tr}\left\{  \left(  I-\Theta_{a_{m}^{\left(  l\right)  }}\right)
\rho_{a_{m}^{\left(  l\right)  }}\right\}  ,
\]
where we define the POVM\ element $\Theta_{a_{m}^{\left(  l\right)  }}$\ as%
\begin{equation}
\Theta_{a_{m}^{\left(  l\right)  }}\equiv\hat{\Pi}_{a_{1}^{\left(  l\right)
}}\cdots\hat{\Pi}_{a_{m-1}^{\left(  l\right)  }}\ \Pi_{a_{m}^{\left(
l\right)  }}\ \hat{\Pi}_{a_{m-1}^{\left(  l\right)  }}\cdots\hat{\Pi}%
_{a_{1}^{\left(  l\right)  }}. \label{eq:POVM-element-CDC-QSI}%
\end{equation}
Using the facts that (see Appendix~\ref{sec:typ-review})%
\[
1=\text{Tr}\left\{  \rho_{a_{m}^{\left(  l\right)  }}\right\}  =\text{Tr}%
\left\{  \Pi\ \rho_{a_{m}^{\left(  l\right)  }}\right\}  +\text{Tr}\left\{
\left(  I-\Pi\right)  \rho_{a_{m}^{\left(  l\right)  }}\right\}  \leq
\text{Tr}\left\{  \Pi\ \rho_{a_{m}^{\left(  l\right)  }}\right\}  +\epsilon,
\]
where $\Pi$ is the typical projector for the average state $\sum_{x}%
p_{X}\left(  x\right)  \rho_{x}$, and%
\begin{align*}
&  \text{Tr}\left\{  \Pi_{a_{m}^{\left(  l\right)  }}\ \hat{\Pi}%
_{a_{m-1}^{\left(  l\right)  }}\cdots\hat{\Pi}_{a_{1}^{\left(  l\right)  }%
}\ \rho_{a_{m}^{\left(  l\right)  }}\ \hat{\Pi}_{a_{1}^{\left(  l\right)  }%
}\cdots\hat{\Pi}_{a_{m-1}^{\left(  l\right)  }}\ \Pi_{a_{m}^{\left(  l\right)
}}\right\} \\
&  =\text{Tr}\left\{  \hat{\Pi}_{a_{1}^{\left(  l\right)  }}\cdots\hat{\Pi
}_{a_{m-1}^{\left(  l\right)  }}\ \Pi_{a_{m}^{\left(  l\right)  }}\ \hat{\Pi
}_{a_{m-1}^{\left(  l\right)  }}\cdots\hat{\Pi}_{a_{1}^{\left(  l\right)  }%
}\ \rho_{a_{m}^{\left(  l\right)  }}\right\} \\
&  \geq\text{Tr}\left\{  \hat{\Pi}_{a_{1}^{\left(  l\right)  }}\cdots\hat{\Pi
}_{a_{m-1}^{\left(  l\right)  }}\ \Pi_{a_{m}^{\left(  l\right)  }}\ \hat{\Pi
}_{a_{m-1}^{\left(  l\right)  }}\cdots\hat{\Pi}_{a_{1}^{\left(  l\right)  }%
}\ \Pi\ \rho_{a_{m}^{\left(  l\right)  }}\ \Pi\right\}  -\left\Vert
\rho_{a_{m}^{\left(  l\right)  }}-\Pi\ \rho_{a_{m}^{\left(  l\right)  }}%
\ \Pi\right\Vert _{1}\\
&  \geq\text{Tr}\left\{  \hat{\Pi}_{a_{1}^{\left(  l\right)  }}\cdots\hat{\Pi
}_{a_{m-1}^{\left(  l\right)  }}\ \Pi_{a_{m}^{\left(  l\right)  }}\ \hat{\Pi
}_{a_{m-1}^{\left(  l\right)  }}\cdots\hat{\Pi}_{a_{1}^{\left(  l\right)  }%
}\ \Pi\ \rho_{a_{m}^{\left(  l\right)  }}\ \Pi\right\}  -2\sqrt{\epsilon},
\end{align*}
we can bound the expression in (\ref{eq:CDCQSI-err-prob-2}) from above by
\begin{equation}
\sum_{l\in\mathcal{L}}\sum_{a_{m}^{\left(  l\right)  }\in\mathcal{A}\left(
f,l\right)  }p_{X^{n}}\left(  a_{m}^{\left(  l\right)  }\right)  \ \left[
\text{Tr}\left\{  \Pi\ \rho_{a_{m}^{\left(  l\right)  }}\ \Pi\right\}
-\text{Tr}\left\{  \Pi_{a_{m}^{\left(  l\right)  }}\ \hat{\Pi}_{a_{m-1}%
^{\left(  l\right)  }}\cdots\hat{\Pi}_{a_{1}^{\left(  l\right)  }}\ \Pi
\ \rho_{a_{m}^{\left(  l\right)  }}\ \Pi\ \hat{\Pi}_{a_{1}^{\left(  l\right)
}}\cdots\hat{\Pi}_{a_{m-1}^{\left(  l\right)  }}\ \Pi_{a_{m}^{\left(
l\right)  }}\right\}  \right]  , \label{eq:avg-typ-bound}%
\end{equation}
(with the other terms $\epsilon+2\sqrt{\epsilon}$ omitted for simplicity). We
now apply Sen's non-commutative union bound \cite{S11}%
\ (Lemma~\ref{lem-non-com-union-bound} in Appendix~\ref{sec:useful-lemmas})
along with concavity of square-root to obtain the following upper bound:%
\[
2\sqrt{\sum_{l\in\mathcal{L}}\sum_{a_{m}^{\left(  l\right)  }\in
\mathcal{A}\left(  f,l\right)  }p_{X^{n}}\left(  a_{m}^{\left(  l\right)
}\right)  \ \left[  \text{Tr}\left\{  \left(  I-\Pi_{a_{m}^{\left(  l\right)
}}\right)  \Pi\ \rho_{a_{m}^{\left(  l\right)  }}\ \Pi\right\}  +\sum
_{i=1}^{m-1}\text{Tr}\left\{  \Pi_{a_{i}^{\left(  l\right)  }}\ \Pi
\ \rho_{a_{m}^{\left(  l\right)  }}\ \Pi\right\}  \right]  }.
\]
For the first term in the square-root, we have that%
\begin{align}
\text{Tr}\left\{  \left(  I-\Pi_{a_{m}^{\left(  l\right)  }}\right)  \Pi
\ \rho_{a_{m}^{\left(  l\right)  }}\ \Pi\right\}   &  \leq\text{Tr}\left\{
\left(  I-\Pi_{a_{m}^{\left(  l\right)  }}\right)  \rho_{a_{m}^{\left(
l\right)  }}\right\}  +\left\Vert \rho_{a_{m}^{\left(  l\right)  }}-\Pi
\ \rho_{a_{m}^{\left(  l\right)  }}\ \Pi\right\Vert _{1}\nonumber\\
&  \leq\epsilon+2\sqrt{\epsilon}. \label{eq:CDC-QSI-first-err-bnd}%
\end{align}
For the second term in the square-root, we have%
\begin{align*}
&  \sum_{l\in\mathcal{L}}\sum_{a_{m}^{\left(  l\right)  }\in\mathcal{A}\left(
f,l\right)  }p_{X^{n}}\left(  a_{m}^{\left(  l\right)  }\right)  \sum
_{i=1}^{m-1}\text{Tr}\left\{  \Pi_{a_{i}^{\left(  l\right)  }}\ \Pi
\ \rho_{a_{m}^{\left(  l\right)  }}\ \Pi\right\} \\
&  \leq\sum_{l\in\mathcal{L}}\sum_{a_{m}^{\left(  l\right)  }\in
\mathcal{A}\left(  f,l\right)  }p_{X^{n}}\left(  a_{m}^{\left(  l\right)
}\right)  \sum_{a_{i}^{\left(  l\right)  }\in\mathcal{A}\left(  f,l\right)
\ :\ i\neq m}\text{Tr}\left\{  \Pi_{a_{i}^{\left(  l\right)  }}\ \Pi
\ \rho_{a_{m}^{\left(  l\right)  }}\ \Pi\right\} \\
&  =\sum_{x^{n}\in T_{\delta}^{X^{n}}}p_{X^{n}}\left(  x^{n}\right)
\sum_{x^{\prime n}\in T_{\delta}^{X^{n}}\ :\ x^{\prime n}\neq x^{n}%
}\mathcal{I}\left(  f\left(  x^{\prime n}\right)  =f\left(  x^{n}\right)
\right)  \ \text{Tr}\left\{  \Pi_{x^{\prime n}}\ \Pi\ \rho_{x^{n}}%
\ \Pi\right\} \\
&  \leq\sum_{x^{n}}p_{X^{n}}\left(  x^{n}\right)  \sum_{x^{\prime n}\in
T_{\delta}^{X^{n}}\ :\ x^{\prime n}\neq x^{n}}\mathcal{I}\left(  f\left(
x^{\prime n}\right)  =f\left(  x^{n}\right)  \right)  \ \text{Tr}\left\{
\Pi_{x^{\prime n}}\ \Pi\ \rho_{x^{n}}\ \Pi\right\}  .
\end{align*}
The first inequality follows by summing over all the indices not equal to $m$.
The equality follows by introducing the indicator function $\mathcal{I}\left(
f\left(  x^{\prime n}\right)  =f\left(  x^{n}\right)  \right)  $, and the last
inequality follows by summing over all sequences $x^{n}$.

We now analyze the expectation of the error probability, with respect to the
random hash function $f$. (We can imagine that this expectation was there from
the beginning of the analysis, and apply concavity of square-root to bring it
over this second term).\ This leads to%
\begin{align*}
&  \mathbb{E}_{f}\left\{  \sum_{x^{n}}p_{X^{n}}\left(  x^{n}\right)
\sum_{x^{\prime n}\in T_{\delta}^{X^{n}}\ :\ x^{\prime n}\neq x^{n}%
}\mathcal{I}\left(  f\left(  x^{\prime n}\right)  =f\left(  x^{n}\right)
\right)  \ \text{Tr}\left\{  \Pi_{x^{\prime n}}\ \Pi\ \rho_{x^{n}}%
\ \Pi\right\}  \right\} \\
&  =\sum_{x^{n}}p_{X^{n}}\left(  x^{n}\right)  \sum_{x^{\prime n}\in
T_{\delta}^{X^{n}}\ :\ x^{\prime n}\neq x^{n}}\mathbb{E}_{f}\left\{
\mathcal{I}\left(  f\left(  x^{\prime n}\right)  =f\left(  x^{n}\right)
\right)  \right\}  \ \text{Tr}\left\{  \Pi_{x^{\prime n}}\ \Pi\ \rho_{x^{n}%
}\ \Pi\right\} \\
&  =\sum_{x^{n}}p_{X^{n}}\left(  x^{n}\right)  \sum_{x^{\prime n}\in
T_{\delta}^{X^{n}}\ :\ x^{\prime n}\neq x^{n}}\Pr_{f}\left\{  f\left(
x^{\prime n}\right)  =f\left(  x^{n}\right)  \right\}  \ \text{Tr}\left\{
\Pi_{x^{\prime n}}\ \Pi\ \rho_{x^{n}}\ \Pi\right\} \\
&  \leq2^{-nR}\sum_{x^{n}}p_{X^{n}}\left(  x^{n}\right)  \sum_{x^{\prime n}\in
T_{\delta}^{X^{n}}}\text{Tr}\left\{  \Pi_{x^{\prime n}}\ \Pi\ \rho_{x^{n}%
}\ \Pi\right\}  ,
\end{align*}
where the inequality follows from the two-universal property in
(\ref{eq:two-univ-prop}), the fact that $R=\log_{2}\left\vert \mathcal{L}%
\right\vert /n$, and by summing over all sequences $x^{\prime n}$ in the
typical set. Continuing, we have%
\begin{align*}
&  =2^{-nR}\sum_{x^{\prime n}\in T_{\delta}^{X^{n}}}\text{Tr}\left\{
\Pi_{x^{\prime n}}\ \Pi\ \left(  \sum_{x^{n}}p_{X^{n}}\left(  x^{n}\right)
\rho_{x^{n}}\right)  \ \Pi\right\} \\
&  =2^{-nR}\sum_{x^{\prime n}\in T_{\delta}^{X^{n}}}\text{Tr}\left\{
\Pi_{x^{\prime n}}\ \Pi\ \rho^{\otimes n}\ \Pi\right\} \\
&  \leq2^{-nR}\ 2^{-n\left[  H\left(  B\right)  -\delta\right]  }%
\sum_{x^{\prime n}\in T_{\delta}^{X^{n}}}\text{Tr}\left\{  \Pi_{x^{\prime n}%
}\ \Pi\right\} \\
&  \leq2^{-nR}\ 2^{-n\left[  H\left(  B\right)  -\delta\right]  }%
\sum_{x^{\prime n}\in T_{\delta}^{X^{n}}}\text{Tr}\left\{  \Pi_{x^{\prime n}%
}\right\} \\
&  \leq2^{-nR}\ 2^{-n\left[  H\left(  B\right)  -\delta\right]  }\ 2^{n\left[
H\left(  B|X\right)  +\delta\right]  }\ 2^{n\left[  H\left(  X\right)
+\delta\right]  }\\
&  =2^{-n\left[  R-H\left(  X|B\right)  -3\delta\right]  }.
\end{align*}
The first inequality follows from the operator inequality $\Pi\ \rho^{\otimes
n}\ \Pi\leq2^{-n\left[  H\left(  B\right)  -\delta\right]  }\ \Pi$, and the
second from $\Pi\leq I$. The final inequality follows from the bounds
Tr$\left\{  \Pi_{x^{\prime n}}\right\}  \leq2^{n\left[  H\left(  B|X\right)
+\delta\right]  }$ and $\left\vert T_{\delta}^{X^{n}}\right\vert
\leq2^{n\left[  H\left(  X\right)  +\delta\right]  }$. Collecting everything,
the overall error probability is bounded by%
\begin{equation}
\epsilon^{\prime}\equiv\epsilon+2\sqrt{\epsilon}+2\sqrt{\epsilon
+2\sqrt{\epsilon}+2^{-n\left[  R-H\left(  X|B\right)  -3\delta\right]  }}.
\label{eq:CDC-QSI-total-bound}%
\end{equation}

Since the expectation of the above error probability is small (where the
expectation is with respect to the random choice of hash function), there
exists some particular hash function from the family such that the above
inequality is true. Thus, as long as $R=H\left(  X|B\right)  +4\delta$, we can
guarantee that the error probability of the scheme is arbitrarily small.

We now argue that it is possible to make the state after the decoding be
arbitrarily close to the initial state, so that the condition in
(\ref{eq:cond-CDC-QSI}) holds. After recovering the sequence $x^{n}%
=a_{m}^{\left(  l\right)  }$ issued by the source, Bob can place it in a
classical register, and the post-measurement state from sequential decoding
has the following form:%
\[
\frac{1}{\text{Tr}\left\{  \Theta_{a_{m}^{\left(  l\right)  }}\rho
_{a_{m}^{\left(  l\right)  }}\right\}  }\Pi_{a_{m}^{\left(  l\right)  }}%
\ \hat{\Pi}_{a_{m-1}^{\left(  l\right)  }}\cdots\hat{\Pi}_{a_{1}^{\left(
l\right)  }}\ \rho_{a_{m}^{\left(  l\right)  }}\ \hat{\Pi}_{a_{1}^{\left(
l\right)  }}\cdots\hat{\Pi}_{a_{m-1}^{\left(  l\right)  }}\ \Pi_{a_{m}%
^{\left(  l\right)  }},
\]
with $\Theta_{a_{m}^{\left(  l\right)  }}$ defined in
(\ref{eq:POVM-element-CDC-QSI}) and assuming a correct decoding. The operators
$\Pi_{a_{m}^{\left(  l\right)  }}\ \hat{\Pi}_{a_{m-1}^{\left(  l\right)  }%
}\cdots\hat{\Pi}_{a_{1}^{\left(  l\right)  }}$ and $\sqrt{\Theta
_{a_{m}^{\left(  l\right)  }}}$ are related by a left polar decomposition:%
\[
\Pi_{a_{m}^{\left(  l\right)  }}\ \hat{\Pi}_{a_{m-1}^{\left(  l\right)  }%
}\cdots\hat{\Pi}_{a_{1}^{\left(  l\right)  }}=U_{a_{m}^{\left(  l\right)  }%
}\sqrt{\Theta_{a_{m}^{\left(  l\right)  }}},
\]
for some unitary $U_{a_{m}^{\left(  l\right)  }}$. So after Bob recovers the
sequence $a_{m}^{\left(  l\right)  }$, he applies the unitary $U_{a_{m}%
^{\left(  l\right)  }}^{\dag}$, and the state becomes as follows:%
\[
\frac{1}{\text{Tr}\left\{  \Theta_{a_{m}^{\left(  l\right)  }}\rho
_{a_{m}^{\left(  l\right)  }}\right\}  }\sqrt{\Theta_{a_{m}^{\left(  l\right)
}}}\ \rho_{a_{m}^{\left(  l\right)  }}\ \sqrt{\Theta_{a_{m}^{\left(  l\right)
}}}.
\]
We can now show that the condition in (\ref{eq:cond-CDC-QSI}) holds for this
decoding procedure (including the unitaries $U_{a_{m}^{\left(  l\right)  }%
}^{\dag}$). Consider that for all typical sequences $x^{n}\in T_{\delta
}^{X^{n}}$, the trace distance between the initial state and the
post-measurement state has the following bound:%
\[
\left\Vert \rho_{x^{n}}-\text{Tr}\left\{  \Theta_{x^{n}}\rho_{x^{n}}\right\}
^{-1}\sqrt{\Theta_{x^{n}}}\ \rho_{x^{n}}\ \sqrt{\Theta_{x^{n}}}\right\Vert
_{1}\leq2\sqrt{\text{Tr}\left\{  \left(  I-\Theta_{x^{n}}\right)  \rho_{x^{n}%
}\right\}  },
\]
which follows from the Gentle Measurement Lemma (Lemma~9.4.1 of
Ref.~\cite{W11}). Combining this bound with the fact that there is no
measurement when $x^{n}\notin T_{\delta}^{X^{n}}$ and defining $\Theta_{x^{n}%
}=I$ in this case, averaging over $p_{X^{n}}\left(  x^{n}\right)  $, and then
applying our bound in (\ref{eq:CDC-QSI-total-bound}) and concavity of
square-root, we obtain the following upper bound:%
\[
\sum_{x^{n}}p_{X^{n}}\left(  x^{n}\right)  \left\Vert \rho_{x^{n}}%
-\text{Tr}\left\{  \Theta_{x^{n}}\rho_{x^{n}}\right\}  ^{-1}\sqrt
{\Theta_{x^{n}}}\ \rho_{x^{n}}\ \sqrt{\Theta_{x^{n}}}\right\Vert _{1}%
\leq2\sqrt{\epsilon^{\prime}}.
\]
It then follows that the condition in (\ref{eq:cond-CDC-QSI}) is satisfied for
this protocol.

\subsection{Converse theorem for CDC with QSI}

This section provides a simple proof of the converse theorem for CQC-QSI. The
converse demonstrates that the single-letter rate in Theorem~\ref{thm:cdc-qsi}%
\ is optimal. An inspection of the proof reveals a close similarity with the
Slepian-Wolf converse in Ref.~\cite{el2010lecture}.

In the most general protocol for this task, Alice receives the sequence
$X^{n}$ from the source. She then hashes it to a random variable $L$ where
$f\left(  X^{n}\right)  =L$ and sends it over to Bob via some noiseless
classical bit channels. Bob receives $L$, processes it and $B^{n}$ to obtain
an estimate $\hat{X}^{n}$ of $X^{n}$. If the protocol is any good for this
task, then the actual state $\omega^{X^{n}\hat{X}^{n}B^{n}}$ at the end should
be $\epsilon$-close in trace distance to the ideal state $\sigma
^{X^{n}\overline{X}^{n}B^{n}}$, where $\overline{X}^{n}$ is a copy of the
variable $X^{n}$:%
\begin{equation}
\left\Vert \omega^{X^{n}\hat{X}^{n}B^{n}}-\sigma^{X^{n}\overline{X}^{n}B^{n}%
}\right\Vert _{1}\leq\epsilon. \label{eq:CDC-QSI-good-protocol}%
\end{equation}
A proof of the converse goes as follows:%
\begin{align*}
nR  &  \geq H\left(  L\right) \\
&  \geq H\left(  L|B^{n}\right) \\
&  =I\left(  X^{n};L|B^{n}\right)  +H\left(  L|B^{n}X^{n}\right) \\
&  \geq I\left(  X^{n};L|B^{n}\right) \\
&  =H\left(  X^{n}|B^{n}\right)  -H\left(  X^{n}|LB^{n}\right) \\
&  \geq H\left(  X^{n}|B^{n}\right)  _{\omega}-H(X^{n}|\hat{X}^{n})_{\omega}\\
&  \geq H\left(  X^{n}|B^{n}\right)  _{\sigma}-n\epsilon^{\prime}\\
&  =nH\left(  X|B\right)  -n\epsilon^{\prime}.
\end{align*}
The first two inequalities follow for reasons similar to those in the previous
converse in Section~\ref{sec:IC-converse}. The first equality is an entropy
identity, and the third inequality follows because $H\left(  L|B^{n}%
X^{n}\right)  \geq0$ for a classical variable $L$. The second equality is
another entropy identity. The fourth inequality follows from quantum data
processing of $L$ and $B^{n}$ to obtain the estimate $\hat{X}^{n}$. The final
inequality follows from the condition in (\ref{eq:CDC-QSI-good-protocol}),
continuity of entropy (Alicki-Fannes' inequality \cite{AF04}), and the fact
that $H(X^{n}|\overline{X}^{n})_{\sigma}=0$ since $\overline{X}^{n}$ is copy
of $X^{n}$, with $\epsilon^{\prime}$ being some function $g\left(
\epsilon\right)  $ such that $\lim_{\epsilon\rightarrow0}g\left(
\epsilon\right)  =0$. The final equality follows because the entropy is
additive on a tensor-power state.

\section{Measurement compression with quantum side information}

\label{sec:MC-QSI-feedback}We now discuss another new protocol:\ measurement
compression with quantum side information (MC-QSI). The information processing
task for this protocol is similar to that in measurement compression
(Section~\ref{sec:meas-comp}), with the exception that they are to perform the
protocol on the $A$ system of some bipartite state, and Bob is allowed to use
his system $B$ in order to reduce the communication resources needed for the
simulation. The protocol discussed in this section is a \textquotedblleft
feedback\textquotedblright\ simulation, in which the sender also obtains the
outcome of the measurement simulation. After reviewing the information
processing task, we state the MC-QSI\ with feedback theorem and prove
achievability of the protocol and its converse. Finally, we discuss several
applications of it.

\subsection{Information processing task for MC-QSI}

The information processing task in this case is a straightforward extension of
that for measurement compression with feedback. As such, we leave the
discussion of it to the captions of Figures~\ref{fig:ideal-ICQSI}\ and
\ref{fig:actual-ICQSI}. One point to observe from the figures is that in the ideal implementation of the measurement, the side information in system $B$ is left untouched. As a result, the measurement compression protocol will be permitted to use system $B$, but only in ways that do not significantly disturb it.

\begin{figure}[ptb]
\begin{center}
\includegraphics[
natheight=4.060300in,
natwidth=4.873200in,
height=2.9767in,
width=2.5374in
]{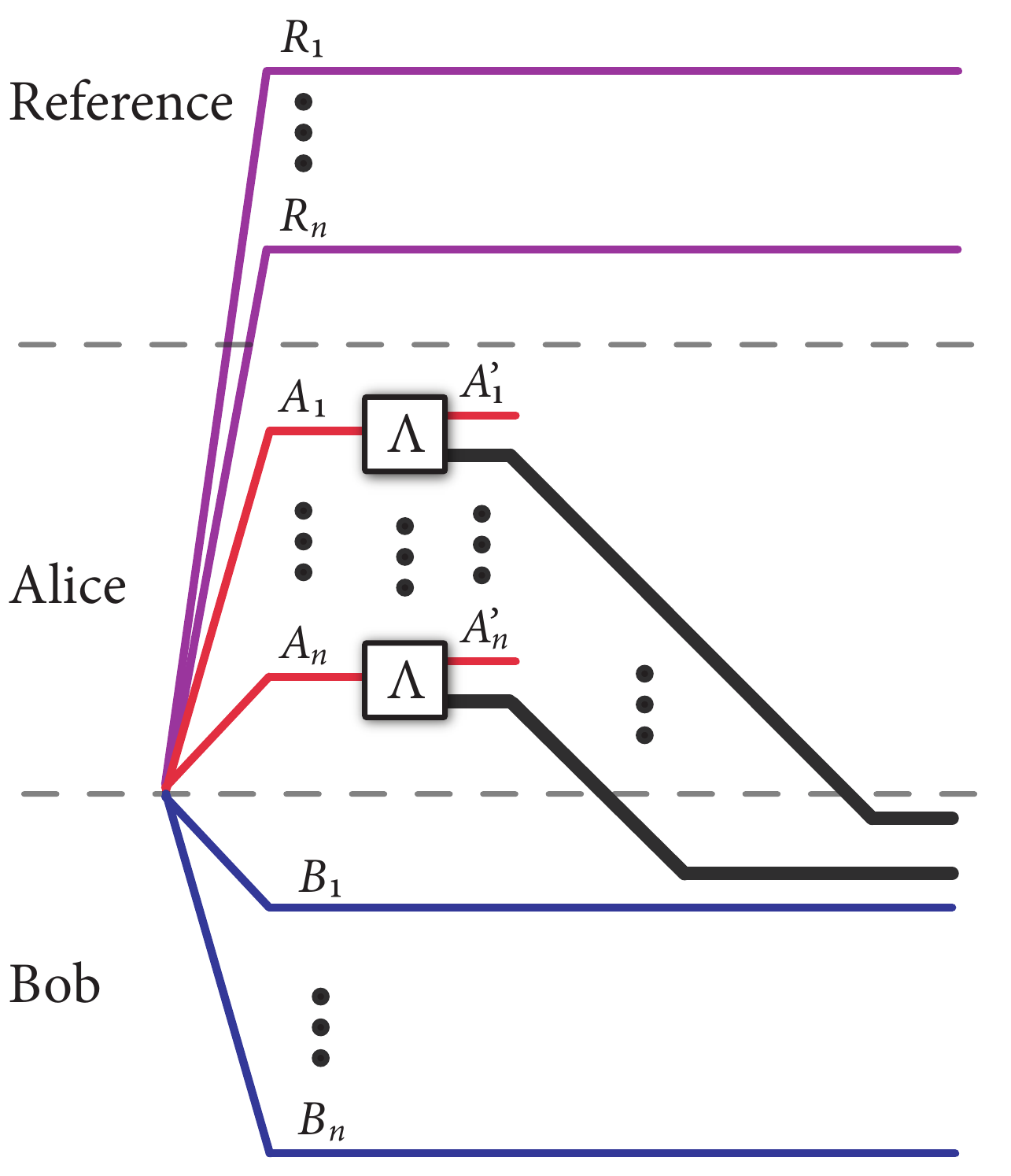}
\end{center}
\caption{\textbf{Ideal measurement compression with quantum side information.}
The ideal protocol to which we should compare performance of any actual
protocol. The sender and receiver share many copies of some bipartite state
$\rho^{AB}$. Alice performs the measurement $\Lambda\equiv\left\{  \Lambda
_{x}\right\}  $ locally on each of her shares and sends the outcomes to Bob. A
simulation of this measurement would have the sender and receiver operate
according to some procedure that is statistically indistinguishable from this
ideal case.}%
\label{fig:ideal-ICQSI}%
\end{figure}

\begin{figure}[ptb]
\begin{center}
\includegraphics[
width=2.9386in
]{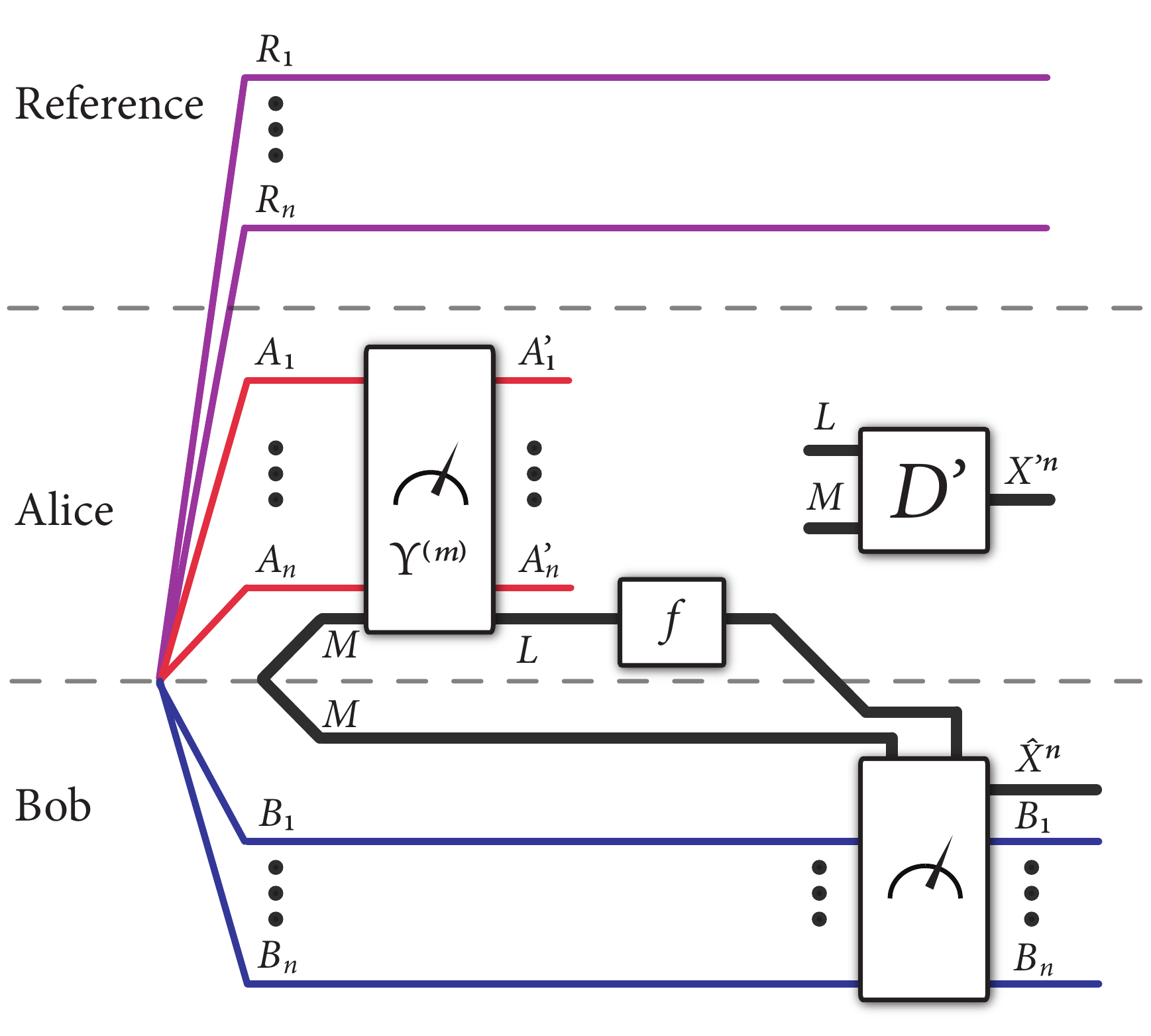}
\end{center}
\caption{\textbf{Measurement compression with quantum side information
protocol.} The figure depicts the most general protocol for this task when
both the sender and receiver are to obtain the outcome of the measurement
simulation. Assuming that Alice and Bob share many copies of some bipartite
state $\rho^{AB}$ and have common randomness $M$ available, Alice simulates
the measurement $\Lambda^{\otimes n}$ by performing some POVM\ conditional on
the value of the common randomness. Rather than send the full output $L$\ of
the measurement to Bob, Alice hashes it to $f\left(  L\right)  $ using some
hash function $f$, and she sends the hash $f\left(  L\right)  $ to Bob. Bob
performs a measurement on his systems $B^{n}$, conditional on the hash
$f\left(  L\right)  $ and his share of the common randomness $M$. From this
measurement, he can recover the full value of $L$ and then reconstruct the
sequence $x^{n}$ using $L$ and $M$. The protocol is also a \textquotedblleft
feedback\textquotedblright\ simulation, such that Alice recovers the outcome
of the simulation by processing the classical registers $L$ and $M$. The
protocol performs well if the output of this simulated measurement is
statistically indistinguishable from the output of the true measurement
$\Lambda^{\otimes n}$ (from the perspective of someone holding the reference
systems and the measurement outcomes).}%
\label{fig:actual-ICQSI}%
\end{figure}

\subsection{Measurement compression with quantum side information theorem}
\label{sec:mc-qsi}

\begin{theorem}
[Measurement compression with QSI]\label{thm:ICQSI}Let $\rho^{AB}$ be a source
state shared between a sender $A$ and a receiver $B$, and let $\Lambda$ be a
POVM\ to simulate on the $A$ system of this state. A protocol for faithful
feedback simulation of the POVM\ with classical communication
rate $R$ and common randomness rate $S$ exists if and only if the following
inequalities hold:%
\begin{align*}
R  &  \geq I\left(  X;R|B\right)  ,\\
R+S  &  \geq H\left(  X|B\right)  ,
\end{align*}
where the entropies are with respect to a state of the following form:%
\[
\sum_{x}\left\vert x\right\rangle \left\langle x\right\vert ^{X}%
\otimes\text{\emph{Tr}}_{A}\left\{  \left(  I^{R}\otimes\Lambda_{x}%
^{A}\right)  \phi^{RAB}\right\}  ,
\]
and $\phi^{RAB}$ is some purification of the state $\rho^{AB}$.
\end{theorem}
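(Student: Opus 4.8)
The plan is to prove achievability by running Winter's measurement compression protocol with the joint system $RB$ playing the role of the reference, and then adding a hashing-plus-sequential-decoding layer --- as in the CDC-QSI protocol of Section~\ref{sec:CDC-QSI-achieve} --- to let Bob exploit $B^n$ and thereby shrink the communication from $I(X;RB)$ down to $I(X;R|B)$; the converse will be a single-letter entropy chain conditioned throughout on $B^n$. The key observation for achievability is that any purification $\phi^{RAB}$ of $\rho^{AB}$ is also a purification of $\rho^A = \text{Tr}_{RB}\,\phi^{RAB}$ with reference $RB$, and that the operators $\Gamma^{(m)}_{x^n}$ and codebook $\{x^n(l,m)\}$ constructed in Section~\ref{sec:MC-achieve} depend only on $\Lambda$ and $\rho^A$. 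Since $H(RB)=H(A)$ and, for any purification, the entropy of the post-measurement reference state equals $H(\sqrt{\rho^A}\,\Lambda_x\sqrt{\rho^A}/p_X(x))$, the operator Chernoff bound now dictates the sizes $|\mathcal{L}|\approx 2^{nI(X;RB)}$ and $|\mathcal{M}|\approx 2^{nH(X|RB)}$, and Winter's analysis gives $\widetilde{\Lambda}_{x^n}\equiv |\mathcal{M}|^{-1}\sum_m\Gamma^{(m)}_{x^n}\approx\Lambda_{x^n}$ in the weighted trace norm of Lemma~\ref{lem:faithful-sim-1}; by the faithful-simulation equivalence (which is purification-independent), this steers $R^nB^n$ to the ideal post-measurement ensemble.

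On top of this I would add a two-universal hash $g:\mathcal{L}\to\mathcal{L}'$ with $|\mathcal{L}'|\approx 2^{n[I(X;R|B)+3\delta]}$. The protocol: Alice and Bob share $m$; Alice performs $\Upsilon^{(m)}$, obtains $l$, sets her reconstruction $x'^n = x^n(l,m)$, and transmits only $g(l)$; Bob, holding $m$, $g(l)$, and $B^n$, runs the sequential ``quantum scan'' of Section~\ref{sec:CDC-QSI-achieve} over the candidate set $\{l':g(l')=g(l)\}$, using conditionally typical projectors for the states $\beta^{B^n}_{x^n(l',m)} = \text{Tr}_{R^nA^n}\{(I^{R^nB^n}\otimes\Lambda^{A^n}_{x^n(l',m)})\phi_\rho^{\otimes n}\}/p_{X^n}(x^n(l',m))$, then applies the polar-decomposition unitary correction so that $B^n$ is restored. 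Note $\sum_x p_X(x)\beta_x = \text{Tr}_A\,\rho^{AB} = \rho^B$, so the relevant Holevo quantity of the ensemble $\{p_X(x),\beta_x\}$ is $I(X;B)$, which matches the bucket size $|\mathcal{L}|/|\mathcal{L}'|\approx 2^{nI(X;B)}$. The error bound follows from the triangle inequality through an intermediate state, with three ingredients: (i) the operator Chernoff estimates (events $E_0$, $E_m$ of Section~\ref{sec:MC-achieve}, now with $RB$ in place of $R$) control the steering error; (ii) conditioned on the true $(l,m)$, Bob's state on $B^n$ is within $O(\text{steering error})$ of $\beta^{B^n}_{x^n(l,m)}$ in expectation, after which his scan fails only on a conditionally atypical subspace or through a ``collision'' term $\sum_{l'\ne l}\mathcal{I}(g(l')=g(l))\,\text{Tr}\{\Pi_{x^n(l',m)}\,\Pi\,\beta^{B^n}_{x^n(l,m)}\,\Pi\}$ controlled via Sen's non-commutative union bound --- whose expectation over the random hash and codebook is at most $2^{-n[R-I(X;R|B)-O(\delta)]}$, using the two-universal property, $\Pi\,\rho_B^{\otimes n}\,\Pi\le 2^{-n[H(B)-\delta]}\Pi$, $\text{Tr}\{\Pi_{x^n}\}\le 2^{n[H(B|X)+\delta]}$, and $\mathbb{E}[\beta^{B^n}_{x^n(l,m)}]\approx\rho_B^{\otimes n}$ (deferring the average over the true sequence is essential here); (iii) since the scan then succeeds with high probability, the Gentle Measurement Lemma ensures the post-decoding state on $R^nB^n$ is close to the pre-decoding one, so $B^n$ --- untouched in the ideal protocol --- is barely disturbed, and $\hat{x}^n = x^n(l,m) = x'^n$ matches Alice's copy. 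Choosing the codebook and hash at random and union-bounding over all these events produces a single good choice; this achieves the corner point $(R,S)=(I(X;R|B),H(X|RB))$, and the rest of the region follows by time-sharing with the point $(H(X|B),0)$ (Alice performs $\Lambda^{\otimes n}$ and transmits the outcome via CDC-QSI) and by wasting common randomness, exactly as in Theorem~\ref{thm:meas-comp}. The general quantum-instrument version, if needed, follows by the same polar-decomposition reduction as in the proof of Theorem~\ref{thm:instrument-simulation}.

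For the converse, let a faithful protocol produce $\omega^{R^nB^n\hat{X}^nX'^n}$ that is $\epsilon$-close in trace distance to the ideal $\sigma^{R^nB^nX^n\bar{X}^n}$, with $\bar{X}^n$ a copy of $X^n$, $B^n$ undisturbed, and common randomness $M$ independent of $R^nB^n$ (so $I(M;R^n|B^n)=0$). Using that $\hat{X}^n$ is a function of $(L,M,B^n)$ and $X'^n$ of $(L,M)$, data processing for conditional mutual information, the Alicki--Fannes inequality applied to the faithfulness condition, and additivity of the entropic quantities on the tensor-power state $\sigma$, one obtains
\begin{align*}
nR &\ge H(L)\ge H(L|B^n)\ge I(LM;R^n|B^n)\\
&\ge I(\hat{X}^n;R^n|B^n)_\omega\ge I(X^n;R^n|B^n)_\sigma - n\epsilon' = nI(X;R|B) - n\epsilon',\\
n(R+S) &\ge H(LM)\ge H(LM|B^n)\ge I(X'^n;LM|B^n)\\
&\ge I(X'^n;\hat{X}^n|B^n)_\omega\ge I(\bar{X}^n;X^n|B^n)_\sigma - n\epsilon' = nH(X|B) - n\epsilon',
\end{align*}
where $\epsilon'\to 0$ as $\epsilon\to 0$; dividing by $n$ and taking $n\to\infty$, $\epsilon\to 0$ establishes optimality. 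The main obstacle I anticipate is the achievability error analysis: dovetailing the approximate operator-Chernoff measurement-compression step with the CDC-QSI decoding step while guaranteeing that Bob's probing of $B^n$ is gentle enough to leave the side information essentially undisturbed, since the ideal protocol leaves $B^n$ untouched whereas any simulation must measure on it.
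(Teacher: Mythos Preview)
Your proposal is correct and takes essentially the same approach as the paper: achievability via Winter's measurement compression with $RB$ playing the role of the reference, followed by a two-universal hash on $l$ and sequential decoding on $B^n$ exactly as in CDC-QSI, and a single-letter converse conditioned on Bob's side information. One minor difference worth noting in the converse for $R$: the paper works with the unconditioned quantities $I(L'MB^n;R^n)-I(R^n;B^n)$ and applies the chain rule only \emph{after} invoking data processing and faithfulness, which cleanly sidesteps the fact that Bob's instrument transforms $B^n$ into $B'^n$; your direct step $I(LM;R^n|B^n)\geq I(\hat X^n;R^n|B^n)_\omega$ is not a pure data-processing inequality (since $B^n$ no longer exists in $\omega$) and needs an extra Alicki--Fannes application to control $|I(B'^n;R^n)_\omega-I(B^n;R^n)_\theta|$, which you did not explicitly flag.
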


The achievable rate region closely resembles Figure~\ref{fig:IC-rate-region}, except that all of the information quantities should
be conditioned on the system $B$ since, in the new task, $B$ is available as quantum side information.

It is instructive to see how the second example of Section~\ref{sec:meas-comp-example}%
\ changes if quantum side information is available. Suppose that Bob now
possesses the purification of the maximally mixed state, so that Alice and Bob
share a Bell state before communication begins. This means that there is no
purification system $R$ because the state on $A$ and $B$ is already pure. In
this case, the state on $X$ and $B$ after the measurement in
(\ref{eq:example-POVM}) is as follows:%
\[
\frac{1}{4}\left(  \left\vert 0\right\rangle \left\langle 0\right\vert
^{X}\otimes\left\vert 0\right\rangle \left\langle 0\right\vert ^{B}+\left\vert
1\right\rangle \left\langle 1\right\vert ^{X}\otimes\left\vert 1\right\rangle
\left\langle 1\right\vert ^{B}+\left\vert 2\right\rangle \left\langle
2\right\vert ^{X}\otimes\left\vert +\right\rangle \left\langle +\right\vert
^{B}+\left\vert 3\right\rangle \left\langle 3\right\vert ^{X}\otimes\left\vert
-\right\rangle \left\langle -\right\vert ^{B}\right)  .
\]
The conditional mutual information $I\left(  X;R|B\right)  $ is zero because
the reference system is trivial. The conditional entropy $H\left(
X|RB\right)  =H\left(  X|B\right)  $ is equal to one bit. A simple
interpretation of this result is that the measurement in
(\ref{eq:example-POVM}) just requires one bit of common randomness in order to
pick the $X$ or $Z$ Pauli measurement at random. Bob then performs the
selected measurement locally, and the effect is the same as if Alice were to
perform it on her share of the state because the state is maximally entangled.

\subsection{Achievability Proof for MC-QSI}

The resource inequality corresponding to MC-QSI is as follows:%
\[
\left\langle \rho^{AB}\right\rangle +I\left(  X;R|B\right)  \left[
c\rightarrow c\right]  +H\left(  X|RB\right)  \left[  cc\right]
\geq\left\langle \Lambda^{A}\left(  \rho^{AB}\right)  \right\rangle .
\]
The meaning of this resource inequality is that the sender and receiver can
simulate the action of the POVM\ $\Lambda^{\otimes n}$ on $n$ copies of the
state $\rho^{AB}$, by exploiting $nI\left(  X;R|B\right)  $ bits of classical
communication and $nH\left(  X|RB\right)  $ bits of common randomness, and the
simulation becomes exact as $n$ becomes large.

One might think that it would be possible to concatenate the protocols of
measurement compression and CDC-QSI according to the rules of the resource
calculus \cite{DHW08} in order to have a protocol for MC-QSI. The scheme that
we develop below certainly does exploit features of both protocols, but a
direct concatenation is not possible because Alice and Bob need to exploit the
same codebook for both the measurement compression part and the CDC-QSI part
of the protocol. We note that this is similar to the way that the protocol for
channel simulation with quantum side information operates \cite{LD09}.

The basic strategy for MC-QSI is as follows. Alice simulates the measurement
on the $A^{n}$ systems of the IID\ state $\left(  \rho^{AB}\right)  ^{\otimes
n}$, with the systems $R^{n} B^{n}$ acting as a purification of $A^{n}$, by
first selecting the variable $m$ according to the common randomness shared
with Bob, and then by performing a POVM\ $\{\Upsilon_{l}^{\left(  m\right)
}\}$ chosen according to a codebook $\mathcal{C}\equiv\left\{  x^{n}\left(
l,m\right)  \right\}  $. (The codebook is of the form discussed in
Section~\ref{sec:MC-achieve}.) Alice and Bob both know the codebook
$\mathcal{C}$ used in the measurement compression strategy. Bob shares the
common randomness variable $m$ with Alice, and thus he already has this as
side information to help in determining the variable $l$. Alice hashes the
variable~$l$ according to some hash function $f$ and sends the hash. Bob
receives the hash $k\equiv f\left(  l\right)  $, and then he \textquotedblleft
scans\textquotedblright\ over all of the post-measurement states
(corresponding to codewords $x^{n}\left(  l^{\prime},m\right)  $) that are
consistent with the hash $k$ and his common randomness value $m$. We define
the set $\mathcal{A}\left(  f,k,m\right)  $ to denote the set of all such
codewords:%
\begin{equation}
\mathcal{A}\left(  f,k,m\right)  \equiv\left\{  x^{n}\left(  l,m\right)
:f\left(  l\right)  =k,\ x^{n}\left(  l,m\right)  \in\mathcal{C}\right\}  .
\label{eq:ICQSI-hash-set}%
\end{equation}
Observe that this set cannot be any larger than $\mathcal{L}$ (the set of all
possible $l$):%
\[
\left\vert \mathcal{A}\left(  f,k,m\right)  \right\vert \leq\left\vert
\mathcal{L}\right\vert =2^{n\left[  I\left(  X;RB\right)  +3\delta\right]  }.
\]

The intuition behind the protocol is that measurement compression proceeds as
before using $nI\left(  X;BR\right)  $ bits for the outcome of the measurement
and $nH\left(  X|RB\right)  $ for the common randomness, because the systems
$RB$ act as a purification for $A$. But in this case, Bob has the quantum
systems $B^{n}$ available and should be able to determine $nI\left(
X;B\right)  $ bits about $X^{n}$ by performing a collective measurement on his
systems (following from the HSW theorem \cite{Hol98,PhysRevA.56.131}). So, Alice should
only need to send the difference of these amounts, $n\left(  I\left(
X;BR\right)  -I\left(  X;B\right)  \right)  =nI\left(  X;R|B\right)  $, to Bob.

\textbf{Detailed Strategy.} The encoding strategy for this scenario is as
follows. Alice and Bob are allowed to have an agreed-upon hash function
$f:\mathcal{L}\rightarrow\mathcal{K}$, selected at random from a two-universal
family (as described in Section~\ref{sec:CDC-QSI-achieve}). Alice's message to
Bob will be an element of $\mathcal{K}$. The collision
probability for some $l\neq l^{\prime}$ in this case is as follows:%
\[
\Pr_{f}\left\{  f\left(  l\right)  =f\left(  l^{\prime}\right)  \right\}
\leq\frac{1}{\left\vert \mathcal{K}\right\vert }=2^{-nR}.
\]

Upon receiving the hash value $k$ and having a particular value $m$ for the
common randomness, Bob performs a sequence of binary measurements $\left\{
\Pi_{x^{n}\left(  l,m\right)  },I-\Pi_{x^{n}\left(  l,m\right)  }\right\}  $
for all the codewords $x^{n}\left(  l,m\right)  \in\mathcal{C}$ that are
consistent with the hash value (so that $f\left(  l\right)  =k$). Note that
$\Pi_{x^{n}\left(  l,m\right)  }$ is a conditionally typical projector for the
tensor-product state $\rho_{x^{n}\left(  l,m\right)  }^{B^{n}}$, the
conditional state on Bob's system after performing the ideal measurement
$\Lambda^{\otimes n}$ and receiving the outcome $x^{n}\left(  l,m\right)  $.
Recall from (\ref{eq:ICQSI-hash-set}) that $\mathcal{A}\left(  f,k,m\right)  $
is the set of all such codewords. In the following, we will show that, by
choosing
\begin{align}
\left\vert \mathcal{L}\right\vert  &  = 2^{n\left[  I\left(  X;RB\right)
+3\delta\right]  },\label{eq_L4}\\
\left\vert \mathcal{M}\right\vert  &  = 2^{n\left[  H\left(  X|RB\right)
+\delta\right]  },\label{eq_M4}\\
\left\vert \mathcal{K}\right\vert  &  = 2^{n\left[  I\left(  X;R|B\right)
+11\delta\right]  },
\end{align}
the error probability will approach zero as $n$ goes to infinity.

\textbf{Error Analysis.} The error probability for this decoder is then as
follows:%
\begin{equation}
\Pr\left\{  \text{\textquotedblleft error @ decoder\textquotedblright%
}\right\}  =\frac{1}{\left\vert \mathcal{M}\right\vert }\sum_{l,m}q\left(
x^{n}\left(  l,m\right)  \right)  \ \Pr\left\{  \text{\textquotedblleft error
@ decoder\textquotedblright\ }|\ l,m\right\}  , \label{eq:ICQSI-err-prob}%
\end{equation}
where%
\[
q\left(  x^{n}\left(  l,m\right)  \right)  =\text{Tr}\left\{  \left(
(\Upsilon_{l}^{\left(  m\right)  })^{A^{n}}\otimes I^{B^{n}}\right)  \left(
\rho^{AB}\right)  ^{\otimes n}\right\}
\]
is the probability of receiving outcome $l$ when performing the simulated
measurement in (\ref{eq:POVM-comp-meas-ops}). The post-measurement states on $B^n$ for
the POVM $\{ \Upsilon_l^{(m)} \}$ are as follows:%
\[
\widetilde{\rho}_{x^{n}\left(  l,m\right)  }^{B^{n}}\equiv\frac{1}{q\left(
x^{n}\left(  l,m\right)  \right)  }\text{Tr}_{A^{n}}\left\{  (\Upsilon
_{l}^{\left(  m\right)  })^{A^{n}}\left(  \rho^{AB}\right)  ^{\otimes
n}\right\}  .
\]
Note that the probability masses $q\left(  x^{n}\left(  l,m\right)  \right)  $
and $q\left(  x^{n}\left(  l^{\prime},m^{\prime}\right)  \right)  $ and the
states $\widetilde{\rho}_{x^{n}\left(  l,m\right)  }^{B^{n}}$ and
$\widetilde{\rho}_{x^{n}\left(  l^{\prime},m^{\prime}\right)  }^{B^{n}}$ are
equivalent, respectively, if two different codewords have the same value
(i.e., if $l\neq l^{\prime}$ or $m\neq m^{\prime}$ but $x^{n}\left(
l,m\right)  =x^{n}\left(  l^{\prime},m^{\prime}\right)  $, then $q\left(
x^{n}\left(  l,m\right)  \right)  =q\left(  x^{n}\left(  l^{\prime},m^{\prime
}\right)  \right)  $ and $\widetilde{\rho}_{x^{n}\left(  l,m\right)  }^{B^{n}%
}=\widetilde{\rho}_{x^{n}\left(  l^{\prime},m^{\prime}\right)  }^{B^{n}}%
$ --- this is due to the way that we choose the measurement operators
$\Upsilon_{l}^{\left(  m\right)  }$ in (\ref{eq:POVM-comp-meas-ops}) for the
measurement simulation).

Now we consider the error term $\Pr\left\{  \text{\textquotedblleft error @
decoder\textquotedblright\ }|\ l,m\right\}  $. Let $a_{1}^{\left(  km\right)
}$, \ldots, $a_{\left\vert \mathcal{A}\right\vert }^{\left(  km\right)  }$
enumerate all of the codewords $x^{n}\left(  l,m\right)  $ in the set
$\mathcal{A}\left(  f,k,m\right)  $ defined in (\ref{eq:ICQSI-hash-set})
(those codewords $x^{n}\left(  l,m\right)  $ consistent with the hash $k$).
Let $a_{j}^{\left(  km\right)  }$ denote the actual codeword $x^{n}\left(
l,m\right)  $ produced by the simulated measurement. The probability for a
correct decoding for Bob is as follows:%
\[
\text{Tr}\left\{  \Pi_{a_{j}^{\left(  km\right)  }}\ \hat{\Pi}_{a_{j-1}%
^{\left(  km\right)  }}\cdots\hat{\Pi}_{a_{1}^{\left(  km\right)  }%
}\ \widetilde{\rho}_{a_{j}^{\left(  km\right)  }}^{B^{n}}\ \hat{\Pi}%
_{a_{1}^{\left(  km\right)  }}\cdots\hat{\Pi}_{a_{j-1}^{\left(  km\right)  }%
}\ \Pi_{a_{j}^{\left(  km\right)  }}\right\}  ,
\]
so that the binary tests give a response of \textquotedblleft
no\textquotedblright\ until the test for $a_{j}^{\left(  km\right)  }$ gives a
response of \textquotedblleft yes.\textquotedblright\ Then the probability for
incorrectly decoding is%
\[
1-\text{Tr}\left\{  \Pi_{a_{j}^{\left(  km\right)  }}\ \hat{\Pi}%
_{a_{j-1}^{\left(  km\right)  }}\cdots\hat{\Pi}_{a_{1}^{\left(  km\right)  }%
}\ \widetilde{\rho}_{a_{j}^{\left(  km\right)  }}^{B^{n}}\ \hat{\Pi}%
_{a_{1}^{\left(  km\right)  }}\cdots\hat{\Pi}_{a_{j-1}^{\left(  km\right)  }%
}\ \Pi_{a_{j}^{\left(  km\right)  }}\right\}  ,
\]
so that we can write the error probability in (\ref{eq:ICQSI-err-prob}) as
follows (for this decoding strategy):%
\begin{equation}
\sum_{\substack{m\in\mathcal{M},\\k\in\mathcal{K}}}\sum_{a_{j}^{\left(
km\right)  }\in\mathcal{A}\left(  k,f,m\right)  }\frac{1}{\left\vert
\mathcal{M}\right\vert }q\left(  a_{j}^{\left(  km\right)  }\right)  \ \left[
1-\text{Tr}\left\{  \Pi_{a_{j}^{\left(  km\right)  }}\ \hat{\Pi}%
_{a_{j-1}^{\left(  km\right)  }}\cdots\hat{\Pi}_{a_{1}^{\left(  km\right)  }%
}\ \widetilde{\rho}_{a_{j}^{\left(  km\right)  }}^{B^{n}}\ \hat{\Pi}%
_{a_{1}^{\left(  km\right)  }}\cdots\hat{\Pi}_{a_{j-1}^{\left(  km\right)  }%
}\ \Pi_{a_{j}^{\left(  km\right)  }}\right\}  \right]  .
\end{equation}
Observe that the above error probability is equal to%
\begin{equation}
\frac{1}{\left\vert \mathcal{M}\right\vert }\sum_{l,m}q\left(  x^{n}\left(
l,m\right)  \right)  \ \text{Tr}\left\{  \left(  I-\Theta_{x^{n}\left(
l,m\right)  }\right)  \ \widetilde{\rho}_{x^{n}\left(  l,m\right)  }^{B^{n}%
}\ \right\}  , \label{eq:ICQSI-err-prob-2}%
\end{equation}
if we define the POVM element$\ \Theta_{x^{n}\left(  l,m\right)  }$ as%
\[
\Theta_{x^{n}\left(  l,m\right)  }\equiv\hat{\Pi}_{a_{1}^{\left(  km\right)
}}\cdots\hat{\Pi}_{a_{j-1}^{\left(  km\right)  }}\ \Pi_{a_{j}^{\left(
km\right)  }}\ \hat{\Pi}_{a_{j-1}^{\left(  km\right)  }}\cdots\hat{\Pi}%
_{a_{1}^{\left(  km\right)  }},
\]
where we recall that $a_{j}^{\left(  km\right)  }=x^{n}\left(  l,m\right)  $.

Now, we can further express the error probability in
(\ref{eq:ICQSI-err-prob-2}) as follows, by employing an indicator function:%
\begin{align}
&  =\sum_{x^{n}\in\mathcal{X}^{n}}\frac{1}{\left\vert \mathcal{M}\right\vert
}\sum_{l,m}q\left(  x^{n}\left(  l,m\right)  \right)  \ \mathcal{I}\left(
x^{n}=x^{n}\left(  l,m\right)  \right)  \ \text{Tr}\left\{  \left(
I-\Theta_{x^{n}\left(  l,m\right)  }\right)  \ \widetilde{\rho}_{x^{n}\left(
l,m\right)  }^{B^{n}}\right\} \nonumber\\
&  \leq\sum_{x^{n}\in\mathcal{X}^{n}}\frac{1}{\left\vert \mathcal{M}%
\right\vert }\sum_{l,m}q\left(  x^{n}\left(  l,m\right)  \right)
\ \mathcal{I}\left(  x^{n}=x^{n}\left(  l,m\right)  \right)  \ \text{Tr}%
\left\{  \left(  I-\Theta_{x^{n}}^{\prime}\right)  \ \widetilde{\rho}%
_{x^{n}\left(  l,m\right)  }^{B^{n}}\right\}  ,
\label{eq:ICQSI-err-prob-up-bnd}%
\end{align}
where we define $\Theta_{x^{n}}^{\prime}$ as a POVM\ element corresponding to
a worst-case decoding over the states $\widetilde{\rho}_{x^{n}\left(
l,m\right)  }^{B^{n}}$ with the same codeword value $x^{n}$:%
\[
\Theta_{x^{n}}^{\prime}\equiv\arg\max_{\substack{\Theta_{x^{n}\left(
l,m\right)  }\ :\\x^{n}=x^{n}\left(  l,m\right)  }}\text{Tr}\left\{  \left(
I-\Theta_{x^{n}\left(  l,m\right)  }\right)  \ \widetilde{\rho}_{x^{n}\left(
l,m\right)  }^{B^{n}}\right\}  .
\]
Let $\mathcal{C}^{\prime}$ be a pruning of the original codebook $\mathcal{C}$
containing no duplicate entries (it contains only the codewords with
worst-case error probabilities as given above). Then the last line in
(\ref{eq:ICQSI-err-prob-up-bnd}) is equivalent to the following one:%
\begin{equation}
\sum_{x^{n}\in\mathcal{C}^{\prime}}\text{Tr}\left\{  \left(  I-\Theta_{x^{n}%
}^{\prime}\right)  \ \frac{1}{\left\vert \mathcal{M}\right\vert }\sum
_{l,m}q\left(  x^{n}\left(  l,m\right)  \right)  \ \mathcal{I}\left(
x^{n}=x^{n}\left(  l,m\right)  \right)  \ \widetilde{\rho}_{x^{n}\left(
l,m\right)  }^{B^{n}}\right\}  . \label{eq:ICQSI-err-prob-3}%
\end{equation}

This decoding scheme will only work well if the states $\widetilde{\rho
}_{x^{n}\left(  l,m\right)  }^{B^{n}}$ are close to the tensor product states
$\rho_{x^{n}\left(  l,m\right)  }^{B^{n}}$ that would result from the ideal
measurement. We expect that this should hold if the measurement compression
part of the protocol is successful, and we prove this in detail in what follows.

The quantity characterizing a faithful measurement simulation in
(\ref{eq:meas-comp-faithful-cond}) is equivalent to the following (one can
show this by exploiting the definitions of the measurement maps and the
post-measurement states given above, after tracing out the reference systems):%
\begin{align*}
\Delta\left(  \mathcal{C}\right)   &  \equiv\sum_{x^{n}\in\mathcal{X}^{n}%
}\left\Vert p_{X^{n}}\left(  x^{n}\right)  \rho_{x^{n}}^{B^{n}}-\frac
{1}{\left\vert \mathcal{M}\right\vert }\sum_{l,m}q\left(  x^{n}\left(
l,m\right)  \right)  \ \mathcal{I}\left(  x^{n}=x^{n}\left(  l,m\right)
\right)  \ \widetilde{\rho}_{x^{n}\left(  l,m\right)  }^{B^{n}}\right\Vert
_{1}\\
&  =\sum_{x^{n}\notin\mathcal{C}^{\prime}}\left\Vert p_{X^{n}}\left(
x^{n}\right)  \rho_{x^{n}}^{B^{n}}\right\Vert _{1}+\sum_{x^{n}\in
\mathcal{C}^{\prime}}\left\Vert p_{X^{n}}\left(  x^{n}\right)  \rho_{x^{n}%
}^{B^{n}}-\frac{1}{\left\vert \mathcal{M}\right\vert }\sum_{l,m}q\left(
x^{n}\left(  l,m\right)  \right)  \ \mathcal{I}\left(  x^{n}=x^{n}\left(
l,m\right)  \right)  \ \widetilde{\rho}_{x^{n}\left(  l,m\right)  }^{B^{n}%
}\right\Vert _{1},
\end{align*}
where $\mathcal{C}^{\prime}$ is the pruned codebook containing no duplicate
entries (observe that the indicator function $\mathcal{I}\left(  x^{n}%
=x^{n}\left(  l,m\right)  \right)  $ captures all of the duplicates). Applying
the trace inequality Tr$\left\{  \Lambda\sigma\right\}  \leq\ $Tr$\left\{
\Lambda\rho\right\}  +\left\Vert \rho-\sigma\right\Vert _{1}$ from
Lemma~\ref{lem:trace-inequality}\ to the expression in
(\ref{eq:ICQSI-err-prob-3}), we then obtain the following upper bound on it:%
\begin{multline*}
\leq\sum_{x^{n}\in\mathcal{C}^{\prime}}\text{Tr}\left\{  \left(
I-\Theta_{x^{n}}^{\prime}\right)  \ p_{X^{n}}\left(  x^{n}\right)  \rho
_{x^{n}}^{B^{n}}\right\} \\
+\sum_{x^{n}\in\mathcal{C}^{\prime}}\left\Vert p_{X^{n}}\left(  x^{n}\right)
\rho_{x^{n}}^{B^{n}}-\frac{1}{\left\vert \mathcal{M}\right\vert }\sum
_{l,m}q\left(  x^{n}\left(  l,m\right)  \right)  \ \mathcal{I}\left(
x^{n}=x^{n}\left(  l,m\right)  \right)  \ \widetilde{\rho}_{x^{n}\left(
l,m\right)  }^{B^{n}}\right\Vert _{1}\\
\leq\sum_{x^{n}\in\mathcal{C}^{\prime}}p_{X^{n}}\left(  x^{n}\right)
\text{Tr}\left\{  \left(  I-\Theta_{x^{n}}^{\prime}\right)  \ \rho_{x^{n}%
}^{B^{n}}\right\}  +\Delta\left(  \mathcal{C}\right)  .
\end{multline*}
We can now focus on bounding the term on the LHS of the last line above.
Expanding it again leads to%
\begin{align*}
&  \sum_{x^{n}\in\mathcal{C}^{\prime}}p_{X^{n}}\left(  x^{n}\right)
\text{Tr}\left\{  \left(  I-\Theta_{x^{n}}^{\prime}\right)  \ \rho_{x^{n}%
}^{B^{n}}\right\} \\
&  =\sum_{\substack{m\in\mathcal{M},\\k\in\mathcal{K}}}\ \sum_{a_{j}^{\left(
km\right)  }\in\mathcal{A}^{\prime}\left(  k,f,m\right)  }p_{X^{n}}\left(
a_{j}^{\left(  km\right)  }\right)  \text{Tr}\left\{  \left(  I-\Theta
_{a_{j}^{\left(  km\right)  }}^{\prime}\right)  \ \rho_{a_{j}^{\left(
km\right)  }}^{B^{n}}\right\} \\
&  =\sum_{\substack{m\in\mathcal{M},\\k\in\mathcal{K}}}\ \sum_{a_{j}^{\left(
km\right)  }\in\mathcal{A}^{\prime}\left(  k,f,m\right)  }p_{X^{n}}\left(
a_{j}^{\left(  km\right)  }\right)  \left[  1-\text{Tr}\left\{  \Pi
_{a_{j}^{\left(  km\right)  }}\ \hat{\Pi}_{a_{j-1}^{\left(  km\right)  }%
}\cdots\hat{\Pi}_{a_{1}^{\left(  km\right)  }}\ \rho_{a_{j}^{\left(
km\right)  }}^{B^{n}}\ \hat{\Pi}_{a_{1}^{\left(  km\right)  }}\cdots\hat{\Pi
}_{a_{j-1}^{\left(  km\right)  }}\ \Pi_{a_{j}^{\left(  km\right)  }}\right\}
\right]  ,
\end{align*}
where%
\[
\mathcal{A}^{\prime}\left(  f,k,m\right)  \equiv\left\{  x^{n}\left(
l,m\right)  :f\left(  l\right)  =k,\ \ x^{n}\left(  l,m\right)  \in
\mathcal{C}^{\prime}\right\}  .
\]
We can then insert the average state typical projector as we did before in
(\ref{eq:avg-typ-bound}), in order to bound the last line from above as%
\begin{multline*}
\sum_{\substack{m\in\mathcal{M},\\k\in\mathcal{K}}}\ \sum_{a_{j}^{\left(
km\right)  }\in\mathcal{A}^{\prime}\left(  k,f,m\right)  }p_{X^{n}}\left(
a_{j}^{\left(  km\right)  }\right)  \times\\
\left[  \text{Tr}\left\{  \Pi\ \rho_{a_{j}^{\left(  km\right)  }}^{B^{n}}%
\ \Pi\right\}  -\text{Tr}\left\{  \Pi_{a_{j}^{\left(  km\right)  }}\ \hat{\Pi
}_{a_{j-1}^{\left(  km\right)  }}\cdots\hat{\Pi}_{a_{1}^{\left(  km\right)  }%
}\ \Pi\ \rho_{a_{j}^{\left(  km\right)  }}^{B^{n}}\ \Pi\ \hat{\Pi}%
_{a_{1}^{\left(  km\right)  }}\cdots\hat{\Pi}_{a_{j-1}^{\left(  km\right)  }%
}\ \Pi_{a_{j}^{\left(  km\right)  }}\right\}  \right]  .
\end{multline*}
The error accumulated in doing so is $\epsilon+2\sqrt{\epsilon}$ as before. At
this point, we can apply Sen's non-commutative union bound
(Lemma~\ref{lem-non-com-union-bound} in Appendix~\ref{sec:useful-lemmas}) and
concavity of square root to obtain the upper bound%
\[
2\sqrt{\sum_{\substack{m\in\mathcal{M},\\k\in\mathcal{K}}}\ \sum
_{a_{j}^{\left(  km\right)  }\in\mathcal{A}^{\prime}\left(  k,f,m\right)
}p_{X^{n}}\left(  a_{j}^{\left(  km\right)  }\right)  \left[  \text{Tr}%
\left\{  \left(  I-\Pi_{a_{j}^{\left(  km\right)  }}\right)  \Pi\ \rho
_{a_{j}^{\left(  km\right)  }}^{B^{n}}\ \Pi\right\}  +\sum_{i=1}%
^{j-1}\text{Tr}\left\{  \Pi_{a_{i}^{\left(  km\right)  }}\ \Pi\ \rho
_{a_{j}^{\left(  km\right)  }}^{B^{n}}\ \Pi\right\}  \right]  }.
\]
The first term inside the square root we can bound from above by
$\epsilon+2\sqrt{\epsilon}$ as we did before in
(\ref{eq:CDC-QSI-first-err-bnd}), using properties of quantum typicality. We
continue bounding the second term as follows:%
\begin{align*}
&  \sum_{\substack{m\in\mathcal{M},\\k\in\mathcal{K}}}\ \sum_{a_{j}^{\left(
km\right)  }\in\mathcal{A}^{\prime}\left(  k,f,m\right)  }p_{X^{n}}\left(
a_{j}^{\left(  km\right)  }\right)  \sum_{i=1}^{j-1}\text{Tr}\left\{
\Pi_{a_{i}^{\left(  km\right)  }}\ \Pi\ \rho_{a_{j}^{\left(  km\right)  }%
}^{B^{n}}\ \Pi\right\} \\
&  \leq\sum_{\substack{m\in\mathcal{M},\\k\in\mathcal{K}}}\ \sum
_{a_{j}^{\left(  km\right)  }\in\mathcal{A}^{\prime}\left(  k,f,m\right)
}p_{X^{n}}\left(  a_{j}^{\left(  km\right)  }\right)  \sum_{i\neq
j\ :\ a_{i}^{\left(  km\right)  }\in\mathcal{A}\left(  k,f,m\right)
}\text{Tr}\left\{  \Pi_{a_{i}^{\left(  km\right)  }}\ \Pi\ \rho_{a_{j}%
^{\left(  km\right)  }}^{B^{n}}\ \Pi\right\} \\
&  =\sum_{l,m\ :\ x^{n}\left(  l,m\right)  \in\mathcal{C}^{\prime}}p_{X^{n}%
}\left(  x^{n}\left(  l,m\right)  \right)  \sum_{l^{\prime}\in\mathcal{L}%
\ :\ l^{\prime}\neq l}\mathcal{I}\left(  f\left(  l\right)  =f\left(
l^{\prime}\right)  \right)  \text{Tr}\left\{  \Pi_{x^{n}\left(  l^{\prime
},m\right)  }\ \Pi\ \rho_{x^{n}\left(  l,m\right)  }^{B^{n}}\ \Pi\right\}  ,
\end{align*}
where the two steps follow by including all indices in the sum not equal to
$j$ and rewriting the sum with indicator functions. We now take an expectation
with respect to the random hash (realizing that we could have done this the
whole time):%
\begin{align}
&  \mathbb{E}_{f}\left\{  \sum_{l,m\ :\ x^{n}\left(  l,m\right)
\in\mathcal{C}^{\prime}}p_{X^{n}}\left(  x^{n}\left(  l,m\right)  \right)
\sum_{l^{\prime}\in\mathcal{L}\ :\ l^{\prime}\neq l}\mathcal{I}\left(
f\left(  l\right)  =f\left(  l^{\prime}\right)  \right)  \text{Tr}\left\{
\Pi_{x^{n}\left(  l^{\prime},m\right)  }\ \Pi\ \rho_{x^{n}\left(  l,m\right)
}^{B^{n}}\ \Pi\right\}  \right\} \nonumber\\
&  =\sum_{l,m\ :\ x^{n}\left(  l,m\right)  \in\mathcal{C}^{\prime}}p_{X^{n}%
}\left(  x^{n}\left(  l,m\right)  \right)  \sum_{l^{\prime}\in\mathcal{L}%
\ :\ l^{\prime}\neq l}\mathbb{E}_{f}\left\{  \mathcal{I}\left(  f\left(
l\right)  =f\left(  l^{\prime}\right)  \right)  \right\}  \text{Tr}\left\{
\Pi_{x^{n}\left(  l^{\prime},m\right)  }\ \Pi\ \rho_{x^{n}\left(  l,m\right)
}^{B^{n}}\ \Pi\right\} \nonumber\\
&  =\sum_{l,m\ :\ x^{n}\left(  l,m\right)  \in\mathcal{C}^{\prime}}p_{X^{n}%
}\left(  x^{n}\left(  l,m\right)  \right)  \sum_{l^{\prime}\in\mathcal{L}%
\ :\ l^{\prime}\neq l}\Pr_{f}\left\{  f\left(  l\right)  =f\left(  l^{\prime
}\right)  \right\}  \text{Tr}\left\{  \Pi_{x^{n}\left(  l^{\prime},m\right)
}\ \Pi\ \rho_{x^{n}\left(  l,m\right)  }^{B^{n}}\ \Pi\right\} \nonumber\\
&  \leq2^{-nR}\ \sum_{l,m\ :\ x^{n}\left(  l,m\right)  \in\mathcal{C}^{\prime
}}p_{X^{n}}\left(  x^{n}\left(  l,m\right)  \right)  \sum_{l^{\prime}%
\in\mathcal{L}\ :\ l^{\prime}\neq l}\text{Tr}\left\{  \Pi_{x^{n}\left(
l^{\prime},m\right)  }\ \Pi\ \rho_{x^{n}\left(  l,m\right)  }^{B^{n}}%
\ \Pi\right\} \nonumber\\
&  \leq2^{-nR}\ \sum_{l,m}\sum_{l^{\prime}\in\mathcal{L}\ :\ l^{\prime}\neq
l}\text{Tr}\left\{  \Pi_{x^{n}\left(  l^{\prime},m\right)  }\ \Pi\ p_{X^{n}%
}\left(  x^{n}\left(  l,m\right)  \right)  \rho_{x^{n}\left(  l,m\right)
}^{B^{n}}\ \Pi\right\} \nonumber\\
&  \leq2^{-nR}\ 2^{-n\left[  H\left(  X\right)  -\delta\right]  }\ \sum
_{l,m}\sum_{l^{\prime}\in\mathcal{L}\ :\ l^{\prime}\neq l}\text{Tr}\left\{
\Pi_{x^{n}\left(  l^{\prime},m\right)  }\ \Pi\ \rho_{x^{n}\left(  l,m\right)
}^{B^{n}}\ \Pi\right\}  . \label{eq:almost-done-ICQSI}%
\end{align}
The first inequality follows from the two-universal hashing property. The
second inequality follows from summing over all of the codewords in
$\mathcal{C}$, not just the non-duplicate entries in $\mathcal{C}^{\prime}$.
The third inequality follows because all of the sequences in $x^{n}\left(
l,m\right)  $ are chosen to be strongly typical (recall the construction in
Section~\ref{sec:MC-achieve}) and by upper bounding their probabilities by
$2^{-n\left[  H\left(  X\right)  -\delta\right]  }$. From here, we exploit the
fact that the codewords $x^{n}\left(  l,m\right)  $ were chosen randomly as
specified in Section~\ref{sec:MC-achieve}. So we now consider $X^{n}\left(
l,m\right)  $ as random variables and take the expectation with respect to
them (realizing again that we could have done this the whole time and focusing
on the rightmost term above):%
\begin{align*}
&  \mathbb{E}_{X^{n}}\left\{  \sum_{l,m}\sum_{l^{\prime}\in\mathcal{L}%
\ :\ l^{\prime}\neq l}\text{Tr}\left\{  \Pi_{X^{n}\left(  l^{\prime},m\right)
}\ \Pi\ \rho_{X^{n}\left(  l,m\right)  }^{B^{n}}\ \Pi\right\}  \right\} \\
&  =\sum_{l,m}\sum_{l^{\prime}\in\mathcal{L}\ :\ l^{\prime}\neq l}%
\text{Tr}\left\{  \mathbb{E}_{X^{n}}\left\{  \Pi_{X^{n}\left(  l^{\prime
},m\right)  }\right\}  \ \Pi\ \mathbb{E}_{X^{n}}\left\{  \rho_{X^{n}\left(
l,m\right)  }^{B^{n}}\right\}  \ \Pi\right\} \\
&  =\sum_{l,m}\sum_{l^{\prime}\in\mathcal{L}\ :\ l^{\prime}\neq l}%
\text{Tr}\left\{  \mathbb{E}_{X^{n}}\left\{  \Pi_{x^{n}\left(  l^{\prime
},m\right)  }\right\}  \ \Pi\ \sum_{x^{n}}p_{X^{\prime n}}\left(
x^{n}\right)  \rho_{x^{n}}^{B^{n}}\ \Pi\right\} \\
&  \leq\left[  1-\epsilon\right]  ^{-1}2^{-n\left[  H\left(  B\right)
-\delta\right]  }\sum_{l,m}\sum_{l^{\prime}\in\mathcal{L}\ :\ l^{\prime}\neq
l}\text{Tr}\left\{  \mathbb{E}_{X^{n}}\left\{  \Pi_{x^{n}\left(  l^{\prime
},m\right)  }\right\}  \Pi\right\}  .
\end{align*}
The first equality follows because the indices $l^{\prime}$ and $l$ are
different, implying that the random variables $X^{n}\left(  l^{\prime
},m\right)  $ and $X^{n}\left(  l,m\right)  $ are independent so that we can
distribute the expectation. The inequality follows by applying the operator
inequality (\ref{eq:prune-avg-op-ineq}) from Appendix~\ref{sec:typ-review} and
properties of quantum typicality. Continuing, we have%
\begin{align*}
&  \leq\left[  1-\epsilon\right]  ^{-1}\ 2^{-n\left[  H\left(  B\right)
-\delta\right]  }\ \sum_{l,m}\sum_{l^{\prime}\in\mathcal{L}\ :\ l^{\prime}\neq
l}\mathbb{E}_{X^{n}}\left\{  \text{Tr}\left\{  \Pi_{x^{n}\left(  l^{\prime
},m\right)  }\right\}  \right\} \\
&  \leq\left[  1-\epsilon\right]  ^{-1}\ 2^{-n\left[  H\left(  B\right)
-\delta\right]  }\ \sum_{l,m}\sum_{l^{\prime}\in\mathcal{L}\ :\ l^{\prime}\neq
l}2^{n\left[  H\left(  B|X\right)  +\delta\right]  }\\
&  \leq\left[  1-\epsilon\right]  ^{-1}\ 2^{-n\left[  H\left(  B\right)
-\delta\right]  }\ 2^{n\left[  H\left(  B|X\right)  +\delta\right]
}\ \left\vert \mathcal{L}\times\mathcal{M}\right\vert \left\vert
\mathcal{L}\right\vert \\
&  \leq\left[  1-\epsilon\right]  ^{-1}\ 2^{-n\left[  H\left(  B\right)
-\delta\right]  }\ 2^{n\left[  H\left(  B|X\right)  +\delta\right]
}\ 2^{n\left[  H\left(  X\right)  +4\delta\right]  }\ 2^{n\left[  I\left(
X;RB\right)  +3\delta\right]  }.
\end{align*}
The first inequality is from $\Pi\leq I$ and the second is from the bound
Tr$\{\Pi_{x^{n}\left(  l^{\prime},m\right)  }\}\leq2^{n\left[  H\left(
B|X\right)  +\delta\right]  }$. The final inequality follows from the
selection for the sizes of $\mathcal{L}$ and $\mathcal{M}$ in (\ref{eq_L4}%
-\ref{eq_M4}). Combining this bound with the one in
(\ref{eq:almost-done-ICQSI}), our final upper bound is%
\[
\left[  1-\epsilon\right]  ^{-1}\ 2^{-n\left[  R-I\left(  X;R|B\right)
-10\delta\right]  }.
\]
Collecting everything together, we arrive at the following upper bound on the
decoding error probability in (\ref{eq:ICQSI-err-prob}):%
\[
\epsilon^{\prime\prime\prime}\equiv\Delta\left(  \mathcal{C}\right)
+\epsilon+2\sqrt{\epsilon}+2\sqrt{\epsilon+2\sqrt{\epsilon}+\left[
1-\epsilon\right]  ^{-1}\ 2^{-n\left[  R-I\left(  X;R|B\right)  -10\delta
\right]  }}.
\]
Thus, as along as we choose $R=I\left(  X;R|B\right)  +11\delta$, the
expectation of this error with respect to the hash function and the random
choice of code vanishes in the asymptotic limit.

We now complete our achievability proof by demonstrating that there exists a
choice of the $\left\{  X^{n}\left(  l,m\right)  \right\}  $ codewords such
that the measurement simulation error and the decoding error become
arbitrarily small. Let $F$ be the event that the decoding error probability is
less than $\sqrt{\epsilon^{\prime\prime\prime}}$. Then we have the following
upper bound on the complement of this event by invoking Markov's inequality:%
\[
\Pr\left\{  F^{c}\right\}  \leq\frac{\mathbb{E}_{\mathcal{C},f}\left\{
\text{\textquotedblleft decoding\ error\textquotedblright}\right\}  }%
{\sqrt{\epsilon^{\prime\prime\prime}}}\leq\sqrt{\epsilon^{\prime\prime\prime}%
}.
\]
Thus, by choosing $\left\vert \mathcal{L}\right\vert $, $\left\vert
\mathcal{M}\right\vert $, and $\left\vert \mathcal{K}\right\vert $
appropriately, we can have all of the events $E_{m}$, $E_{0}$, and $F$ be true
for some choice of the codebook $\left\{  x^{n}\left(  l,m\right)  \right\}  $
and the hash $f$ (similar to the \textquotedblleft union
bound\textquotedblright\ argument in (\ref{eq:union-bound-arg})), so that both
the measurement simulation error and the decoding error probability are
arbitrarily small for sufficiently large $n$.

After determining the sequence $x^{n}$ resulting from the measurement
simulation, Bob can place it in a classical register. By using the fact that
the measurement simulation and the decoding are successful and employing an
argument similar to that at the end of Section~\ref{sec:CDC-QSI-achieve}, we
know that the disturbance of the state is asymptotically negligible, so that
the condition in (\ref{eq:ICQSI-good-protocol}) for a good protocol is satisfied.

A proof similar to that in Theorem~\ref{thm:instr-comp}\ implies that Alice
and Bob can exploit quantum side information and a protocol similar to the
above to simulate a general quantum instrument, in such a way that Alice
possesses the quantum output and Bob obtains the classical output. The
resulting resource inequality is stated in (\ref{eq:inst-comp-QSI-res-in}) below.

\subsection{Converse for measurement compression with QSI}

The converse proof for measurement compression with QSI\ demonstrates the
optimality of the protocol from the previous section. Specifically, it shows that the
single-letter rates in Theorem~\ref{thm:ICQSI}\ are optimal for the case of a
feedback simulation.

The most general protocol for this task has Alice combine her shares $A^{n}%
$\ of the state with her share of the common randomness $M$ and perform some
quantum operation with quantum outputs $A^{\prime n}$ and classical output
$L$. Alice then processes this variable $L$ to produce another random variable $L'$, 
which she sends to Bob
over some noiseless classical bit channels. Bob feeds $L^{\prime}$, his share
of the common randomness, and his systems $B^{n}$ into some quantum operation
with classical outputs $\hat{X}^{n}$ and quantum outputs $B^{\prime n}$. If
the protocol is any good for this task, then the actual state $\omega
^{R^{n}A^{\prime n}X^{\prime n}\hat{X}^{n}B^{\prime n}}$ should be $\epsilon
$-close in trace distance to the ideal state $\sigma^{R^{n}A^{\prime n}%
X^{n}\overline{X}^{n}B^{n}}$, where $\overline{X}^{n}$ is a copy of the
variable $X^{n}$:%
\begin{equation}
\left\Vert \omega^{R^{n}A^{\prime n}X^{\prime n}\hat{X}^{n}B^{\prime n}%
}-\sigma^{R^{n}A^{\prime n}X^{n}\overline{X}^{n}B^{n}}\right\Vert _{1}%
\leq\epsilon. \label{eq:ICQSI-good-protocol}%
\end{equation}
A proof for the first bound in Theorem~\ref{thm:ICQSI} goes as follows:%
\begin{align*}
nR  &  \geq H\left(  L^{\prime}\right) \\
&  \geq I\left(  L^{\prime};MB^{n}R^{n}\right) \\
&  =I\left(  L^{\prime}MB^{n};R^{n}\right)  +I\left(  L^{\prime}
;MB^{n}\right)  -I\left(  R^{n};MB^{n}\right) \\
&  \geq I\left(  L^{\prime}MB^{n};R^{n}\right)  -I\left(  R^{n};B^{n}\right)
\\
&  \geq I(\hat{X}^{n}B^{\prime n};R^{n})_{\omega}-I\left(  R^{n};B^{n}\right)
_{\sigma}\\
&  \geq I(X^{n}B^{n};R^{n})_{\sigma}-I\left(  R^{n};B^{n}\right)  _{\sigma
}-n\epsilon^{\prime}\\
&  =I(X^{n};B^{n}|R^{n})_{\sigma}-n\epsilon^{\prime}\\
&  =nI(X;B|R)-n\epsilon^{\prime}.
\end{align*}
The first two inequalities are straightforward (similar to steps in our
previous converse proofs). The first equality is an identity for quantum
mutual information. The third inequality follows because $I\left(  L^{\prime
};MB^{n}\right)  \geq0$ and the common randomness $M$ is uncorrelated with
systems $R^{n}$ and $B^{n}$ (so that $I\left(  R^{n};MB^{n}\right)  =I\left(
R^{n};B^{n}\right)  $). The fourth inequality follows from quantum data
processing of the systems $L^{\prime}MB^{n}$ to produce the systems $\hat
{X}^{n}B^{\prime n}$. The fifth inequality follows from the condition in
(\ref{eq:ICQSI-good-protocol}) and continuity of quantum mutual information
(the Alicki-Fannes' inequality \cite{AF04}), where $\epsilon^{\prime}$ is some
function $g\left(  \epsilon\right)  $ such that $\lim_{\epsilon\rightarrow
0}g\left(  \epsilon\right)  =0$. The second equality follows from the chain
rule for quantum mutual information: $I(X^{n}B^{n};R^{n})_{\sigma}%
=I(X^{n};R^{n}|B^{n})_{\sigma}+I(B^{n};R^{n})_{\sigma}$. The final equality
follows because conditional quantum mutual information is additive on
tensor-power states.

The argument justifying the other  bound in Theorem~\ref{thm:ICQSI}\ goes as follows:%
\begin{align*}
n\left(  R+S\right)   &  \geq H\left(  L^{\prime}M\right) \\
&  \geq H\left(  L^{\prime}M|B^{n}\right) \\
&  =I\left(  X^{\prime n};L^{\prime}M|B^{n}\right)  +H\left(  L^{\prime}
M|B^{n}X^{\prime n}\right) \\
&  \geq I\left(  X^{\prime n};L^{\prime}M|B^{n}\right) \\
&  =H\left(  X^{\prime n}|B^{n}\right)  -H\left(  X^{\prime n}|L^{\prime
}MB^{n}\right) \\
&  \geq H\left(  X^{\prime n}|B^{n}\right)  _{\omega}-H(X^{\prime n}|\hat
{X}^{n})_{\omega}\\
&  \geq H\left(  X^{n}|B^{n}\right)  _{\sigma}-n\epsilon^{\prime}\\
&  =nH\left(  X|B\right)  -n\epsilon^{\prime}.
\end{align*}
The first two inequalities are straightforward. The first equality is an
identity for quantum mutual information. The third inequality follows because
the entropy $H\left(  L^{\prime}M|B^{n}X^{\prime n}\right)  \geq0$ for
classical systems $L^{\prime}$ and $M$. The second equality is an identity for
quantum mutual information. The fourth inequality follows from quantum data
processing of the systems $L^{\prime}MB^{n}$. The last inequality follows from
the condition in (\ref{eq:ICQSI-good-protocol}), continuity of entropy, and
the fact that $H(X^{n}|\overline{X}^{n})_{\sigma}=0$ since $\overline{X}^{n}$
is a copy of $X^{n}$. The final equality follows because the entropy is
additive for tensor-power states.

Optimality of the bound $R+S\geq H\left(  X|B\right)  $ for negative $S$
follows by considering a protocol whereby Alice uses classical communication
alone in order to simulate $X^{n}$ and generate common randomness $M$ with
Bob. The converse in this case proceeds as follows:%
\begin{align*}
nR  &  \geq H\left(  L^{\prime}\right) \\
&  \geq H\left(  L^{\prime}|B^{n}\right) \\
&  \geq I\left(  X^{\prime n}M;L^{\prime}|B^{n}\right) \\
&  =H\left(  X^{\prime n}M|B^{n}\right)  -H\left(  X^{\prime}M|L^{\prime}%
B^{n}\right) \\
&  \geq H\left(  X^{\prime n}M|B^{n}\right)  _{\omega}-H(X^{\prime}M|\hat
{X}^{n}M)_{\omega}\\
&  \geq H\left(  X^{n}M|B^{n}\right)  _{\sigma}-n\epsilon^{\prime}\\
&  =nH\left(  X|B\right)  +H\left(  M\right)  -n\epsilon^{\prime}\\
&  =nH\left(  X|B\right)  +n|S|-n\epsilon^{\prime}.
\end{align*}
The fourth inequality follows because Bob has to process $L^{\prime}$ and
$B^{n}$ in order to recover the approximate $\hat{X}^{n}$ and $M$. The fifth
inequality follows because these systems should be close to the ideal ones for
a good protocol (and applying continuity of entropy). The next equalities
follow because the information quantities factor as above for the ideal state.

\subsection{Relation of MC-QSI to other protocols}

We remark on the connection between measurement compression with quantum side
information and two other protocols:\ channel simulation with quantum side
information \cite{LD09}\ and state redistribution \cite{DY08,YD09}. MC-QSI
lies somewhere in between both of these protocols---it generalizes channel
simulation with QSI but is not \textquotedblleft fully
quantum,\textquotedblright in contrast to state redistribution, which is. Channel simulation with
QSI\ is a protocol whereby a sender and receiver share many copies of a
classical-quantum state $\sum_{y}p_{Y}\left(  y\right)  \left\vert
y\right\rangle \left\langle y\right\vert ^{Y}\otimes\rho_{y}^{B}$\ distributed
to them by a source, with the sender holding the classical systems and the
receiver holding the quantum systems. The goal is for the sender and receiver
to simulate the action of a noisy classical channel $p_{X|Y}\left(
x|y\right)  $ on the sender's classical systems by using as few noiseless bit
channels and common randomness bits as possible. Luo and Devetak found that
this is possible by using a classical communication rate of $I\left(
X;Y|B\right)  $ and a common randomness rate of $H\left(  X|YB\right)  $
(compare with $I\left(  X;R|B\right)  $ and $H\left(  X|RB\right)  $ for
MC-QSI), where the entropies are with respect to a state of the following
form:%
\[
\sum_{y,x}p_{Y}\left(  y\right)  p_{X|Y}\left(  x|y\right)  \left\vert
y\right\rangle \left\langle y\right\vert ^{Y}\otimes\left\vert x\right\rangle
\left\langle x\right\vert ^{X}\otimes\rho_{y}^{B}.
\]
(They actually found the rates to be $I\left(  Y;X\right)  -I\left(
B;X\right)  $ and $H\left(  X|Y\right)  $, but combining these rates with the
fact that $I\left(  X;B|Y\right)  =0$ for a state of the above form gives the
rates we state above.) This protocol exploits aspects of CDC-QSI and the
classical reverse Shannon theorem in its proof. It has applications to rate
distortion theory with quantum side information and in devising a simpler
proof of the distillable common randomness from quantum states \cite{DW04}.
The completely classical version of this protocol has further applications to
multi-terminal problems in classical rate distortion theory \cite{L09}.
Clearly, our protocol generalizes channel simulation with QSI because a
classical channel, a classical-to-classical map, is a special case of a
quantum measurement, a quantum-to-classical map.

State redistribution is a protocol that generalizes MC-QSI to the setting
where one would like to simulate the action of a noisy quantum channel on some
bipartite state $\rho^{AB}$. That is, state redistribution leads to a quantum
reverse Shannon theorem in the presence of quantum side information which we
call QRST-QSI (the authors of Refs.~\cite{DY08,YD09}\ did not emphasize this
aspect of their protocol). Indeed, supposing that the goal is to simulate the
action of a channel $\mathcal{N}^{A\rightarrow B^{\prime}}$ on the bipartite
state, they could proceed by Alice locally performing the isometric extension
$U_{\mathcal{N}}^{A\rightarrow B^{\prime}E}$ of the channel $\mathcal{N}%
^{A\rightarrow B^{\prime}}$ on the $A$ system of the state $\rho^{AB}$.
Including the reference $R$ as a purification of $\rho^{AB}$, there are four
systems $RB^{\prime}EB$ after she does so, where Alice possesses $B^{\prime}$
and $E$ and Bob possesses $B$. Alice and Bob then operate according to the
state redistribution protocol in order for Alice to transfer the $B^{\prime}$
system to Bob (this effects the channel simulation of $\mathcal{N}%
^{A\rightarrow B^{\prime}}$ on the state $\rho^{AB}$). Transferring the state
requires some rate $Q$ of noiseless quantum communication and some rate $E$ of
noiseless entanglement, and according to the main theorem of
Refs.~\cite{DY08,YD09}, this is possible as long as%
\begin{align*}
Q  &  \geq\frac{1}{2}I\left(  B^{\prime};R|B\right)  ,\\
Q+E  &  \geq H\left(  B^{\prime}|B\right)  .
\end{align*}
Comparing the above rate region with Theorem~\ref{thm:ICQSI}\ of this paper
reveals a close analogy between noiseless quantum communication / entanglement
in QRST-QSI and noiseless classical communication / common randomness in
MC-QSI, with the factor of 1/2 above accounting for the fact that the
communication in QRST-QSI\ is quantum. Though, one should be aware that this
connection is only formally similar---in QRST-QSI, sometimes the protocol can
generate entanglement rather than consume it, depending on the channel and the
state on which the channel acts (this can never happen in MC-QSI because the
entropy $H(X|RB)$ is always positive for a classical $X$ system).

\subsection{Applications of MC-QSI}

We now discuss three applications of MC-QSI. The first application is one that
two of us announced in Ref.~\cite{HW10}, the second involves developing a
quantum reverse Shannon theorem for a quantum instrument, and the third is in
reducing the classical communication cost of the local purity distillation
protocol outlined in Ref.~\cite{KD07}.

\subsubsection{Classically-assisted state redistribution}

For the first application, the setting is that Alice and Bob share many copies
of some bipartite state $\rho^{AB}$, and we would like to know how the
resources of classical communication, quantum communication, and entanglement
can combine with the state $\rho^{AB}$ for different information processing
tasks. Let $\left\vert \psi\right\rangle ^{RAB}$ be a purification of
$\rho^{AB}$. We found a general protocol, called \textquotedblleft
classically-assisted state redistribution,\textquotedblright\ that when
combined with teleportation, super-dense coding, and entanglement distribution
can generate all of the known \textquotedblleft static\textquotedblright%
\ protocols in the literature and is furthermore optimal for these tasks
according to a multi-letter converse theorem \cite{HW10}. In the first step
of classically-assisted state redistribution, Alice and Bob employ the MC-QSI
protocol in order to implement the following resource inequality:%
\begin{equation}
\left\langle \rho^{AB}\right\rangle +I\left(  X_{B};R|B\right)  \left[
c\rightarrow c\right]  +H\left(  X_{B}|RB\right)  \left[  cc\right]
\geq\langle\overline{\Delta}^{X\rightarrow X_{A}X_{B}}\circ T^{A\rightarrow
A^{\prime}XE^{\prime}}:\rho^{AB}\rangle. \label{eq:inst-comp-QSI-res-in}%
\end{equation}
In the above, the resource on the RHS\ is a remote instrument, such that the
map $T^{A\rightarrow A^{\prime}XE^{\prime}}$ is first simulated
in such a way that Alice possesses the environment $E^\prime$ of the instrument $T^{A\rightarrow A^{\prime}X}$,
followed by a copying of the classical output $X$ to one for Alice ($X_{A}$)
and one for Bob ($X_{B}$) (the notation $\overline{\Delta}^{X\rightarrow X_{A}X_{B}}$ indicates
a classical copying channel). Let $\sigma^{A^{\prime}X_{A}X_{B}E^{\prime}B}$
denote the post-measurement state. Conditional on the classical variable $X$,
the parties then perform the state redistribution protocol \cite{DY08,YD09},
in which Alice redistributes the share $A^{\prime}$ of the post-measurement
state to Bob. The resource inequality for this task is as follows:%
\[
\langle\sigma^{A^{\prime}E^{\prime}X_{A}|BX_{B}}\rangle+\frac{1}{2}I\left(
A^{\prime};R|BX_{B}\right)  \left[  q\rightarrow q\right]  +\frac{1}{2}\left(
I\left(  A^{\prime};E^{\prime}|X_{B}\right)  -I\left(  A^{\prime}%
;B|X_{B}\right)  \right)  \left[  qq\right]  \geq\langle\sigma^{E^{\prime
}X_{A}|A^{\prime}BX_{B}}\rangle,
\]
where the vertical divider $|$\ for the states above indicates who possesses
what systems and the information quantities are all conditioned on $X$ since
this classical variable is available to both parties.\ The above resource
inequality is equivalent to the following one, after applying the identity
$I\left(  A^{\prime};R|BX_{B}\right)  =I\left(  A^{\prime};R|E^{\prime}%
X_{B}\right)  $ \cite{DY08,YD09} and moving the entanglement consumption to
the RHS\ along with a sign inversion (so that it now corresponds to an
entanglement generation rate):%
\[
\langle\sigma^{A^{\prime}X_{A}E^{\prime}|BX_{B}}\rangle+\frac{1}{2}I\left(
A^{\prime};R|E^{\prime}X_{B}\right)  \left[  q\rightarrow q\right]  \geq
\frac{1}{2}\left(  I\left(  A^{\prime};B|X_{B}\right)  -I\left(  A^{\prime
};E^{\prime}|X_{B}\right)  \right)  \left[  qq\right]  +\langle\sigma
^{E^{\prime}X_{A}|A^{\prime}BX_{B}}\rangle.
\]
Overall, we then have the following resource inequality%
\begin{multline*}
\left\langle \rho^{AB}\right\rangle +I\left(  X_{B};R|B\right)  \left[
c\rightarrow c\right]  +H\left(  X_{B}|RB\right)  \left[  cc\right]  +\frac
{1}{2}I\left(  A^{\prime};R|E^{\prime}X_{B}\right)  \left[  q\rightarrow
q\right] \\
\geq\frac{1}{2}\left(  I\left(  A^{\prime};B|X_{B}\right)  -I\left(
A^{\prime};E^{\prime}|X_{B}\right)  \right)  \left[  qq\right]  ,
\end{multline*}
if we are not concerned with the \textquotedblleft state
redistribution\textquotedblright\ aspect of the protocol and merely its
abilities for entanglement distillation. Finally, since the goal of the
protocol is entanglement distillation and not actually simulating the
measurement, we can exploit the common randomness to agree upon a particular
protocol in the ensemble of these protocols for the task of entanglement
distillation and it is not necessary to have common randomness as a resource
(it can be derandomized and this is the content of Corollary~4.8 of
Ref.~\cite{DHW08}). The final resource inequality is then%
\[
\left\langle \rho^{AB}\right\rangle +I\left(  X_{B};R|B\right)  \left[
c\rightarrow c\right]  +\frac{1}{2}I\left(  A^{\prime};R|E^{\prime}%
X_{B}\right)  \left[  q\rightarrow q\right]  \geq\frac{1}{2}\left(  I\left(
A^{\prime};B|X_{B}\right)  -I\left(  A^{\prime};E^{\prime}|X_{B}\right)
\right)  \left[  qq\right]  .
\]
Combining the above protocol with teleportation, super-dense coding, and
entanglement distribution then gives all of the known protocols on the
\textquotedblleft static branch\textquotedblright\ of quantum information theory.

\subsubsection{Quantum reverse Shannon theorem for a quantum instrument}

The quantum reverse Shannon theorem in its simplest form makes a statement
regarding the ability of noiseless quantum communication and entanglement to
simulate the action of some channel $\mathcal{N}^{A\rightarrow B^{\prime}}$ on
many copies of a state $\rho$. A simple extension of the theorem that we
discussed in the introduction is QRST-QSI, which simulates the channel on a
bipartite state $\rho^{AB}$. The resource inequality for this protocol is as
follows:%
\[
\frac{1}{2}I\left(  R;B^{\prime}|B\right)  _{\omega}\left[  q\rightarrow
q\right]  +\frac{1}{2}\left(  I\left(  B^{\prime};E\right)  _{\omega}-I\left(
B^{\prime};B\right)  _{\omega}\right)  \ \left[  qq\right]  \geq\langle
U_{\mathcal{N}}^{A\rightarrow B^{\prime}E}:\rho^{AB}\rangle,
\]
where the information quantities are with respect to a state $\omega^{RBE}$ of
the following form:%
\[
\omega^{RB^{\prime}EB}\equiv U_{\mathcal{N}}^{A\rightarrow B^{\prime}%
E}\left\vert \psi_{\rho}\right\rangle ^{RAB}.
\]
In the above, $U_{\mathcal{N}}^{A\rightarrow B^{\prime}E}$ is an isometric
extension of the channel $\mathcal{N}$ and $\left\vert \psi_{\rho
}\right\rangle ^{RAB}$ is a purification of the state $\rho^{AB}$. The
protocol employs state redistribution \cite{DY08,YD09}. In this case, if
$\frac{1}{2}\left(  I\left(  B^{\prime};E\right)  _{\omega}-I\left(
B^{\prime};B\right)  _{\omega}\right)  $ is negative, then the protocol is
generating entanglement rather than consuming it. A special case of the above
theorem is when there is no quantum side information (when $B$ is trivial), in
which case the resource inequality becomes the usual quantum reverse Shannon
theorem \cite{D06,ADHW06FQSW,BDHSW09,BCR11}:%
\[
\frac{1}{2}I\left(  R;B^{\prime}\right)  _{\omega}\left[  q\rightarrow
q\right]  +\frac{1}{2}I\left(  B^{\prime};E\right)  _{\omega}\left[
qq\right]  \geq\langle U_{\mathcal{N}}^{A\rightarrow B^{\prime}E}:\rho
^{A}\rangle.
\]

Suppose that we instead would like to simulate the action of a quantum
instrument $\mathcal{N}^{A\rightarrow XB^{\prime}}$ with classical output $X$
and quantum output $B^{\prime}$ on the bipartite state $\rho^{AB}$. A quantum
instrument is the most general model for quantum measurement that includes
both a classical output and a post-measurement quantum state \cite{D76}. A
quantum instrument always admits the following decomposition%
\[
\mathcal{N}^{A\rightarrow XB^{\prime}}\left(  \rho\right)  \equiv\sum
_{x}\left\vert x\right\rangle \left\langle x\right\vert ^{X}\otimes
\mathcal{N}_{x}^{A\rightarrow B^{\prime}}\left(  \rho\right)  ,
\]
in terms of the completely positive trace-nonincreasing maps $\mathcal{N}%
_{x}^{A\rightarrow B^{\prime}}$, such that the overall quantum map after
tracing over the classical system $X$ is a completely positive
trace-preserving map:%
\begin{align*}
\mathcal{N}^{A\rightarrow B^{\prime}}\left(  \rho\right)   &  \equiv\sum
_{x}\mathcal{N}_{x}^{A\rightarrow B^{\prime}}\left(  \rho\right)  ,\\
\text{Tr}\left\{  \mathcal{N}^{A\rightarrow B^{\prime}}\left(  \rho\right)
\right\}   &  =1.
\end{align*}
$\mathcal{N}^{A \rightarrow B'}$ has the following isometric extension:%
\[
U_{\mathcal{N}}^{A\rightarrow XX_{E}B^{\prime}E}=\sum_{x}\left\vert
x\right\rangle ^{X}\left\vert x\right\rangle ^{X_{E}}U_{\mathcal{N}_{x}%
}^{A\rightarrow B^{\prime}E},
\]
where $U_{\mathcal{N}_{x}}^{A\rightarrow B^{\prime}E}$ is an extension of the
map $\mathcal{N}_{x}^{A\rightarrow B^{\prime}}$. Thus, tracing over $X_{E}$
and $E$ recovers the action of the original instrument.

If we are interested in simulating this channel on the state $\rho^{AB}$, we
could straightforwardly apply the quantum reverse Shannon theorem to show that
the following resource inequality exists%
\begin{equation}
\frac{1}{2}I\left(  R;B^{\prime}X|B\right)  _{\omega}\left[  q\rightarrow
q\right]  +\frac{1}{2}\left(  I\left(  B^{\prime}X;E\right)  _{\omega
}-I\left(  B^{\prime}X;B\right)  _{\omega}\right)  \ \left[  qq\right]
\geq\langle U_{\mathcal{N}}^{A\rightarrow XX_{E}B^{\prime}E}:\rho^{AB}\rangle.
\label{eq:fully-q-instr-sim}%
\end{equation}
Though, we could perform this task by using less quantum communication and
entanglement if we exploit MC-QSI\ first followed by state redistribution (as
we do in the classically-assisted state redistribution protocol). The first
step of the protocol implements the following resource inequality%
\[
\left\langle \rho^{AB}\right\rangle +I\left(  X;R|B\right)  \left[
c\rightarrow c\right]  +H\left(  X|RB\right)  \left[  cc\right]  \geq
\langle\overline{\Delta}^{X\rightarrow X_{A}X_{B}}\circ\mathcal{N}%
^{A\rightarrow B^{\prime}XE}:\rho^{AB}\rangle,
\]
while the second is as follows:%
\[
\langle\sigma^{B^{\prime}X_{A}E|BX_{B}}\rangle+\frac{1}{2}I\left(  B^{\prime
};R|BX\right)  \left[  q\rightarrow q\right]  +\frac{1}{2}\left(  I\left(
B^{\prime};E|X\right)  -I\left(  B^{\prime};B|X\right)  \right)  \left[
qq\right]  \geq\langle\sigma^{X_{A}E|B^{\prime}BX_{B}}\rangle.
\]
Overall, the resource inequality for simulating the quantum instrument is as
follows:%
\begin{multline*}
\left\langle \rho^{AB}\right\rangle +I\left(  X;R|B\right)  \left[
c\rightarrow c\right]  +H\left(  X|RB\right)  \left[  cc\right]  +\frac{1}%
{2}I\left(  B^{\prime};R|BX\right)  \left[  q\rightarrow q\right] \\
+\frac{1}{2}\left(  I\left(  B^{\prime};E|X\right)  -I\left(  B^{\prime
};B|X\right)  \right)  \left[  qq\right]  \geq\langle U_{\mathcal{N}%
}^{A\rightarrow XX_{E}B^{\prime}E}:\rho^{AB}\rangle,
\end{multline*}
which is a cheaper simulation than in (\ref{eq:fully-q-instr-sim}) because we
are using classical communication and common randomness to achieve part of the
task, rather than quantum communication and entanglement for the whole
protocol. One would expect to have such a savings, since a quantum instrument
has both a classical and quantum output. We remark that this approach is very
similar to classically-assisted state redistribution from the previous
section, with the exception that we require the common randomness since the
goal is to simulate the instrument in full, rather than to distill
entanglement. A special case of the above reverse Shannon theorem occurs when
there is no quantum side information available, in which the resource
inequality reduces to%
\[
I\left(  X;R\right)  \left[  c\rightarrow c\right]  +H\left(  X|R\right)
\left[  cc\right]  +\frac{1}{2}I\left(  B^{\prime};R|X\right)  \left[
q\rightarrow q\right]  +\frac{1}{2}I\left(  B^{\prime};E|X\right)  \left[
qq\right]  \geq\langle U_{\mathcal{N}}^{A\rightarrow XX_{E}B^{\prime}E}%
:\rho^{A}\rangle.
\]

\subsubsection{Classical communication cost in local purity distillation}

We can also exploit MC-QSI to improve upon the classical communication cost in
local purity distillation \cite{KD07}. This leads to the following improvement of
Theorem~1 of Ref.~\cite{KD07}:

\begin{theorem}
The 1-way distillable local purity of the state $\rho^{AB}$ is given by
$\kappa_{\rightarrow}=\kappa_{\rightarrow}^{\ast}$, where%
\[
\kappa_{\rightarrow}^{\ast}\left(  \rho^{AB},R\right)  =\kappa\left(  \rho
^{A}\right)  +\kappa\left(  \rho^{B}\right)  +P_{\rightarrow}\left(  \rho
^{AB},R\right)  .
\]
In the above, we have the definitions%
\begin{align*}
\kappa\left(  \omega^{C}\right)   &  \equiv\log d_{C}-H\left(  C\right)
_{\omega},\\
P_{\rightarrow}\left(  \rho^{AB},R\right)   &  \equiv\lim_{k\rightarrow\infty
}\frac{1}{k}P^{\left(  1\right)  }\left(  \left(  \rho^{AB}\right)  ^{\otimes
k},kR\right)  ,
\end{align*}
and%
\begin{align*}
P^{\left(  1\right)  }\left(  \rho^{AB},R\right)   &  \equiv\max_{\Lambda
}\left\{  I\left(  Y;B\right)  _{\sigma}:I\left(  Y;E|B\right)  \leq
R\right\}  ,\\
\sigma^{YBE}  &  \equiv\left(  \mathcal{M}_{\Lambda}\otimes I^{BE}\right)
\left(  \psi^{ABE}\right)  ,
\end{align*}
where $\psi^{ABE}$ is a purification of $\rho^{AB}$, $\mathcal{M}_{\Lambda}$
is a measurement map corresponding to the POVM $\Lambda$, and the maximization
is over all POVMs mapping Alice's system $A$ to a classical system $Y$.
\end{theorem}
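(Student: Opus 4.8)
The plan is to follow the architecture of the proof of Theorem~1 of Ref.~\cite{KD07}, replacing its use of ordinary measurement compression with measurement compression with quantum side information (Theorem~\ref{thm:ICQSI}), and then to establish a converse tighter than the one in Ref.~\cite{KD07}. The improvement in the achievable region comes entirely from the fact that Bob's system $B$ serves as side information when the measurement outcome is conveyed to him, so that the classical communication cost of that step drops from $I(Y;EB)$ to $I(Y;E|B)$; the identity $I(Y;EB)=I(Y;B)+I(Y;E|B)$ links the two.

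For achievability, we start with $n$ copies of $\rho^{AB}$ together with a purification $\psi^{ABE}$, fix a POVM $\Lambda$ on $A$ with post-measurement state $\sigma^{YBE}$ satisfying $I(Y;E|B)_{\sigma}\leq R$, and have Alice run a faithful MC-QSI simulation of $\Lambda^{\otimes n}$ on $A^{n}$, with $B^{n}$ as Bob's quantum side information and $E^{n}$ playing the role of the reference system $R$ of Theorem~\ref{thm:ICQSI}. By that theorem this consumes about $nI(Y;E|B)_{\sigma}\leq nR$ bits of forward classical communication and about $nH(Y|EB)_{\sigma}$ bits of common randomness, after which both parties hold $Y^{n}$ and the global state is $\epsilon$-close to $(\sigma^{YBE})^{\otimes n}$. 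Since the objective is purity extraction rather than a genuine instrument simulation, the common randomness is removed by derandomization, exactly as in the entanglement-distillation applications above (Corollary~4.8 of Ref.~\cite{DHW08}), so that no pre-shared randomness---and hence no additional classical communication---is charged against the rate budget. The parties then run the local purity concentration subroutine of Ref.~\cite{KD07}, whose accounting yields a net extraction of $\kappa(\rho^{A})+\kappa(\rho^{B})+I(Y;B)_{\sigma}$ pure qubits per copy, the term $I(Y;B)_{\sigma}$ being the purity recovered from the correlations between the simulated outcome and Bob's quantum system. Optimizing over all $\Lambda$ subject to $I(Y;E|B)\leq R$ and then regularizing gives $\kappa_{\rightarrow}\geq\kappa_{\rightarrow}^{\ast}$.

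For the converse---which is where the real work lies---we consider an arbitrary protocol using local unitaries and forward classical communication at rate $R$, acting on $(\rho^{AB})^{\otimes n}$ and producing $\kappa n$ local pure qubits up to vanishing error, and bound $\kappa$ by an entropy-counting argument in the spirit of the converse of Ref.~\cite{KD07}: local unitaries preserve $H(A^{n})$, $H(B^{n})$, and $H(A^{n}B^{n})$; the cumulative forward message register $M$ satisfies $H(M)\leq nR$ and, once its transmission is treated as a local operation in Bob's laboratory, conditions his final decoding. The distilled purity is then bounded above in terms of how much of the total entropy can be localized given $M$, and the communication-dependent part is rewritten---using the chain rule and the fact that $B^{n}$ is available throughout---in terms of $I(Y;E|B)$ rather than $I(Y;EB)$; single-letterizing and invoking the definition of $P_{\rightarrow}$ as a regularization gives $\kappa_{\rightarrow}\leq\kappa_{\rightarrow}^{\ast}$. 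Because the bound need only hold at the regularized level, the single-letterization step should be routine.

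The step I expect to be the main obstacle is the converse bookkeeping: tracking the rate-limited forward channel through many interactive rounds while keeping the conditioning on Bob's side information $B^{n}$ correct, and checking that it is matched exactly by the achievability---in particular that the derandomization of the common randomness in the MC-QSI step does not covertly demand classical communication chargeable against $R$, and that the register holding the classical outcome $Y^{n}$ is accounted for consistently on both sides of the bound.
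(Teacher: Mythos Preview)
Your proposal is correct and follows the same route as the paper: use MC-QSI (Theorem~\ref{thm:ICQSI}) in place of ordinary measurement compression in the achievability argument of \cite{KD07}, derandomize the common randomness, and tighten the converse by conditioning on $B^{n}$. One calibration: you flag the converse as ``where the real work lies'' and worry about tracking many interactive rounds, but the protocol is one-way and the rate formula is already regularized, so the paper's entire converse improvement is the single chain $nR=\log d_{Y}\geq H(Y)\geq H(Y|B^{n})\geq I(Y;E^{n}|B^{n})$, inserted at the point of (19) in \cite{KD07}; there is no multi-round bookkeeping and no single-letterization to perform.
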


The improvement of Theorem~1 of Ref.~\cite{KD07} comes about by reducing the
classical communication rate from $I\left(  Y;EB\right)  $ to $I\left(
Y;E|B\right)  $ by employing the MC-QSI\ protocol in the achievability part.
The converse part of the theorem (in (19) of Ref.~\cite{KD07}) gets improved as
follows:%
\[
nR=\log d_{Y}\geq H\left(  Y\right)  \geq H\left(  Y|B^{n}\right)  \geq
I\left(  Y;E^{n}|B^{n}\right)  .
\]
It is apparent that the multi-letter nature of the converse theorem is what
led to the ability to improve the theorem, so that for any finite $k$, the
above revision of the theorem improves upon the previous one, but they are
both optimal in the regularized limit. This leads us to believe that 
even further improvements might be possible.

\subsection{Entropic uncertainty relation with QSI}

We close by relating the MC-QSI\ protocol to recent work on an entropic
uncertainty relation in the presence of quantum memory
\cite{RB09,BCCRR10,TR11,CCYZ11,FL12}. This uncertainty relation characterizes
the ability of two parties to predict the outcomes of measurements on another
system, by exploiting the quantum systems in their possession. The formal
statement of the uncertainty relation applies to a tripartite state
$\rho^{ABC}$ and is as follows:%
\begin{equation}
H\left(  X|B\right)  +H\left(  Z|C\right)  \geq\log_{2}\left(  1/c_{1}\right)
. \label{eq:EUR}%
\end{equation}
The two entropies are with respect to the following states resulting from
applying measurement maps for $\Lambda$ and $\Gamma$ to the $A$ system:%
\begin{align*}
&  \sum_{x}\left\vert x\right\rangle \left\langle x\right\vert ^{X}%
\otimes\text{Tr}_{AC}\left\{  \Lambda_{x}^{A}\ \rho^{ABC}\right\}  ,\\
&  \sum_{z}\left\vert z\right\rangle \left\langle z\right\vert ^{Z}%
\otimes\text{Tr}_{AB}\left\{  \Gamma_{z}^{A}\ \rho^{ABC}\right\}  ,
\end{align*}
and $c$ characterizes the non-commutativity of the measurements:%
\[
c_{1}\equiv\max_{x,z}\left\Vert \sqrt{\Lambda_{x}}\sqrt{\Gamma_{z}}\right\Vert
_{\infty}^{2}.
\]
This uncertainty relation is useful conceptually, but it also has operational
applications to quantum key distribution \cite{BCCRR10,TR11}, in relating data
compression to privacy amplification \cite{Renes08062011,RR12}, and in
constructing capacity-achieving quantum error correction codes\ (for certain
channels) \cite{WR12} because it is formulated in terms of entropies. Another
statement of the above entropic uncertainty relation is as follows
\cite{BCCRR10,FL12}:%
\begin{equation}
H\left(  X|B\right)  +H\left(  Z|B\right)  \geq\log_{2}\left(  1/c_{2}\right)
+H\left(  A|B\right)  , \label{eq:other-EUR}%
\end{equation}
where%
\[
c_{2}\equiv\max_{x,z}\sqrt{\text{Tr}\left\{  \Lambda_{x}\Gamma_{z}\right\}
}.
\]

Here, we show how the above uncertainty relations apply in bounding from below
the nonlocal classical resources required in two different MC-QSI protocols.
First, suppose that Alice would like to simulate the measurement $\left\{
\Lambda_{x}^{A}\right\}  $ on the state $\rho^{ABC}$ and send the outcomes to
Bob. Let $R$ be a system that purifies the state $\rho^{ABC}$. Then the MC-QSI
protocol corresponds to the following resource inequality:%
\[
\left\langle \rho^{ABC}\right\rangle +I\left(  X;RC|B\right)  \left[
c\rightarrow c\right]  +H\left(  X|RBC\right)  \left[  cc\right]
\geq\left\langle \Lambda^{A}:\rho^{ABC}\right\rangle ,
\]
where the entropies are with respect to the state%
\[
\sum_{x}\left\vert x\right\rangle \left\langle x\right\vert ^{X}%
\otimes\text{Tr}_{A}\left\{  \Lambda_{x}^{A}\ \psi^{RABC}\right\}  .
\]
The total classical cost of the above protocol is $H\left(  X|B\right)
=I\left(  X;RC|B\right)  +H\left(  X|RBC\right)  $. For the second protocol,
suppose that Alice would like to simulate the measurement $\left\{  \Gamma
_{z}^{A}\right\}  $ on the state $\rho^{ABC}$ and send the outcomes to
Charlie. Then the MC-QSI protocol in this case corresponds to the following
resource inequality:%
\[
\left\langle \rho^{ABC}\right\rangle +I\left(  Z;RB|C\right)  \left[
c\rightarrow c\right]  +H\left(  Z|RBC\right)  \left[  cc\right]
\geq\left\langle \Gamma^{A}:\rho^{ABC}\right\rangle ,
\]
where the entropies are with respect to the state%
\[
\sum_{z}\left\vert z\right\rangle \left\langle z\right\vert ^{Z}%
\otimes\text{Tr}_{A}\left\{  \Gamma_{z}^{A}\ \psi^{RABC}\right\}  .
\]
The total classical cost of the above protocol is $H\left(  Z|C\right)
=I\left(  Z;RB|C\right)  +H\left(  Z|RBC\right)  $.

Using the entropic uncertainty relation in (\ref{eq:EUR}), we can then bound
from below the total classical cost of the above protocols as follows:%
\[
I\left(  X;RC|B\right)  +H\left(  X|RBC\right)  +I\left(  Z;RB|C\right)
+H\left(  Z|RBC\right)  =H\left(  X|B\right)  +H\left(  Z|C\right)  \geq
\log_{2}\left(  1/c_{1}\right)  .
\]
We can also apply the uncertainty relation in (\ref{eq:other-EUR}) to bound
from below the total common randomness cost:%
\begin{align*}
H\left(  X|RBC\right)  +H\left(  Z|RBC\right)   &  \geq\log_{2}\left(
1/c_{2}\right)  +H\left(  A|RBC\right) \\
&  =\log_{2}\left(  1/c_{2}\right)  -H\left(  A\right)  ,
\end{align*}
where the last equality follows because the state on $RABC$ is pure. Since
this lower bound might sometimes be negative but the entropies $H\left(
X|RBC\right)  $ and $H\left(  Z|RBC\right)  $ are always positive, we can
revise the above lower bound to be as follows:%
\[
H\left(  X|RBC\right)  +H\left(  Z|RBC\right)  \geq\max\left\{  \log
_{2}\left(  1/c_{2}\right)  -H\left(  A\right)  ,0\right\}  .
\]

Given that lower bounds on the total classical cost and the total 
common randomness exist, one might be tempted to think that a lower bound on the total
information should exist as well. One might conjecture it to be of the
following form:%
\[
I\left(  X;RC|B\right)  +I\left(  Z;RB|C\right)  \geq l,
\]
where $l$ is some non-negative parameter that depends only on the measurements
and not on the state. Such a universal, state-independent lower bound
cannot hold in general, however. A simple counterexample demonstrates that the
following lower bound for strong subadditivity is the best that one might hope
for:%
\[
I\left(  X;RC|B\right)  +I\left(  Z;RB|C\right)  \geq0.
\]
Indeed, suppose that $\rho^{ABC}$ is a pure product state. Then $I\left(
X;RC|B\right)  $ is equal to zero because $R$ and $C$ have no correlations
with the measurement output $X$ on $A$, and $I\left(  Z;RB|C\right)  =0$ for a
similar reason.

\section{Non-feedback measurement compression with quantum side information}

Our final contribution concerns measurement compression with quantum side
information, in the case where the sender is not required to obtain the
outcome of the simulation, that is,  a non-feedback
simulation. We construct a protocol for this task by simply
combining elements of other protocols described earlier in the article. Moreover, we show that the protocol is optimal by proving a single-letter converse for the associated rate region.
We omit the detailed definition of the information processing task here because
it is the obvious non-feedback relaxation along the lines of Section \ref{sec:MC-non-feedback} for the definition of the MC-QSI task from Section \ref{sec:MC-QSI-feedback}.

\begin{theorem}
[Non-feedback MC-QSI]\label{thm:non-feedback-MC-QSI-inner}Let $\rho^{AB}$ be a
source state and $\mathcal{N}$ a quantum instrument to simulate on this state:%
\[
\left(  \mathcal{N}^{A\rightarrow AX}\otimes I^{B}\right)  \left(  \rho
^{AB}\right)  =\sum_{x}\left(  \mathcal{N}_{x}^{A}\otimes I^{B}\right)
\left(  \rho^{AB}\right)  \otimes\left\vert x\right\rangle \left\langle
x\right\vert ^{X}.
\]
There exists a protocol for faithful non-feedback simulation of the quantum instrument with classical communication rate $R$ and common randomness rate
$S$ if and only if $R$ and $S$ are in the union of the following
regions:%
\begin{align}
R  &  \geq I\left(  W;R|B\right)  ,\label{eq:1st-non-feed-MC-QSI}\\
R+S  &  \geq I\left(  W;XR|B\right)  ,\nonumber
\end{align}
where the entropies are with respect to a state of the following form:%
\begin{equation} \label{eqn:MC-QSI-non-feedback-state}
\sum_{x,w}p_{X|W}\left(  x|w\right)  \left\vert w\right\rangle \left\langle
w\right\vert ^{W}\otimes\left\vert x\right\rangle \left\langle x\right\vert
^{X}\otimes\text{\emph{Tr}}_{A}\left\{  \left(  I^{R}\otimes\mathcal{M}%
_{w}^{A}\otimes I^{B}\right)  \left(  \phi_{\rho}^{RAB}\right)  \right\}  ,
\end{equation}
$\phi_{\rho}^{RAB}$ is some purification of the state $\rho^{AB}$, and the
union is with respect to all decompositions of the original instrument
$\mathcal{N}$ of the form:%
\begin{equation}
\left(  \mathcal{N}^{A\rightarrow AX}\otimes I^{B}\right)  \left(  \rho
^{AB}\right)  =\sum_{x,w}p_{X|W}\left(  x|w\right)  \left(  \mathcal{M}%
_{w}^{A}\otimes I^{B}\right)  \left(  \rho^{AB}\right)  \otimes\left\vert
x\right\rangle \left\langle x\right\vert ^{X}.
\label{eq:non-feed-MC-QSI-decomp}%
\end{equation}

\end{theorem}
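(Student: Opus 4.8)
The plan is to prove achievability by combining the three protocols already developed: non-feedback measurement compression (Section~\ref{sec:MC-non-feedback}), classical data compression with quantum side information (Section~\ref{sec:CDC-QSI-achieve}), and the hashing-plus-sequential-decoding idea from MC-QSI (Section~\ref{sec:mc-qsi}). First I would reduce to the case of a POVM $\Lambda_{x}=\sum_{w}p_{X|W}(x|w)M_{w}$ with each $M_{w}\geq0$, deferring the general-instrument case to the usual argument of Theorem~\ref{thm:instrument-simulation} (polar decomposition of the Kraus operators, separating out a private-randomness register). The relevant classical-quantum state is the one in (\ref{eqn:MC-QSI-non-feedback-state}) with the reference $R$, Bob's side information $B$, and the Markov chain $RB-W-X$ (or more precisely $R-WB-X$). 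Alice and Bob fix a codebook $\{w^{n}(l,m)\}$ as in the non-feedback achievability proof, but now the operator Chernoff bound is applied with the systems $RB$ playing the role that $R$ played before, so the measurement-compression part uses $|\mathcal{L}|\approx2^{nI(W;RB)}$ and $|\mathcal{M}|\approx2^{nI(W;X|RB)}$. Then, exactly as in MC-QSI, Alice hashes $l$ via a two-universal $f$ to $k\in\mathcal{K}$ with $|\mathcal{K}|\approx2^{n[I(W;RB)-I(W;B)]}=2^{nI(W;R|B)}$, and sends $k$; Bob scans, using a sequential decoder built from conditionally typical projectors $\Pi_{w^{n}(l,m)}$ for the post-measurement states $\rho_{w^{n}(l,m)}^{B^{n}}$, over all codewords consistent with $k$ and his common randomness $m$. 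Finally, having recovered $l$, Bob locally simulates the classical post-processing $p_{X|W}(x|w)$ applied to $w^{n}(l,m)$ to produce $\hat{x}^{n}$. The classical-communication rate is then $\log|\mathcal{K}|/n\approx I(W;R|B)$ and the common-randomness rate is $\log|\mathcal{M}|/n\approx I(W;X|RB)$; since $I(W;RB)+I(W;X|RB)=I(W;XRB)=I(W;XR|B)+I(W;B)$, the sum of communication and shared randomness supplied (plus what Bob's side information $B$ "contributes" via hashing) matches $R+S\geq I(W;XR|B)$.

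For the error analysis I would reuse, essentially verbatim with the substitution $R\mapsto RB$ for the purification system and $B$ retained as Bob's quantum side information, the two-stage bound from the MC-QSI achievability proof: the measurement-simulation error term $\Delta(\mathcal{C})$ controlled by the operator Chernoff / Gentle Operator estimates of Section~\ref{sec:MC-achieve} and Section~\ref{sec:MC-non-feedback}, and the decoding-error term controlled by Sen's non-commutative union bound together with the two-universal hashing bound (\ref{eq:two-univ-prop}) and the HSW-type dimension counting $\mathrm{Tr}\{\Pi_{w^{n}}\}\leq 2^{n[H(B|W)+\delta]}$, $\Pi\,\rho^{\otimes n}\Pi\leq 2^{-n[H(B)-\delta]}\Pi$. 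The exponent that needs to be positive is $R-I(W;R|B)$, giving the first rate constraint. A Markov-inequality + union-bound argument then extracts a single codebook and hash for which both errors vanish, and the Gentle Measurement Lemma shows the disturbance to Bob's system is negligible so that the faithful-simulation trace-distance condition holds. Once the POVM case is in hand, I would extend to a general quantum instrument $\mathcal{N}$ by the same trick as in Theorem~\ref{thm:instrument-simulation}: use a left polar decomposition of the Kraus operators to absorb the quantum output into an Alice-local isometry, and handle the extra classical label $y$ with additional private uniform randomness; the resulting simulation still has the form (\ref{eq:non-feed-MC-QSI-decomp}), and monotonicity of trace distance under discarding the auxiliary register completes it.

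For the converse I would follow the non-feedback MC converse of Section~\ref{sec:MC-non-feedback}, which in turn follows Cuff, but carry Bob's side information $B^{n}$ through every mutual-information step as a conditioning system. Introduce the time-sharing variable $K$ uniform on $\{1,\dots,n\}$ and the single-copy helper state $\sigma$ obtained by selecting system $k$, as in (\ref{eq:helper-state-no-fdbk}) but now keeping $B_{k}$. Starting from $nR\geq H(L)_{\omega}\geq I(L;MB^{n}R^{n})_{\omega}$, use that $M$ is uncorrelated with $R^{n}B^{n}$ and the identity for mutual information to pass to $I(LMB^{n};R^{n})_{\omega}-I(R^{n};B^{n})_{\omega}=I(LM;R^{n}|B^{n})_{\omega}$; then subadditivity of entropy and the tensor-power structure on $R^{n}$ (so that $H(R_{k}|B_{k})$ decouples appropriately) single-letterize this to $nI(LM;R|B,K)_{\sigma}$; absorb $K$ via the chain rule after bounding $I(R;K|B)_{\sigma}\leq\epsilon'$ from faithfulness and Alicki-Fannes, yielding $nR\geq nI(LMK;R|B)-n\epsilon'$. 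Identifying $W=LMK$ gives (\ref{eq:1st-non-feed-MC-QSI}) since the induced decomposition has the form (\ref{eq:non-feed-MC-QSI-decomp}) and the Markov chain $R-WB-X$. The sum-rate bound proceeds analogously from $n(R+S)\geq H(LM)_{\omega}\geq I(LM;\hat{X}^{n}R^{n}|B^{n})_{\omega}$, now invoking Lemma~\ref{lem:entropy-for-close-IID} to control $|H(\hat{X}^{n}R^{n}|B^{n})_{\omega}-\sum_{k}H(\hat{X}_{k}R_{k}|B_{k})_{\omega}|$, and finishing with the chain rule to obtain $R+S\geq I(W;XR|B)$. I expect the main obstacle to be the converse single-letterization: one has to be careful that conditioning on $B^{n}$ together with the tensor-power structure on $R^{n}B^{n}$ (but not on the post-protocol $\hat{X}^{n}$) still permits the subadditivity/decoupling steps and that the helper-state construction genuinely yields a valid decomposition of the form (\ref{eq:non-feed-MC-QSI-decomp}) with $\mathcal{M}_{w}$ completely positive and trace-non-increasing; this is where the "remarkably subtle" optimality argument alluded to in the introduction is concentrated.
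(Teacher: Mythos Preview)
Your achievability sketch is correct and matches the paper's: combine the non-feedback codebook construction for the internal POVM $\{M_w\}$ with the hashing-plus-sequential-decoding step from MC-QSI, then have Bob apply $p_{X|W}$ locally; the rates come out exactly as you state, and the general-instrument extension goes through as in Theorem~\ref{thm:instrument-simulation}.

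The converse, however, has a genuine gap. In the most general protocol, Bob's decoder is not a stochastic map $p_{\hat X^n|L,M}$ but a \emph{quantum instrument} on $B^n$ (as in your own achievability part, where he sequentially measures $B^n$): he obtains a classical outcome $S$ and a disturbed post-measurement $B^n$, and only then forms $\hat X^n$ from $(S,L,M)$. Your identification $W=LMK$ therefore fails to give a decomposition of the form (\ref{eq:non-feed-MC-QSI-decomp}): given $(l,m,k)$ alone, $\hat X_k$ still depends on $B^n$ through $S$, so there is no legitimate $p_{X|W}$ and the required Markov structure $RB-W-X$ does not hold. The paper instead sets $W=LMSK$, carrying Bob's outcome $S$ through the chain (passing from the pre-instrument state $\theta$ to the post-instrument state $\omega$ via data processing and Alicki--Fannes). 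This restores the Markov chain, but at the price that the induced single-letter maps $\mathcal M_{w}$ now act jointly on $AB$ (since $\mathcal F_s^{(lm)}$ touches $B^n$), which is still not of the form (\ref{eq:non-feed-MC-QSI-decomp}).

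The missing idea that resolves this is an application of Uhlmann's theorem: faithfulness of the simulation forces the reduced state on $RB$ to be $\epsilon$-close to $\operatorname{Tr}_A\phi_\rho^{RAB}$, so there exists an isometry $U^{A\to AWXI}$ acting on Alice's system alone whose output (after discarding $I$ and dephasing $W,X$) is $2\sqrt{\epsilon}$-close to the state produced by the joint $\mathcal M_w^{AB}$. This replaces the $AB$-acting decomposition by a nearby one of the required $A$-only form, changing the information quantities by at most $O(\epsilon')$. Without this step the converse does not close, and your proposal, while correctly flagging that the single-letterization is the delicate point, does not supply it.
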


While demonstrating achievability of the quoted rates will consist of the routine combination of elements from other parts of the article, the converse is more subtle. In particular, a general protocol for non-feedback MC-QSI will have Bob perform an instrument on the $B^n$ system. Arguing that it is sufficient to restrict to states of the form (\ref{eqn:MC-QSI-non-feedback-state}) will involve comparing that protocol to a related simulation in which the instrument is implemented approximately by Alice.  While the modified protocol would generally require more communication than the original, for the purposes of the converse, it need not significantly increase the relevant mutual informations.

\smallskip

\begin{proof}
[Proof Sketch of Achievability]The protocol for achievability naturally
combines elements of protocols that we have considered in
Section~\ref{sec:MC-non-feedback}\ for non-feedback measurement compression
and in Section~\ref{sec:MC-QSI-feedback} for measurement compression with
quantum side information.\ The protocol begins with Alice and Bob sharing many
copies of a state $\rho^{AB}$. They would like to simulate an instrument
$\mathcal{N}^{A\rightarrow AX}$, composed of the completely positive, trace
non-increasing maps $\left\{  \mathcal{N}_{x}^{A}\right\}  $, so that they end
up with many copies of a state of the following form:%
\[
\left(  \mathcal{N}^{A\rightarrow AX}\otimes I^{B}\right)  \left(  \rho
^{AB}\right)  =\sum_{x}\left(  \mathcal{N}_{x}^{A}\otimes I^{B}\right)
\left(  \rho^{AB}\right)  \otimes\left\vert x\right\rangle \left\langle
x\right\vert ^{X}.
\]
We omit the details of the proof of the achievability part because it follows
readily from the methods detailed in Sections~\ref{sec:MC-non-feedback} and
\ref{sec:MC-QSI-feedback}. Instead, we state the achievability part as the
following resource inequality:%
\begin{equation}
\left\langle \rho^{AB}\right\rangle +I\left(  W;R|B\right)  \left[
c\rightarrow c\right]  +I\left(  W;X|RB\right)  \left[  cc\right]
\geq\left\langle \mathcal{N}^{A\rightarrow AX}\left(  \rho^{AB}\right)
\right\rangle . \label{eq:resource-ineq-MC-QSI-non-feedback}%
\end{equation}
where the information quantities are with respect to a state of the following
form:%
\begin{equation}
\sum_{x,w}p_{X|W}\left(  x|w\right)  \left\vert w\right\rangle \left\langle
w\right\vert ^{W}\otimes\left\vert x\right\rangle \left\langle x\right\vert
^{X}\otimes\text{Tr}_{A}\left\{  \left(  I^{R}\otimes\mathcal{M}_{w}%
^{A}\otimes I^{B}\right)  \left(  \phi_{\rho}^{RAB}\right)  \right\}  .
\end{equation}
In the above, $\phi_{\rho}^{RAB}$ is a purification of the state $\rho^{AB}$
and the maps $\left\{  \mathcal{M}_{w}^{A}\right\}  $ arise from a
decomposition of the original instrument into the following form:%
\[
\sum_{x}\mathcal{N}_{x}^{A}\left(  \sigma\right)  \otimes\left\vert
x\right\rangle \left\langle x\right\vert ^{X}=\sum_{x,w}p_{X|W}\left(
x|w\right)  \mathcal{M}_{w}^{A}\left(  \sigma\right)  \otimes\left\vert
x\right\rangle \left\langle x\right\vert ^{X},
\]
when acting on some arbitrary state $\sigma$. In particular, the protocol
operates by Alice and Bob performing a simulation of $\mathcal{M}_{w}^{A}$,
though Alice hashes the outcome of the simulated measurement. She sends the
hash along to Bob using noiseless classical bits channels, and he then
performs sequential decoding to search among all of the post-measurement
states that are consistent with the hash and his share of the common
randomness. This causes a negligible disturbance to the shared state in the
asymptotic limit as long as the communication rates are as in
(\ref{eq:resource-ineq-MC-QSI-non-feedback}). Finally, he simulates the
classical post-processing channel $p_{X|W}\left(  x|w\right)  $ locally,
leading to a savings in the cost of common randomness consumption.
\end{proof}

With the achievability part in hand, we now move on to the proof of the converse.

\begin{proof}
[Proof of Converse]We now prove this converse part. A modification of
Figure~\ref{fig:actual-ICQSI} (without the extra processing of $L$ and $M$ on
Alice's side) depicts the most general protocol for a non-feedback simulation
of the measurement with QSI. The protocol begins with the reference, Alice,
and Bob sharing many copies of the state $\phi_{\rho}^{RAB}$ and Alice sharing
common randomness $M$ with Bob. She then chooses a quantum instrument
$\Upsilon^{\left(  m\right)  }$ based on the common randomness $M$ and
performs it on her systems $A^{n}$. The measurement returns outcome $L$, and
the overall state is as follows:%
\[
\theta^{R^{n}A^{n}B^{n}LM}\equiv\sum_{l,m}\frac{1}{\left\vert \mathcal{M}%
\right\vert }(\Upsilon_{l}^{\left(  m\right)  })^{A^{n}}(\left(  \phi_{\rho
}^{RAB}\right)  ^{\otimes n})\otimes\left\vert l\right\rangle \left\langle
l\right\vert ^{L}\otimes\left\vert m\right\rangle \left\langle m\right\vert
^{M},
\]
where $\Upsilon_{l}^{\left(  m\right)  }$ is a completely positive, trace
non-increasing map. Alice sends the register $L$ to Bob. Based on $L$ and $M$,
he performs some quantum instrument on his systems $B^{n}$ with trace
non-increasing maps $\{\mathcal{F}_{s}^{\left(  lm\right)  }\}\ $followed by
the stochastic map $p_{\hat{X}^{n}|S,L,M}\left(  \hat{x}^{n}|s,l,m\right)  $
to give his estimate $\hat{x}^{n}$ of the measurement outcome. The resulting
state is as follows:%
\begin{multline*}
\omega^{R^{n}A^{n}LMSB^{n}\hat{X}^{n}}\equiv\sum_{l,m,s,\hat{x}^{n}}\frac
{1}{\left\vert \mathcal{M}\right\vert }p_{\hat{X}^{n}|S,L,M}\left(  \hat
{x}^{n}|s,l,m\right)  \left(  (\Upsilon_{l}^{\left(  m\right)  })^{A^{n}%
}\otimes(\mathcal{F}_{s}^{\left(  lm\right)  })^{B^{n}}\right)  (\left(
\phi_{\rho}^{RAB}\right)  ^{\otimes n})\otimes\\
\left\vert l\right\rangle \left\langle l\right\vert ^{L}\otimes\left\vert
m\right\rangle \left\langle m\right\vert ^{M}\otimes\left\vert s\right\rangle
\left\langle s\right\vert ^{S}\otimes\left\vert \hat{x}^{n}\right\rangle
\left\langle \hat{x}^{n}\right\vert ^{\hat{X}^{n}}.
\end{multline*}
The following condition should hold for all $\epsilon>0$ and sufficiently
large $n$ for a faithful non-feedback simulation:%
\begin{equation}
\left\Vert \omega^{R^{n}\hat{X}^{n}B^{n}}-\sum_{x^{n}}\text{Tr}_{A^{n}%
}\left\{  \left(  I\otimes\mathcal{N}_{x^{n}}\right)  \left(  \phi_{\rho
}^{RAB}\right)  ^{\otimes n}\right\}  \otimes\left\vert x^{n}\right\rangle
\left\langle x^{n}\right\vert ^{X^{n}}\right\Vert _{1}\leq\epsilon,
\label{eq:good-non-feed-MC-QSI-protocol}%
\end{equation}
where $\hat{X}^{n}$ is a classical register isomorphic to $X^{n}$.

We prove the first bound as follows:%
\begin{align*}
nR &  \geq H\left(  L\right)  _{\theta}\\
&  \geq I\left(  L;MB^{n}R^{n}\right)  _{\theta}\\
&  =I\left(  LMB^{n};R^{n}\right)  _{\theta}+I\left(  L;MB^{n}\right)
_{\theta}-I\left(  R^{n};MB^{n}\right)  _{\theta}\\
&  \geq I\left(  LMB^{n};R^{n}\right)  _{\theta}-I\left(  R^{n};B^{n}\right)
_{\theta}\\
&  \geq I\left(  LMSB^{n};R^{n}\right)  _{\omega}-I\left(  R^{n};B^{n}\right)
_{\omega}-n\epsilon^{\prime}\\
&  =H\left(  R^{n}|B^{n}\right)  _{\omega}-H\left(  R^{n}|LMSB^{n}\right)
_{\omega}-n\epsilon^{\prime}\\
&  \geq\sum_{k}\left[  H\left(  R_{k}|B_{k}\right)  _{\omega}-H\left(
R_{k}|LMSB_{k}\right)  _{\omega}\right]  -n2\epsilon^{\prime}\\
&  =\sum_{k}I\left(  LMS;R_{k}|B_{k}\right)  _{\omega}-n2\epsilon^{\prime}\\
&  =nI\left(  LMS;R|BK\right)  _{\sigma}-n2\epsilon^{\prime}\\
&  \geq nI\left(  LMS;R|BK\right)  _{\sigma}+nI\left(  R;K|B\right)  _{\sigma
}-n3\epsilon^{\prime}\\
&  =nI\left(  LMSK;R|B\right)  _{\sigma}-n3\epsilon^{\prime}.
\end{align*}
The first two inequalities are similar to what we had before. The first
equality is an identity for quantum mutual information. The third inequality
follows because there are no correlations between $R^{n}B^{n}$ and $M$ so that
$I\left(  MB^{n};R^{n}\right)  _{\omega}=I\left(  B^{n};R^{n}\right)
_{\omega}$. The fourth inequality follows from quantum data processing of
$LMB^{n}$ to produce $LMSB^{n}$ and from the fact that this does not change
the state too much (we apply the condition in
(\ref{eq:good-non-feed-MC-QSI-protocol}) and the Alicki-Fannes' inequality).
The second equality is an identity for quantum mutual information. The fifth
inequality follows from strong subadditivity of quantum entropy:%
\[
H\left(  R^{n}|LMSB^{n}\right)  _{\omega}\leq\sum_{k}H\left(  R_{k}%
|LMSB_{k}\right)  _{\omega},
\]
and because the state on $R^{n}B^{n}$ is close to a tensor-power state so that
by Lemma~\ref{lem:entropy-for-close-IID}, we have%
\[
H\left(  R^{n}|B^{n}\right)  _{\omega}\geq\sum_{k}H\left(  R_{k}|B_{k}\right)
_{\omega}-n\epsilon^{\prime}.
\]
The third equality is another identity. The fourth equality comes about by
defining the state $\sigma$ as follows:%
\begin{multline}
\sigma^{RALMS\hat{X}K}\equiv\sum_{l,m,k,\hat{x},s}\frac{1}{n\left\vert
\mathcal{M}\right\vert }p_{\hat{X}|LMS}\left(  \hat{x}|lms\right)\times\\
\text{Tr}_{\left(  RAB\right)  _{1}^{k-1}\left(  RAB\right)  _{k+1}^{n}%
}\left\{  \left(  (\Upsilon_{l}^{\left(  m\right)  })^{A^{n}}\otimes
(\mathcal{F}_{s}^{\left(  lmk\right)  })^{B^{n}}\right)  (\left(  \phi_{\rho
}^{RAB}\right)  ^{\otimes n})\right\}  \label{eq:helper-state-no-fdbk-MC-QSI}%
\\
\otimes\left\vert l\right\rangle \left\langle l\right\vert ^{L}\otimes
\left\vert m\right\rangle \left\langle m\right\vert ^{M}\otimes\left\vert
s\right\rangle \left\langle s\right\vert ^{S}\otimes\left\vert \hat
{x}\right\rangle \left\langle \hat{x}\right\vert ^{\hat{X}}\otimes\left\vert
k\right\rangle \left\langle k\right\vert ^{K},
\end{multline}
where the map $p_{\hat{X}|LMS}\left(  \hat{x}|lms\right)  $ is defined from
$p_{\hat{X}^{n}|LMS}\left(  \hat{x}^{n}|lms\right)  $ by keeping only the
$k^{\text{th}}$ symbol from $\hat{x}^{n}$. It also follows by exploiting the
fact that $K$ is a uniform classical random variable, with distribution $1/n$,
determining which systems $R_{k}A_{k}B_{k}\hat{X}_{k}$ to select. From the
fact that the measurement simulation is faithful, we can apply the
Alicki-Fannes' inequality to conclude that%
\begin{equation}
I\left(  R\hat{X};K|B\right)  _{\sigma}=\left\vert I\left(  R\hat
{X};K|B\right)  _{\sigma}-I\left(  RX;K|B\right)  _{\tau}\right\vert
\leq\epsilon^{\prime},\label{eq:no-fdbk-KX-info-bnd-MC-QSI}
\end{equation}
where $\tau$ is a state like $\sigma$ but resulting from the tensor-power
state for ideal measurement compression (and due to its IID\ structure, it has
no correlations with any particular system $k$ so that $I\left(
RX;K|B\right)  _{\tau}=0$). The same reasoning along with strong subadditivity
also implies that%
\begin{equation}
I\left(  R;K|B\right)  _{\sigma}\leq\epsilon^{\prime}%
.\label{eq:no-fdbk-K-info-bnd-MC-QSI}%
\end{equation}
The final equality is an application of the chain rule for quantum mutual
information. The state$~\sigma$ for the final information term has the form:%
\begin{equation}
\mathcal{M}^{AB\rightarrow ABX}\left(  \phi^{RAB}\right)  =\sum_{x,w}%
p_{X|W}\left(  x|w\right)  \left\vert x\right\rangle \left\langle x\right\vert
^{X}\otimes\mathcal{M}_{w}^{AB}\left(  \phi^{RAB}\right)
,\label{eq:non-feed-MC-QSI-joint-map}%
\end{equation}
with $LMSK=W$ and the completely positive, trace non-increasing maps
$\mathcal{M}_{w}^{AB}$ defined by%
\[
\varrho^{AB}\mapsto\frac{1}{n\left\vert \mathcal{M}\right\vert }%
\text{Tr}_{\left(  AB\right)  _{1}^{k-1}\left(  AB\right)  _{k+1}^{n}}\left\{
\left(  (\Upsilon_{l}^{\left(  m\right)  })^{A^{n}}\otimes(\mathcal{F}%
_{s}^{\left(  lm\right)  })^{B^{n}}\right)  (\left(  \phi_{\rho}^{AB}\right)
^{\otimes k-1}\otimes\varrho^{AB}\otimes\left(  \phi_{\rho}^{AB}\right)
^{\otimes n-k})\right\}  .
\]

At this point, we have proved the first inequality in
(\ref{eq:1st-non-feed-MC-QSI}) for a state of the form in
(\ref{eq:non-feed-MC-QSI-joint-map})\ where the map $\mathcal{M}_{w}^{AB}$
acts on the joint system $AB$. We now show that it is possible to construct
from $\mathcal{M}_{w}^{AB}$ a map acting only on the system $A$ (as stated in
the theorem) causing only a negligible change to the information quantity in
(\ref{eq:1st-non-feed-MC-QSI}). The idea behind this is a simple application
of Uhlmann's theorem. First, consider that the following inequality holds from
the condition in (\ref{eq:good-non-feed-MC-QSI-protocol}) and from monotonicity of trace distance:%
\[
\left\Vert \sum_{x,w}p_{X|W}\left(  x|w\right)  \text{Tr}_{A}\left\{
\mathcal{M}_{w}^{AB}\left(  \phi^{RAB}\right)  \right\}  -\sum_{x}%
\text{Tr}_{A}\left\{  \mathcal{N}_{x}\left(  \phi^{RAB}\right)  \right\}
\right\Vert _{1}\leq\epsilon
\]
A purification of the state $\sum_{x}$Tr$_{A}\left\{  \mathcal{N}_{x}\left(
\phi^{RAB}\right)  \right\}  $ is the state $\phi^{RAB}$, and a purification
of the state $\sum_{x,w}p_{X|W}\left(  x|w\right)  $Tr$_{A}\left\{
\mathcal{M}_{w}^{AB}\left(  \phi^{RAB}\right)  \right\}  $ is the following
state:%
\begin{equation}
\sum_{w,x,i}M_{w,i}^{AB}\left\vert \phi\right\rangle ^{RAB}\left\vert
w\right\rangle ^{W}\sqrt{p_{X|W}\left(  x|w\right)  }\left\vert x\right\rangle
^{X}\left\vert i\right\rangle ^{I},\label{eq:purified-state-MC-QSI-nonf}%
\end{equation}
where we assume that the completely positive maps $\mathcal{M}_{w}^{AB}$ have
the following Kraus representation:%
\[
\mathcal{M}_{w}^{AB}\left(  \rho^{AB}\right)  =\sum_{i}M_{w,i}^{AB}\rho
^{AB}\left(  M_{w,i}^{\dag}\right)  ^{AB}.
\]
By Uhlmann's theorem, there exists an isometry $U^{A\rightarrow AWXI}$
such that the trace distance between
$U^{A\rightarrow AWXI}\left(  \phi^{RAB}\right)  $ and the state in
(\ref{eq:purified-state-MC-QSI-nonf}) is less
than $2\sqrt{\epsilon}$. To have the map $\mathcal{M}^{AB\rightarrow ABX}$ be
implemented solely on Alice's system, we can simply perform the isometry
$U^{A\rightarrow AWXI}$, discard the register $I$, and perform von Neumann
measurements of the registers $W$ and $X$. (One could also discard register
$X$, and then process $W$ with $p_{X|W}\left(  x|w\right)  $ to produce
$X$---it is possible to do this since $X$ is classical.) This amounts to an approximate
implementation of the following instrument:%
\[
\sum_{x,w}p_{X|W}\left(  x|w\right)  \left\vert x\right\rangle \left\langle
x\right\vert ^{X}\otimes\left\vert w\right\rangle \left\langle w\right\vert
^{W}\otimes\mathcal{M}_{w}^{AB}\left(  \phi^{RAB}\right)  ,
\]
from which we can discard register $W$ to obtain an approximation of the map $\mathcal{M}%
^{AB\rightarrow ABX}$. Thus, from the map $\mathcal{M}^{AB\rightarrow ABX}$,
it is possible to construct a nearby map of the form in
(\ref{eq:non-feed-MC-QSI-decomp}), so that it suffices to optimize over the
class of decompositions given in (\ref{eq:non-feed-MC-QSI-decomp}).

We now prove the second bound:%
\begin{align*}
n\left(  R+S\right)   &  \geq H\left(  LM\right)  _{\theta}\\
&  \geq H\left(  LM|B^{n}\right)  _{\theta}\\
&  \geq I(LM;\hat{X}^{n}R^{n}|B^{n})_{\theta}\\
&  =I(LMB^{n};\hat{X}^{n}R^{n})_{\theta}-I(B^{n};\hat{X}^{n}R^{n})_{\theta}\\
&  \geq I(LMSB^{n};\hat{X}^{n}R^{n})_{\omega}-I(B^{n};\hat{X}^{n}%
R^{n})_{\omega}-n\epsilon^{\prime}\\
&  =H(\hat{X}^{n}R^{n}|B^{n})_{\omega}-H(\hat{X}^{n}R^{n}|LMSB^{n})_{\omega
}-n\epsilon^{\prime}\\
&  \geq\sum_{k}\left[  H(\hat{X}_{k}R_{k}|B_{k})_{\omega}-H(\hat{X}_{k}%
R_{k}|LMSB_{k})_{\omega}\right]  -n2\epsilon^{\prime}\\
&  =\sum_{k}I(LMS;\hat{X}_{k}R_{k}|B_{k})_{\omega}-n2\epsilon^{\prime}\\
&  =nI(LMS;\hat{X}R|KB)_{\sigma}-n2\epsilon^{\prime}\\
&  \geq nI(LMS;\hat{X}R|KB)_{\sigma}+nI(K;\hat{X}R|B)_{\sigma}-n3\epsilon
^{\prime}\\
&  =nI(LMSK;\hat{X}R|B)_{\sigma}-n3\epsilon^{\prime}.
\end{align*}
The first three inequalities follow from similar reasons as our previous
inequalities. The first equality is an identity. The fourth inequality follows
from quantum data processing of $LMB^{n}$ to produce $LMSB^{n}$ and from the
fact that this does not change the state too much (we apply the condition in
(\ref{eq:good-non-feed-MC-QSI-protocol}) and the Alicki-Fannes' inequality).
The fifth inequality follows from strong subadditivity of entropy:%
\[
H(\hat{X}^{n}R^{n}|LMSB^{n})_{\omega}\leq\sum_{k}H(\hat{X}_{k}R_{k}%
|LMSB_{k})_{\omega},
\]
and from the fact that the measurement simulation is faithful so that%
\[
\left\vert H(\hat{X}^{n}R^{n}|B^{n})_{\omega}-\sum_{k}H(\hat{X}_{k}R_{k}%
|B_{k})_{\omega}\right\vert \leq n\epsilon^{\prime},
\]
where we have applied a variation of Lemma~\ref{lem:entropy-for-close-IID}.
The third equality is an identity. The fourth equality follows by considering
the state $\sigma$ as defined in (\ref{eq:helper-state-no-fdbk-MC-QSI}). The
sixth inequality follows from (\ref{eq:no-fdbk-KX-info-bnd-MC-QSI}). The final
equality is the chain rule for quantum mutual information. We can then
consider the same argument as stated before in order to construct a map acting
only on $A$ from one acting on $AB$. Similarly, the resulting state
has the form in (\ref{eq:non-feed-MC-QSI-decomp}).
\end{proof}

\section{Conclusion}

This paper provided a review of Winter's measurement compression theorem
\cite{W01}, detailing the information processing task, providing examples for
understanding it, reviewing Winter's achievability proof, and detailing a new
approach to its single-letter converse theorem. We proved a new theorem
characterizing the optimal rates for classical communication and common
randomness for a measurement compression protocol where the sender is not
required to obtain the outcome of the measurement simulation. We then reviewed
the Devetak-Winter theorem on classical data compression with quantum side
information, providing new proofs of the achievability and converse parts of
this theorem. From there, we presented a new protocol called measurement
compression with quantum side information (a protocol first announced in
Ref.~\cite{HW10}). This protocol has several applications, including its part
in the \textquotedblleft classically-assisted state
redistribution\textquotedblright\ protocol, which is the most general protocol
on the static side of the quantum information theory tree, and its role in
reducing the classical communication cost in local purity distillation
\cite{KD07}. We then outlined a connection between this protocol and recent
work in entropic uncertainty relations. Finally, we proved a single-letter
theorem for the task of measurement compression with quantum side information
when the sender is not required to obtain the outcome of the measurement simulation.

There are several open questions to consider going forward from here. First,
are there applications of the MC-QSI\ protocol to rate distortion, as was the
case for the Luo-Devetak protocol in Ref.~\cite{LD09}? Are there further
applications of the measurement compression protocol in general?\ Is it
possible to formulate a measurement compression protocol that is independent
of the state on which it acts (similar to the general reverse Shannon theorem
from Refs.~\cite{BDHSW09,BCR11})? The answers to these questions could further
illuminate our understanding of quantum measurement and address other
important areas of quantum information theory.

We thank Ke Li (Carl) for suggesting the possibility of the measurement
compression with quantum side information protocol to us. We are grateful to Cedric Beny,
Aram Harrow, Daniel Gottesman, Masanao Ozawa, and Andreas Winter for useful
discussions. MMW\ acknowledges support from the Centre de Recherches
Math\'{e}matiques at the University of Montreal. He also acknowledges both
Nagoya University and the Perimeter Institute for Theoretical Physics, where
some of this work was conducted. PH\ acknowledges support from the Canada
Research Chairs program, the Perimeter Institute, CIFAR, FQRNT's INTRIQ,
NSERC, and ONR through grant N000140811249. FB acknowledges support from the
Program for Improvement of Research Environment for Young Researchers from
Special Coordination Funds for Promoting Science and Technology (SCF)
commissioned by the Ministry of Education, Culture, Sports, Science and
Technology (MEXT) of Japan. MH received support from the Chancellor's
postdoctoral research fellowship, University of Technology Sydney (UTS), and
was also partly supported by the National Natural Science Foundation of China
(Grant Nos.~61179030) and the Australian Research Council (Grant No.~DP120103776).

\appendix

\section{Typical sequences and typical subspaces}

\label{sec:typ-review}A sequence $x^{n}$\ is typical with respect to some
probability distribution $p_{X}\left(  x\right)  $ if its empirical
distribution has maximum deviation $\delta$ from $p_{X}\left(  x\right)  $.
The typical set $T_{\delta}^{X^{n}}$ is the set of all such sequences:%
\[
T_{\delta}^{X^{n}}\equiv\left\{  x^{n}:\left\vert \frac{1}{n}N\left(
x|x^{n}\right)  -p_{X}\left(  x\right)  \right\vert \leq\delta\ \ \ \ \forall
x\in\mathcal{X}\right\}  ,
\]
where $N\left(  x|x^{n}\right)  $ counts the number of occurrences of the
letter $x$ in the sequence $x^{n}$. The above notion of typicality is the
\textquotedblleft strong\textquotedblright\ notion (as opposed to the weaker
\textquotedblleft entropic\textquotedblright\ version of typicality sometimes
employed~\cite{CT91}). The typical set enjoys three useful properties:\ its
probability approaches unity in the large $n$ limit, it has exponentially
smaller cardinality than the set of all sequences, and every sequence in the
typical set has approximately uniform probability. That is, suppose that
$X^{n}$ is a random variable distributed according to $p_{X^{n}}\left(
x^{n}\right)  \equiv p_{X}\left(  x_{1}\right)  \cdots p_{X}\left(
x_{n}\right)  $, $\epsilon$ is positive number that becomes arbitrarily small
as $n$ becomes large, and $c$ is some positive constant. Then the following
three properties hold \cite{CT91}%
\begin{align}
\Pr\left\{  X^{n}\in T_{\delta}^{X^{n}}\right\}   &  \geq1-\epsilon
,\label{eq:typ-1}\\
\left\vert T_{\delta}^{X^{n}}\right\vert  &  \leq2^{n\left[  H\left(
X\right)  +c\delta\right]  },\label{eq:typ-2}\\
\forall x^{n}\in T_{\delta}^{X^{n}}:\ \ \ 2^{-n\left[  H\left(  X\right)
+c\delta\right]  }  &  \leq p_{X^{n}}\left(  x^{n}\right)  \leq2^{-n\left[
H\left(  X\right)  -c\delta\right]  }. \label{eq:typ-3}%
\end{align}
We omit using $c$ in the main text and instead subsume it as part of $\delta$.

These properties translate straightforwardly to the quantum setting by
applying the spectral theorem to a density operator $\rho$. That is, suppose
that%
\[
\rho\equiv\sum_{x}p_{X}\left(  x\right)  \left\vert x\right\rangle
\left\langle x\right\vert ,
\]
for some orthonormal basis $\left\{  \left\vert x\right\rangle \right\}  _{x}%
$. Then there is a typical subspace defined as follows:%
\[
T_{\rho,\delta}^{n}\equiv\text{span}\left\{  \left\vert x^{n}\right\rangle
:\left\vert \frac{1}{n}N\left(  x|x^{n}\right)  -p_{X}\left(  x\right)
\right\vert \leq\delta\ \ \ \ \forall x\in\mathcal{X}\right\}  ,
\]
and let $\Pi_{\rho,\delta}^{n}$ denote the projector onto it. Then properties
analogous to (\ref{eq:typ-1}-\ref{eq:typ-3}) hold for the typical subspace.
The probability that a tensor power state $\rho^{\otimes n}$\ is in the
typical subspace approaches unity as $n$ becomes large, the rank of the
typical projector is exponentially smaller than the rank of the full $n$-fold
tensor-product Hilbert space of $\rho^{\otimes n}$, and the state
$\rho^{\otimes n}$ \textquotedblleft looks\textquotedblright\ approximately
maximally mixed on the typical subspace:%
\begin{align}
\text{Tr}\left\{  \Pi_{\rho,\delta}^{n}\ \rho^{\otimes n}\right\}   &
\geq1-\epsilon,\label{eq:typ-q-1}\\
\text{Tr}\left\{  \Pi_{\rho,\delta}^{n}\right\}   &  \leq2^{n\left[  H\left(
B\right)  +c\delta\right]  },\label{eq:typ-q-2}\\
2^{-n\left[  H\left(  B\right)  +c\delta\right]  }\ \Pi_{\rho,\delta}^{n}  &
\leq\Pi_{\rho,\delta}^{n}\ \rho^{\otimes n}\ \Pi_{\rho,\delta}^{n}%
\leq2^{-n\left[  H\left(  B\right)  -c\delta\right]  }\ \Pi_{\rho,\delta}^{n},
\label{eq:typ-q-3}%
\end{align}
where $H\left(  B\right)  $ is the entropy of $\rho$.

Suppose now that we have an ensemble of the form $\left\{  p_{X}\left(
x\right)  ,\rho_{x}\right\}  $, and suppose that we generate a typical
sequence $x^{n}$ according to the pruned distribution in (\ref{eq:pruned-dist}%
), leading to a tensor product state $\rho_{x^{n}}\equiv\rho_{x_{1}}%
\otimes\cdots\otimes\rho_{x_{n}}$. Then there is a conditionally typical
subspace with a conditionally typical projector defined as follows:%
\[
\Pi_{\rho_{x^{n}},\delta}^{n}\equiv\bigotimes\limits_{x\in\mathcal{X}}%
\Pi_{\rho_{x},\delta}^{I_{x}},
\]
where $I_{x}\equiv\left\{  i:x_{i}=x\right\}  $ is an indicator set that
selects the indices $i$\ in the sequence $x^{n}$ for which the $i^{\text{th}}$
symbol $x_{i}$\ is equal to $x\in\mathcal{X}$ and $\Pi_{\rho_{x},\delta
}^{I_{x}}$ is the typical projector for the state $\rho_{x}$. The
conditionally typical subspace has the three following properties:%
\begin{align}
\text{Tr}\left\{  \Pi_{\rho_{x^{n}},\delta}^{n}\ \rho_{x^{n}}\right\}   &
\geq1-\epsilon,\\
\text{Tr}\left\{  \Pi_{\rho_{x^{n}},\delta}^{n}\right\}   &  \leq2^{n\left[
H\left(  B|X\right)  +c\delta\right]  },\\
2^{-n\left[  H\left(  B|X\right)  +c\delta\right]  }\ \Pi_{\rho_{x^{n}}%
,\delta}^{n}  &  \leq\Pi_{\rho_{x^{n}},\delta}^{n}\ \rho_{x^{n}}\ \Pi
_{\rho_{x^{n}},\delta}^{n}\leq2^{-n\left[  H\left(  B|X\right)  -c\delta
\right]  }\ \Pi_{\rho_{x^{n}},\delta}^{n},
\end{align}
where $H\left(  B|X\right)  =\sum_{x}p_{X}\left(  x\right)  H\left(  \rho
_{x}\right)  $ is the conditional quantum entropy.

Let $\rho$ be the expected density operator of the ensemble $\left\{
p_{X}\left(  x\right)  ,\rho_{x}\right\}  $ so that $\rho=\sum_{x}p_{X}\left(
x\right)  \rho_{x}$. The following properties are proved in
Refs.~\cite{ieee2005dev,itit1999winter,W11}:%
\begin{align}
\forall x^{n} \in T^{X^{n}}_{\delta}: \text{Tr}\left\{  \rho_{x^{n}}%
\ \Pi_{\rho}\right\}   &  \geq1-\epsilon,\nonumber\\
\sum_{x^{n}}p_{X^{\prime n}}^{\prime}\left(  x\right)  \rho_{x^{n}}  &
\leq\left[  1-\epsilon\right]  ^{-1}\rho^{\otimes n}.
\label{eq:prune-avg-op-ineq}%
\end{align}

\section{Useful lemmas}

\label{sec:useful-lemmas}Here we collect some useful lemmas.

\begin{lemma}
[Gentle Operator Lemma \cite{itit1999winter,ON07}]\label{lem:gentle-operator}%
Let $\Lambda$ be a positive operator where $0\leq\Lambda\leq I$ (usually
$\Lambda$ is a POVM\ element), $\rho$ a state, and $\epsilon$ a positive
number such that the probability of detecting the outcome $\Lambda$ is high:%
\[
\emph{Tr}\left\{  \Lambda\rho\right\}  \geq1-\epsilon.
\]
Then the measurement causes little disturbance to the state $\rho$:%
\[
\left\Vert \rho-\sqrt{\Lambda}\rho\sqrt{\Lambda}\right\Vert _{1}\leq
2\sqrt{\epsilon}.
\]

\end{lemma}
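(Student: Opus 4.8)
The plan is to establish the final statement---the Gentle Operator Lemma of Winter and Ogawa-Nagaoka---by reducing to the pure-state case and then bounding the trace distance via the fidelity. First I would recall the inequality relating trace distance and fidelity for two states $\rho, \sigma$:
\[
\left\Vert \rho - \sigma \right\Vert_1 \leq 2\sqrt{1 - F(\rho,\sigma)},
\]
where $F(\rho,\sigma) = \left\Vert \sqrt{\rho}\sqrt{\sigma}\right\Vert_1$ is the fidelity (Uhlmann). Applying this with $\sigma = \sqrt{\Lambda}\rho\sqrt{\Lambda}$, it suffices to lower bound $F(\rho, \sqrt{\Lambda}\rho\sqrt{\Lambda})$ by $1 - \epsilon$; indeed then $\left\Vert \rho - \sqrt{\Lambda}\rho\sqrt{\Lambda}\right\Vert_1 \leq 2\sqrt{\epsilon}$, completing the argument. (One should note the post-measurement operator $\sqrt{\Lambda}\rho\sqrt{\Lambda}$ is subnormalized; the fidelity bound still applies with $F(\rho,\sigma) = \left\Vert \sqrt{\rho}\sqrt{\sigma}\right\Vert_1$ and one exploits that its trace is $\text{Tr}\{\Lambda\rho\} \geq 1-\epsilon$.)

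The central computation is then to lower bound $F(\rho,\sqrt{\Lambda}\rho\sqrt{\Lambda}) = \left\Vert \sqrt{\rho}\,\sqrt{\sqrt{\Lambda}\rho\sqrt{\Lambda}}\,\right\Vert_1$. The cleanest route is through purifications: let $|\psi\rangle$ be a purification of $\rho$ on a doubled system, so that $(\sqrt{\Lambda} \otimes I)|\psi\rangle$ is a purification of $\sqrt{\Lambda}\rho\sqrt{\Lambda}$. By Uhlmann's theorem the fidelity is at least the overlap of these particular purifications,
\[
F(\rho,\sqrt{\Lambda}\rho\sqrt{\Lambda}) \geq \left\vert \langle\psi\vert (\sqrt{\Lambda}\otimes I) \vert\psi\rangle \right\vert = \text{Tr}\{\sqrt{\Lambda}\,\rho\}.
\]
Since $0 \leq \Lambda \leq I$ implies $\Lambda \leq \sqrt{\Lambda}$ as operators, we get $\text{Tr}\{\sqrt{\Lambda}\,\rho\} \geq \text{Tr}\{\Lambda\,\rho\} \geq 1 - \epsilon$. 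Chaining these gives $F \geq 1-\epsilon$, and the trace-distance bound follows.

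The main obstacle is bookkeeping rather than conceptual: one must handle carefully that $\sqrt{\Lambda}\rho\sqrt{\Lambda}$ is not a normalized state, so the version of the fidelity--trace-distance inequality invoked should be the one valid for subnormalized operators (equivalently, one pads $\rho$ and $\sqrt{\Lambda}\rho\sqrt{\Lambda}$ with a flag qubit to make them normalized, applies the standard inequality, and traces out the flag using monotonicity of trace distance). A secondary subtlety is the operator inequality $\Lambda \leq \sqrt{\Lambda}$ when $0 \leq \Lambda \leq I$; this follows by the functional calculus since on $[0,1]$ one has $t \leq \sqrt{t}$. Once these two points are dispatched, the proof is a three-line chain of inequalities.
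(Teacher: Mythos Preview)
The paper does not actually provide a proof of this lemma; it is stated in Appendix~B with citations to Winter (1999) and Ogawa--Nagaoka (2007), and used freely throughout. So there is no ``paper's own proof'' to compare against.

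Your approach is the standard one and is correct in outline: purify $\rho$ to $|\psi\rangle$, observe that $(\sqrt{\Lambda}\otimes I)|\psi\rangle$ purifies $\sqrt{\Lambda}\rho\sqrt{\Lambda}$, bound the overlap from below by $\operatorname{Tr}\{\sqrt{\Lambda}\rho\}\geq\operatorname{Tr}\{\Lambda\rho\}\geq 1-\epsilon$ using the operator inequality $\Lambda\leq\sqrt{\Lambda}$, and then convert to a trace-distance bound. The one place to be careful is the step you flag yourself: the inequality $\|\rho-\sigma\|_1\leq 2\sqrt{1-F(\rho,\sigma)}$ with $F=\|\sqrt{\rho}\sqrt{\sigma}\|_1$ is not the usual Fuchs--van de Graaf statement (which has $1-F^2$ under the root for normalized states), and the subnormalization of $\sqrt{\Lambda}\rho\sqrt{\Lambda}$ means you cannot simply quote the textbook form. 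The cleanest way to close this gap is to work directly at the level of the purifications: compute $\bigl\||\psi\rangle\langle\psi|-|\phi\rangle\langle\phi|\bigr\|_1$ for the rank-one operators with $|\phi\rangle=(\sqrt{\Lambda}\otimes I)|\psi\rangle$, which is a $2\times 2$ calculation giving exactly $\sqrt{(1-p)^2+4(p-q^2)}$ where $p=\operatorname{Tr}\{\Lambda\rho\}$ and $q=\operatorname{Tr}\{\sqrt{\Lambda}\rho\}$; then use $p\leq q$ to bound $p-q^2\leq p-p^2$ and conclude $(1-p)^2+4p(1-p)=(1-p)(1+3p)\leq 4(1-p)\leq 4\epsilon$. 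Monotonicity of the trace norm under partial trace then gives the mixed-state result. Your ``pad with a flag qubit'' suggestion can also be made to work but is fiddlier than this direct computation.
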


\begin{lemma}
[Gentle Operator Lemma for Ensembles \cite{itit1999winter,ON07,W11}%
]\label{lem:gentle-operator-ens} Given an ensemble $\left\{  p_{X}\left(
x\right)  ,\rho_{x}\right\}  $ with expected density operator $\rho\equiv
\sum_{x}p_{X}\left(  x\right)  \rho_{x}$, suppose that an operator $\Lambda$
such that $I\geq\Lambda\geq0$ succeeds with high probability on the state
$\rho$:%
\[
\emph{Tr}\left\{  \Lambda\rho\right\}  \geq1-\epsilon.
\]
Then the subnormalized state $\sqrt{\Lambda}\rho_{x}\sqrt{\Lambda}$ is close
in expected trace distance to the original state $\rho_{x}$:%
\[
\mathbb{E}_{X}\left\{  \left\Vert \sqrt{\Lambda}\rho_{X}\sqrt{\Lambda}%
-\rho_{X}\right\Vert _{1}\right\}  \leq2\sqrt{\epsilon}.
\]

\end{lemma}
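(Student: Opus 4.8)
The plan is to deduce the ensemble version from the single-copy Gentle Operator Lemma (Lemma~\ref{lem:gentle-operator}) by lifting the ensemble to a classical-quantum state. A naive term-by-term application of Lemma~\ref{lem:gentle-operator} fails: the hypothesis $\operatorname{Tr}\{\Lambda\rho\}\geq 1-\epsilon$ only constrains the \emph{average} state $\rho=\sum_x p_X(x)\rho_x$ and tells us nothing about the individual success probabilities $\operatorname{Tr}\{\Lambda\rho_x\}$, which could be far from $1$ for some values of $x$. The key observation is that encoding the ensemble as a single cq-state converts the average-success hypothesis into a genuine single-state-success hypothesis, to which the single-copy lemma then applies directly.

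\textbf{Construction and application of the single-copy lemma.} Introduce the classical-quantum state
\[
\bar{\rho}^{XB}\equiv\sum_{x}p_{X}\left(x\right)\left\vert x\right\rangle\left\langle x\right\vert^{X}\otimes\rho_{x}^{B},
\]
which is a normalized density operator since $\operatorname{Tr}\{\bar{\rho}\}=\sum_x p_X(x)=1$, and the positive operator $\bar{\Lambda}\equiv I^{X}\otimes\Lambda^{B}$, which obeys $0\leq\bar{\Lambda}\leq I$ because $0\leq\Lambda\leq I$. First I would note that
\[
\operatorname{Tr}\left\{\bar{\Lambda}\,\bar{\rho}\right\}=\sum_{x}p_{X}\left(x\right)\operatorname{Tr}\left\{\Lambda\rho_{x}\right\}=\operatorname{Tr}\left\{\Lambda\rho\right\}\geq 1-\epsilon,
\]
so the hypothesis of Lemma~\ref{lem:gentle-operator} holds for the pair $(\bar{\Lambda},\bar{\rho})$. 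Applying that lemma gives $\big\Vert\bar{\rho}-\sqrt{\bar{\Lambda}}\,\bar{\rho}\,\sqrt{\bar{\Lambda}}\big\Vert_{1}\leq 2\sqrt{\epsilon}$.

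\textbf{Unpacking the trace norm.} Since $\sqrt{\bar{\Lambda}}=I^{X}\otimes\sqrt{\Lambda}^{B}$ acts as the identity on the classical register, $\sqrt{\bar{\Lambda}}\,\bar{\rho}\,\sqrt{\bar{\Lambda}}=\sum_{x}p_{X}(x)\left\vert x\right\rangle\left\langle x\right\vert^{X}\otimes\sqrt{\Lambda}\rho_{x}\sqrt{\Lambda}$, and hence $\bar{\rho}-\sqrt{\bar{\Lambda}}\,\bar{\rho}\,\sqrt{\bar{\Lambda}}=\sum_{x}\left\vert x\right\rangle\left\langle x\right\vert^{X}\otimes p_{X}(x)\big(\rho_{x}-\sqrt{\Lambda}\rho_{x}\sqrt{\Lambda}\big)$ is block-diagonal in the orthonormal basis $\{\left\vert x\right\rangle\}$. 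Using the fact (already invoked in the proof of the faithful-simulation equivalence lemma) that the trace norm of a block-diagonal operator equals the sum of the trace norms of its blocks, the left-hand side above equals $\sum_{x}p_{X}(x)\big\Vert\rho_{x}-\sqrt{\Lambda}\rho_{x}\sqrt{\Lambda}\big\Vert_{1}=\mathbb{E}_{X}\{\Vert\sqrt{\Lambda}\rho_{X}\sqrt{\Lambda}-\rho_{X}\Vert_{1}\}$, and combining with the bound from the previous step yields the claim. There is no real technical obstacle here; the only nontrivial step is the conceptual one of replacing the ensemble by its cq-state encoding so that the single-copy lemma becomes applicable.
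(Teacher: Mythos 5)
Your proof is correct; every step checks out, and the cq-state lifting is a clean way to obtain the lemma. One clarification regarding your opening remark, though: the paper itself gives no proof (it cites \cite{itit1999winter,ON07,W11}), and the standard argument found in those references is precisely the ``naive'' term-by-term approach that you claim fails --- but it does not fail. The single-copy Gentle Operator Lemma holds for any $\epsilon_x\geq 0$, not only small ones, so for each $x$ one may set $\epsilon_x\equiv 1-\operatorname{Tr}\{\Lambda\rho_x\}$ and obtain $\|\rho_x-\sqrt{\Lambda}\rho_x\sqrt{\Lambda}\|_1\leq 2\sqrt{\epsilon_x}$ directly. Averaging and invoking concavity of the square root (Jensen) then gives $\mathbb{E}_X\{\|\rho_X-\sqrt{\Lambda}\rho_X\sqrt{\Lambda}\|_1\}\leq 2\mathbb{E}_X\{\sqrt{\epsilon_X}\}\leq 2\sqrt{\mathbb{E}_X\{\epsilon_X\}}=2\sqrt{1-\operatorname{Tr}\{\Lambda\rho\}}\leq 2\sqrt{\epsilon}$. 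The fact that individual $\operatorname{Tr}\{\Lambda\rho_x\}$ may be far from $1$ is absorbed by Jensen's inequality, so the hypothesis on the average state suffices. Your construction has the virtue of bundling the averaging and the square root into a single invocation of the single-copy lemma, which is tidier and avoids the explicit Jensen step; the block-diagonal trace-norm decomposition is the same fact already used in the faithful-simulation equivalence proof, so your argument imports no new machinery. Both routes are short and correct --- yours is simply a different way to arrive at the same inequality.
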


\begin{lemma}
\label{lem:trace-inequality}Let $\rho$ and $\sigma$ be positive operators and
$\Lambda$ a positive operator such that $0\leq\Lambda\leq I$. Then the
following inequality holds%
\[
\emph{Tr}\left\{  \Lambda\rho\right\}  \leq\emph{Tr}\left\{  \Lambda
\sigma\right\}  +\left\Vert \rho-\sigma\right\Vert _{1}.
\]

\end{lemma}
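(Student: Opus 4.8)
The plan is to reduce the claimed bound to the elementary estimate $\operatorname{Tr}\{\Lambda(\rho-\sigma)\}\le\|\rho-\sigma\|_1$, valid for any Hermitian difference and any $\Lambda$ with $0\le\Lambda\le I$. First I would write
\[
\operatorname{Tr}\{\Lambda\rho\}=\operatorname{Tr}\{\Lambda\sigma\}+\operatorname{Tr}\{\Lambda(\rho-\sigma)\},
\]
so that it suffices to show the second trace term is at most $\|\rho-\sigma\|_1$.

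Since $\rho$ and $\sigma$ are positive, the operator $\tau\equiv\rho-\sigma$ is Hermitian, so I can invoke its Jordan--Hahn decomposition $\tau=\tau_+-\tau_-$ with $\tau_+,\tau_-\ge 0$ supported on orthogonal subspaces, whence $\|\tau\|_1=\operatorname{Tr}\{\tau_+\}+\operatorname{Tr}\{\tau_-\}$. Then
\[
\operatorname{Tr}\{\Lambda\tau\}=\operatorname{Tr}\{\Lambda\tau_+\}-\operatorname{Tr}\{\Lambda\tau_-\}\le\operatorname{Tr}\{\Lambda\tau_+\}\le\operatorname{Tr}\{\tau_+\}\le\operatorname{Tr}\{\tau_+\}+\operatorname{Tr}\{\tau_-\}=\|\rho-\sigma\|_1 .
\]
Here the first inequality drops the term $\operatorname{Tr}\{\Lambda\tau_-\}\ge 0$, which is nonnegative because the trace of the product of the positive operators $\Lambda$ and $\tau_-$ is nonnegative; the second inequality uses $\operatorname{Tr}\{\tau_+\}-\operatorname{Tr}\{\Lambda\tau_+\}=\operatorname{Tr}\{(I-\Lambda)\tau_+\}\ge 0$, again by positivity of $I-\Lambda$ and $\tau_+$; and the third just adds back $\operatorname{Tr}\{\tau_-\}\ge 0$. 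Substituting this into the identity of the previous paragraph yields the lemma.

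There is no genuine obstacle here: this is essentially the chain of inequalities already deployed (for a full POVM rather than a single element) inside the proof of Lemma~\ref{lem:faithful-sim-1}, specialized to the two-element sub-POVM $\{\Lambda,I-\Lambda\}$. The only step that should be stated with a touch of care is $\operatorname{Tr}\{\Lambda\tau_+\}\le\operatorname{Tr}\{\tau_+\}$, and one could alternatively derive the whole bound in one line from the matrix Hölder inequality $\operatorname{Tr}\{\Lambda\tau\}\le\|\Lambda\|_\infty\,\|\tau\|_1\le\|\tau\|_1$; I would keep the Jordan-decomposition argument, however, since it is self-contained and consistent with the style of the rest of the appendix.
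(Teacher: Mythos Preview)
Your proof is correct. The paper itself does not supply a proof of this lemma; it is merely stated in the appendix of useful lemmas. Your argument via the Jordan decomposition of $\rho-\sigma$ is exactly the style of reasoning the paper uses elsewhere (in the proof of Lemma~\ref{lem:faithful-sim-1}, as you observe), so nothing more is needed.
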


\begin{lemma}
[Non-commutative union bound \cite{S11}]\label{lem-non-com-union-bound}Let
$\sigma$ be a subnormalized state such that $\sigma\geq0$ and $\emph{Tr}%
\left\{  \sigma\right\}  \leq1$. Let $\Pi_{1}$, \ldots, $\Pi_{N}$ be
projectors. Then the following \textquotedblleft non-commutative union
bound\textquotedblright\ holds%
\[
\emph{Tr}\left\{  \sigma\right\}  -\emph{Tr}\left\{  \Pi_{N}\cdots\Pi
_{1}\sigma\Pi_{1}\cdots\Pi_{N}\right\}  \leq2\sqrt{\sum_{i=1}^{N}%
\emph{Tr}\left\{  \left(  I-\Pi_{i}\right)  \sigma\right\}  }.
\]

\end{lemma}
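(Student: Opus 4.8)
The plan is to derive the bound from a purification together with two applications of an elementary telescoping identity for products of projectors. First fix a purification $|\psi\rangle$ of $\sigma$, so that $\langle\psi|\psi\rangle=\text{Tr}\{\sigma\}\le 1$ and each $\Pi_i$ acts on $|\psi\rangle$ as $\Pi_i\otimes I$; write $T\equiv\Pi_N\cdots\Pi_1$ and $\bar\Pi_k\equiv I-\Pi_k$. Since $(I-T)^\dagger(I-T)\ge 0$, one has the identity $\langle\psi|(I-T^\dagger T)|\psi\rangle = 2\,\text{Re}\,\langle\psi|(I-T)|\psi\rangle - \|(I-T)|\psi\rangle\|^2$, hence
\[
\text{Tr}\{\sigma\}-\text{Tr}\{\Pi_N\cdots\Pi_1\,\sigma\,\Pi_1\cdots\Pi_N\} \;=\; \langle\psi|(I-T^\dagger T)|\psi\rangle \;\le\; 2\,\text{Re}\,\langle\psi|(I-T)|\psi\rangle .
\]
So it suffices to show $\text{Re}\,\langle\psi|(I-T)|\psi\rangle\le\sqrt{\sum_i \text{Tr}\{\bar\Pi_i\sigma\}}$.

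Next I would use the telescoping identity $I-\Pi_N\cdots\Pi_1=\sum_{k=1}^N \Pi_N\cdots\Pi_{k+1}\,\bar\Pi_k$ (the product before $\bar\Pi_k$ is empty, i.e. equal to $I$, when $k=N$), together with $\bar\Pi_k=\bar\Pi_k^2$, to rewrite $\text{Re}\,\langle\psi|(I-T)|\psi\rangle = \text{Re}\sum_{k=1}^N \langle \bar\Pi_k\Pi_{k+1}\cdots\Pi_N\,\psi \,|\, \bar\Pi_k\,\psi\rangle$. Applying Cauchy--Schwarz to this as an inner product of two families of vectors indexed by $k$ bounds it by $\big(\sum_k \|\bar\Pi_k\Pi_{k+1}\cdots\Pi_N\,\psi\|^2\big)^{1/2}\big(\sum_k\|\bar\Pi_k\,\psi\|^2\big)^{1/2}$; the second factor equals $\big(\sum_i \text{Tr}\{\bar\Pi_i\sigma\}\big)^{1/2}$. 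For the first factor I would invoke a second, ``reversed'' telescoping identity valid for any vector $v$ and projectors $Q_1,\dots,Q_M$: namely $\|v\|^2-\|Q_M\cdots Q_1 v\|^2=\sum_{j=1}^M\|\bar Q_j Q_{j-1}\cdots Q_1 v\|^2$, obtained by collapsing the telescoping sum $\|Q_{j-1}\cdots Q_1 v\|^2-\|Q_j\cdots Q_1 v\|^2=\|\bar Q_j Q_{j-1}\cdots Q_1 v\|^2$. Taking $v=|\psi\rangle$ and $Q_j=\Pi_{N+1-j}$, the right-hand sum is precisely $\sum_k\|\bar\Pi_k\Pi_{k+1}\cdots\Pi_N\,\psi\|^2$, so it equals $\langle\psi|\psi\rangle-\|\Pi_1\cdots\Pi_N\,\psi\|^2\le\langle\psi|\psi\rangle=\text{Tr}\{\sigma\}\le 1$. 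Thus the first factor is at most $1$, and assembling the pieces yields $\text{Tr}\{\sigma\}-\text{Tr}\{\Pi_N\cdots\Pi_1\sigma\Pi_1\cdots\Pi_N\}\le 2\sqrt{\sum_i\text{Tr}\{\bar\Pi_i\sigma\}}$, as claimed.

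The one step that is not routine is the Cauchy--Schwarz above: estimating $\|(I-T)|\psi\rangle\|$ term by term in the first telescoping identity only yields the weaker bound $2\sum_i\sqrt{\text{Tr}\{\bar\Pi_i\sigma\}}$. The improvement comes from pairing each ``error component'' $\bar\Pi_k|\psi\rangle$ with the corresponding ``residual'' $\bar\Pi_k\Pi_{k+1}\cdots\Pi_N|\psi\rangle$ and summing the squared residuals coherently over $k$; the reversed telescoping identity is exactly what shows this residual sum is controlled by $\text{Tr}\{\sigma\}\le 1$ rather than by $N$. I would also note that normalization of $\sigma$ is never used beyond $\text{Tr}\{\sigma\}\le 1$, and that the whole argument runs without purification by working in the Hilbert--Schmidt space of operators, replacing $|\psi\rangle$ by $\sqrt\sigma$ and each $\Pi_k$ by left multiplication, which is arguably the cleaner way to present it.
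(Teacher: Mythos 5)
The paper states this lemma but does not prove it, deferring entirely to the citation of Sen's work, so there is no in-paper proof to compare against. Your argument is correct and is essentially the standard one from that reference: the quadratic identity $\langle\psi|(I-T^\dagger T)|\psi\rangle=2\,\mathrm{Re}\,\langle\psi|(I-T)|\psi\rangle-\|(I-T)\psi\|^2$ gives the factor $2$, the telescoping $I-\Pi_N\cdots\Pi_1=\sum_k\Pi_N\cdots\Pi_{k+1}\bar\Pi_k$ together with $\bar\Pi_k=\bar\Pi_k^2$ sets up the Cauchy--Schwarz pairing, and the reversed telescoping $\|\psi\|^2-\|\Pi_1\cdots\Pi_N\psi\|^2=\sum_k\|\bar\Pi_k\Pi_{k+1}\cdots\Pi_N\psi\|^2$ caps the residual factor at $\|\psi\|^2=\mathrm{Tr}\{\sigma\}\le 1$; all of these check out, and the Hilbert--Schmidt reformulation you mention in closing is a clean equivalent of the purification step.
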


\bibliographystyle{plain}
\bibliography{Ref}

\end{document}